\documentclass[11pt]{article}
\usepackage{amsmath,amssymb,amsfonts,latexsym,xspace,amsthm,array,float}
\usepackage[utf8]{inputenc}
\usepackage{a4wide,hyperref}
\usepackage[lined,boxed,commentsnumbered]{algorithm2e}
\usepackage{tikz}
\usepackage{wrapfig}
\usetikzlibrary{patterns}
\usepackage[final]{showkeys}

\newenvironment{pf}{{\em \noindent Proof:}}{ \hfill \qed\smallskip}

\tikzstyle{Grisfill}=[fill=black!10]
\colorlet{darkgreen}{green!50!black}
\colorlet{darkred}{red!50!black}

\tikzstyle{H}=[red]
\tikzstyle{Hfill}=[fill=red!20]
\tikzstyle{Hpoint}=[fill=darkred,darkred]

\tikzstyle{V}=[green]
\tikzstyle{Vfill}=[fill=green!20]
\tikzstyle{Vpoint}=[darkgreen,fill=darkgreen]

\definecolor{vert}{rgb}{0,0.75,0.25}
\newcommand{\R}{\ensuremath{\color{red}R}\xspace}
\newcommand{\G}{\ensuremath{\color{vert}G}\xspace}

\newcommand{\RRR}{\ensuremath{\color{red}132}\xspace}
\newcommand{\GGR}{\ensuremath{{\color{vert}1}{\color{red}X}{\color{vert}2}}\xspace}
\newcommand{\RRG}{\ensuremath{{\color{vert}2}/{\color{red}1}{\color{red}3}}\xspace}
\newcommand{\GGG}{\ensuremath{\color{vert}213}\xspace}

%

\makeatletter
\newcommand{\rmnum}[1]{\romannumeral #1}
\newcommand{\Rmnum}[1]{\expandafter\@slowromancap\romannumeral #1@}
\makeatother

\newcommand{\Vpoint}[2]{\draw (#1,#2) [darkgreen,fill=darkgreen] circle (3pt);}
\newcommand{\Hpoint}[2]{\draw (#1,#2) [darkred,fill=darkred] circle (3pt);}
\newcommand{\zoneD}[1]{\draw [#1,#1fill, very thick] (4,0) -- (5,0) -- (5,1) -- (4,1);}
\newcommand{\zoneE}[1]{\draw [#1,#1fill, very thick] (5,0) -- (6,0) -- (6,1) -- (5,1) -- (5,0);}
\newcommand{\zoneB}[1]{\draw [#1,#1fill, very thick] (5,-1) -- (5,0) -- (6,0) -- (6,-1);}
\newcommand{\zoneF}[1]{\draw [#1,#1fill, very thick] (7,1) -- (6,1) -- (6,0) -- (7,0);}
\newcommand{\zoneH}[1]{\draw [#1,#1fill, very thick] (5,2) -- (5,1) -- (6,1) -- (6,2);}
\newcommand{\zoneA}[1]{\draw [#1,#1fill, very thick] (5,-1) -- (5,0) -- (4,0);}

\newcommand{\Hzone}{\begin{tikzpicture}[scale=.5]
\draw [Hfill] (0,0) rectangle (1,1); 
\end{tikzpicture}~}
\newcommand{\Vzone}{\begin{tikzpicture}[scale=.5]
\draw [Vfill] (0,0) rectangle (1,1); 
\end{tikzpicture}~}

\newcommand{\zoneRG}[3]{
\draw [very thick,H,Hpoint] (#1,#2) -- +(-#3,0);
\draw [very thick,V,Vpoint] (#1,#2) -- +(0,#3);
\draw [Hfill] (#1,#2) -- +(-#3,#3) -- +(-#3,0);
\draw [Vfill] (#1,#2) -- +(-#3,#3) -- + (0,#3);
}
\newcommand{\zoneGR}[3]{
\draw [very thick,H,Hpoint] (#1,#2) -- +(-#3,0);
\draw [very thick,V,Vpoint] (#1,#2) -- +(0,#3);
\draw [Vfill] (#1,#2) -- +(-#3,#3) -- +(-#3,0);
\draw [Hfill] (#1,#2) -- +(-#3,#3) -- + (0,#3);
}

\newcommand{\zoneI}[1]{\draw [#1,#1fill, very thick] (6,2) -- (6,1) -- (7,1);}
\newcommand{\etiquette}[1]{\draw (2.5,1) node {(\rmnum{#1})};}

\newtheorem{thm}{Theorem}[section]
\newtheorem{prop}[thm]{Proposition}
\newtheorem{rem}[thm]{Remark}
\newtheorem{defn}[thm]{Definition}
\newtheorem{lem}[thm]{Lemma}

\newcommand{\patternV}{\ensuremath{|12|\ |}\xspace}
\newcommand{\patternH}{\ensuremath{|\ |132|}\xspace}
\newcommand{\patternVH}{\ensuremath{|2|13|}\xspace}

\newcommand{\pushall}{$2$-stack pushall sortable\xspace}
\newcommand{\ssi}{if and only if\xspace}
\newcommand{\ascentRG}{increasing sequence $RG$\xspace}
\newcommand{\ascentGR}{increasing sequence $GR$\xspace}
\newcommand{\ascentRR}{increasing sequence $RR$\xspace}
\newcommand{\ascentGG}{increasing sequence $GG$\xspace}
\newcommand{\ascentsRG}{increasing sequences $RG$\xspace}
\newcommand{\ascentsGR}{increasing sequences $GR$\xspace}

\newcounter{indice}

\newcommand{\permutation}[1]{
\setcounter{indice}{0};
\foreach \i in {#1} 
\addtocounter{indice}{1};

\addtocounter{indice}{1}
\draw [help lines] (1,1) grid (\theindice,\theindice);

\setcounter{indice}{1};

\foreach \i in { #1 } {
\draw (\theindice+.5,\i+.5) [fill] circle (.2);
\addtocounter{indice}{1};
}
\addtocounter{indice}{-1};
}


\title{\pushall permutations \footnote{This work was completed with the support of the ANR
   project ANR BLAN-0204\_07  MAGNUM}}
\author{Adeline Pierrot \and Dominique Rossin}

\begin{document}
\maketitle
\begin{abstract}
In the 60's, Knuth introduced stack-sorting and serial compositions of stacks.
In particular, one significant question arise out of the work of Knuth: 
how to decide efficiently if a given permutation is sortable with $2$ stacks in series?
Whether this problem is polynomial or \mbox{NP-complete} is still unanswered yet.
In this article we introduce $2$-stack pushall permutations which form a subclass of $2$-stack sortable permutations 
and show that these two classes are closely related. 
Moreover, we give an optimal ${\mathcal O}(n^{2})$ algorithm to decide if a given permutation of size $n$ is \pushall and describe all its sortings.
This result 
is a step to the solve the general $2$-stack sorting problem in polynomial time.
\end{abstract}

\section{Introduction}

In the 60's, Knuth introduced the problem of stack-sorting \cite{Knuth68} and 
then serial compositions of stacks \cite{Knuth73}.
To answer the one-stack case, he introduced both the pattern-containment relation on permutations and permutation classes, two new fields of combinatorics. 
Stack-sorting was further generalized to sorting networks by Tarjan \cite{Tarjan72}
while several variants appear by either considering other types of combinatorial structures or by changing rules \cite{Pratt73,EvenItai71,AAL10}.

In this article, we focus on sorting with two stacks in series. 
More precisely, if $\sigma$ is a permutation, we consider $\sigma$ as a sequence of integers $\sigma_1,\sigma_2,\ldots,\sigma_n$ that we take as input
and at each step we have three possibilities as described in Figure~\ref{fig:rholambdamu} (p.\pageref{fig:rholambdamu}):
\begin{enumerate}
\item[$\rho$:] Get the next element of $\sigma$ and push its value on top of the first stack denoted $H$.
\item[$\lambda$:] Pop the topmost element of stack $H$ and push this value on top of the second stack $V$.
\item[$\mu$:] Pop the topmost element of $V$ and write it to the output.
\end{enumerate}

We iterate over these three possibilities until all elements have been output. 
If there is a sequence of operations that leads to identity on the output, 
then we say that the permutation is $2$-stack sortable. 
Three natural questions among others arise:
\begin{enumerate}
\item Decision: what is the complexity of the problem consisting of deciding whether a given permutation is sortable or not?
\item Characterization: can one characterize permutations that are sortable?
\item Counting: establish the generating function of sortable permutations.
\end{enumerate}

For the one-stack case these three problems were solved by Knuth in \cite{Knuth68}. 
A greedy algorithm allows to answer the decision problem in linear time. 
Moreover he characterized sortable permutations by introducing the $231$-avoiding permutations class, 
whose generating function is the Catalan series. 
Since this article, the more general question of sorting with multiple stacks in series or in parallel has been widely studied.
Knuth \cite{Knuth68}, Tarjan \cite{Tarjan72} and Pratt \cite{Pratt73} noted that the permutations sortable by the various configurations 
could be described by forbidding certain patterns to occur in the permutations.

Regarding $t$ parallel stacks,
the decision problem can be answered in time ${\cal O}(n \log n)$ for $t=1,2,3$,
while for $t>3$ this is NP-complete (this is proved by a reduction in~\cite{EvenItai71} to a problem solved in~\cite{Unger92}).
The characterization problem is studied in~\cite{Pratt73}:
for $t>1$, the basis of the class of permutations sortable with $t$ stacks in parallel is infinite.
Finally, about the counting problem,
when $t=2$ the generating function is described in \cite{ABM}, but by an infinite system of equations.


For stacks in series, it has been shown in \cite{Knuth68} 
that every permutation of size $n$ can be sorted by $\log_2(n)$ stacks in series. 
But none of the above three questions has been answered for more than one stack in series. 
For two stacks, Murphy \cite{Murphy02} proved that the basis of the class of sortable permutations is infinite. 
In his Phd thesis, he also studies the problem of deciding whether a given permutation is sortable with $2$ stacks in series.
He reduced this problem to a $3$-SAT problem;
at the same time he reduced a $2$-SAT instance to the decision problem,
and hoped than one of both reduction was actually an equivalence.
But none of those results has been proved or disproved. 
In \cite{Bona02}, B\'ona gives an overview of advances in sorting networks and mentions this problem as possibly NP-complete.
More surprising, both conjectures exist:
in \cite{AMR02b}, the authors conjectured that the decidability problem is NP-complete,
while Murphy in his PhD (\cite{Murphy02} Conjecture 260) conjectured that it is in $P$.
Several weaker variants of this problem have been studied. 
First, West considered permutations sortable with two consecutive greedy passes through a stack in \cite{West90, West93}. 
He conjectured the enumeration formula which was proved after by Zeilberger \cite{Zeilberger92}. 
For more than two passes, few results are known~\cite{BM00, Ulfarsson11}. 
Another variant studied in~\cite{AMR02b} is to consider decreasing stacks (i.e.~elements in the stack must be decreasing from bottom to top)
instead of general stacks.
In this article we define a new restriction of $2$-stacks sorting, namely $2$-stacks pushall sorting,
and prove that the decidability problem in this case is polynomial.

\bigskip
Throughout this article we usually write permutations as $\sigma = \sigma_1\sigma_2\ldots\sigma_n$ 
where $n$ is the size of $\sigma$, denoted by $|\sigma|$, and $\sigma_i$ is the image of $i$ for all $i \leq n$. 
A permutation $\pi = \pi_1\pi_2\ldots\pi_k$ is a {\em pattern} of $\sigma$ if and only if there exist 
$1 \leq i_1 \leq i_2 \leq i_3\leq \ldots\leq i_k \leq n$ such that $\sigma_{i_1} \sigma_{i_2} \ldots \sigma_{i_k}$ is order isomorphic to $\pi$. 
We note $Av(B)$ the set of permutations avoiding $B$, i.e.~not having any permutation of $B$ as a pattern.
A {\em permutation class} $\mathcal{C}$ is a set of permutations downward-closed for the pattern relation: 
if $\sigma$ belongs to $\mathcal{C}$, then every pattern of $\sigma$ belongs to $\mathcal{C}$. 
Note that for any set $B$, $Av(B)$ is a class.
A permutation class ${\cal C}$ can be defined by its minimal set $B$ such that ${\cal C}=Av(B)$.
This minimal set is called the {\em basis} of the class. 
For example, Knuth proved that $1$-stack sortable permutations are those that belong to the class $Av(231)$.
Unfortunately, the basis can be infinite. 
For $2$-stack sortable permutations, as stated above, it has been proved in~\cite{Murphy02} that the basis is infinite.

A permutation can also be represented by its
{\it diagram}, consisting in the set of points at coordinates $(i,\sigma_i)$ drawn in the plane
(see two examples in Figure~\ref{fig:graphicalRepresentation}).
An {\em interval} in a permutation is a consecutive range of elements, consecutive both in indices and values. 
For example in the permutation $4\,7\,9\,6\,8\,1\,3\,2\,5$, the elements $7\,9\,6\,8$ form an interval: 
they are consecutive in the permutation and the values span the whole integer interval $[6\ldots 9]$. 
In the diagram, note that an interval is a square
which is itself a diagram of a permutation (if translated to the origin).
In particular, no point outside this square has the same x or y coordinate than any cell of the square
(see the yellow stripes of Figure~\ref{fig:graphicalRepresentation}).
A permutation where all intervals are trivial --either a singleton or the whole permutation-- is called a {\em simple} permutation. 
For instance, $2\,4\,1\,3$ and $3\,1\,4\,2$ are the two simple permutations of size $4$.
An {\em inflation} of an element $\sigma_i$ in $\sigma$ by a permutation $\pi$ is the permutation obtained 
by replacing $\sigma_i$ by $\pi$ and renormalizing the resulting permutation. 
For example if we inflate $3$ in $2\,3\,1\,4$ by the permutation $4\,1\,5\,2\,3$, 
we obtain the permutation $2\,{\bf 6\,3\,7\,4\,5}\,1\,8$
(see the second diagram of Figure~\ref{fig:graphicalRepresentation}).
Notice that in an inflation by $\pi$, elements corresponding to $\pi$ form an interval in the resulting permutation.

\begin{figure}[H]
\begin{center}
\begin{tikzpicture}[scale=.3]
\useasboundingbox (0,0) rectangle (11,9);
\fill [color=yellow!20!white] (0,5) rectangle +(1,4);
\fill [color=yellow!20!white] (1,0) rectangle +(4,5);
\fill [color=yellow!20!white] (5,5) rectangle +(4,4);
\draw[thin] (0,0) grid (9,9);
\draw[fill] (.5,3.5) circle (6pt);
\draw[fill] (1.5,6.5) circle (6pt);
\draw[fill] (2.5,8.5) circle (6pt);
\draw[fill] (3.5,5.5) circle (6pt);
\draw[fill] (4.5,7.5) circle (6pt);
\draw[fill] (5.5,0.5) circle (6pt);
\draw[fill] (6.5,2.5) circle (6pt);
\draw[fill] (7.5,1.5) circle (6pt);
\draw[fill] (8.5,4.5) circle (6pt);
\draw [very thick] (1,5) rectangle +(4,4);
\end{tikzpicture}
\begin{tikzpicture}[scale=.3]
\draw[thin] (0,0) grid (8,8);
\draw[fill] (.5,1.5) circle (6pt);
\draw[fill] (1.5,5.5) circle (6pt);
\draw[fill] (2.5,2.5) circle (6pt);
\draw[fill] (3.5,6.5) circle (6pt);
\draw[fill] (4.5,3.5) circle (6pt);
\draw[fill] (5.5,4.5) circle (6pt);
\draw[fill] (6.5,0.5) circle (6pt);
\draw[fill] (7.5,7.5) circle (6pt);
\draw [very thick] (1,2) rectangle +(5,5);
\end{tikzpicture}
\caption{Diagram of $4\,7\,9\,6\,8\,1\,3\,2\,5$ and the inflation of $3$ in $2\,3\,1\,4$ by $4\,1\,5\,2\,3$
\label{fig:graphicalRepresentation}}
\end{center}
\end{figure}

We denote inflations by $\sigma = \tau[\pi^{(1)},\pi^{(2)},\ldots,\pi^{(k)}]$ where $\tau$ is a permutation of size $k$ and $\tau_i$ is inflated by $\pi^{(i)}$ for all $i$.
When $\tau$ is the identity $1\, 2 \dots k$ (resp. the decreasing permutation $k \dots 1$) we write $\sigma = \oplus[\pi^{(1)}, \pi^{(2)}, \ldots, \pi^{(k)}]$ (resp. $\ominus[\pi^{(1)}, \pi^{(2)}, \ldots, \pi^{(k)}]$).

A permutation $\sigma$ is $\oplus$-decomposable (resp. $\ominus$-decomposable) if it can be written
$\sigma = \oplus[\pi^{(1)},\pi^{(2)},\pi^{(3)},\ldots,\pi^{(k)}]$
(resp.  $\ominus[\pi^{(1)},\pi^{(2)},\pi^{(3)},\ldots,\pi^{(k)}]$) with $k > 1$.
Otherwise $\sigma$ is $\oplus$-indecomposable (resp. $\ominus$-indecomposable)

A decomposition theorem \cite{AA05} states that any permutation $\sigma\neq1$ can be written in a unique way as either:
\begin{itemize}
\item $\sigma = \oplus[\pi^{(1)},\pi^{(2)},\pi^{(3)},\ldots,\pi^{(k)}]$ where $k \geq 2$ and the $\pi^{(i)}$ are $\oplus$-indecomposable.
\item $\sigma = \ominus[\pi^{(1)},\pi^{(2)},\pi^{(3)},\ldots,\pi^{(k)}]$ where $k \geq 2$ and the $\pi^{(i)}$ are $\ominus$-indecomposable.
\item$\sigma = \tau[\pi^{(1)},\pi^{(2)},\pi^{(3)},\ldots,\pi^{(k)}]$ where $k\geq 4$ and $\tau$ is simple.
\end{itemize}

In the next section we study $2$-stack sorting and $2$-stack pushall sorting and show the close correlation between these two models. 
This combinatorial study concludes on some partial characterization of both classes in terms of permutations they contain or permutations in the basis.
The key idea is to use the block-decomposition of permutations given in the above theorem. 

Then in section~\ref{sec:coloring} we prove that $2$-stack pushall sorting can be expressed as a $2$-color problem on the diagram of permutations.
Moreover we characterize diagram of permutations that can be colored. 
This characterization leads to a polynomial algorithm to check whether a permutation is $2$-stack pushall sortable by finding all colorings for its diagram. 

Section~\ref{sec:optimalAlgo} refines the results of section~\ref{sec:coloring} by limiting the number of colorings to test.
This leads to an optimal algorithm computing in quadratic time a linear representation of all pushall sortings of a given permutation,
which thus decides whether a permutation is \pushall.

To conclude, we give in section~\ref{sec:conclusion} some natural continuations of our work.

\section{$2$-stack sorting vs $2$-stack pushall sorting}

In this section we define pushall sorting and point out the close link between $2$-stack sorting and $2$-stack pushall sorting. 
Moreover, for each of these sorting problems we exhibit some recursive necessary and sufficient conditions for a permutation to be sortable depending on the root of its block-decomposition.

In $2$-stack sorting, three different operations are allowed as pictured in Figure~\ref{fig:rholambdamu}. 
Each of this operation can be encoded with a letter (see Figure~\ref{fig:rholambdamu}).
For example, whenever an element is popped from stack $H$ and pushed in stack $V$, we write $\lambda$. 
A sequence of operations is encoded by a word whose length is the number of operations performed.

\begin{figure}[H]
\begin{center}
\begin{tikzpicture}
\draw[very thick] (0,2) -- (0,0) -- (1,0) -- (1,2);
\draw (2.5,-0.5) node {$H$};
\draw[very thick] (2,2) -- (2,0) -- (3,0) -- (3,2);
\draw (0.5,-0.5) node {$V$};
\draw[very thick, dashed, ->] (4,2.5) -- node[above] {$\rho$} (2.66,2.5) -- (2.66,1.7);
\draw (4.8,2.5) node {{\small INPUT}};
\draw[very thick,dashed, ->] (2.33,1.7) -- (2.33,2.5) -- node[above] {$\lambda$} (0.66,2.5) -- (0.66,1.7);
\draw[very thick, dashed, <-] (-1,2.5) -- node[above] {$\mu$} (0.33,2.5) -- (0.33,1.7);
\draw (-2.5,2.5) node {{\small OUTPUT}};
\end{tikzpicture}
\caption{Sorting with two stacks in series \label{fig:rholambdamu}}
\end{center}
\end{figure}
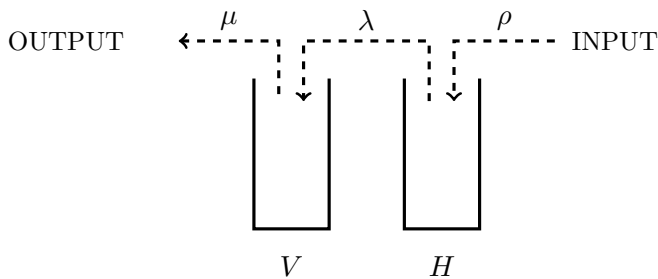

\begin{defn}
A {\em stack word} $w$ is a word over the alphabet $\{ \rho, \lambda, \mu \}$ such that $|w|_\rho = |w|_\lambda = |w|_\mu$ and for all prefix $v$ of $w$, $|w|_\rho \geq |w|_\lambda \geq |w|_\mu$. 
Intuitively it's a word which describes a sequence of appropriate stack operations which take a permutation through two stacks in series (without necessarily sorting it).
A permutation $\sigma = \sigma_1\ldots \sigma_n$ is $2$-stack sortable if and only if there exists a stack word of length $3n$ ($n$ times each letter $\rho,\lambda, \mu$) which leads to the identity in the output with $\sigma$ as input. 
Such a word is called a {\em valid} stack word for $\sigma$.
\end{defn}

There are several valid stack word for a given permutation: for example, permutation $2341$ admits either $\rho\rho\rho\rho\lambda\mu\lambda\lambda\lambda\mu\mu\mu$ or $\rho\rho\rho\lambda\lambda\lambda\rho\lambda\mu\mu\mu\mu$ as valid words.
Note also that $\rho$ and $\mu$ commutes: if $w$ is a valid stack word for $\sigma$ an $w'$ is obtained for $w$ by exchanging adjacent letters $\rho$ and $\mu$, then $w'$ is a valid stack word for $\sigma$. 
In his Phd \cite{Murphy02}, Murphy studied $2$-stack sorting by studying stack words. 
This presentation of $2$-stack sorting allow us to define formally $2$-stack pushall sorting.

\begin{defn}
A {\em pushall} stack word is a stack word such that the first occurrence of letter $\mu$ is after the last occurence of the letter $\rho$.
A permutation $\sigma$ of size $n$ is $2$-stack pushall sortable if and only it admits a valid pushall stack word.

More informally, $2$-stack pushall sortable permutations are those which can be sorted by pushing all elements in the stacks before writing any element to the output.
\end{defn}

For example $2431$ is $2$-stack pushall sortable as the word $\rho\rho\rho\rho\lambda\mu\lambda\lambda\lambda\mu\mu\mu$ respects the required condition (as does $\rho\rho\rho\lambda\lambda\lambda\rho\lambda\mu\mu\mu\mu$). 

\begin{rem}\label{rem:decompoNonUnique}
A stack word $w$ is a pushall stack word if and only if it can be written as $w = uv$ with $u \in \{\rho,\lambda\}^{*}$ and $v \in \{\lambda,\mu\}^{*}$. This decomposition is not unique.
In the preceding example, the word $w = \rho\rho\rho\rho\lambda\mu\lambda\lambda\lambda\mu\mu\mu$ admits two decompositions: $w = (\rho\rho\rho\rho)(\lambda\mu\lambda\lambda\lambda\mu\mu\mu)$ and $w = (\rho\rho\rho\rho\lambda)(\mu\lambda\lambda\lambda\mu\mu\mu)$.
\end{rem}

The previous definition of $2$-stack pushall sortable permutations implies that they form a subset of $2$-stack sortable permutations. 
Moreover it is easy to check that $2$-stack pushall sorting is stable by pattern relation: if $\sigma$ is $2$-stack pushall sortable then every pattern $\pi$ of $\sigma $ is $2$-stack pushall sortable: choose an occurrence of $\pi$ in $\sigma$ and a valid pushall stack word $w$ of $\sigma$. 
To obtain a valid pushall stack word of $\pi$, delete letters of $w$ that correspond to elements of $\sigma$ not involved in the occurrence of $\pi$. 
The same reasoning holds for general $2$-stack sorting.

\begin{prop}
$2$-stack pushall permutations form a subclass of $2$-stack sortable permutation class.
\end{prop}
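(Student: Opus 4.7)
The plan is to verify the two requirements of being a subclass: inclusion in the set of $2$-stack sortable permutations, and downward closure under the pattern relation. Both parts are essentially sketched in the paragraph preceding the proposition, so the work is to make them formal.

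For the inclusion, the observation is immediate from the definitions. A valid pushall stack word is, by definition, a valid stack word (it satisfies all the counting constraints of Definition of stack words and additionally requires the first $\mu$ to appear after the last $\rho$). Hence, if $\sigma$ admits a valid pushall stack word, then it admits a valid stack word, so it is $2$-stack sortable.

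For downward closure, let $\sigma$ be $2$-stack pushall sortable of size $n$, let $w$ be a valid pushall stack word for $\sigma$, and let $\pi$ be a pattern of $\sigma$ of size $k$, realized by the positions $i_1 < \cdots < i_k$. Define $w'$ as the subword of $w$ obtained by keeping only the letters that correspond to operations on the elements $\sigma_{i_1}, \ldots, \sigma_{i_k}$: each such element contributes exactly one $\rho$, one $\lambda$ and one $\mu$ to $w$, in that relative order, so $w'$ has length $3k$ with $k$ copies of each letter and every prefix $v'$ of $w'$ still satisfies $|v'|_\rho \geq |v'|_\lambda \geq |v'|_\mu$ (each retained $\lambda$ is preceded by the matching retained $\rho$, and similarly for $\mu$). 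Thus $w'$ is a stack word. The pushall property is preserved under taking subwords: if every $\rho$ of $w$ precedes every $\mu$ of $w$, the same holds for their subsets in $w'$. Finally, running $w'$ on the input $\pi$ mirrors, up to order-isomorphism, the action of $w$ on the subsequence $\sigma_{i_1} \ldots \sigma_{i_k}$ of $\sigma$; since $w$ sorts $\sigma$ to the identity, the relative order of the outputs of $\sigma_{i_1}, \ldots, \sigma_{i_k}$ under $w$ is increasing, and the order-isomorphism then yields the identity on $\pi$. So $w'$ is a valid pushall stack word for $\pi$.

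The only mildly delicate point is the last one, namely that restricting the execution of $w$ to a subset of input elements really produces the execution of $w'$ on $\pi$; this follows from the standard observation that in a stack-sorting computation the operation applied to an element does not depend on the other elements currently in the stacks (each $\rho$/$\lambda$/$\mu$ triple is attached to a unique element), so deleting the triples of the unused elements yields a well-defined computation on the remaining ones. I do not foresee any real obstacle beyond this bookkeeping.
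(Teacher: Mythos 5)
Your proposal is correct and follows the same route as the paper: the inclusion is read off from the definitions, and downward closure under the pattern relation is obtained by deleting from a valid pushall stack word the $\rho\lambda\mu$ triples of the elements not in the chosen occurrence of the pattern. The paper states this only as a brief sketch in the paragraph preceding the proposition; your version just fills in the bookkeeping (prefix inequalities, preservation of the pushall condition, and the restriction of the execution), all of which is accurate.
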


Although we do not know the ratio between these two classes, there exists a close correlation between them and solving $2$-stack pushall sorting is a prerequisite for the more general case. 
We first study the possible configurations of the stacks during a sorting procedure. 
This will help us to obtain properties of stack sorting permutations thanks to their decomposition. 
In a last subsection, we study the basis of $2$-stack sortable permutation class and show how it is correlated to the $2$-stack pushall one.

\subsection{Stack configurations}

At each step of a sorting procedure, some elements of the permutation lie in the stacks. 
We call a {\em stack configuration} the position of these elements in stacks $H$ and $V$. 
In this section, we exhibit a necessary condition on stack configurations to be part of a sorting procedure.
First we define formally stack configurations.

\begin{defn}
A {\em stack configuration} is a pair of two vectors of positive integers $(\overrightarrow{V},\overrightarrow{H})$ of arbitrairy (and maybe different) sizes, such that all coordinates are distincts.
A stack configuration may be empty (if both vectors are of size zero).
Vector $\overrightarrow{V}$ (resp. $\overrightarrow{H}$) represents elements that are in stack $V$ (resp. $H$) given from bottom to top, so we can apply to stack configurations moves $\lambda$ and $\mu$, and move $\rho$ if we know what is the next integer in the input.

Let $\sigma$ be a permutation, a stack configuration of $\sigma$ is a stack configuration in which coordinates are bounded by $|\sigma|$.
\end{defn}

\begin{defn}
To each stack word $w$ of size $3n$ and permutation $\sigma$ of size $n$ we associate a sequence of $3n+1$ stack configurations $\big(c_k(w,\sigma)\big)$ describing how the sequence of moves $w= w_1 \dots w_{3n}$ take $\sigma$ through the stacks: $c_1(w,\sigma)$ is empty and we obtain $c_{k+1}(w,\sigma)$ from $c_k(w,\sigma)$ by doing operation $w_k$ with $\sigma$ as input at the beginning.
\end{defn}

\begin{defn}
Let $\sigma$ be a permutation.
A stack configuration $c$ is {\em reachable} for $\sigma$ if it exists a stack word $w$ and an integer $k$ such that $c = c_k(w,\sigma)$.
A stack configuration $c$ is {\em total} for $\sigma$ if all integers from $1$ to $|\sigma|$ appear in $c$ (this notion depends only on $|\sigma|$, we don't ask $c$ to be reachable for $\sigma$).
\end{defn}

\begin{rem}\label{rem:TotalConfigurations}
Let $w$ be a stack word of size $3n$ and $\sigma$ a permutation of size $n$.
Then $w$ is a pushall stack word if and only if at least one of the stack configurations $\big(c_k(w,\sigma)\big)$ is total.
\end{rem}

During a sorting procedure, stack configurations have constraints so that all elements can be popped out in increasing order. 
Recall that in one-stack sorting, the stack must be in decreasing order (from bottom to top). 
For two-stack sorting, we have the same decreasing constraint on stack $V$ but other constraints appear that can be represented as stack patterns.

\begin{defn}\label{def:unsortableStackPatterns}
We call {\em unsortable stack-patterns} the following three patterns, denoted respectively \patternV, \patternH and \patternVH:
\begin{center}
\begin{tikzpicture}[scale=.5]
\draw [dotted] (-1,-1) rectangle +(5,3.5);
\draw[very thick] (0,2) -- (0,0) -- (1,0) -- (1,2);
\draw (2.5,-0.5) node {$H$};
\draw[very thick] (2,2) -- (2,0) -- (3,0) -- (3,2);
\draw (0.5,-0.5) node {$V$};
\draw (0.5,0.5) node {$1$};
\draw (0.5,1.2) node {$2$};
\end{tikzpicture}
\begin{tikzpicture}[scale=.5]
\draw [dotted] (-1,-1) rectangle +(5,3.5);
\draw[very thick] (0,2) -- (0,0) -- (1,0) -- (1,2);
\draw (2.5,-0.5) node {$H$};
\draw[very thick] (2,2) -- (2,0) -- (3,0) -- (3,2);
\draw (0.5,-0.5) node {$V$};
\draw (2.5,0.5) node {$1$};
\draw (2.5,1.2) node {$3$};
\draw (2.5,1.9) node {$2$};
\end{tikzpicture}
\begin{tikzpicture}[scale=.5]
\draw [dotted] (-1,-1) rectangle +(5,3.5);
\draw[very thick] (0,2) -- (0,0) -- (1,0) -- (1,2);
\draw (2.5,-0.5) node {$H$};
\draw[very thick] (2,2) -- (2,0) -- (3,0) -- (3,2);
\draw (0.5,-0.5) node {$V$};
\draw (2.5,0.5) node {$1$};
\draw (2.5,1.2) node {$3$};
\draw (0.5,0.5) node {$2$};
\end{tikzpicture}
\end{center}
More precisely pattern \patternV means that there is in stack $V$ one element which has a smaller element below it. 
Pattern \patternH means that there is in stack $H$ one element which has a greater element below it and a smaller element more below. 
Pattern \patternVH is somehow special as the pattern is divided in both stacks. 
It means that there are elements $a,b,c$ such that $b \in V$, $a,c \in H$, $a < b < c$ and $c$ is above $a$ in stack $H$. 
\end{defn}

\begin{thm}\label{thm:popable}
A stack configuration can be popped out in increasing order if and only if it avoids each unsortable stack-pattern.
\end{thm}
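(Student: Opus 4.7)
The plan is to prove the two directions separately. For necessity, I would argue that each of the three patterns genuinely obstructs outputting in increasing order using only the moves $\lambda$ and $\mu$. If \patternV is present, then $V$ contains two elements $x<y$ with $y$ above $x$; since $\lambda$ only stacks new elements above $y$, any $\mu$ that eventually exposes $x$ must first pop $y$, putting $y$ in the output before $x$. If \patternH is present, then $H$ contains $a,c,b$ from bottom to top with $a<b<c$; freeing $a$ from $H$ requires first transferring $b$ and then $c$ to $V$ via $\lambda$, so $c$ ends up above $b$ in $V$ and must be output before $b$. If \patternVH is present, the element $c$ above $a$ in $H$ must reach $V$ before $a$ does, landing above the pre-existing $b\in V$, forcing $c$ to be output before $b$.

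For sufficiency I would proceed by induction on the size $n$ of the configuration. The base case $n=0$ is trivial, so assume the configuration avoids the three patterns and is non-empty, and let $m$ denote its smallest element. The strategy is to produce a sequence of $\lambda,\mu$ moves that outputs $m$ and leaves a configuration still avoiding the three patterns, then invoke the inductive hypothesis. If $m\in V$, then \patternV-avoidance makes $V$ strictly decreasing from bottom to top, so $m$ lies at the top and a single $\mu$ outputs it, leaving all avoidances intact. If $m\in H$, let $v_1,\dots,v_k$ be the elements above $m$ in $H$ read bottom to top; the ordering $v_1<v_2<\cdots<v_k$ follows from \patternH-avoidance, since any inversion $v_i>v_j$ with $i<j$ together with $m$ placed below would exhibit the $132$ pattern $m,v_i,v_j$ in $H$. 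I then apply $k+1$ successive $\lambda$'s, pushing $v_k,v_{k-1},\dots,v_1,m$ onto $V$ in that order, followed by a single $\mu$ that outputs $m$.

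The main obstacle is to check that the resulting configuration $(V^{\mathrm{new}},H^{\mathrm{new}})$ still avoids all three patterns. The newly pushed block $v_k,v_{k-1},\dots,v_1$ is itself decreasing by the above, and \patternVH-avoidance applied to $a=m$ and $c=v_k$ forces every element of $V^{\mathrm{old}}$ to strictly exceed $v_k$, so $V^{\mathrm{new}}$ remains strictly decreasing and hence \patternV-free. The stack $H^{\mathrm{new}}$ is a bottom segment of $H^{\mathrm{old}}$, so it automatically inherits \patternH-avoidance. The delicate case is \patternVH in the new configuration: a hypothetical violation involving some $b\in V^{\mathrm{old}}$ would already violate \patternVH in the original configuration, while a violation involving $b=v_i$ for some $i$ together with $a,c\in H^{\mathrm{new}}$ satisfying $a<v_i<c$ and $c$ above $a$ would position $a,c,m,v_i$ from bottom to top in $H^{\mathrm{old}}$, and the sub-sequence $a,c,v_i$ forms a $132$ pattern in $H^{\mathrm{old}}$, contradicting \patternH-avoidance. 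This last reduction is what makes the induction close and is precisely the point where all three pattern hypotheses of the theorem are used simultaneously.
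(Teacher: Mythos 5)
Your proof is correct and follows essentially the same route as the paper: induction on the number of elements, extracting the smallest element $m$, with the same two cases ($m\in V$ trivial by \patternV-avoidance; $m\in H$ handled by noting the elements above $m$ are increasing by \patternH-avoidance and transferring them to $V$). You are in fact somewhat more explicit than the paper in verifying that the residual configuration still avoids all three patterns (in particular the reduction of a would-be new \patternVH occurrence to a \patternH occurrence in the original configuration), which the paper dispatches in a single parenthetical remark.
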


\begin{proof}
Notice that if a stack configuration contains any of the $3$ unsortable stack-patterns, then elements involved in the pattern cannot be popped out in increasing order.

For the converse, we prove by induction on the number of elements in the stacks that a configuration which avoids the $3$ unsortable stack-patterns can be popped out in increasing order. 
Suppose that the result has been proved for all stack configurations with at most $k$ elements. 
Note that the result is trivially true for $k \leq 2$. 
Let $c$ be a stack configuration with $k+1$ elements avoiding the $3$ unsortable stack-patterns and $m$ the smallest element of this configuration.
We show that $m$ can be popped out so that the stack configuration of the $k$ remaining elements still avoids the $3$ unsortable stack-patterns.
Without loss of generality assume $m = 1$.

Suppose that $1$ lies in stack $V$. 
As $c$ avoids pattern \patternV, $V$ is in decreasing order so $1$ is at the top of it. 
It can be popped out and there remains $k$ elements still avoiding the $3$ unsortable stack-patterns. 
Thus they can be all popped out in increasing order by induction. 

Suppose now that $1$ lies in stack $H$. 
As $c$ avoids pattern \patternH and $1$ is the smallest element, all elements above $1$ are in increasing order (from $1$ to top). 
All these elements can be pushed onto stack $V$ so that stack $V$ remains in decreasing order. 
Indeed as $c$ avoids pattern \patternVH , the top of stack $V$ is greater than the top of stack $H$. 
When all elements greater than $1$ and above $1$ in stack $H$ are transferred onto stack $V$, then $1$ can be popped out both stacks $H$ and $V$ and the remaining configuration still avoids the $3$ unsortable stack-patterns (as $c$ avoids pattern \patternH , no pattern \patternVH has been created) and we can apply the induction.
\end{proof}

\begin{rem}\label{rem:uniquePopOut}
There is at most one way to pop out in increasing order elements from a stack configuration. 
Indeed to pop out we only use moves $\mu$ and $\lambda$, and if we want to pop out in increasing order we have to perform move $\mu$ if and only if the smallest element lies in the top of $V$.
\end{rem}

\begin{algorithm}[H]
  \SetAlgoLined
\LinesNumbered
  \KwData{$\sigma$ a permutation and $c$ a total stack configuration of $\sigma$.}
  \KwResult{True if $c$ can be popped out in increasing order.}
$i \longleftarrow 1$\;
\While{$i \leq |\sigma|$}{
\eIf{$top(V) = i$}{
pop out $top(V)$ from stack $V$ and let $i \longleftarrow i+1$ 
}{
\eIf{$H$ is non empty and $top(H) < top(V)$}{
pop $top(H)$ from stack $H$ and push it into $V$\;}{
Return false\;
}
}
}
Return true\;
 \caption{Pop out in increasing order\label{algo:popOut}}
\end{algorithm}


\begin{prop}\label{prop:AlgoPopOut}
Let $c$ be a total stack configuration of a permutation $\sigma$.
Then Algorithm~\ref{algo:popOut} applied to $c$ returns $true$ \ssi $c$ can be popped out in increasing order.
Moreover Algorithm~\ref{algo:popOut} runs in linear time w.r.t. $|\sigma|$.
\end{prop}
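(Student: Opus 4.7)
The plan is to prove correctness by appealing to Theorem~\ref{thm:popable} and Remark~\ref{rem:uniquePopOut}, and then to derive the linear complexity from a straightforward amortization. The first step is to observe that Algorithm~\ref{algo:popOut} implements the unique pop-out strategy identified in Remark~\ref{rem:uniquePopOut}: at each iteration the next value to be output is the smallest not yet emitted, namely $i$, so the move $\mu$ is forced whenever $top(V) = i$, and otherwise the only possible move is a $\lambda$ preserving the decreasing order of $V$, which is admissible exactly when $top(H) < top(V)$ (using the convention $top(V) = +\infty$ when $V$ is empty). This is precisely the test in the \emph{else} branch.

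Soundness is then immediate: when the algorithm returns true, the sequence of $\mu$ and $\lambda$ moves it performed pops out $1, 2, \dots, |\sigma|$ in increasing order. For the converse, I plan to show that whenever the algorithm aborts at some iteration with target $i$, the current stack configuration contains one of the three unsortable stack-patterns and is thus not poppable by Theorem~\ref{thm:popable}; then $c$ itself cannot have been poppable either, since by Remark~\ref{rem:uniquePopOut} the moves performed up to that point are the only ones that could have been chosen. The case analysis is short: because $c$ is total, $i$ is still present in the stacks, and either $i$ sits below $top(V)$ in $V$, in which case $(i, top(V))$ is an instance of \patternV, or $i$ sits strictly below $top(H) = a$ in $H$; in the latter case $V$ cannot be empty, for otherwise the algorithm would have transferred $a$ rather than aborted, and writing $b = top(V)$ we have $i < b < a$, so $(i, b, a)$ is an instance of \patternVH.

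For the complexity, each iteration of the while-loop either increments $i$, which occurs at most $|\sigma|$ times, or transfers an element from $H$ to $V$, which occurs at most $|H|_{\mathrm{init}} \leq |\sigma|$ times since $H$ never grows. With $O(1)$ work per iteration, the total running time is $O(|\sigma|)$, as claimed.

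I expect the main obstacle to be the case analysis for completeness, in particular the observation that when $i$ lies in $H$ and the algorithm aborts, $V$ must be non-empty --- without this remark the \patternVH witness would not be available. Everything else follows mechanically from Theorem~\ref{thm:popable} and Remark~\ref{rem:uniquePopOut}.
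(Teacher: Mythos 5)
Your proof is correct and follows the same route as the paper's: correctness rests on the forced-move observation of Remark~\ref{rem:uniquePopOut}, and the linear bound comes from counting that at most $|\sigma|$ moves $\mu$ and $|\sigma|$ moves $\lambda$ can occur. The one place you go beyond the paper --- the case analysis exhibiting a \patternV or \patternVH witness at the moment of abort, which justifies that rejecting when $top(H) > top(V)$ (rather than performing the physically possible but doomed $\lambda$) never discards a poppable configuration --- is a detail the paper's one-line proof leaves implicit, and your treatment of it, including the $top(V)=+\infty$ convention for empty $V$, is sound.
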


\begin{pf}
At each step, Algorithm~\ref{algo:popOut} performs either a move $\mu$ or a move $\lambda$. 
As at most $|\sigma|$ moves $\mu$ and $|\sigma|$ moves $\lambda$ can be done, it runs in linear time w.r.t. $|\sigma|$.
We conclude using Remark~\ref{rem:uniquePopOut}.
\end{pf}

Theorem~\ref{thm:popable} ensures that a stack configuration can be popped out in increasing order. 
Conditions of this theorem must be verified at each step of a sorting procedure. 
This is formalised in the following proposition:

\begin{prop}\label{prop:eachConfigAvoidUnsortablePattern}
If $w$ is a valid stack word for the permutation $\sigma$, 
then each stack configuration of $\big(c_k(w,\sigma)\big)$ avoids the $3$ unsortable stack-patterns.
\end{prop}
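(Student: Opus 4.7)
The plan is to prove the contrapositive: if at some step $k$ the configuration $c_k(w,\sigma)$ contains any of the three unsortable stack-patterns \patternV, \patternH, or \patternVH, then the remainder of $w$ cannot finish the sort, so $w$ is not valid. I would do a case analysis mirroring the three patterns, with cases \patternH and \patternVH ultimately reducing to case \patternV or to a direct output-order violation. In this sense the statement is the natural complement of Theorem~\ref{thm:popable}, which handled configurations that avoid the three patterns.

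The key observation I would state upfront is a rigidity lemma: the operations $\rho,\lambda,\mu$ preserve the relative vertical order of elements already present in the stacks. New elements arrive via $\rho$ strictly on top of $H$, while $\lambda$ and $\mu$ only consume tops. Hence, between step $k$ and any later step, the elements involved in the supposed unsortable pattern keep their relative positions (they can only vanish from the top of the stack where they sit), and one can speak unambiguously of ``the moment when element $x$ is transferred from $H$ to $V$'' or ``the moment when $x$ is output''.

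For pattern \patternV one has $a<b$ with $a$ below $b$ in $V$; LIFO forces $b$ to be output before $a$, contradicting the increasing output required by validity of $w$. For pattern \patternH, the stack $H$ reads (bottom to top) $a,c,b$ with $a<b<c$; since $a$ must be output, both $b$ and $c$ must eventually leave $H$ (in that order) via $\lambda$-moves, and $b$ is the first to land on top of $V$. At the moment $c$ is transferred to $V$, either $b$ is still in $V$---in which case $c$ lands above $b$ and produces \patternV in the later configuration, reducing to the first case---or $b$ has already been output, but then $a<b$ is still stuck in $H$, contradicting increasing output. For pattern \patternVH, element $c$ must again be transferred from $H$ to $V$ before $a$ can be output, and exactly the same dichotomy applies: either $b$ is still in $V$ when $c$ arrives (yielding \patternV with $b<c$ in the later configuration) or $b$ has been output (violating increasing output since $a<b$ is still in $H$).

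The only subtle point, and the main obstacle, is controlling what happens between step $k$ and the forced events invoked above: new elements from the input may be pushed onto $H$, transferred to $V$, and output, and one must check that these intrusions offer no loophole. They do not, precisely because of the rigidity lemma: new elements always enter $H$ strictly above the original ones, so the original relative orderings (``$a$ below $c$ in $H$'', ``$b$ in $V$'') persist as long as these elements remain in the stacks, and eventually $c$ must still be transferred to $V$ via a $\lambda$-move, at which instant the dichotomy ``$b$ in $V$ or $b$ already output'' is exhaustive and each horn is already handled.
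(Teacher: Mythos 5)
Your proof is correct. Note that the paper itself offers no proof of this proposition: it is stated as an immediate formalization of the one-line observation opening the proof of Theorem~\ref{thm:popable} (``if a stack configuration contains any of the $3$ unsortable stack-patterns, then elements involved in the pattern cannot be popped out in increasing order''), which is itself only asserted there. Your argument supplies exactly the missing details, and in particular it correctly handles the one point where this proposition genuinely differs from the pop-out setting of Theorem~\ref{thm:popable}: after step $k$ the word $w$ may still perform $\rho$ moves, so one must check that newly arriving elements cannot dissolve the pattern. Your rigidity observation (new elements enter strictly on top of $H$, and $\lambda,\mu$ only consume tops, so the relative positions of the pattern's elements persist until they are forcibly removed) closes that gap, and the case analysis --- \patternV giving a direct LIFO violation, \patternH and \patternVH reducing at the moment $c$ is transferred to $V$ either to a later occurrence of \patternV or to an out-of-order output --- is sound and exhaustive.
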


The converse is not true: let $w = (\rho \lambda \mu)^n$ then for all permutation $\sigma$ of size $n$ 
each stack configuration of $\big(c_k(w,\sigma)\big)$ avoids the $3$ unsortable stack-patterns (as it has at most one element in the stacks).
But if $\sigma$ is not the identity, $w$ is not a valid stack word for $\sigma$.

For $2$-stack pushall sorting, however, it is sufficient to check whether the stack configuration obtained 
just after the last element of $\sigma$ has been pushed onto $H$ avoids the $3$ unsortable stack-patterns.

\begin{prop}\label{prop:pushallIffConfigurationEvitePatterns}
A permutation $\sigma$ is \pushall if and only if there is a way to put all its elements in the stacks 
so that the total stack configuration obtained avoids the three unsortable patterns.
\end{prop}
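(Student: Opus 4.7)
The plan is to derive the proposition as a direct consequence of Theorem~\ref{thm:popable} combined with the structural observation from Remark~\ref{rem:decompoNonUnique} (and Remark~\ref{rem:TotalConfigurations}) that a pushall stack word factors as $w = uv$ with $u \in \{\rho,\lambda\}^*$ and $v \in \{\lambda,\mu\}^*$. The content of the proposition is essentially that among all reachable stack configurations, only the \emph{total} ones need to be inspected when testing pushall sortability, and for those Theorem~\ref{thm:popable} already provides a clean criterion.

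For the forward direction, I would start with a valid pushall stack word $w$ of length $3n$ for $\sigma$. Writing $w = uv$ with $u \in \{\rho,\lambda\}^*$ and $v \in \{\lambda,\mu\}^*$ where $u$ contains all $n$ letters $\rho$, the intermediate configuration $c := c_{|u|+1}(w,\sigma)$ contains every element of $\sigma$ (since each element has been pushed by $\rho$ into $H$ and none has been output by $\mu$), so $c$ is total. The suffix $v$ then uses only $\lambda$ and $\mu$ moves and produces the identity, which means exactly that $c$ can be popped out in increasing order; Theorem~\ref{thm:popable} then gives that $c$ avoids the three unsortable stack-patterns. This provides the required "way to put all elements in the stacks" whose total configuration avoids the patterns.

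For the backward direction, I would assume a word $u \in \{\rho,\lambda\}^*$ with $|u|_\rho = n$ is given whose associated total configuration $c$ avoids the three unsortable stack-patterns. Theorem~\ref{thm:popable} yields a sequence $v \in \{\lambda,\mu\}^*$ popping $c$ out in increasing order; since all $n$ elements must be output, $|v|_\mu = n$ and $|v|_\lambda$ equals the number of elements sitting in $H$ inside $c$. Concatenating gives $w = uv$, which one checks is a stack word (the prefix conditions $|w|_\rho \geq |w|_\lambda \geq |w|_\mu$ are preserved because every $\mu$ in $v$ is preceded by enough $\lambda$'s to have brought the popped element into $V$, and all $\rho$'s occur in $u$ before any $\mu$). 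By construction $w$ has no $\mu$ before its last $\rho$, so it is a pushall stack word, and it is valid since the output is the identity.

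No real obstacle is expected; the only care needed is bookkeeping in two places. First, when splitting $w = uv$ in the forward direction I must pick the factorization so that $u$ captures \emph{all} $\rho$'s (rather than some other valid split allowed by Remark~\ref{rem:decompoNonUnique}), to guarantee that $c$ is total. Second, when assembling $uv$ in the backward direction I must verify the stack-word prefix inequalities at every position; this is routine because the inequalities hold inside $u$ (as $u$ is realized by a sequence of legal $\rho,\lambda$ moves on $\sigma$) and inside $v$ (as $v$ is a legal pop-out sequence), and they chain correctly across the boundary because $|u|_\rho = n \geq |u|_\lambda + |v|_\lambda$ and $|v|_\mu$ never exceeds the number of $\lambda$'s performed up to that point in $uv$.
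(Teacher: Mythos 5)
Your proposal is correct and takes essentially the same route as the paper: the paper's forward direction packages your factorization argument into Remark~\ref{rem:TotalConfigurations} (some configuration of a pushall word is total) together with Proposition~\ref{prop:eachConfigAvoidUnsortablePattern} (every configuration of a valid word avoids the patterns), and the converse is, as in your write-up, a direct application of Theorem~\ref{thm:popable}. Your extra bookkeeping on the prefix inequalities of the concatenated word $uv$ is sound but not needed beyond what the paper already takes for granted.
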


\begin{pf}
If $\sigma$ is \pushall we conclude using Proposition~\ref{prop:eachConfigAvoidUnsortablePattern} and Remark~\ref{rem:TotalConfigurations}. 
The converse is a consequence of Theorem~\ref{thm:popable}.
\end{pf}

\subsection{Decomposition and stack sorting}

In this part we exhibit conditions for a permutation $\sigma$ to be $2$-stack sorted depending on its decomposition. 

\paragraph{$\ominus$-decomposable permutations}: 
\begin{prop}\label{prop:2stacksMoinsDecomposable}
A permutation $\sigma = \ominus[\pi^{(1)},\pi^{(2)},\ldots, \pi^{(k)}]$ is $2$-stack sortable if and only if every $\pi^{(i)}$ for $i \in \{ 1 \ldots k-1\}$ is $2$-stack pushall sortable and $\pi^{(k)}$ is $2$-stack sortable.
\end{prop}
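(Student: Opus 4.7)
The plan is to exploit the rigid constraint that values of $\pi^{(i)}$ are strictly larger than those of $\pi^{(j)}$ for $j>i$: in any valid stack word for $\sigma$, no $\mu$ can occur before all of $\pi^{(1)},\dots,\pi^{(k-1)}$ have been read, since the first element output must be $1\in\pi^{(k)}$. Consequently these blocks are pushed entirely into the stacks first. Moreover in both $H$ and $V$ the elements of different blocks appear as contiguous layers ordered by block index: in $H$ because each block is pushed on top of the previous ones, and in $V$ because $V$ must remain decreasing bottom-to-top and earlier blocks have larger values.

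For the $\Rightarrow$ direction, let $w$ be a valid stack word for $\sigma$ and for each $i<k$ let $c_i$ be the configuration reached just after the last $\rho$-move on an element of $\pi^{(i)}$. By Proposition~\ref{prop:eachConfigAvoidUnsortablePattern}, $c_i$ avoids the three unsortable stack-patterns, so the restriction $c_i|_{\pi^{(i)}}$ does as well. The layering observation ensures that the sub-sequence of $w$ consisting of the moves involving a $\pi^{(i)}$-element is a pushall word whose execution on input $\pi^{(i)}$ alone produces exactly $c_i|_{\pi^{(i)}}$; Proposition~\ref{prop:pushallIffConfigurationEvitePatterns} then yields that $\pi^{(i)}$ is $2$-stack pushall sortable. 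The same sub-word construction applied to $\pi^{(k)}$, this time also keeping the $\mu$-moves on its elements (which are the first $|\pi^{(k)}|$ $\mu$-moves of $w$ and occur in the order $1,2,\dots,|\pi^{(k)}|$), produces a valid stack word for $\pi^{(k)}$.

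For the $\Leftarrow$ direction, build a sort word for $\sigma$ by concatenating pushall words for $\pi^{(1)},\dots,\pi^{(k-1)}$, then a valid stack word for $\pi^{(k)}$, and finally the pop-out sequence for the elements remaining in the stacks. At each intermediate loading step, since the new block's values are smaller than every value already present, its pushall adds its elements on top of the existing layers without disturbing them, and the $\pi^{(k)}$-sorting phase runs exactly as it would on $\pi^{(k)}$ in isolation, outputting $1,\dots,|\pi^{(k)}|$ before anything else. The crux, and the main obstacle, is to check that the total configuration $C$ obtained just before the $\pi^{(k)}$-phase begins avoids the three unsortable stack-patterns so that Theorem~\ref{thm:popable} applies and the final pop-out succeeds. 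This reduces to a case analysis on \patternV, \patternH, \patternVH showing that any such pattern in $C$ must lie entirely within a single block $\pi^{(i)}$: the layering combined with the strict value inequalities between blocks rules out every cross-block placement (for instance, a \patternVH pattern with its small element $a$ in a later block and its large element $c$ in an earlier block would force $c$, which sits below $a$ in $H$ by layering, to be above $a$, a contradiction). This contradicts the pushall hypothesis on $\pi^{(i)}$.
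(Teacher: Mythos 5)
Your proof is correct and follows essentially the same route as the paper's: restrict the stack word to each block for necessity (using that $1\in\pi^{(k)}$ forces all earlier blocks to be fully loaded before any output occurs), and for sufficiency concatenate the loading phases, the word for $\pi^{(k)}$, and a final unloading. The only divergence is that where the paper unloads by replaying the recorded suffixes $w''_{k-1}\cdots w''_1$ of the pushall words, you justify the final pop-out via Theorem~\ref{thm:popable} together with a cross-block pattern check; both correctly discharge the step the paper dismisses as ``easy to check''.
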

\begin{proof}
Suppose that $\sigma$ is $2$-stack sortable. Let $w_{\sigma}$ be a valid stack word of $\sigma$.
For $i \in \{ 1 \ldots k\}$, consider the subword $w_{\pi^{(i)}}$ of $w_{\sigma}$ by taking letters corresponding to an element of $\pi^{(i)}$. 
This word is of size $3|\pi^{(i)}|$ and has equal number of occurrences of the letters $\rho, \lambda, \mu$. 
Moreover, it is a valid stack word for $\pi^{(i)}$ as the relative order of elements of $\pi^{(i)}$ under the action of $w_{\pi^{(i)}}$ will be the same as the action of $w_{\sigma}$ on $\sigma$. 
Furthermore, as the element $1$ in $\sigma$ belongs to the last block $\pi^{(k)}$, all elements of $\pi^{(i)}$ are pushed into the stacks before the first pop. 
Hence $\pi^{(i)}$ is $2$-stack pushall sortable. 
$2$-stack sortable permutations form a permutation class, so that $\pi^{(k)}$ must be $2$-stack sortable.

Conversely, if every $\pi^{(i)}$ for $i \in \{ 1 \ldots k-1\}$ is $2$-stack pushall sortable and $\pi^{(k)}$ is $2$-stack sortable, let $w_i$ ($1 \leq i \leq k-1$) be a pushall stack word for $\pi^{(i)}$ and $w_k$ be a stack word for $\pi^{(k)}$. 
Then each $w_i$ ($1 \leq i \leq k-1$) can be written as $w'_iw''_i$ where $w'_i$ contains no occurrence of $\mu$ and $w''_i$ no occurrence of $\rho$.
It is easy to check that the word $w'_1w'_2\ldots w'_{k-1} w_k w''_{k-1} w''_{k-2} \ldots w''_1$ is a valid stack word for $\sigma$, hence $\sigma$ is $2$-stack sortable.
\end{proof}

With a similar proof, we have the following result when restricting to \pushall permutations:

\begin{prop}\label{prop:pushallMoinsDecomposable}
A permutation $\sigma = \ominus[\pi^{(1)},\pi^{(2)},\ldots, \pi^{(k)}]$ is \pushall if and only if every $\pi^{(i)}$ for $i \in \{ 1 \ldots k\}$ is $2$-stack pushall sortable.
\end{prop}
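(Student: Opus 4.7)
My plan is to mirror the proof of Proposition~\ref{prop:2stacksMoinsDecomposable}, making only the adjustments forced by strengthening the hypothesis/conclusion on every block (and in particular on $\pi^{(k)}$) from ``$2$-stack sortable'' to ``$2$-stack pushall sortable''.

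For the forward direction, I would take a valid \emph{pushall} stack word $w_\sigma$ for $\sigma$ and, for each $i \in \{1,\ldots,k\}$, define $w_{\pi^{(i)}}$ as the subword of $w_\sigma$ obtained by keeping only those letters that act on an element of $\pi^{(i)}$. Exactly as in the proof of Proposition~\ref{prop:2stacksMoinsDecomposable}, this subword has equal numbers of $\rho$, $\lambda$, $\mu$ and is a valid stack word for $\pi^{(i)}$. The one extra observation to make is that $w_{\pi^{(i)}}$ is itself pushall: every $\rho$ of $w_{\pi^{(i)}}$ occurs at a position of $w_\sigma$ carrying the letter $\rho$, and hence strictly precedes every $\mu$ of $w_\sigma$, in particular every $\mu$ of $w_{\pi^{(i)}}$. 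Thus every $\pi^{(i)}$ (including $\pi^{(k)}$) is \pushall.

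For the converse, I would pick for each $i$ a pushall stack word $w_i$ for $\pi^{(i)}$ and decompose it (using Remark~\ref{rem:decompoNonUnique}) as $w_i = w'_i w''_i$ with $w'_i \in \{\rho,\lambda\}^*$ and $w''_i \in \{\lambda,\mu\}^*$. Setting
\[
w \;=\; w'_1 w'_2 \cdots w'_k \, w''_k w''_{k-1} \cdots w''_1,
\]
the word $w$ is pushall by construction, since all letters $\rho$ sit in the prefix $w'_1 \cdots w'_k$ and all letters $\mu$ in the suffix $w''_k \cdots w''_1$. The verification that $w$ is a valid stack word for $\sigma$ is identical to the one in Proposition~\ref{prop:2stacksMoinsDecomposable}, using the key property of the $\ominus$-decomposition: the values of $\pi^{(i)}$ lie entirely above those of $\pi^{(i+1)},\ldots,\pi^{(k)}$, so while $w'_{i+1}$ (resp. $w''_{i-1}$) is being executed, each $\lambda$ or $\mu$ it contains only touches elements of its own block, and concatenating blocks in the order $\pi^{(k)},\pi^{(k-1)},\ldots,\pi^{(1)}$ on the output produces the identity.

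The only nontrivial point is this interleaving argument in the converse direction, exactly the same step that the author dispatches as ``easy to check'' in the previous proposition; the verification transfers verbatim, with the extra (but immediate) bonus that the resulting word is pushall. Thus the same skeleton of proof yields the stronger statement under the stronger hypothesis on $\pi^{(k)}$.
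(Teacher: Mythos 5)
Your proposal is correct and follows exactly the route the paper intends when it says the result holds ``with a similar proof'': restrict a pushall word of $\sigma$ to each block (noting the restriction inherits the pushall property since subwords preserve the relative order of $\rho$'s and $\mu$'s), and conversely splice the words $w_i = w_i'w_i''$ into $w_1'\cdots w_k'\,w_k''\cdots w_1''$, the only change from Proposition~\ref{prop:2stacksMoinsDecomposable} being that $w_k$ is also split. Nothing to add.
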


\paragraph{$\oplus$-decomposable permutations}

The case where $\sigma$ is $\oplus$-decomposable is a bit different as each block of the decomposition can be popped out as soon as they are pushed into the stacks. 
So the only condition is given in the following proposition.

\begin{prop}
If $\sigma = \oplus[\pi^{(1)},\ldots, \pi^{(k)}]$ then $\sigma$ is $2$-stack sortable if and only if each $\pi^{(i)}$ is $2$-stack sortable.
\end{prop}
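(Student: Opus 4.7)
The plan is to prove both directions separately, with the forward direction being essentially immediate from the class property and the backward direction relying on the fact that $\oplus$-decomposability lets us process the blocks one after another with completely emptied stacks in between.

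For the forward direction, I would argue that $2$-stack sortable permutations form a permutation class (closed under pattern involvement): given any valid stack word $w$ for $\sigma$ and any subset $S$ of positions of $\sigma$, deleting from $w$ the three letters ($\rho,\lambda,\mu$) associated to each element not in $S$ yields a valid stack word for the pattern induced by $S$. Since each $\pi^{(i)}$ is a pattern of $\sigma$, each $\pi^{(i)}$ must be $2$-stack sortable. This is the same kind of reasoning already invoked in the paper for showing pushall sortability is closed under patterns, and transfers verbatim to the general case.

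For the backward direction, let $w_i$ be a valid stack word for $\pi^{(i)}$ for $1 \le i \le k$. The plan is to show that the concatenation $w = w_1 w_2 \cdots w_k$ is a valid stack word for $\sigma$. The crucial point is that because $\sigma = \oplus[\pi^{(1)},\ldots,\pi^{(k)}]$, the values of $\pi^{(i)}$ occupy exactly the integer interval between the values of $\pi^{(i-1)}$ and $\pi^{(i+1)}$, while their positions in $\sigma$ also form a consecutive block after the positions of $\pi^{(i-1)}$ and before those of $\pi^{(i+1)}$. Hence, starting from empty stacks, running $w_1$ on input $\sigma$ reads exactly the first $|\pi^{(1)}|$ entries, processes them as it would on $\pi^{(1)}$ alone, and outputs the integers $1,2,\ldots,|\pi^{(1)}|$ in order, leaving both stacks empty. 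Then $w_2$ does the same on the next block, outputting the next consecutive interval of integers in order, and so on. By induction on $k$, the word $w$ is valid for $\sigma$.

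I expect both directions to be straightforward; the only mild care needed is in the backward direction to justify that the stacks are indeed empty between consecutive $w_i$'s (which holds because each $w_i$ contains $|\pi^{(i)}|$ copies of each of $\rho, \lambda, \mu$, so all elements pushed in $w_i$ are popped by the end of $w_i$) and to note that the output produced by $w_i$ is the correct consecutive block of values in $\sigma$. Contrasting this with the $\ominus$-decomposable case (Proposition~\ref{prop:2stacksMoinsDecomposable}) clarifies why here we only need each block to be $2$-stack sortable, not pushall: between the processing of two consecutive blocks we are free to empty the stacks completely before reading the next block, whereas in the $\ominus$ case the smaller values come last, forcing every earlier block to remain entirely in the stacks until the end.
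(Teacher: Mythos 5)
Your proof is correct and follows exactly the reasoning the paper intends: the paper actually omits a formal proof of this proposition, offering only the one-line remark that ``each block of the decomposition can be popped out as soon as they are pushed into the stacks,'' and your concatenation argument $w = w_1 w_2 \cdots w_k$ (with the observation that the stacks empty between blocks and that each block occupies a consecutive range of positions and values) is precisely the formalization of that remark. The forward direction via closure under patterns is likewise the same argument the paper uses elsewhere (e.g.\ in the proof of the $\ominus$-decomposable case).
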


For \pushall permutations, $\oplus$-decomposable permutations are harder to handle. 
As no element can be popped out before all elements have been pushed, the element $1$ which belongs to the first block must remain in the stacks until every element is pushed. 
This induces several constraints which are proved in the following propositions. 
All these propositions aim at proving Theorem~\ref{thm:+pt} which fully characterizes $\oplus$-decomposable \pushall permutations.

\begin{thm}
\label{thm:+pt}
Let $\sigma$ be a $\oplus$-decomposable permutation. Then $\sigma$ is \pushall if and only if $\sigma$ avoids
\begin{eqnarray*}
B_+ &=& \{ 132465, 135246, 142536, 142635, 143625, 153624, 213546, 214365, 214635, 215364, \\
&&241365, 314265, 315246, 315426, 351426, 1354627, 1365724, 1436527, 1473526 ,1546273, \\
&&1573246, 1624357, 1627354, 1632547, 1632574, 1642573, 1657243, 2465137, 2631547, \\
&&2635147, 3541627, 4621357, 4652137, 5136427, 5162437, 21687435, 54613287 \}
\end{eqnarray*}
\end{thm}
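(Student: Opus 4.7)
My plan is to apply Proposition~\ref{prop:pushallIffConfigurationEvitePatterns}, which reduces \pushall sortability to the existence of a total stack configuration avoiding the three unsortable stack-patterns \patternV, \patternH and \patternVH. The proof then splits into a (routine) forward implication and a (delicate) reverse implication.

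For the forward implication, I would verify that each of the $37$ permutations $\tau \in B_+$ admits no valid total configuration. Since $|\tau| \le 8$, one can enumerate all partitions of the elements of $\tau$ between $H$ and $V$ together with the possible orderings of the $V$-part reachable by some sequence of $\rho, \lambda$ moves, and check directly that every such configuration contains one of the three unsortable stack-patterns. Because \pushall permutations form a class (closed under taking patterns), any $\oplus$-decomposable $\sigma$ that contains some $\tau \in B_+$ as a pattern fails to be \pushall.

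For the reverse implication, write $\sigma = \oplus[\pi^{(1)}, \ldots, \pi^{(k)}]$ and assume $\sigma$ avoids every pattern of $B_+$. I must exhibit a total configuration of $\sigma$ avoiding the three unsortable stack-patterns. The key structural observation is that, because elements are pushed in $\sigma$-order and each $\pi^{(i)}$ occupies a contiguous range of values, in any total configuration the elements of $\pi^{(i)}$ that end up in $V$ must form a single decreasing segment sitting just above the $\pi^{(i+1)}$-segment and just below the $\pi^{(i-1)}$-segment in $V$, while the elements staying in $H$ appear there in $\sigma$-order. Under this schematic form, each of the three unsortable stack-patterns translates into an obstruction involving at most two blocks at a time (a \patternH inside one $\pi^{(i)}$, a \patternVH mixing one element of some $\pi^{(i)}$ with a rising pair of some $\pi^{(j)}$, etc.). I would systematically enumerate these two-block obstructions and verify that the minimal ones in the pattern order are exactly the $37$ permutations listed in $B_+$. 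Conversely, when $\sigma$ avoids all such obstructions, an explicit greedy rule --- for example, pushing to $V$ the longest decreasing suffix available in each block --- yields a valid total configuration.

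The main obstacle is the completeness of the list $B_+$: proving there is no further minimal obstruction. I expect the bulk of the work to lie in first reducing to the two-block case $\sigma = \pi^{(1)} \oplus \pi^{(2)}$, where a careful but finite case-analysis (split according to which of $\pi^{(1)}, \pi^{(2)}$ contributes to each stack) pins down all minimal obstructions and shows each of them appears in $B_+$; the general $k$-block case should then follow by an inductive argument showing that the interaction between three or more blocks never introduces a new minimal pattern beyond those already coming from adjacent pairs.
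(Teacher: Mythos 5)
Your forward implication matches the paper's (a finite check that each element of $B_+$ is not \pushall, plus closure under patterns), but the reverse implication rests on a structural claim that is false. You propose to reduce everything to two-block interactions and to argue inductively that ``the interaction between three or more blocks never introduces a new minimal pattern beyond those already coming from adjacent pairs.'' The list $B_+$ itself refutes this: $214365=\oplus[21,21,21]$ is a minimal obstruction whose $\oplus$-decomposition has three blocks, and both unions of adjacent blocks are order-isomorphic to $2143$, which \emph{is} \pushall (put $2,1$ into $V$ and $4,3$ into $H$). Likewise $132465=\oplus[1,21,1,21]$ and $213546=\oplus[21,1,21,1]$ are minimal obstructions spanning four blocks. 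So minimality genuinely requires three or more blocks, and your inductive reduction to the case $\sigma=\pi^{(1)}\oplus\pi^{(2)}$ cannot work. The underlying reason is that the pattern \patternVH, and more importantly the global constraint that element $1$ of the first block must survive in the stacks until the last block is pushed, couple non-adjacent blocks: which elements of $\pi^{(i)}$ may go to $V$ depends on a single global ``cut'' of the permutation, not on a per-block choice, so your proposed greedy rule (longest decreasing suffix of each block into $V$) is also not justified and does not match any construction that provably works.

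For contrast, the paper's proof avoids this trap by a trichotomy on the $\oplus$-decomposition $\sigma=\oplus[\alpha_1,\dots,\alpha_r]$: either $\alpha_1$ and $\alpha_r$ are both non-trivial (Proposition~\ref{prop:B1}, which shows all middle blocks are then singletons and reduces to $\oplus[Av(132),Av(213)]$ or $\oplus[Av(213),Av(132)]$ via Proposition~\ref{CS}), or $\alpha_1$ (resp.\ $\alpha_r$) is a singleton, in which case Propositions~\ref{prop:+[1,x]pt} and~\ref{prop:+[x,1]pt} give exact characterizations of when $\oplus[1,\pi]$ and $\oplus[\pi,1]$ are \pushall as juxtapositions of $Av(213)$ and $Av(132)$ with a compatible cut, and Propositions~\ref{prop:B2} and~\ref{prop:B3} bound the basis elements of the corresponding classes by size $9$ before an exhaustive computation. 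If you want to salvage your approach you would need to replace the two-block reduction with an analysis of this global cut; as written, the completeness of $B_+$ --- which you correctly identify as the crux --- is not established.
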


The proof proceeds step by step in Propositions~\ref{CS} to \ref{prop:+[1,x,1]pt}.

\begin{prop}\label{CS}
Let $\sigma$ be a permutation such that either:
\begin{itemize}
\item $\sigma \in Av(132)$
\item $\sigma \in Av(213)$
\item $\sigma \in \oplus[Av(132),Av(213)]$
\item $\sigma \in \oplus[Av(213),Av(132)]$
\end{itemize}
Then $\sigma$ is \pushall.
\end{prop}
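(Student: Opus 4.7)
My plan is to invoke Proposition~\ref{prop:pushallIffConfigurationEvitePatterns}: in each of the four cases I would exhibit a sequence of $\rho$ and $\lambda$ moves pushing every element of $\sigma$ into the two stacks and then check that the resulting total configuration avoids the three unsortable stack-patterns \patternV, \patternH and \patternVH. Two elementary building blocks will suffice: for a block in $Av(132)$, ``push everything to $H$''; for a block in $Av(213)$, ``push everything to $V$ so that $V$ is decreasing from bottom to top''.

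If $\sigma\in Av(132)$, the stack word $\rho^n$ leaves $V$ empty and $H$ holding $\sigma_1,\dots,\sigma_n$ from bottom to top, so \patternV and \patternVH are vacuous, while an occurrence of \patternH in $H$ would read off three positions $i<j<k$ with $\sigma_i<\sigma_k<\sigma_j$, i.e.\ a $132$-pattern of $\sigma$, contradicting the hypothesis.

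For $\sigma\in Av(213)$ I would argue by induction on $|\sigma|$, proving the strengthening: one can push every element of $\sigma$ into $V$, using $H$ only as a buffer, so that the final $V$ is decreasing from bottom to top and $H$ is empty. The structural ingredient is that if $k:=\sigma_1$ then in $Av(213)$ all values larger than $k$ must appear in $\sigma$ before all values smaller than $k$ (otherwise $k$ together with two such values is a $213$-pattern), so $\sigma=k\cdot\alpha\cdot\beta$ with $\alpha$ a permutation of $\{k+1,\dots,n\}$ and $\beta$ a permutation of $\{1,\dots,k-1\}$, both still in $Av(213)$. The procedure is then: push $k$ onto $H$; recursively process $\alpha$, which only manipulates the part of $H$ above $k$ and deposits $\alpha$'s values in $V$ in decreasing order; transfer $k$ from $H$ to $V$; recursively process $\beta$. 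Since $\alpha$-values lie above $k$ and $\beta$-values below $k$, the resulting $V$ is globally decreasing and $H$ is empty.

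For the $\oplus$-cases, write $\sigma=\pi^{(1)}\oplus\pi^{(2)}$, so that $\pi^{(1)}$ carries the values $\{1,\dots,|\pi^{(1)}|\}$ and $\pi^{(2)}$ carries $\{|\pi^{(1)}|+1,\dots,n\}$. In Case~3 I would first push $\pi^{(1)}$ entirely onto $H$ and then apply the Case~2 procedure to $\pi^{(2)}$, which acts above $\pi^{(1)}$ in $H$ and deposits $\pi^{(2)}$'s values decreasingly in $V$; Case~4 is the symmetric concatenation, running Case~2 on $\pi^{(1)}$ first and then pushing $\pi^{(2)}$ onto the now-empty $H$. The absence of \patternV and \patternH is inherited from Cases~1 and 2, and the key new point eliminating \patternVH is that the $H$-values and the $V$-values then occupy two disjoint integer intervals with one entirely above the other, so no $b\in V$ can fit strictly between two $H$-values. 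The delicate step throughout is the induction of Case~2: one has to observe that the recursive sub-procedure uses only $\rho$ and $\lambda$ moves and only reads or writes the top of $H$, so anything lying beneath (be it $k$ in an inner recursive call or an entire $\pi^{(1)}$-block in Case~3) is never disturbed, which is precisely what allows the induction to compose cleanly with the ``push to $H$'' block.
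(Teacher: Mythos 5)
Your proof is correct and follows essentially the same route as the paper: in each case you build the same target total configuration (the $Av(132)$ block stored in $H$ in input order, the $Av(213)$ block stored in $V$ in decreasing order) and then verify the three unsortable stack-patterns are avoided via Proposition~\ref{prop:pushallIffConfigurationEvitePatterns}. The only difference is that where the paper cites Knuth's one-stack result to justify that an $Av(213)$ permutation can be transferred through $H$ into $V$ in decreasing order, you re-derive this directly by induction on the decomposition $\sigma=k\alpha\beta$, and you make explicit the disjoint-value-intervals argument ruling out \patternVH in the $\oplus$ cases, which the paper leaves implicit.
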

\begin{proof}
We show that we can put all elements of $\sigma$ in the stacks so that they avoid patterns of Theorem~\ref{thm:popable} (p.\pageref{thm:popable}).
In the first case, just push every element in stack $H$. 
For the second case, we know from Knuth \cite{Knuth73} that each permutation avoiding $231$ can be sort in increasing order with one stack. 
So each permutation avoiding $213$ can be sort in deacreasing order with one stack. 
Hence we can use stack $H$ to push all elements of $\sigma$ in decreasing order onto stack $V$. 
For the last two cases, we push the elements in corresponding stacks $H$ for $Av(132)$ and $V$ for $Av(213)$. 
In each case, the stack configuration respect conditions of Theorem~\ref{thm:popable}.
\end{proof}

Note that Proposition~\ref{CS} give a sufficient condition which is not necessary: the permutation $143652$ is \pushall but do not belong to one of the preceding cases. 
In this proposition, an important role is given to classes $Av(213)$ and $Av(132)$. 
These indeed are exactly the classes of permutations that can be pushall sorted with a stack configuration where all elements lie in one {\em single} stack ($V$ for $Av(213)$ and $H$ for $Av(132)$). 
Thus the only difficult case is whenever a permutation contains both pattern $132$ and $213$. 
This is characterized by the following proposition:

\begin{prop}\label{prop:132et213}
A permutation $\sigma$ contains both patterns $213$ and $132$ if and only if it contains one of the following patterns:
$1324, 2143, 2413, 3142, 465213$ and $546132$.

\begin{figure}[H]
\begin{center}
\begin{tikzpicture}[scale=.30]
\useasboundingbox (1.5,0.5) rectangle (7,7);
\permutation{1,3,2,4}
\end{tikzpicture}
\begin{tikzpicture}[scale=.30]
\useasboundingbox (1,0.5) rectangle (7,7);
\permutation{2,1,4,3}
\end{tikzpicture}
\begin{tikzpicture}[scale=.30]
\useasboundingbox (1,0.5) rectangle (7,7);
\permutation{2,4,1,3}
\end{tikzpicture}
\begin{tikzpicture}[scale=.30]
\useasboundingbox (1,0.5) rectangle (7,7);
\permutation{3,1,4,2}
\end{tikzpicture}
\begin{tikzpicture}[scale=.30]
\useasboundingbox (1,0.5) rectangle (9,7);
\permutation{4,6,5,2,1,3}
\end{tikzpicture}
\begin{tikzpicture}[scale=.30]
\useasboundingbox (1,0.5) rectangle (8,7);
\permutation{5,4,6,1,3,2}
\end{tikzpicture}
\caption{Minimal permutations containing patterns $132$ and $213$.}
\label{fig:132et213}
\end{center}
\end{figure}
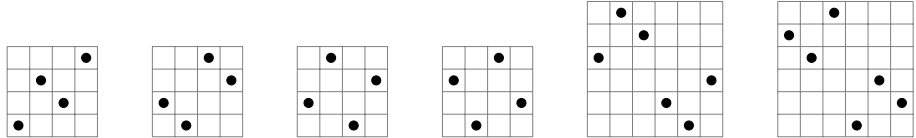
\end{prop}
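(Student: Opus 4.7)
The ``$\Leftarrow$'' direction is a direct inspection: each of the six listed permutations contains both a $132$ and a $213$. For example, in $1324$ the subsequence at positions $(1,2,3)$ is $1\,3\,2$ (a $132$) and at positions $(2,3,4)$ is $3\,2\,4$ (a $213$); analogous witnesses for $2143$, $2413$, $3142$ are immediate from their diagrams, and a brief check of a handful of triples handles $465213$ and $546132$.

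For the ``$\Rightarrow$'' direction my goal is to show that the six listed patterns are exactly the minimal permutations containing both $132$ and $213$, i.e.\ the basis of the class $Av(132) \cup Av(213)$ (which is a class since ``avoids $132$ or avoids $213$'' is downward closed). Given $\sigma$ containing both patterns, I pick an occurrence $A$ of $132$ and an occurrence $C$ of $213$ minimizing $|A\cup C|$. Since a single triple cannot witness both patterns, $|A\cup C|\neq 3$, and clearly $|A\cup C|\leq 6$, leaving the three cases $|A\cup C|\in\{4,5,6\}$.

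The case $|A\cup C|=4$ is settled by enumerating the $4!=24$ permutations of length $4$: exactly $1324, 2143, 2413, 3142$ contain both $132$ and $213$, and each is visibly minimal. For $|A\cup C|=5$ I would argue that every length-$5$ permutation containing both patterns must contain one of these four length-$4$ obstructions: fixing a canonical occurrence of $132$ (or of $213$) inside the $5$-pattern and letting the two remaining points range over their finitely many relative positions and values, a short case-by-case verification exhibits in each surviving situation a $1324$, $2143$, $2413$ or $3142$, so no new minimal obstruction of length $5$ appears.

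The case $|A\cup C|=6$ is the main obstacle. Here $A$ and $C$ are disjoint triples, and the induced $6$-pattern is determined by how the six positions interleave and how the six values are ordered. I would proceed by a systematic sweep: fix the subset of positions occupied by $A$ among the six (there are $\binom{6}{3}=20$ choices), fix the relative order of the $A$-values versus the $C$-values, and then for each of these configurations either spot an occurrence of one of the four length-$4$ obstructions (reducing to the preceding case), or observe that the ambient $6$-pattern is $465213$ or $546132$. A careful but finite bookkeeping shows that exactly those two $6$-patterns survive the pruning. Taken together, the four length-$4$ patterns found in the case $|A\cup C|=4$ and the two length-$6$ patterns produced here give precisely the announced list, completing the proof.
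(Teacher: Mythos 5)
Your proposal is correct and follows essentially the same route as the paper: bound the size of a minimal permutation containing both $132$ and $213$ by $6$ via the union of an occurrence of each pattern, then finish with an exhaustive check of the permutations of size at most $6$. Your case split on $|A\cup C|\in\{4,5,6\}$ merely organizes the ``comprehensive study'' that the paper leaves implicit.
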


\begin{proof}
Minimal permutations that contain both $132$ and $213$ are exactly permutations of the basis of $Av(132) \bigcup Av(213)$. 
By minimality of the elements of the basis those permutations are at most of size $6$ and a comprehensive study ends the proof.
\end{proof}

To prove a complete characterization of $\oplus$-decomposable \pushall permutations, we deal first with permutations whose decomposition contains non-trivial block -i.e. blocks not reduced to a singleton-.

\begin{prop}
\label{prop:B1}
Suppose $\sigma = \oplus[\alpha_1 \dots \alpha_r]$ with $r\geq 2$, each $\alpha_i$ $\oplus$-indecomposable and blocks $\alpha_1$ and $\alpha_r$ are non-trivial. 
Then $\sigma$ is \pushall if and only if $\sigma$ avoids every pattern of $B_1 = \{ 132465, 213546, 214365, 214635, 215364, \linebreak[1] 241365, 314265, 1657243, 4652137, \linebreak 21687435, 54613287 \}$.
\end{prop}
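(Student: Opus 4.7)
The plan is to prove both implications. For the forward direction, since the class of \pushall permutations is closed under pattern containment, it suffices to verify that each of the 11 patterns listed in $B_1$ is itself not \pushall. For each such pattern $\pi$ I would invoke Proposition~\ref{prop:pushallIffConfigurationEvitePatterns} and enumerate by brute force the finitely many ways of distributing the elements of $\pi$ between the two stacks, checking that at least one of the unsortable stack-patterns \patternV, \patternH, \patternVH necessarily appears in the resulting total configuration. This is routine but tedious; the length of the list suggests organising the check by grouping patterns according to the location of their minimum and maximum.

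For the backward direction, assume $\sigma = \oplus[\alpha_1, \ldots, \alpha_r]$ avoids every pattern of $B_1$, with $r \geq 2$ and $\alpha_1, \alpha_r$ non-trivial $\oplus$-indecomposable blocks. I would construct an explicit coloring of the points of $\sigma$ by $\{H, V\}$ whose associated total stack configuration avoids the three unsortable stack-patterns, and then conclude by Proposition~\ref{prop:pushallIffConfigurationEvitePatterns}. Intermediate blocks $\alpha_2, \ldots, \alpha_{r-1}$ occupy value ranges lying strictly between those of $\alpha_1$ and $\alpha_r$, so each can be colored independently by the basic strategies of Proposition~\ref{CS}: entirely in $H$ if it avoids $132$, entirely in $V$ if it avoids $213$, or split otherwise. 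The delicate part concerns $\alpha_1$, which contains the minimum of $\sigma$ and must end up at the bottom of $V$, and $\alpha_r$, which contains the maximum.

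The driving tool is Proposition~\ref{prop:132et213}: for each of $\alpha_1$ and $\alpha_r$ I would split on whether the block avoids $132$, avoids $213$, or contains both. Whenever a block avoids one of them, Proposition~\ref{CS} supplies a natural coloring. When a block contains both, it must embed one of the six minimal patterns listed in Proposition~\ref{prop:132et213}, and each such embedding forces some elements into $H$ (to prevent pattern \patternV) and others into $V$ (to prevent pattern \patternH). I would combine these forced local colorings with the chosen colorings of middle blocks and verify that, under the hypothesis that $\sigma$ avoids $B_1$, the global configuration avoids all three unsortable stack-patterns.

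The main obstacle, I expect, is establishing the exact match between the obstructions produced by this case analysis and the list $B_1$. The six-element patterns in $B_1$ encode local incompatibilities internal to $\alpha_1$ or $\alpha_r$, forcing simultaneously an element into $H$ and into $V$. The seven- and eight-element patterns, in particular $21687435$ and $54613287$, encode cross-block obstructions: individually consistent colorings of $\alpha_1$ and $\alpha_r$ whose forced elements nevertheless combine, possibly through a middle block, to create a straddling pattern \patternVH across the two stacks. The bookkeeping needed to show that avoiding exactly the 11 patterns of $B_1$ precludes every such obstruction, while simultaneously confirming that each pattern in $B_1$ actually does produce one, is the combinatorial heart of the argument.
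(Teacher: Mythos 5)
Your forward direction matches the paper's (a finite check that each pattern of $B_1$ is not \pushall), but your backward direction has a genuine gap. The premise on which your construction rests --- that the intermediate blocks $\alpha_2,\dots,\alpha_{r-1}$ ``can be colored independently'' --- is false for pushall sorting of $\oplus$-decomposable permutations: since no element may be popped before all are pushed, the element $1$ of $\alpha_1$ stays in the stacks throughout, and elements of distinct blocks interact (a block sent to $H$ can complete a \patternH or \patternVH pattern with elements of a later block). This cross-block interaction is precisely why the $\oplus$-decomposable case is delicate in this paper. You also never resolve the part you yourself flag as ``the combinatorial heart'': matching the obstructions arising from your case analysis against the eleven patterns of $B_1$. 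As written, the plan does not close.

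What makes the statement tractable, and what your proposal misses, is a structural collapse. Because $\alpha_1$ and $\alpha_r$ are non-trivial and $\oplus$-indecomposable they each contain $21$; since $\sigma$ avoids $214365=\oplus[21,21,21]$, every intermediate block must be a singleton. Hence only $\alpha_1$ and $\alpha_r$ can contain $132$ or $213$, and avoidance of $132465$ and $213546$ forbids the same pattern from occurring in both extreme blocks. A short case analysis on which of $\alpha_1,\alpha_r$ contains $132$ and which contains $213$ then shows that either $\sigma$ lies in one of the four classes of Proposition~\ref{CS} (so it is \pushall with no new construction needed), or a single block contains both $132$ and $213$; in that last case Proposition~\ref{prop:132et213} forces one of $1324,2143,2413,3142,465213,546132$ inside that block, which combined with a $21$ from the other extreme block produces a pattern of $B_1$, a contradiction. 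So no gluing of ``forced local colorings'' is ever required; the list $B_1$ is exactly what is needed to reduce to Proposition~\ref{CS}.
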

\begin{proof}


We state by checking each pushall stack word of the right size that permutations of $B_1$ are not \pushall. 
Hence if $\sigma$ is \pushall it avoids $B_1$. 
Conversely, let $\sigma$ be a permutation avoiding every pattern of $B_1$. 
As $\alpha_1$ and $\alpha_r$ are non-trivial and $\oplus$-indecomposable, they contain $21$ as a pattern. 
But $\sigma$ avoids $214365$ so that blocks $\alpha_i$ with $2 \leq i \leq r-1$ are trivial. 
Let $I = \{i\ |\ \alpha_i$ contains pattern $132 \}$ and $J = \{j\ |\ \alpha_j$ contains pattern $213 \}$. 
These sets are included in $\{1, r\}$ and not equal to $\{1, r\}$ as $\sigma$ avoids $132465$ and $213546$.
\begin{itemize}
\item If $I = J = \O{}$, then $\alpha_1 \in Av(132)$ and $\oplus[\alpha_2 \dots \alpha_r] \in Av(213)$ so $\sigma \in \oplus[Av(132), Av(213)]$ and $\sigma$ is \pushall by Proposition~\ref{CS}.
\item If $I = \O{}$ and $J = \{ j_0\}$, then $j_0 \in \{1, r\}$.
If $j_0 = 1$ then $\alpha_1 \in Av(132)$ and $\oplus[\alpha_2 \dots \alpha_r] \in Av(213)$ hence $\sigma \in \oplus[Av(132), Av(213)]$ and $\sigma$ is \pushall by Proposition~\ref{CS}.
If $j_0 = r$, as $\sigma$ avoids $213546$ then $r = 2$, but $\alpha_1 \in Av(213)$ and $\alpha_r \in Av(132)$ hence $\sigma \in \oplus[Av(213), Av(132)]$. 
So $\sigma$ is \pushall by Proposition~\ref{CS}.

\item If $I = \{ i_0\}$ and $J = \O{}$, then $i_0 \in \{1, r\}$.
If $i_0 = 1$, as $\sigma$ avoids $132465$ then $r = 2$, but $\alpha_1 \in Av(213)$ and $\alpha_r \in Av(132)$ hence $\sigma \in \oplus[Av(213), Av(132)]$. 
So $\sigma$ is \pushall by Proposition~\ref{CS}.
If $i_0 = r$ then $\alpha_1 \in Av(132)$ and $\oplus[\alpha_2 \dots \alpha_r] \in Av(213)$ hence $\sigma \in \oplus[Av(132), Av(213)]$ and $\sigma$ is \pushall by Proposition~\ref{CS}.

\item If $I = \{ i_0\} \neq J = \{ j_0\}$.
If $i_0 = 1$ then $j_0 = r$ and $r = 2$ as $\sigma$ avoids $132465$. 
But $\alpha_1 \in Av(213)$ and $\alpha_r \in Av(132)$ hence $\sigma \in \oplus[Av(213), Av(132)]$ and $\sigma$ is \pushall by Proposition~\ref{CS}.
If $i_0 = r$ then $j_0 = 1$, $\alpha_1 \in Av(132)$ and $\oplus[\alpha_2 \dots \alpha_r] \in Av(213)$ hence $\sigma \in \oplus[Av(132), Av(213)]$. 
So $\sigma$ is \pushall by Proposition~\ref{CS}.

\item If $I = J = \{ i_0\}$, then by Proposition~\ref{prop:132et213}, $\alpha_{i_0}$ contains either $1324, 2143, 2413, 3142$, $465213$ or $546132$. 
We prove that $\sigma$ contains a pattern of $B_1$. 
If $\alpha_{i_0}$ contains $1324$, either $i_0 < r$, and $\sigma$ would contain $132465$ or $i_0 = r$, and $\sigma$ would contain $213546$. 
Similarly if $\alpha_{i_0}$ contains $2143$, $\sigma$ would contain $214365$. 
The same goes for $\alpha_{i_0}$ containing $2413$, $3142$, $465213$ or $546132$. 
Hence the case $I = J = \{ i_0\}$ cannot occur.
\end{itemize}
\end{proof}

Given two permutation classes ${\mathcal C}$ and ${\mathcal C'}$, their horizontal juxtaposition  $[{\mathcal C}\ {\mathcal C'}]$ consists of all permutations $\sigma$ that can be written as a concatenation $[\pi, \tau]$ 
where $\pi$ is order isomorphic to a permutation in ${\mathcal C}$ and $\tau$ is order-isomorphic to a permutation in ${\mathcal C'}$. 
In other words, a diagram of a permutation $\sigma \in [{\mathcal C}\ {\mathcal C'}]$ can be divided by  a vertical line into two parts, 
such that the left one is order-isomorphic to a permutation of ${\mathcal C}$ and the right one to a permutation of ${\mathcal C'}$. 
We can similarly define the vertical juxtaposition  $ \left[
\begin{array}{l}
{\mathcal C} \\
{\mathcal C'} \\
\end{array}
\right]$ consisting of permutations having a diagram cut by a horizontal line.

\begin{prop}
\label{prop:+[1,x]pt}
A permutation $\oplus[1,\sigma]$ is \pushall if and only if \\
$\sigma \in \big[Av(213)\ Av(132)\big]$ and there exists an associated decomposition $\sigma = [\pi ,\tau]$ 
such that there are no pattern $213$ in $\sigma$ where $2$ is in $\pi$ and $13$ is in $\tau$.
\end{prop}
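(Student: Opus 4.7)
The plan is to characterise \pushall sortings of $\oplus[1,\sigma]$ via Proposition~\ref{prop:pushallIffConfigurationEvitePatterns}: such a sorting exists if and only if there is a total stack configuration for $\oplus[1,\sigma]$ avoiding the three unsortable patterns of Definition~\ref{def:unsortableStackPatterns}. The element $1$ is pushed first and is the minimum, so it starts at the bottom of $H$; in the total configuration it either stays there, or it moves alone to $V$ (since anything added above $1$ in $V$ would create \patternV). This yields two cases to analyse.

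For the $(\Rightarrow)$ direction, if $1\in V$, then $V=\{1\}$ and all of $\sigma$ lies in $H$ in $\sigma$-position order. Avoiding \patternH with $1$ at the bottom then forces $\sigma\in Av(132)$, and the trivial decomposition $\pi=\emptyset,\tau=\sigma$ satisfies the conditions. If $1$ is at the bottom of $H$, let $R$ and $G$ be the values of $\sigma$ ending in $H$ and in $V$ respectively. Avoiding \patternH forces the $R$-subsequence of $\sigma$ to be increasing in value, while avoiding \patternVH forces $\min G>\max R$, so $R=\{1,\dots,t\}$ and $G=\{t+1,\dots,|\sigma|\}$ for some $t$. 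The operational requirement that $V$ end up decreasing bottom-to-top, combined with the interleaving of $R$-pushes, translates to: within each maximal run of consecutive $G$-positions in $\sigma$, the $G$-values can be stack-routed to $V$ in decreasing order; and between runs, earlier-run $G$-values must all exceed later-run ones. I would then take $\pi$ as the prefix of $\sigma$ up to and including the last $G$-position, and $\tau$ as the complementary suffix (made entirely of $R$-elements in increasing order, hence in $Av(132)$). The stack-routing requirement turns out to be equivalent to $\pi\in Av(213)$, and the inequality $\min G>\max R$ translates precisely into the absence of a $213$ pattern with its ``$2$'' in $\pi$ and its ``$13$'' in $\tau$: such a pattern would exhibit a $G$-value of $\pi$ strictly between two $R$-values of $\tau$, contradicting the inequality.

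For the $(\Leftarrow)$ direction, given a cut $\sigma=[\pi,\tau]$ satisfying the three conditions, I would construct a pushall stack word explicitly. Push $1$ to $H$; push $\pi$ in order, using $\lambda$-moves to route the $G$-elements of each $G$-run into $V$ in decreasing value order (possible by $\pi\in Av(213)$) while the $R$-elements of $\pi$ accumulate in $H$ above $1$; finally push $\tau$ onto $H$ and keep it there. The resulting total configuration is valid: $H$ from bottom to top consists of $1$, then the $R$-subsequence of $\pi$, then $\tau$, producing an increasing sequence that avoids \patternH (with $\tau\in Av(132)$ ensuring the topmost block causes no obstruction); $V$ is decreasing by construction; and the $213$-condition is exactly what rules out \patternVH patterns between $V$-elements (originating from $\pi$) and $H$-elements (those of $\tau$ sitting above the smaller $R$-values carried by $\pi$). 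The main obstacle is the bookkeeping of Case~B in the forward direction: identifying the correct cut and translating each operational constraint (increasingness of the $R$-subsequence, $\min G>\max R$, and the decreasing stack-realizability of $G$ both within and between runs) into one of the three conditions on $(\pi,\tau)$.
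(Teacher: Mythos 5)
Your case analysis of the total configuration is wrong at its root, and this breaks both directions. You claim that if $1$ ends up in $V$ then $V=\{1\}$. But \patternV only forbids elements \emph{above} $1$ in $V$; when $1$ is transferred from $H$ to $V$ it lands \emph{on top of} whatever already sits in $V$, so $V$ may well contain, below $1$, all the elements of a prefix of $\sigma$ that were routed to $V$ earlier. These configurations --- $1$ at the top of a nonempty $V$, with a position-prefix $\pi$ of $\sigma$ below it and the complementary suffix $\tau$ in $H$ --- are exactly the ones the proposition is about, and your two cases miss them entirely. A concrete counterexample is $\sigma=2143$, i.e.\ $\oplus[1,\sigma]=13254$: it is \pushall via the cut $\pi=21$, $\tau=43$ (total configuration $V=(3,2,1)$ bottom-to-top, $H=(5,4)$), yet it admits \emph{no} configuration of your type~A (putting $3,2,5,4$ in $H$ over an empty-below $1$ creates \patternH via $2<4<5$) and \emph{no} configuration of your type~B (the only value-down-sets $R$ with increasing subsequence are $\varnothing$ and $\{2\}$, and in both cases the element $3$ gets trapped in $H$ below a later, smaller element, or equivalently the corresponding $G$-run contains a $213$ or violates the between-run inequality). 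So your $(\Rightarrow)$ argument produces no cut for this permutation, even though one exists.

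The same misconception sinks your $(\Leftarrow)$ construction. Keeping $1$ at the bottom of $H$ while $\tau$ is pushed on top means that \emph{any} inversion among the elements above $1$ in $H$ instantiates \patternH with $1$ playing the role of the smallest element; the hypothesis $\tau\in Av(132)$ does not forbid inversions, so your claimed final stack content ``$1$, then an increasing sequence'' is simply not what you get (again for $\sigma=2143$ with $\pi=21,\tau=43$ your recipe yields $H=(1,5,4)$, which contains \patternH). The fix, which is the paper's argument, is to make $1$ leave $H$: route all of $\pi$ through $H$ into $V$ in decreasing order (possible since $\pi\in Av(213)$), then move $1$ to the top of $V$, and only then push $\tau$ into $H$; now \patternH in $H$ requires a genuine $132$ inside $\tau$, and \patternVH between $V$ and $H$ is exactly the split-$213$ condition. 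For the converse, rather than inspecting the configuration right after the last push (which forces your two cases), consider the total configuration at the moment $1$ reaches the top of $V$ --- this moment exists in every pushall sorting since $1$ must be output through $V$, it is still one of the intermediate configurations $c_k(w,\sigma)$, hence avoids the three unsortable patterns, and at that moment $V\setminus\{1\}$ is precisely the prefix pushed before $1$'s transfer and $H$ precisely the suffix pushed after, giving the cut directly.
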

\begin{proof}
If $\sigma = [\pi,\tau]$ with this decomposition satisfying hypothesis of the proposition, then $\oplus[1,\sigma]$ is \pushall using the following algorithm. 
Put $1$ in $H$. 
Then push elements of $\pi$ in stack $V$ in decreasing order. 
Then put $1$ at top of $V$ and finally push every element of $\tau$ onto $H$. 
As there are no pattern $213$ in $\sigma$ with $2$ in $\pi$ and $13$ in $\tau$, the stack configuration respects conditions of Theorem~\ref{thm:popable} hence can be popped out.

Conversely, suppose that $\oplus[1, \sigma]$ is \pushall and consider a stack word for this permutation. 
As $1$ is the first element, it is pushed at the bottom of $H$. 
Then some elements are pushed onto $1$ and into $V$ before $1$ is popped out from stack $H$ to stack $V$. 
The remaining elements are pushed into $H$ as they are greater than $1$. 
We consider the moment where all elements have been pushed and $1$ is at the top of $V$. 
This separates in two parts the elements of $\sigma$ taking $\tau$ as the elements in $H$ and $\pi$ the elements in $V$ apart from $1$. 
From Theorem~\ref{thm:popable} decomposition $\sigma = [\pi ,\tau]$ satisfies conditions of the statement.
\end{proof}

\begin{prop}
\label{prop:B2}
Let $E = \{ \sigma | \oplus[1,\sigma]$ is \pushall $\}$. 
Then $E$ is a finitely based permutation class whose basis is $B_2 = \{ 21354, 24135, 31425, 31524, 32514, 42513, 243516, 254613, \linebreak[1] 325416, \linebreak[1] 362415, 435162, 462135, 513246, 516243, 521436, 521463, 531462, 546132, 4652137 \}$.
\end{prop}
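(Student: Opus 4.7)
The plan is to proceed in three stages, the last of which contains the main technical content. First, I check that $E$ is a permutation class: if $\sigma'\preceq\sigma$ and $\sigma\in E$, then $\oplus[1,\sigma']\preceq\oplus[1,\sigma]$ is a pattern of a \pushall permutation, hence \pushall itself, so $\sigma'\in E$. Therefore $E=Av(B')$ for some uniquely determined basis $B'$, and it remains to prove $B'=B_2$.

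For the inclusion $B_2\subseteq B'$ together with the minimality of its elements, I would argue by finite verification. Each $\beta\in B_2$ has size at most $7$, so there are at most eight cut positions to consider. Using the criterion of Proposition~\ref{prop:+[1,x]pt}, for each $\beta$ enumerate all cut positions and check that at least one of the three conditions fails (the prefix contains $213$, the suffix contains $132$, or some occurrence of $213$ in $\beta$ has its $2$ in the prefix and its $13$ in the suffix), showing $\beta\notin E$. For minimality, for each $\beta\in B_2$ and each index $i\leq|\beta|$, exhibit an explicit valid cut for the $(|\beta|-1)$-length pattern obtained by deleting the $i$-th entry, confirming that every strict sub-pattern of $\beta$ lies in $E$.

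The main step is to show that every $\sigma$ avoiding $B_2$ belongs to $E$. Via Proposition~\ref{prop:+[1,x]pt}, the goal is to find a cut position $k$ for $\sigma=[\pi,\tau]$ satisfying the three conditions. Let $k_a$ be the largest $k$ such that $\sigma_1\ldots\sigma_k$ avoids $213$ and $k_b$ the smallest $k$ such that $\sigma_{k+1}\ldots\sigma_n$ avoids $132$. In the first case $k_b>k_a$: the prefix $\sigma_1\ldots\sigma_{k_a+1}$ contains a $213$ and the suffix $\sigma_{k_a+1}\ldots\sigma_n$ contains a $132$; these two patterns share at most one position, and a case analysis on their overlap and on the relative order of their values produces an element of $B_2$ (for example, a $213$ followed by $354$ sharing the value $3$ gives exactly $21354$). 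In the second case $k_a\geq k_b$, every $k\in[k_b,k_a]$ satisfies conditions (a) and (b), so if $\sigma\notin E$ then each such $k$ admits a bad crossing $213$; combining these crossings with the $213$ pattern forced by the choice of $k_a$ and the $132$ pattern forced by $k_b$ again yields a forbidden pattern of $B_2$.

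The main obstacle is the second case of the main step: one must carefully track how the bad crossing $213$ varies with $k$ across $[k_b,k_a]$ and combine this information with the boundary obstructions to extract the correct element of $B_2$. The length of this enumeration reflects the fact that $B_2$ contains patterns of three different sizes ($5$, $6$, and $7$), each arising from a different obstruction type — crossings that span few versus many positions, overlap or not with the forcing patterns, and exhibit different value interleavings.
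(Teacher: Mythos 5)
Your outer structure is sound and in fact mirrors the paper's: $E$ is a class because \pushall permutations form a class, membership of each element of $B_2$ in the basis is a finite verification via Proposition~\ref{prop:+[1,x]pt}, and the real work is showing that any $\sigma\notin E$ contains a bounded pattern that is itself not in $E$. Your Case 1 ($k_b>k_a$) is correct and complete in spirit: the $213$ forced at the cut $k_a$ and the $132$ forced at the cut $k_b$ are disjoint or share one point, their union $p$ of size at most $6$ has every cut blocked (any cut either keeps the whole $213$ in the prefix or the whole $132$ in the suffix), so $p\notin E$ and a finite check identifies an element of $B_2$ inside it. This matches the paper's cases $t<i$ and $t=i$.

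The gap is in your Case 2, which is precisely where the paper's proof spends all of its effort. When $k_a\ge k_b$, every cut $k\in[k_b,k_a]$ must be blocked by a crossing $213$, i.e.\ an occurrence $\sigma_x\sigma_y\sigma_z$ with $x\le k<y$; but a single such occurrence only blocks the cuts $k$ with $x\le k<y$, so a priori you may need a number of distinct crossings growing with the length of $[k_b,k_a]$, and then the union of all witnessing points has no bounded size --- which is exactly what you must rule out to conclude that some pattern of size at most $7$ from $B_2$ occurs. Saying that one should ``carefully track how the bad crossing varies with $k$'' names the problem rather than solving it. The paper's resolution is to choose the extremal occurrences first: a $132$ occurrence $\sigma_i\sigma_j\sigma_k$ with $i$ maximal and a $213$ occurrence $\sigma_r\sigma_s\sigma_t$ with $t$ minimal, then $r$ minimal, then $s$ maximal. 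These extremality conditions force several regions of the diagram to be empty and in particular force $s=t-1$, so that the boundary patterns already block every cut except the single cut between $s$ and $t$; one additional crossing $\sigma_x\sigma_y\sigma_z$ (again chosen extremally, and shown by a further case analysis to satisfy $x\le i$) then blocks that last cut, giving a set of at most $9$ points whose induced pattern has all cuts forbidden. That collapse of the ``middle'' cuts to a single position is the key idea missing from your proposal; without it, the claim that the crossings ``again yield a forbidden pattern of $B_2$'' is unsupported.
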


\begin{proof}
As \pushall permutations is a permutation class, so does $E$. 
Let $B_2$ be the basis of $E$. 
To prove that $B_2$ is finite, we first prove that every permutation in $B_2$ has size less than $9$. 
Then an comprehensive computation gives the permutations in $B_2$.

By Proposition~\ref{prop:+[1,x]pt}, $E = \{ \sigma = \pi \tau \mid \pi \in Av(213), \tau \in Av(132)$ and there are no pattern $213$ in $\sigma$ where $2$ is in $\pi$ and $13$ is in $\tau \}$. 
Let $\sigma \in B_2$. 
By definition $\sigma \not\in E$ so $\sigma \not\in Av(213)$ and $\sigma \not\in Av(132)$. 
Let $\sigma_i \sigma_j \sigma_k$ be a pattern $132$ such that $i$ is maximal and $\sigma_r \sigma_s \sigma_t$ be a pattern $213$ such that $t$ is minimal, then $r$ minimal (for $t$ fixed) and finally $s$ maximal (for $t$ and $r$ fixed).
\begin{itemize}
\item If $t < i$ then $\pi = \sigma_r \sigma_s \sigma_t \sigma_i \sigma_j \sigma_k \notin E$, hence by minimality of the basis $\sigma = \pi$ so $|\sigma| = 6$.
\item If $t = i$ then $\pi = \sigma_r \sigma_s \sigma_i \sigma_j \sigma_k \notin E$ and by minimality $\sigma = \pi$ so $|\sigma| = 5$.
\item If $t > i$, consider the pattern $\sigma_r \sigma_s \sigma_t$ (shown in Figure~\ref{fig1Preuve}). 
Minimality conditions for $t$ and $r$ and maximality condition for $s$ imply that gray zones in the diagram of $\sigma$ are empty. 
So $s = t-1$. 
As $\sigma \notin E$, there is no possible cut $\sigma = \pi \tau$ such that $\pi \in Av(213)$, $\tau \in Av(132)$ and there are no pattern $213$ in $\sigma$ where $2$ is in $\pi$ and $13$ is in $\tau$. 
Hence, all cuts in $\sigma$ are forbidden, either because they are to the left of a $132$ pattern or to the right of a $213$ pattern or between element $2$ and $1$ of a pattern $213$. 
More specially the cut between $t-1$ and $t$ is forbidden. 
This cut cannot be to the left of a pattern $132$ by maximality of $i$ ($t > i$) and cannot be to the right of a pattern $213$ by minimality of $t$. 
So this cut is between elements $2$ and $1$ of a pattern $213$. 
We consider a pattern $213$ denoted by $\sigma_x \sigma_y \sigma_z$ such that $x$ is minimal and $y$ is minimal for $x$ fixed among patterns $213$ such that $x \leq s = t-1$ and $y \geq t$.

\begin{figure}[H]
\begin{minipage}[b]{.22\linewidth}
\begin{center}
\begin{tikzpicture}[scale=.5]
\draw [fill, lightgray] (2,0) rectangle (3,4);
\draw [fill, lightgray] (0,1) rectangle (1,3);
\draw [help lines] (0,0) grid (4,4);
\draw (2,1) [fill] circle (0.2);
\node at (2.5,0.6) {{\scriptsize$\sigma_s$}};
\draw (3,3) [fill] circle (0.2);
\node at (3.5,3.4) {{\scriptsize$\sigma_t$}};
\draw (1,2) [fill] circle (0.2);
\node at (0.6,2.4) {{\scriptsize$\sigma_r$}};
\end{tikzpicture}
\caption{$\sigma_r \sigma_s \sigma_t$}
\label{fig1Preuve}
\end{center}
\end{minipage}
\begin{minipage}[b]{.23\linewidth}
\begin{center}
\begin{tikzpicture}[scale=.5]
\draw [fill, lightgray] (2,0) rectangle (3,4);
\draw [fill, lightgray] (0,1) rectangle (1,3);
\draw [fill, lightgray] (1,3) rectangle (2,4);
\draw [fill, lightgray] (3,1) rectangle (4,3);
\draw [help lines] (0,0) grid (4,4);
\draw (2,1) [fill] circle (0.2);
\node at (2.5,0.6) {{\scriptsize$\sigma_s$}};
\draw (3,3) [fill] circle (0.2);
\node at (3.5,2.6) {{\scriptsize$\sigma_t$}};
\draw (1,2) [fill] circle (0.2);
\node at (0.6,2.4) {{\scriptsize$\sigma_r$}};
\node at (3.5,3.5) {{\small$A$}};
\node at (3.5,0.5) {{\small$B$}};
\node at (0.5,3.5) {{\small$C$}};
\node at (0.5,0.5) {{\small$D$}};
\end{tikzpicture}
\caption{Cas $r > i$}
 \label{fig2Preuve}
\end{center}
\end{minipage}
\begin{minipage}[b]{.25\linewidth}
\begin{center}
\begin{tikzpicture}[scale=.5]
\draw [fill, lightgray] (2,3) rectangle (4,6);
\draw [fill, lightgray] (0,1) rectangle (1,5);
\draw [fill, lightgray] (1,1) rectangle (2,3);
\draw [fill, lightgray] (3,1) rectangle (6,3);
\draw [fill, lightgray] (3,0) rectangle (4,1);
\draw [help lines] (0,0) grid (6,6);
\draw (3,1) [fill] circle (0.2);
\node at (3.5,0.6) {{\scriptsize$\sigma_s$}};
\draw (4,3) [fill] circle (0.2);
\node at (4.5,3.4) {{\scriptsize$\sigma_t$}};
\draw (2,2) [fill] circle (0.2);
\node at (1.6,2.4) {{\scriptsize$\sigma_r$}};
\draw (1,4) [fill] circle (0.2);
\node at (0.6,4.4) {{\scriptsize$\sigma_x$}};
\draw (5,5) [fill] circle (0.2);
\node at (5.5,5.4) {{\scriptsize$\sigma_z$}};
\node at (0.5,5.5) {{\small$\gamma$}};
\node at (0.5,0.5) {{\small$\delta$}};
\end{tikzpicture}
\caption{$\sigma_x \sigma_r \sigma_s \sigma_t \sigma_z$}
 \label{fig3Preuve}
\end{center}
\end{minipage}
\begin{minipage}[b]{.25\linewidth}
\begin{center}
\begin{tikzpicture}[scale=.5]
\draw [fill, lightgray] (2,0) rectangle (4,2);
\draw [fill, lightgray] (0,1) rectangle (1,5);
\draw [fill, lightgray] (1,4) rectangle (3,6);
\draw [fill, lightgray] (2,2) rectangle (6,4);
\draw [help lines] (0,0) grid (6,6);
\draw (2,2) [fill] circle (0.2);
\node at (2.5,1.6) {{\scriptsize$\sigma_s$}};
\draw (3,4) [fill] circle (0.2);
\node at (3.5,4.4) {{\scriptsize$\sigma_t$}};
\draw (1,3) [fill] circle (0.2);
\node at (0.6,3.4) {{\scriptsize$\sigma_r$}};
\draw (4,1) [fill] circle (0.2);
\node at (4.5,0.6) {{\scriptsize$\sigma_y$}};
\draw (5,5) [fill] circle (0.2);
\node at (5.5,5.4) {{\scriptsize$\sigma_z$}};
\node at (0.5,5.5) {{\small$\gamma$}};
\node at (0.5,0.5) {{\small$\delta$}};
\end{tikzpicture}
\caption{$\sigma_r \sigma_s \sigma_t \sigma_y \sigma_z$}
 \label{fig4Preuve}
\end{center}
\end{minipage}
\end{figure}

\begin{itemize}
\item If $r \leq i$ then $\pi = \{\sigma_r \sigma_s \sigma_t \sigma_i \sigma_j \sigma_k \sigma_x \sigma_y \sigma_z\} \notin E$. 
Indeed all cuts are forbidden: 
those before $r$ by $\sigma_i \sigma_j \sigma_k$, between $r$ and $s$ by $\sigma_r \sigma_s \sigma_t$, between $s$ and $t$ by $\sigma_x \sigma_y \sigma_z$ and before $t$ by $\sigma_r \sigma_s \sigma_t$. 
So by minimality of the basis $|\sigma| \leq 9$.
\item If $r > i$, we want to prove that $x \leq i$. 
Then $\pi = \{\sigma_r \sigma_s \sigma_t \sigma_i \sigma_j \sigma_k \sigma_x \sigma_y \sigma_z\} \notin E$ since all cuts are forbidden as before and $|\sigma| \leq 9$.
As $r > i$ and $i$ maximal, gray zones added in Figure~\ref{fig2Preuve} are empty. 
As $y \geq t$, $\sigma_y$ and $\sigma_z$ lie either both in $A$, or both in $B$, or $\sigma_y$ lies in $B$ and $\sigma_z$ in $A$.
\begin{itemize}
\item If $\sigma_y$ and $\sigma_z$ lie both in $B$, then $\sigma_x$ lies in $D$ and $\sigma_x \sigma_t \sigma_z$ form the permutation $132$ and as $i$ is maximal, $x \leq i$.
\item If $\sigma_y$ and $\sigma_z$ lie both in $A$, then $\sigma_x$ lies in $C$ and by minimality of $y$ we have $y = t$. 
$x$ is minimal, so that gray zones added in Figure~\ref{fig3Preuve} are empty. 
Suppose that $x > i$. 
The cut between $i$ and $i + 1$ is forbidden as $\sigma \notin E$. 
As $i$ is maximal the cut cannot be to the left of a pattern $132$, neither to the right of a pattern $213$ by minimality of $t$. 
Hence the cut lies between element $2$ and $1$ of a pattern $213$. 
Let $\sigma_a \sigma_b \sigma_c$ be such a pattern $213$ such that $a \leq i$ and $b > i$. 
Then $a < x$ and $\sigma_a$ lies in area $\gamma$ or $\delta$ and $c \geq t$ by minimality of $t$. 
If $\sigma_a$ lies in $\gamma$ then $\sigma_a \sigma_t \sigma_c$ is the pattern $213$, which is forbidden by minimality of $x$. 
Hence $\sigma_a$ lies in $\delta$ and $b \geq t$ otherwise $\sigma_a \sigma_b \sigma_t$ is a pattern $213$ with $a \leq i < r$, which is also forbidden by minimality of $r$. 
Hence $\sigma_a \sigma_b \sigma_c$ is a pattern $213$ with $a \leq i < x \leq s$ and $b \geq t$ which is impossible by minimality of $x$.
\item If $\sigma_y$ lies in $B$ and $\sigma_z$ in $A$, by minimality of $x$, $x = r$ or $\sigma_x$ lies in $C$ or $\sigma_x$ lies in $D$. 
If $\sigma_x$ lies in $C$ then $\sigma_x \sigma_t \sigma_z$ is a pattern $213$ which contradicts the minimality of $y$. 
If $\sigma_x$ lies in $D$, $\sigma_x \sigma_r \sigma_s$ is a pattern $132$ hence $x \leq i$. 
If $x = r$, by minimality of $x$ then $y$, gray zones added in Figure~\ref{fig4Preuve} are empty. 
The cut between $i$ and $i + 1$ is forbidden as $\sigma \notin E$. 
As before the cut lies between elements $2$ and $1$ of a pattern $213$. 
Let $\sigma_a \sigma_b \sigma_c$ such a pattern $213$ such that $a \leq i$ and $b > i$. 
Then $a < r$ and $\sigma_a$ lies in $\gamma$ or $\delta$ and $c \geq t$ by minimality of $t$. 
If $\sigma_a$ lies in $\gamma$ then $\sigma_a \sigma_t \sigma_c$ is a pattern $213$ and by minimality of $x$, $x \leq a \leq i$. 
If $\sigma_a$ lies in $\delta$ then $b \geq t$ otherwise $\sigma_b \sigma_t \sigma_y$ is a pattern $132$ with $b > i$, which is forbidden by maximality of $i$. 
But $\sigma_a \sigma_b \sigma_c$ is a pattern $213$ with $a \leq i$ and $b \geq t$, so by minimality of $x$, $x \leq i$.
\end{itemize}
\end{itemize}
\end{itemize}

\end{proof}

\begin{prop}
\label{prop:+[x,1]pt}
A permutation $\oplus[\sigma,1]$ is \pushall if and only if $\sigma \in \left[
\begin{array}{l}
Av(132) \\
Av(213) \\
\end{array}
\right]$ and there exists an associated decomposition $\sigma = \left[
\begin{array}{l}
\pi \\
\tau \\
\end{array}
\right] $ such that there is no pattern $132$ in $\sigma$ where element $3$ is in $\pi$ and elements $1$ and $2$ are in $\tau$.
\end{prop}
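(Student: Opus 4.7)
The plan is to mirror the proof of Proposition~\ref{prop:+[1,x]pt} with the roles of positions and values exchanged. Instead of isolating the extremal element $1$ at the bottom of stack $H$ and decomposing $\sigma$ vertically (by positions), I would isolate the extremal element $M:=|\sigma|+1$ at the top of $H$ and decompose $\sigma$ horizontally (by values).

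For the $(\Rightarrow)$ direction, I would take a valid pushall stack word for $\oplus[\sigma,1]$ and examine the position of $M$ in the total configuration obtained just after the last $\rho$. Since $M$ is the largest element, if $M\in V$ then $V$ must be reduced to $\{M\}$ (otherwise $M$ sits above a strictly smaller element and produces pattern \patternV), so all of $\sigma$ lies in $H$, forcing $\sigma\in Av(132)$ and the trivial split $\pi=\sigma$, $\tau=\emptyset$. Otherwise $M$ lies at the top of $H$; define $\pi$ (resp.\ $\tau$) as the elements of $\sigma$ ending in $H$ (resp.\ in $V$). Using $M$ itself as the large element of pattern \patternVH, any $\tau$-element larger than some $\pi$-element would produce \patternVH, so the split is horizontal with $\pi$ above $\tau$ in values, and avoidance of \patternH in $H$ restricted to $\pi$ yields $\pi\in Av(132)$. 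Finally, I would partition $\tau$ into its maximal runs of $\tau$-elements between consecutive $\pi$-elements of $\sigma$: since $\pi$-elements never move once pushed, a $\tau$-element trapped in $H$ under a later $\pi$-element can never reach $V$, so each run must be entirely $\lambda$-transferred to $V$ before the next $\pi$-element is pushed. The sub-sequence of $V$ produced by a single run is thus obtained by stack-sorting the run through $H$ into a decreasing order, which forces the run to avoid $213$; and the requirement that these consecutive decreasing blocks in $V$ concatenate into a globally decreasing sequence translates into: every element of a later run is strictly smaller than every element of an earlier run. A direct check shows this last condition is equivalent to "no $132$-pattern of $\sigma$ with the $3$ in $\pi$ and the $1,2$ in $\tau$", and together with the run-local avoidance this yields $\tau\in Av(213)$.

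For the $(\Leftarrow)$ direction, I would reverse the analysis into an explicit algorithm: process $\sigma$ left to right, push each $\pi$-element onto $H$ and leave it there, and inside each maximal run of $\tau$-elements use $H$ as a one-stack sorter to push the run to $V$ in decreasing order before the next $\pi$-element (or $M$) is pushed. This one-stack sort is feasible because each run is a pattern of $\tau\in Av(213)$, and the forbidden-$132$ hypothesis guarantees that successive runs produce blocks in $V$ that concatenate into a globally decreasing sequence. After finally pushing $M$ onto $H$, the total stack configuration has $H$ equal to $\pi$ followed by $M$ (which avoids \patternH since $\pi\in Av(132)$ and $M$ is the maximum), $V$ decreasing by construction, and no \patternVH since all $V$-values are strictly below all $H$-values. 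Proposition~\ref{prop:pushallIffConfigurationEvitePatterns} then concludes that $\oplus[\sigma,1]$ is \pushall.

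The main delicate point is the equivalence between the forbidden-$132$ condition of the statement and the "later run's maximum below earlier run's minimum" condition extracted from the stack dynamics; once this reformulation is settled, both implications follow routinely from the characterisation of unsortable stack configurations in Theorem~\ref{thm:popable} and the classical fact that $Av(213)$ is exactly the class of permutations sortable to decreasing order by one stack.
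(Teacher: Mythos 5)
Your proof is correct and follows essentially the same route as the paper's: examine the total configuration just after the last $\rho$, use $M$ sitting on top of $H$ together with pattern \patternVH to force the split of $\sigma$ by values, get $\pi\in Av(132)$ from \patternH, and derive the remaining conditions from Theorem~\ref{thm:popable}, with the converse given by essentially the same push-and-flush algorithm; your run-by-run analysis merely makes explicit what the paper leaves terse. One minor remark: the opening case ``$M\in V$'' is vacuous (immediately after the last $\rho$, $M$ has just been pushed onto $H$), and the claim that $V$ would then reduce to $\{M\}$ is in any case not what \patternV yields (it would only force $M$ to the bottom of $V$) --- but nothing in your argument depends on it.
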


\begin{proof}
Let $n = |\sigma|+1$. 
Consider a pushall sorting of $\oplus[\sigma,1]$. 
This permutation has $n$ as last element, so that we consider the configuration of the stacks just after the insertion of $n$. 
By Theorem~\ref{thm:popable}, it must avoid the pattern \patternVH, so that all elements in $H$ -under $n$- are greater than those of $V$.
Hence we can write $\sigma = \left[
\begin{array}{l}
\pi \\
\tau \\
\end{array}
\right] $ where $\tau$ contains elements of $V$ and $\pi$ those in $H$ -except $n$-.
Then from Theorem~\ref{thm:popable} $\pi \in Av(132)$ and $\tau \in Av(213)$ and that there are no pattern $132$ in $\sigma$ where element $3$ is in $\pi$ and elements $1$ and $2$ are in $\tau$.

Conversely, suppose that there exists a decomposition $\sigma = \left[
\begin{array}{l}
\pi \\
\tau \\
\end{array}
\right] $ respecting the previous conditions then we have a pushall sorting of the permutation $\oplus[\sigma,1]$ using the following algorithm.
While the input is not empty, if stack $H$ is empty or if the top of $H$ belongs to $\pi$, we push the next element of the input onto $H$. 
If $\sigma_i$, the top of $H$ belongs to $\tau$, and if the next element of the input $\sigma_j$ belongs to $\tau$ and is greater than $\sigma_i$, we push $\sigma_j$ onto $H$, otherwise we pop $\sigma_i$ from $H$ and push it onto $V$. 
At each step we verify conditions of Theorem~\ref{thm:popable} so that all elements can be popped out in increasing order at the end.
\end{proof}

\begin{prop}
\label{prop:B3}
Let $F = \{ \sigma | \oplus[\sigma,1]$ is \pushall $\}$. 
$F$ is a finitely based permutation class whose basis is $B_3 = \{13524, 14253, 21354, 31524, 31542, 35142, 135462, 143652, \\162435, 163254, 246513, 263154, 263514, 354162, 462135, 465213, 513642, 516243, 1657243 \}$.
\end{prop}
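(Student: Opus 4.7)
The plan is to mirror the proof of Proposition~\ref{prop:B2}, replacing vertical cuts by horizontal ones and swapping the roles of positions and values throughout.

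First, observe that $F$ is a permutation class: if $\pi$ is a pattern of $\sigma$, then $\oplus[\pi,1]$ is a pattern of $\oplus[\sigma,1]$, so if the latter is \pushall then so is the former (since \pushall permutations form a class). Let $B_3$ denote the basis of $F$. Applying Proposition~\ref{prop:+[x,1]pt}, $\sigma \in F$ iff there is a horizontal cut height $h \in \{0,1,\dots,|\sigma|\}$ such that the points of $\sigma$ with value $>h$ form a pattern in $Av(132)$, the points with value $\leq h$ form a pattern in $Av(213)$, and no $132$ pattern of $\sigma$ has its ``$3$'' above $h$ and its ``$1$'' and ``$2$'' at or below $h$. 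Consequently, for $\sigma \in B_3$ \emph{every} cut height is forbidden by at least one of three kinds of witness: a $132$ lying entirely above the cut (forbidding cuts below its smallest value), a $213$ lying entirely below the cut (forbidding cuts at or above its largest value), or a $132$ split in the prohibited way (forbidding cuts between its ``$2$'' and its ``$3$''). Taking the degenerate cuts $h=0$ and $h=|\sigma|$, I get $\sigma\notin Av(132)$ and $\sigma\notin Av(213)$.

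Next I bound $|\sigma|$ by $9$. I pick an extremal $132$ occurrence $\sigma_i\sigma_j\sigma_k$ with value $\sigma_i$ minimal, and an extremal $213$ occurrence $\sigma_r\sigma_s\sigma_t$ with value $\sigma_t$ maximal, then $\sigma_r$ maximal (for $\sigma_t$ fixed), then $\sigma_s$ minimal (for $\sigma_t,\sigma_r$ fixed)---the values-analog of the extremality choices in Proposition~\ref{prop:B2}. Splitting on the comparison of $\sigma_i$ with $\sigma_t$, in each case I argue that the gray ``forbidden'' zones in the diagram (imposed by the extremality choices) are empty, exactly as in Figures~\ref{fig1Preuve}--\ref{fig4Preuve} of the proof of Proposition~\ref{prop:B2} but with the rows and columns of the diagram exchanged. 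In the delicate subcase where $\sigma_i$ and $\sigma_t$ are at intermediate heights, some horizontal cut between two specific rows cannot be forbidden by a pattern entirely above nor entirely below, so there must exist a third extremal witness $\sigma_x\sigma_y\sigma_z$ (a $132$ split by that cut), and the combinatorics of where $\sigma_x,\sigma_y,\sigma_z$ can sit forces a small subpattern of $\sigma$ (of size at most $9$) which itself already lies outside $F$. By minimality of the basis $\sigma$ equals this subpattern, so $|\sigma|\leq 9$.

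Finally, once the size bound $|\sigma|\leq 9$ is established, a finite exhaustive check enumerates all permutations of size at most $9$, tests membership in $F$ by trying each of the at most $|\sigma|+1$ horizontal cuts against the three conditions of Proposition~\ref{prop:+[x,1]pt}, and extracts the minimal non-members; this produces precisely the $19$ permutations listed as $B_3$.

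The main obstacle will be the fourth step, the detailed case analysis: as in the proof of Proposition~\ref{prop:B2}, the subcase where both $\sigma_y,\sigma_z$ sit on the same side of the extremal $132$ requires introducing a second split-$132$ witness and exhausting several positional configurations. Here this must be translated into the values/heights setting, which complicates the bookkeeping slightly but requires no new idea beyond those already used for the basis $B_2$.
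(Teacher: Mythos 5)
Your proposal is correct in outline, but it takes a genuinely different (and much longer) route than the paper. You re-derive the size bound $|\sigma|\leq 9$ by transporting the whole extremal-witness case analysis of Proposition~\ref{prop:B2} into the values/heights setting, which would work but duplicates the hardest part of that proof and is only sketched in your write-up. The paper instead observes that the characterization of $F$ given by Proposition~\ref{prop:+[x,1]pt} is exactly the image of the characterization of $E$ given by Proposition~\ref{prop:+[1,x]pt} under a symmetry of the diagram, namely rotation by $-\pi/2$ followed by reflection in the axis $(Oy)$ (on points, $(x,y)\mapsto(-y,-x)$): vertical cuts become horizontal cuts, the left block in $Av(213)$ becomes the top block in $Av(132)$, the right block in $Av(132)$ becomes the bottom block in $Av(213)$, and the forbidden split $213$ becomes the forbidden split $132$. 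Since pattern containment is preserved by any symmetry of the square, $F$ is the image of $E$ under this symmetry and hence its basis $B_3$ is the image of $B_2$; no new case analysis or new exhaustive search is needed. Your approach buys self-containedness at the price of redoing the delicate subcase analysis; the paper's symmetry argument buys the entire result in two lines and also explains at a glance why $B_2$ and $B_3$ have the same number of elements. If you do pursue your route, you must actually carry out the transported case analysis rather than assert it mirrors the old one --- but the cleanest way to justify that assertion is precisely the symmetry the paper invokes, at which point you may as well use it directly.
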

\begin{proof}
As the set of \pushall permutations is a permutation class, so is $F$. 
By Proposition~\ref{prop:+[x,1]pt}, $F = \{ \sigma \in \left[
\begin{array}{l}
Av(132) \\
Av(213) \\
\end{array}
\right]$ such that there exists an associated decomposition $\sigma = \left[
\begin{array}{l}
\pi \\
\tau \\
\end{array}
\right] $ such that there is no pattern $132$ in $\sigma$ where element $3$ is in $\pi$ and elements $1$ and $2$ are in $\tau$\}.
Hence $E$ and $F$ are in one-to-one correspondence by taking an element of $E$, rotate its diagram by $-\pi/2$ and apply the symmetry with respect to axis $(Oy)$. 
If elements are in one-to-one correspondence by rotation and symmetry so does the basis which proves the result.
\end{proof}

\begin{prop}
\label{prop:+[1,x,1]pt}
A permutation $\oplus[1,\sigma,1]$ is \pushall if and only if $\sigma \in \oplus[Av(213), Av(132)]$.
\end{prop}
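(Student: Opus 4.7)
The plan is to apply Proposition~\ref{prop:pushallIffConfigurationEvitePatterns}: for the ``if'' direction I exhibit an explicit valid total stack configuration, and for the ``only if'' direction I analyze an arbitrary valid total configuration of $\oplus[1,\sigma,1]$.

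For the ``if'' direction, suppose $\sigma = \oplus[\alpha,\beta]$ with $\alpha \in Av(213)$ and $\beta \in Av(132)$. I will describe the following push-sequence: push $1$ onto $H$; transfer the elements of $\alpha$ (shifted by $1$) through $H$ into $V$ in strictly decreasing order from bottom to top, which is possible by the one-stack argument already invoked in Proposition~\ref{CS} since $\alpha \in Av(213)$, while the element $1$ stays undisturbed at the bottom of $H$; move $1$ onto the top of $V$; push all elements of $\beta$ (shifted) directly onto $H$; finally push the maximum. One then checks that the resulting total configuration avoids the three unsortable stack-patterns of Theorem~\ref{thm:popable}: $V$ is decreasing with $1$ on top, $H$ avoids \patternH because $\beta \in Av(132)$ and appending the maximum creates no new $132$ pattern, and \patternVH is vacuous since every value in $V$ is at most $|\alpha|+1$ whereas every value in $H$ is at least $|\alpha|+2$.

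For the ``only if'' direction, fix a valid total configuration of $\oplus[1,\sigma,1]$ and track the positions of the first input $1$ and of the last input $m := |\sigma|+2$. A short case analysis handles the degenerate cases where $1 \in H$ or $m$ lies at the bottom of $V$: in each such case, \patternVH avoidance forces $V$ or $H$ to collapse in a way that, combined with \patternH avoidance, makes $\sigma$ equal to the identity, which lies in $\oplus[Av(213),Av(132)]$. In the main case, $1$ is at the top of $V$ and $m$ at the top of $H$; since nothing can lie on top of $1$ in $V$ without creating a \patternV, $1$ is the last element entering $V$, and for $1$ to be moved by a $\lambda$ the stack $H$ must reduce to $(1)$ at that instant. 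This pinpoints an index $k$ such that the first $k$ values of $\sigma$ (shifted) are precisely $V\setminus\{1\}$, while the remaining values together with $m$ fill $H$ from bottom to top in input order. Writing $\sigma = \sigma_{\mathrm{pre}}\,\sigma_{\mathrm{post}}$ accordingly: \patternVH with $c=m$ forces every value of $\sigma_{\mathrm{pre}}$ to be smaller than every value of $\sigma_{\mathrm{post}}$, so $\sigma = \oplus[\sigma_{\mathrm{pre}},\sigma_{\mathrm{post}}']$ after renormalization; \patternH avoidance on $H$ yields $\sigma_{\mathrm{post}}' \in Av(132)$; and \patternV avoidance on $V$ forces the order in which the values of $\sigma_{\mathrm{pre}}$ enter $V$ to be decreasing, which by the symmetric form of Knuth's theorem used in Proposition~\ref{CS} is equivalent to $\sigma_{\mathrm{pre}} \in Av(213)$. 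Hence $\sigma \in \oplus[Av(213),Av(132)]$.

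The main technical step is the last one: one must identify the entry-order of $\sigma_{\mathrm{pre}}$ into $V$ with the output of a one-stack algorithm fed by the input sequence $\sigma_{\mathrm{pre}}$, arguing that the element $1$ resting at the bottom of $H$ throughout this sub-phase plays no active role, and then invoke the standard characterization to deduce $\sigma_{\mathrm{pre}} \in Av(213)$.
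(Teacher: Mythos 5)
Your proof is correct, but it follows a genuinely different route from the paper's. The paper obtains this proposition as a two-line corollary of Proposition~\ref{prop:+[1,x]pt}: it applies that result to $\oplus[1,\oplus[\sigma,1]]$ and then observes that, in the presence of the final maximum, the ``no straddling $213$'' condition collapses to a ``no straddling $21$'' condition, which is exactly $\oplus$-decomposability into an $Av(213)$ part followed by an $Av(132)$ part. You instead argue directly from Proposition~\ref{prop:pushallIffConfigurationEvitePatterns}, tracking where the first element $1$ and the last element $m$ sit in a reachable total configuration; this in effect merges the configuration arguments that the paper uses separately to prove Propositions~\ref{prop:+[1,x]pt} and~\ref{prop:+[x,1]pt}. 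What your approach buys is self-containment and an explicit description of the sorting (which elements end in $V$, which in $H$, and in what order); what it costs is the extra case analysis the paper's reduction absorbs silently. Two small points to tighten if you write this out in full. First, the degenerate cases ($1\in H$, or $m\in V$) are correct but need more than pattern avoidance: you must also use reachability, namely that $V$ fills bottom-up so $m\in V$ forces $V$ to be empty until the very last input, and that elements of $H$ sit in input order with $1$ at the bottom; combining these with \patternH and \patternVH does force $\sigma$ to be increasing, but the ``collapse'' is a reachability phenomenon, not a purely static one. Second, in the main case your identification of the entry order into $V$ with a one-stack sort of $\sigma_{\mathrm{pre}}$ is sound precisely because no element of $\sigma_{\mathrm{post}}$ is input before $H$ has emptied down to $(1)$, so the two phases never interleave; it is worth saying this explicitly, since it is what licenses invoking Knuth's characterization to get $\sigma_{\mathrm{pre}}\in Av(213)$.
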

\begin{proof}
By Proposition~\ref{prop:+[1,x]pt}, $\oplus[1,\sigma,1]$ is \pushall if and only if $\oplus[\sigma,1] \in \big[Av(213), Av(132)\big]$ and there exists a corresponding decomposition $\sigma = \pi \tau$ such that there is no pattern $213$ in $\sigma$ where element $2$ is in $\pi$ and $13$ are in $\tau$, which is equivalent to $\sigma \in \big[Av(213), Av(132)\big]$ and there exists a corresponding decomposition $\sigma = \pi \tau$ such that there are no pattern $21$ in $\sigma$ where element $2$ is in $\pi$ and element $1$ is in $\tau$, i.e. $\sigma \in \oplus[Av(213), Av(132)]$.
\end{proof}

We are now able to prove Theorem~\ref{thm:+pt} (p.\pageref{thm:+pt}).

\begin{proof}
Permutations of $B_+$ are not \pushall (check each pushall stack word of the right size), hence if $\sigma$ is \pushall it avoids $B_+$. 
Conversely suppose that $\sigma$ avoids $B_+$. 
Let $\sigma = \oplus[\alpha_1 \dots \alpha_r]$ be the $\oplus$-decomposition of $\sigma$ with $r \geq 2$ and $\alpha_i$ $\oplus$-indecomposable for all $i$.
\begin{itemize}
\item If $\alpha_1$ and $\alpha_r$ are non trivial then $\sigma$ is \pushall thanks to Proposition~\ref{prop:B1}. 
Indeed $\sigma$ avoids $B_1 = \{ 132465, 213546, 214365, 214635, 215364, 241365, 314265, 1657243, \linebreak 4652137, 21687435, 54613287 \}$ as $B_1 \subset B_+$.
\item If $\alpha_1$ is trivial then $\sigma = \oplus[1,\pi]$ and $\pi$ avoids $B_2 = \{ 21354, 24135, 31425, 31524, 32514, \\ 42513, 243516, 254613, 325416, 362415, 435162, 462135, 513246, 516243, 521436, 521463,\linebreak 531462, 546132, 4652137 \}$ so that $\sigma$ is \pushall by Proposition~\ref{prop:B2}.
\item If $\alpha_r$ is trivial then $\sigma = \oplus[\pi, 1]$ and $\pi$ avoids $B_3 = \{13524, 14253, 21354, 31524, 31542, \\ 35142, 135462, 143652, 162435, 163254, 246513, 263154, 263514, 354162, 462135, 465213,\linebreak 513642, 516243, 1657243 \}$ hence $\sigma$ is \pushall by Proposition~\ref{prop:B3}.
\end{itemize}
\end{proof}


We call {\em separable} permutations the class $Av(2413,3142)$.

\begin{thm}
Let $\sigma$ be a separable permutation. 
$\sigma$ is \pushall if and only if $\sigma$ avoids $B = \{ 132465, 213546, 214365, 1354627, 1436527, 1624357, 1632547, 1657243, 4652137,\\ 21687435, 54613287 \}$.
\end{thm}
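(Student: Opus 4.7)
The proof proceeds by structural induction on the $\oplus/\ominus$-decomposition of separable permutations. For the \emph{only if} direction, one checks by inspection that $B \subseteq B_+$ and that every $\tau \in B$ is itself $\oplus$-decomposable (for instance $132465 = \oplus[1,21354]$, $213546 = \oplus[21,1324]$, $4652137 = \oplus[465213,1]$, $54613287 = \oplus[546132,21]$, etc.). Applying Theorem~\ref{thm:+pt} to $\tau$ then shows that $\tau$ is not \pushall, and since the \pushall property is closed under taking patterns, any \pushall permutation must avoid $B$.

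For the \emph{if} direction, let $\sigma$ be separable and avoid $B$. The base case $|\sigma|=1$ is trivial. Otherwise, since every non-trivial separable permutation is either $\ominus$- or $\oplus$-decomposable, we split into two cases. If $\sigma = \ominus[\pi^{(1)}, \ldots, \pi^{(k)}]$, each $\pi^{(i)}$ is a separable pattern of $\sigma$ and thus avoids $B$; by induction each $\pi^{(i)}$ is \pushall, and Proposition~\ref{prop:pushallMoinsDecomposable} yields that $\sigma$ itself is \pushall.

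The crucial case is $\sigma = \oplus[\alpha_1, \ldots, \alpha_r]$: by Theorem~\ref{thm:+pt} it suffices to show that $\sigma$ avoids the larger set $B_+$. My plan is to establish the identity
\[ B \;=\; B_+ \,\cap\, Av(2413,3142), \]
i.e.~that every element of $B_+\setminus B$ contains at least one of the two patterns $2413$ or $3142$. Granting this, any $\tau \in B_+\setminus B$ is automatically avoided by the separable $\sigma$, any $\tau \in B$ is avoided by hypothesis, so $\sigma$ avoids $B_+$ and is \pushall by Theorem~\ref{thm:+pt}.

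The main obstacle is the verification of the displayed identity; this is a finite combinatorial check over the roughly twenty-six permutations in $B_+ \setminus B$, each requiring one to exhibit an occurrence of $2413$ or $3142$. For example, $135246$ contains $2413$ at positions $2,3,4,5$ (values $3,5,2,4$), $142536$ contains $3142$ at the same positions (values $4,2,5,3$), $214635$ contains $2413$ at positions $2,3,4,5$ (values $1,4,6,3$), and the remaining elements of $B_+\setminus B$ are handled analogously, either by inspection or by a brief computer search.
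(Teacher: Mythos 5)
Your proof is correct and follows essentially the same route as the paper's: both reduce the $\oplus$-decomposable case to Theorem~\ref{thm:+pt} by observing that avoiding $B$ together with avoiding $2413$ and $3142$ forces avoidance of $B_+$ (a finite check on $B_+\setminus B$), and both dispose of the $\ominus$-decomposable case via Proposition~\ref{prop:pushallMoinsDecomposable}. One cosmetic slip in your illustrative examples: in $214635$ the occurrence of $2413$ sits at positions $3,4,5,6$ (values $4,6,3,5$), not at positions $2,3,4,5$, whose values $1,4,6,3$ form the pattern $1342$.
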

\begin{proof}
As permutations of $B$ are not \pushall, every \pushall permutation avoids $B$. 
Conversely, supppose that $\sigma$ avoids $B$. 
As $\sigma$ is separable, $\sigma$ is either $\oplus$-decomposable or $\ominus$-decomposable or trivial (i.e. of size 1), and $\sigma$ avoids $2413$ and $3142$ which added to constraints of $B$ gives that $\sigma$ avoids $B_+$, the set defined in Theorem~\ref{thm:+pt}.
If $\sigma$ is $\oplus$-decomposable, then $\sigma$ is \pushall by Theorem~\ref{thm:+pt}.
If $\sigma$ is $\ominus$-decomposable, then $\sigma = \ominus[\pi^{(1)},\pi^{(2)},\ldots, \pi^{(k)}]$ where each $\pi^{(i)}$ is either trivial or $\oplus$-decomposable. 
So $\sigma$ is \pushall by Proposition~\ref{prop:pushallMoinsDecomposable} and Theorem~\ref{thm:+pt}.
\end{proof}
%


\subsection{Basis of stack sorting class}
In the previous section, we show that \pushall separable permutations form a finitely based permutation class. 
This property does not hold for \pushall permutations and we exhibit an infinite antichain in the following proposition.

\begin{prop}
The basis of \pushall permutation is infinite.
\end{prop}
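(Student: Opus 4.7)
The plan is to exhibit an infinite antichain of permutations inside the basis. Since the basis is itself an antichain for the pattern order, producing an infinite antichain of minimal non-\pushall permutations immediately yields the conclusion. To that end, I would construct an explicit family $\{\sigma_n\}_{n \ge n_0}$ of permutations and verify three properties: (a) each $\sigma_n$ is not \pushall; (b) every proper pattern of $\sigma_n$ is \pushall; (c) for $n \ne m$, $\sigma_n$ is not a pattern of $\sigma_m$. Properties (a) and (b) together say that each $\sigma_n$ belongs to the basis, and (c) says the family is an antichain.

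For (a), a natural approach uses Proposition~\ref{prop:pushallIffConfigurationEvitePatterns}: it suffices to show that any total placement of the elements of $\sigma_n$ into the two stacks necessarily creates one of the unsortable stack-patterns \patternV, \patternH, \patternVH of Definition~\ref{def:unsortableStackPatterns}. A natural candidate for $\sigma_n$ is an oscillating family whose structure forces a chain of mutually incompatible constraints on which points must sit in $H$ versus $V$, eventually yielding a contradiction. For (b), one exhibits, for each single-point deletion of $\sigma_n$, a total stack configuration avoiding the three unsortable stack-patterns, again by Proposition~\ref{prop:pushallIffConfigurationEvitePatterns}. For (c), one typically attaches to each $\sigma_n$ an invariant that is strictly monotone in $n$ and well-behaved under pattern containment (for instance a length or an oscillation count), so that $\sigma_n$ being a pattern of $\sigma_m$ would force $n \le m$, and by symmetry of the construction equality.

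The main obstacle will be step (b): checking minimality uniformly over all $|\sigma_n|$ possible deletions requires the family to be designed so that the non-sortability is \emph{tight}, held together by every single point, while simultaneously being rigid enough for the antichain property (c) to hold. Balancing these two requirements is the crux; once a suitable family is fixed, the verifications of (a) and (b) reduce to a controlled case analysis on the allowed positions of each point based on Theorem~\ref{thm:popable}, and (c) follows at once from the chosen invariant.
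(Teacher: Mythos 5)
Your overall strategy is exactly the one the paper follows: exhibit an infinite antichain of permutations, each of which fails to be \pushall while all of its proper patterns are \pushall, using the characterization of pushall sortability via total stack configurations avoiding the three unsortable stack-patterns of Definition~\ref{def:unsortableStackPatterns} (Proposition~\ref{prop:pushallIffConfigurationEvitePatterns}). The gap is that, as written, your proposal contains no proof: the entire mathematical content of this proposition is the exhibition of a concrete family, and you explicitly defer that step (``once a suitable family is fixed\dots''). Asserting that some oscillating family \emph{should} force incompatible stack assignments is a plan, not an argument; nothing in your text rules out that every candidate family fails minimality (b) --- some one-point deletion remaining non-\pushall --- or fails the antichain property (c). Since the statement is precisely an existence claim, the proof begins where your proposal stops.

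For comparison, the paper takes the family $2n-3\ \ 2n-1\ \ 2n-5\ \ 2n\ \dots\ 1\ 6\ 2\ 4$ for $n\ge 3$, i.e.\ $3\,5\,1\,6\,2\,4$, then $5\,7\,3\,8\,1\,6\,2\,4$, and so on. These permutations are simple, and their rigid oscillating structure together with simplicity gives pairwise incomparability directly, with no need for the ad hoc monotone invariant you propose for (c). The paper then asserts, as a ``straightforward though technical'' verification, that each member is not \pushall while every one-point deletion is; that case analysis (your steps (a) and (b)) is where all the work lies, and it is exactly the part your proposal leaves open. To turn your proposal into a proof you would need to either adopt this family (or another explicit one) and actually carry out the verification of (a) and (b) for it.
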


\begin{proof}
Consider permutations $2n-3\ 2n-1\ 2n-5\ 2n \dots p\ p+5 \dots 1\ 6\ 2\ 4$ for $n \geq 3$.
The first ones are depicted in Figure~\ref{fig:antichaine}. These permutations are simple and incomparable.
To complete the proof, straightforward though technical, just check that those permutations are not \pushall and that every pattern 
of these permutations are \pushall.

\begin{figure}[ht]
\begin{center}
\begin{tikzpicture}[scale=.2]
\useasboundingbox (1.5,0.5) rectangle (8,10);
\permutation{3,5,1,6,2,4}
\end{tikzpicture}
\begin{tikzpicture}[scale=.2]
\useasboundingbox (1,0.5) rectangle (10,10);
\permutation{5,7,3,8,1,6,2,4}
\end{tikzpicture}
\begin{tikzpicture}[scale=.2]
\useasboundingbox (1,0.5) rectangle (12,10);
\permutation{7,9,5,10,3,8,1,6,2,4}
\end{tikzpicture}
\begin{tikzpicture}[scale=.2]
\useasboundingbox (1,0.5) rectangle (14,10);
\permutation{9,11,7,12,5,10,3,8,1,6,2,4}
\end{tikzpicture}
\begin{tikzpicture}[scale=.2]
\useasboundingbox (1,0.5) rectangle (16,14);
\permutation{11,13,9,14,7,12,5,10,3,8,1,6,2,4}
\end{tikzpicture}
\caption{An antichain of the basis of \pushall permutations class.}
\label{fig:antichaine}
\end{center}
\end{figure}

\end{proof}

Note that the basis is infinite and contains a infinite number of simple permutations, and the \pushall class contains also an infinite number of simple permutations.

\begin{prop}\label{prop:basePushallTriable}
If $\sigma$ is in the basis of \pushall permutations, then $\sigma$ is $2$-stack sortable.
\end{prop}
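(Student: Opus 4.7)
The plan is to exploit the minimality of $\sigma$ in the basis: every strict pattern of $\sigma$ is \pushall. I would apply this to $\pi$, the pattern of $\sigma$ obtained by deleting its last element $\sigma_n$; since $|\pi| = n-1$, $\pi$ is \pushall and admits a pushall stack word $w$. Using Remark~\ref{rem:decompoNonUnique}, I would decompose $w = uv$ with $u \in \{\rho,\lambda\}^{*}$ and $v \in \{\lambda,\mu\}^{*}$, and then use $w$ as a scaffold to build a valid $2$-stack sorting word of $\sigma$.

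First, I would execute $u$ on $\sigma$ as input. Since $\sigma_1\cdots\sigma_{n-1}$ is order-isomorphic to $\pi$, the resulting stack configuration $c'$ has exactly the same shape as the total pushall configuration of $\pi$ reached after $u$, only with values $\{1,\dots,n\} \setminus \{\sigma_n\}$ instead of $\{1,\dots,n-1\}$; in particular $c'$ avoids the three unsortable stack-patterns of Definition~\ref{def:unsortableStackPatterns}, and $\sigma_n$ is still waiting in the input.

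Next, I would split $v = v_1 v_2$ at the moment when the values $1, 2, \dots, \sigma_n - 1$ have just been output (with the conventions $v_1 = \varepsilon$ if $\sigma_n = 1$ and $v_2 = \varepsilon$ if $\sigma_n = n$). At that instant the stacks contain exactly $\{\sigma_n+1, \dots, n\}$, all strictly greater than $\sigma_n$. Between $v_1$ and $v_2$ I would insert the three moves $\rho \lambda \mu$: $\rho$ pushes $\sigma_n$ (the last remaining input symbol) onto $H$, $\lambda$ transfers $\sigma_n$ to the top of $V$, and $\mu$ outputs $\sigma_n$. The resulting word $u\, v_1\, \rho \lambda \mu\, v_2$ has $n$ occurrences of each letter, trivially satisfies the prefix condition $|\cdot|_\rho \geq |\cdot|_\lambda \geq |\cdot|_\mu$, and outputs $1, 2, \dots, n$ provided it is a valid stack word.

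The main (and only) obstacle is verifying that the three intermediate configurations created by the inserted $\rho \lambda \mu$ avoid the unsortable stack-patterns. The key observation is that $\sigma_n$ is strictly smaller than every element currently in the stacks, and it stays at the top of whichever stack contains it throughout the insertion. Hence $\sigma_n$ cannot play the role of a larger element in \patternH or \patternVH, and $V$ remains decreasing from bottom to top so no \patternV can be created. After the inserted block, the stack configuration coincides with the one just before the insertion, so $v_2$ executes exactly as in the pushall sort of $\pi$ and pops out $\sigma_n+1, \dots, n$ in order, giving a valid $2$-stack sorting of $\sigma$.
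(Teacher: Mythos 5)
Your proposal is correct and follows essentially the same route as the paper: delete the last element $\sigma_n$, use a pushall sorting of the prefix $\sigma_1\cdots\sigma_{n-1}$ (which is \pushall by minimality), pop out $1,\dots,\sigma_n-1$, then push and immediately output $\sigma_n$ before popping the remaining elements. The only difference is that you spell out the verification that the inserted $\rho\lambda\mu$ block is legal, which the paper leaves as ``easy to check.''
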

\begin{proof}
Let $\sigma = \sigma_1\sigma_2\ldots\sigma_n$ be in the basis of \pushall permutations. 
By definition, $\sigma_1\sigma_2\ldots\sigma_{n-1}$ is \pushall. 
We can sort $\sigma$ (not pushall sort $\sigma$) using the following algorithm. 
Push all elements $\sigma_1$ to $\sigma_{n-1}$ in the stacks following the \pushall operations of $\sigma_1\ldots\sigma_{n-1}$. 
Then pop elements $1, 2, \ldots, \sigma_n-1$, then push $\sigma_n$ and pop it to the output and pop the remaining elements. 
It is easy to check that these operations are allowed.
\end{proof}

Those last two propositions give a partial characterization of the basis of \pushall permutations class and $2$-stack sortable permutations. 
A more accurate result can be given for certain type of permutations in the basis.

\begin{prop}
Let $\pi$ be a $\ominus$-decomposable permutation. 
Then $\pi$ belongs to the basis of $2$-stack sortable permutations class if and only if $\pi = \ominus[\sigma,1]$ where $\sigma$ belongs to the basis of \pushall permutations class.
\end{prop}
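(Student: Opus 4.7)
The plan is to combine the characterisation of 2-stack sortability of $\ominus$-decomposable permutations (Proposition~\ref{prop:2stacksMoinsDecomposable}) with the fact (Proposition~\ref{prop:basePushallTriable}) that every basis element of the \pushall class is itself 2-stack sortable. Both directions then reduce to a short extraction of a small non-2-stack-sortable pattern inside $\pi$, followed by a minimality argument.

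For the forward direction, I would take $\pi$ in the basis of the 2-stack sortable class and write $\pi = \ominus[\pi^{(1)},\ldots,\pi^{(k)}]$ with $k\geq 2$ and each $\pi^{(i)}$ $\ominus$-indecomposable. Because each $\pi^{(i)}$ is a proper pattern of $\pi$, minimality of $\pi$ makes every $\pi^{(i)}$ 2-stack sortable; Proposition~\ref{prop:2stacksMoinsDecomposable} then forces some $\pi^{(i_0)}$ with $i_0<k$ to fail to be \pushall. I would extract a pattern $\tau$ of $\pi^{(i_0)}$ lying in the basis of \pushall and pick any element below $\pi^{(i_0)}$ (say the minimum of $\pi^{(k)}$); by the $\ominus$-structure this element lies below all of $\tau$, so together they realise $\ominus[\tau,1]$ as a pattern of $\pi$. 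By Proposition~\ref{prop:2stacksMoinsDecomposable}, $\ominus[\tau,1]$ is not 2-stack sortable because $\tau$ is not \pushall. Since $\pi$ is minimally not 2-stack sortable, this containment must be an equality, hence $\pi=\ominus[\tau,1]$ and $\sigma:=\tau$ belongs to the basis of \pushall.

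For the converse, I would start from $\pi = \ominus[\sigma,1]$ with $\sigma$ in the basis of \pushall. Proposition~\ref{prop:2stacksMoinsDecomposable} applied to the decomposition with blocks $(\sigma,1)$ gives at once that $\pi$ is not 2-stack sortable, since $\sigma$ is not \pushall. To check minimality, take any proper pattern $\mu$ of $\pi$ and split on whether it uses the last element of $\pi$. If $\mu$ does not use that element, then $\mu$ is a pattern of $\sigma$: a strict sub-pattern is \pushall (and hence 2-stack sortable) by minimality of $\sigma$, and the remaining case $\mu=\sigma$ is handled by Proposition~\ref{prop:basePushallTriable}. If $\mu$ uses the last element, then $\mu=\ominus[\tau,1]$ for a strict sub-pattern $\tau$ of $\sigma$; such a $\tau$ is \pushall, and Proposition~\ref{prop:2stacksMoinsDecomposable} yields 2-stack sortability of $\mu$.

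The main obstacle, and the only place where real care is needed, is the forward-direction collapse of the containment $\ominus[\tau,1]\leq \pi$ into an equality. This is what forces $\pi$ to have the restrictive form $\ominus[\sigma,1]$ with a singleton final block (in particular it rules out $k\geq 3$ and forces $|\pi^{(k)}|=1$), rather than merely showing that $\pi$ ends with a $\ominus$-factor containing a basis element of \pushall. The converse relies crucially on Proposition~\ref{prop:basePushallTriable} to cover the case $\mu=\sigma$; everything else is bookkeeping with Proposition~\ref{prop:2stacksMoinsDecomposable}.
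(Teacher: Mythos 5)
Your proposal is correct and follows essentially the same route as the paper: both directions rest on Proposition~\ref{prop:2stacksMoinsDecomposable} together with Proposition~\ref{prop:basePushallTriable}, with minimality of basis elements used to collapse the containment $\ominus[\tau,1]\leq\pi$ to an equality. The only cosmetic difference is that you extract a \pushall-basis pattern $\tau$ inside the offending block before collapsing, whereas the paper first collapses to $\ominus[\pi^{(i)},1]$ and then argues $\pi^{(i)}$ must itself be a basis element; the two orderings are interchangeable.
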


\begin{proof}
Let $\pi = \ominus[\sigma,1]$ with $\sigma$ a pemrutation of the basis of \pushall permutations class. 
Proposition~\ref{prop:2stacksMoinsDecomposable} ensures that $\pi$ is not $2$-stack sortable. 
Note also that every pattern of $\pi$ is $2$-stack sortable. 
To prove this result, suppose that you remove a point in the permutation. 
Suppose we delete element $1$ then the obtained permutation is $\sigma$, hence it is $2$-stack sortable by Proposition~\ref{prop:basePushallTriable}. 
Otherwise we delete an element of $\sigma$ leading to $\sigma'$ which is \pushall by the definition of a permutation class basis. 
Then, $\ominus[\sigma',1]$ is $2$-stack pushall sortable using Proposition~\ref{prop:2stacksMoinsDecomposable}. 

Conversely, if $\sigma=\ominus[\pi^{(1)}, \pi^{(2)}, \ldots, \pi^{(k)}]$ belongs to the basis of $2$-stack sortable permutations class, then by Proposition~\ref{prop:2stacksMoinsDecomposable}, either $\pi^{(k)}$ is not $2$-stack sortable which contradicts the minimality of $\sigma$ ($\sigma$ is an element of the basis so that every pattern of $\sigma$ must belong to the class) or there exists $1 \leq i \leq k-1$ such that $\pi^{(i)}$ is not \pushall. 
But in that case, $\ominus[\pi^{(i)},1]$ is not $2$-stack sortable by Proposition~\ref{prop:2stacksMoinsDecomposable} hence $\sigma = \ominus[\pi^{(i)},1]$ by minimality of basis elements. 
If $\pi^{(i)}$ has a proper pattern $\tau$ which is not \pushall then $\ominus[\tau,1]$ is a proper pattern of $\sigma$ which is not $2$-stack sortable. 
This is impossible as $\sigma$ belongs to the basis of $2$-stack sortable permutations class. 
So $\pi^{(i)}$ belongs to the basis of \pushall permutations class, which concludes the proof.
\end{proof}

\section{Sorting and bi-coloring}\label{sec:coloring}

\subsection{A simple characterization}
There is a natural relation between $2$-stack pushall sorting and coloring of permutation diagram into two colors. 
The key idea is to look at the stack configuration once all elements of the permutation are pushed into the stacks. 
Then each element of the permutation belong either to stack $H$ or to stack $V$.
We assign a color to them depending in which stack they lie at this particular step of the sorting. 
In this article we color like \Hzone points that lie in stack $H$ and like \Vzone points in stack $V$.

However by Remark \ref{rem:decompoNonUnique}, this stack configuration is not unique, and neither is the coloring.

\begin{defn}\label{def:validColoring}
A {\em bicoloring} of a permutation $\sigma$ is a coloring of the points of the diagram of $\sigma$ with two colors \G and \R.

A {\em valid coloring} is a bicoloring which avoids each of the four following colored pattern:
\begin{itemize}
\item pattern \RRR: there is a pattern $132$ in \R
\item pattern \GGG: there is a pattern $213$ in \G
\item pattern \GGR: there is a point of \R lying vertically between a pattern $12$ of \G
\item pattern \RRG: there is a point of \G lying horizontally between a pattern $12$ of \R
\end{itemize}
\end{defn}

\begin{defn}
Let $\sigma$ be a permutation. 
To each total stack configuration of $\sigma$ the map $Col$ assigns the bicoloring of $\sigma$ such that elements of $H$ are in \R and elements of $V$ are in \G.
To every bicoloring of a permutation $\sigma$ the map $Conf$ associates the total stack configuration of $\sigma$ such that elements of \G lie in $V$ in decreasing order of value from bottom to top and elements of \R lie in $H$ in increasing order of indices from bottom to top.
\end{defn}

\begin{rem}\label{rem:Col/Conf}
For any bicoloring $b$, $Col(Conf(b))=b$.
For any stack configuration $c$ such that elements of $V$ are in decreasing order of value from bottom to top and elements of $H$ are in increasing order of indices from bottom to top, $Conf(Col(c))=c$.
\end{rem}

\begin{prop}\label{prop:AlgoPush}
Let $b$ be a bicoloring of a permutation $\sigma$.
Then Algorithm~\ref{algo:Push} applied to $b$ returns $true$ \ssi $Conf(b)$ is reachable for $\sigma$.
In this case the stack configuration to which Algorithm~\ref{algo:Push} leads is $Conf(b)$.
\end{prop}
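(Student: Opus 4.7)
The plan is to handle both implications of the equivalence together with the structural claim by analyzing the execution of Algorithm~\ref{algo:Push} as a simulation of the pushing phase targeting $Conf(b)$. First I would observe that the algorithm uses only the two kinds of moves $\rho$ (push next input onto $H$) and $\lambda$ (move top of $H$ onto $V$), never $\mu$; it is therefore a candidate for a pushall schedule. The $\rho$'s occur exactly in the natural order $\sigma_1,\sigma_2,\ldots,\sigma_n$ of the input, while the $\lambda$'s are meant to transfer each \G point from $H$ to $V$ so that, at the end, the configuration produced is the one prescribed by $Conf(b)$.

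For the direction \emph{algorithm returns true implies $Conf(b)$ is reached} (and hence is reachable), I would maintain the following loop invariant: after the iteration processing $\sigma_i$, every \R point of $\{\sigma_1,\ldots,\sigma_i\}$ lies in $H$ in increasing order of index from bottom to top, every \G point of $\{\sigma_1,\ldots,\sigma_i\}$ already transferred lies in $V$ in decreasing order of value from bottom to top, and the \G points of $\{\sigma_1,\ldots,\sigma_i\}$ not yet transferred sit on top of $H$ in the order in which they will eventually be transferred. The invariant is verified step by step from the algorithm's code: the non-failure branches are precisely those for which the performed $\rho$ or $\lambda$ preserves it. Once the whole input has been consumed and the final $\lambda$-cleanup has moved the remaining \G's onto $V$, the invariant forces the current configuration to be exactly $Conf(b)$, which yields both the structural claim and the easy direction of the equivalence.

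For the converse \emph{$Conf(b)$ reachable implies the algorithm returns true}, I would use the key observation that any schedule of $\rho$'s and $\lambda$'s reaching $Conf(b)$ uses exactly the same multiset of operations: $|\sigma|$ moves $\rho$ performed in the fixed input order, and one move $\lambda$ per \G point, performed in the order dictated by the stack $V$ of $Conf(b)$ (the \G with the largest value first, then the next largest, and so on, by Remark~\ref{rem:Col/Conf}). Every witness schedule is therefore merely a valid interleaving of these two totally ordered sequences of operations. The algorithm also respects both orderings, and fails only when it is forced to play a $\lambda$ whose required \G point is not yet on top of $H$, or a $\rho$ that would bury a \G that still must go to $V$. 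A standard exchange argument then shows that if any valid interleaving exists, the algorithm's canonical interleaving does too: any feasible schedule can be rearranged by swapping adjacent operations into the canonical pattern without breaking feasibility, because the two component orderings are fixed.

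The main obstacle is formalizing this exchange step, namely to verify precisely that whenever the algorithm would return \emph{false} at some point, no interleaving of the two forced orderings could avoid the obstruction. Concretely, such an obstruction is either a \G point $g$ that must be transferred to $V$ although the current top of $V$ has smaller value (contradicting the decreasing order of greens in $V$ in $Conf(b)$, hence impossible in any schedule), or a \R point $\sigma_i$ that needs to be pushed onto $H$ while a \G point still sits at the top of $H$ and cannot legally be moved to $V$ (again globally forced by $Conf(b)$). Once this case analysis is carried out, both implications of the equivalence and the identification of the output with $Conf(b)$ follow.
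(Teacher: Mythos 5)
Your first direction is sound and matches the paper: your loop invariant is exactly the content of Lemma~\ref{lem:configAlgoPush}, and reading off $Conf(b)$ from the invariant at termination is how the paper concludes. The issue is in the converse. The ``standard exchange argument by swapping adjacent operations'' does not go through as stated, because a $\lambda$ is not an operation labelled by an element: it transfers whatever is currently on top of $H$. If a schedule contains the adjacent pair $\rho(\sigma_i)\,\lambda$, that $\lambda$ necessarily moves $\sigma_i$ itself, so ``swapping'' it to $\lambda\,\rho(\sigma_i)$ changes which element is transferred (and may be illegal, e.g.\ transferring an element before it has been pushed). So feasible schedules cannot be freely rearranged into the algorithm's interleaving; the two component orderings you identify are not independent of the interleaving.

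What saves the argument --- and what the paper actually proves --- is that there is nothing to exchange: any witness schedule $w$ reaching $Conf(b)$ is \emph{forced}, move by move, to coincide with the algorithm's run. The paper shows by induction on the prefix length that $w$ and the algorithm's sequence $w'$ agree: at each step, if the top of $H$ is a \G element $g$ and the next input is \R or smaller than $g$, then performing $\rho$ would bury $g$ under an element that must end up below $g$ in $V$ (impossible since $V$ carries no \R and is decreasing), so $w$ must play $\lambda$; symmetrically in the other branches $w$ must play $\rho$. Your ``main obstacle'' paragraph is in fact this forcing argument in disguise --- and it is the right one --- but note that to conclude ``the current top of $V$ has smaller value, contradicting $Conf(b)$'' you need to know that the algorithm's partial configuration agrees with the witness's partial configuration at that point, which is precisely what the prefix induction supplies. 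I would therefore drop the exchange framing, state and prove the prefix-agreement induction directly, and then your two obstruction cases become the inductive step rather than an afterthought. (You should also add the closing observation that once $Conf(b)$ is reached the input is exhausted and $top(H)\notin\G$, so both loops terminate and the algorithm returns true.)
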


\begin{algorithm}
 \SetAlgoLined
\LinesNumbered
  \KwData{$\sigma$ a permutation and $b$ a bicoloring of $\sigma$.}
  \KwResult{True if the stack configuration corresponding to $b$ is reachable from $\sigma$.}
Begin with the empty stack configuration and $\sigma$ as input and $i=1$\;
\While{$i \leq |\sigma|$}{
  \eIf{$H$ is empty or $top(H) \in R$}{
      push $\sigma_i$ into $H$\;
      $i \longleftarrow i+1$\;
  }( {\it /* $top(H) \in G$ */})
  {
  \eIf{$\sigma_{i} \in R$ or $\sigma_{i} < top(H)$}{
      \eIf{$V$ is empty or $top(H) < top(V)$}{
           pop $top(H)$ from stack $H$ and push it into $V$\;
      }{Return false\;}
  }( {\it /* $top(H) \in G$,  $\sigma_i \in G$ and $\sigma_{i} > top(H)$*/})
  {
  push $\sigma_i$ into $H$\;
  $i \longleftarrow i+1$\;
  }
  }
}
\While{$H$ is nonempty and $top(H) \in G$}{
  \eIf{$top(H) < top(V)$}{
      pop $top(H)$ from stack $H$ and push it into $V$\;
  }{Return false\;}
}
Return true\;
\caption{Algorithm to obtain a reachable configuration compatible with a bicoloring}
\label{algo:Push}
\end{algorithm}

To state this proposition we need the two following lemmas:

\begin{lem}\label{lem:configAlgoPush}
At each step of Algorithm~\ref{algo:Push}, the stack configuration we have is reachable for $\sigma$, elements of $H$ are in increasing order of indices from bottom to top, elements of $V$ are in decreasing order of value from bottom to top, there is no element of \R in $V$, there is no element of \R above an element of \G in $H$ and elements of \G that lie in $H$ are in increasing order of value from bottom to top.

Moreover index $i$ verifies that if $i\leq |\sigma|$ then $\sigma_i$ is the next element of the input and if $i> |\sigma|$ then there is no more element in the input.
\end{lem}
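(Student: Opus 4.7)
The plan is to proceed by induction on the number of operations that Algorithm~\ref{algo:Push} has executed. For the base case the configuration is empty and $i=1$, so every conjunct of the invariant is vacuously or trivially satisfied.

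For the inductive step I would observe that, between two consecutive states, the algorithm performs exactly one of three kinds of actions: (i) a push of $\sigma_i$ onto $H$ (triggered either because $H$ is empty, or because $\mathrm{top}(H)\in \R$, or because $\mathrm{top}(H)\in \G$, $\sigma_i\in \G$ and $\sigma_i>\mathrm{top}(H)$); (ii) a transfer of $\mathrm{top}(H)$ from $H$ to $V$ (triggered only when $\mathrm{top}(H)\in \G$ and, under the convention $\mathrm{top}(V)=+\infty$ when $V$ is empty, when $\mathrm{top}(H)<\mathrm{top}(V)$); (iii) a \texttt{return false}, in which case there is nothing more to check. Reachability for $\sigma$ is immediate: each such step is itself a legal two-stack operation applied to a state already assumed to be reachable.

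I would then verify the other invariants action by action. The ordering of indices in $H$ is preserved by any push because $\sigma_i$ carries a larger index than every element currently in $H$ (any previous push used a smaller index and $i$ is strictly increasing). The decreasing order of values in $V$ is preserved by a transfer thanks to the guard $\mathrm{top}(H)<\mathrm{top}(V)$. The property that $V$ contains no \R element is preserved because only greens are ever transferred. The combination of ``no \R above \G in $H$'' and ``greens in $H$ increasing from bottom to top'' is equivalent, under the induction hypothesis, to the stronger structural statement that $H$ consists, from bottom to top, of a block of \G in strictly increasing value followed by a block of \R (in strictly increasing index). Under this shape, the three push-triggers of the algorithm are precisely the three situations in which pushing $\sigma_i$ preserves that shape: either $H$ is empty, or the top is already \R so any new top is appended to the \R block (and in that case the \G block is empty, so pushing a green on top is also consistent), or the top is \G and $\sigma_i$ is a strictly larger green that extends the \G block. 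Every transfer removes the topmost \G of the \G block and places it legally on $V$, again preserving the shape. The condition on $i$ is preserved because $i$ is incremented by one precisely when $\sigma_i$ leaves the input to be placed in $H$, and is not changed otherwise; thus $\sigma_i$ is indeed always the next input symbol for $i\le |\sigma|$.

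The only delicate step is identifying the right strengthening of the two invariants on $H$, namely the ``\G-block below \R-block'' shape. Once that shape is isolated, the verification for each of the algorithm's branches is a short case check, so I expect no genuine combinatorial obstacle beyond bookkeeping; the conclusion of the lemma follows by induction.
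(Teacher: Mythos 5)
Your proof follows the same route as the paper's: induction on the number of stack operations performed, with a case analysis on which branch of Algorithm~\ref{algo:Push} is executed, and your individual verifications (reachability, index order in $H$, value order in $V$, absence of \R in $V$, the bookkeeping of $i$) all coincide with the paper's argument.

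There is, however, one slip in your key structural claim. You assert that the two invariants on $H$ amount to ``$H$ consists, from bottom to top, of a block of \G in strictly increasing value followed by a block of \R.'' That is upside down: the invariant ``there is no element of \R above an element of \G in $H$'' says precisely that every \R element lies \emph{below} every \G element, so the correct shape is an \R block (in increasing index) at the bottom with a \G block (in increasing value) on top. This is also what the algorithm actually produces: an element is pushed onto a nonempty $H$ only when the top of $H$ is in \R (in which case, by the induction hypothesis, the \G block is empty, so the push cannot place an \R above a \G) or when the top of $H$ is in \G and the incoming element is a strictly larger \G. Your own parenthetical case check (``if the top is \R then the \G block is empty, so pushing a green is consistent'') is in fact reasoning about this correct orientation, so the error reads as a transcription slip rather than a conceptual one; still, as written the ``stronger structural statement'' is false and contradicts the very invariant being proved. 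Once the shape is stated the right way round, the argument is complete and essentially identical to the paper's.
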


\begin{pf}
The proof is by induction on the number of stack operations performed by the algorithm.
Algorithm~\ref{algo:Push} begins with the empty stack configuration and $\sigma$ as input and $i=1$ so the properties are true at the beginning. Algorithm~\ref{algo:Push} performs only appropriate stack operations so at each step the configuration obtained is reachable for $\sigma$.
Moreover in a reachable configuration, elements of $H$ are in increasing order of indices.
When an element is put in $V$ (this happens at line $9$ or $21$) then this element is in \G (checked at line $6$ or $19$) and is smaller than the top of $V$ (checked at line $8$ or $20$) so that elements of $V$ remain in decreasing order of value from bottom to top and $V$ contains no element of \R. 
When we put an element in $H$, it can be at line $4$ or $14$. 
In the first case, $H$ is empty or its top is in \R (checked at line $3$) so all its elements are in \R by induction hypothesis. 
In the second case, the top of $H$ is in \G and the element we put in $H$ is in \G and greater than the top of $H$. 
This ensures that there is no element of \R above an element of \G in $H$ and that elements of \G that lie in $H$ are in increasing order from bottom to top (using induction hypothesis). 
Finally $i$ is increased exactly when $\sigma_i$ is put into $H$ so the last property remains true.
\end{pf}

\begin{lem}\label{lem:AlgoPushTerminates}
Algorithm~\ref{algo:Push} terminates in linear time w.r.t $|\sigma|$.
\end{lem}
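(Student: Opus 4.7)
The plan is to bound separately the number of iterations of each of the two \texttt{while} loops, and observe that each iteration performs a constant amount of work.

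First I would analyze the main \texttt{while} loop (lines 2--17). Inspecting the branches, each iteration that does not return \texttt{false} performs exactly one of two kinds of actions: either it pushes $\sigma_i$ onto $H$ and increments $i$ (lines 4--5 or lines 14--15), or it pops the top of $H$ and pushes it onto $V$ (line 9). Call these a $\rho$-iteration and a $\lambda$-iteration respectively. The number of $\rho$-iterations is at most $|\sigma|$, since $i$ is increased by one at each such iteration and the loop guard forces $i \leq |\sigma|$. The number of $\lambda$-iterations is also at most $|\sigma|$: by Lemma~\ref{lem:configAlgoPush} the algorithm never pushes anything from $V$ back to $H$, so each element of $\sigma$ is transferred from $H$ to $V$ at most once, and there are only $|\sigma|$ such elements in total. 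Hence the main loop executes at most $2|\sigma|$ iterations before either returning \texttt{false} or exiting.

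The second \texttt{while} loop (lines 18--23) performs only $\lambda$-operations. Since no element of $V$ ever moves back to $H$ in the algorithm, the total number of $\lambda$-operations across both loops is still bounded by $|\sigma|$; in particular the second loop executes at most $|\sigma|$ iterations.

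Finally, each iteration of either loop accesses only the tops of $H$ and $V$, compares values, and performs at most one stack push/pop and one input read, which is $O(1)$ work. Summing gives a total running time of $O(|\sigma|)$, so the algorithm terminates in linear time with respect to $|\sigma|$. There is no real obstacle here; the only subtlety is to notice that the algorithm is monotone in the sense that elements flow strictly from the input to $H$ and then from $H$ to $V$, which is exactly what Lemma~\ref{lem:configAlgoPush} has already established.
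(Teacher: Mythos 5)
Your proof is correct and follows essentially the same route as the paper's: count that at most $|\sigma|$ moves $\rho$ and at most $|\sigma|$ moves $\lambda$ can ever be performed, so the algorithm stops after $O(|\sigma|)$ constant-time steps. Your version just spells out slightly more explicitly why the $\lambda$-count is bounded (elements flow only from $H$ to $V$), which the paper leaves implicit.
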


\begin{pf}
At each step, Algorithm~\ref{algo:Push} performs either a legal move $\rho$, or a legal move $\lambda$, or return false or true (and stops). 
As at most $|\sigma|$ legal moves $\rho$ and $|\sigma|$ legal moves $\lambda$ can be done, Algorithm~\ref{algo:Push} terminates after at most $2|\sigma|+1$ steps. 
As each step is done in constant time, we have the result.
\end{pf}

We are now able to prove Proposition~\ref{prop:AlgoPush}:

\begin{pf}
If Algorithm~\ref{algo:Push} applied to $b$ returns true, then it reaches line $26$. 
In particular the loop of line $19$ stops so the top of $H$ is not in \G. 
Thus by Lemma~\ref{lem:configAlgoPush} there is no element of \G in $H$. 
In addition by the same lemma elements of $H$ are in increasing order of indices from bottom to top, elements of $V$ are in decreasing order of value from bottom to top and there is no element of \R in $V$. 
So the stack configuration we have is $Conf(b)$.
Moreover Lemma~\ref{lem:configAlgoPush} states that the stack configuration we have is reachable for $\sigma$, so $Conf(b)$ is reachable for $\sigma$.

Conversely if $Conf(b)$ is reachable for $\sigma$, then there is a sequence $w$ of appropriate stack operations so that the configuration obtained with $\sigma$ as input is $Conf(b)$. 
Let us prove that the sequence of moves $w'$ performed by Algorithm~\ref{algo:Push} applied to $b$ is $w$. 
We prove by induction on $k \leq |w|$ ($k \geq 0$) that $w$ and $w'$ have the same prefix of length $k$ (obvious for $k=0$). 
First notice that as $Conf(b)$ is a total stack configuration, so $w$ has no letter $\mu$, and that Algorithm~\ref{algo:Push} performs only moves $\lambda$ and $\rho$, so $w'$ has no letter $\mu$. 
Suppose that $w$ and $w'$ have the same prefix $v$ of length $k$ with $k<|w|$, let $c'$ be the stack configuration obtained after permforming moves of $v$ with $\sigma$ as input. 
We want to prove that $w'_{k+1}$ exists and $w'_{k+1} = w_{k+1}$. 
By definition of $w'$, $w'_{k+1}$ is the move performed by Algorithm~\ref{algo:Push} in configuration $c'$ (setting by extension $w'_{k+1}=\alpha$ if Algorithm~\ref{algo:Push} terminates in configuration $c'$, i.e. if $|w'|=k$), and by defintion of $w$, $w_{k+1}$ is a move which allows to go from configuration $c'$ to configuration $Conf(b)$ (maybe with some additional moves).

We check the value of $i$ after Algorithm~\ref{algo:Push} has performed moves $v$. 
We know that at this step stacks are in configuration $c'$.

If $i> |\sigma|$, then from Lemma~\ref{lem:configAlgoPush} in configuration $c'$ all elements of $\sigma$ lie already in the stacks. 
As $w$ is a sequence of appropriate stack operations, then $w_{k+1} \neq \rho$ so $w_{k+1} = \lambda$ ($w$ has no letter $\mu$). 
As $w_{k+1}$ is a move which allows to go from configuration $c'$ to configuration $Conf(b)$ in which there is no elements of \R in $V$ and $V$ is decreasing, then the top of $H$ in $c'$ is in \G and smaller than the top of $V$ (or $V$ is empty). 
As $i> |\sigma|$ and as the top of $H$ in $c'$ is in \G and smaller than the top of $V$ (or $V$ is empty) then Algorithm~\ref{algo:Push} performs line $21$ so $w'_{k+1} = \lambda = w_{k+1}$.

If $i \leq |\sigma|$ then we are in the loop beginning at line~$2$ of the algorithm and from Lemma~\ref{lem:configAlgoPush} $\sigma_i$ is the next element of the input. 
Suppose that $w_{k+1} = \lambda$. 
As $w_{k+1}$ is a legal move which allows to go from configuration $c'$ to configuration $Conf(b)$ in which there is no elements of \R in $V$ and $V$ is decreasing, then $H$ is non empty, the top of $H$ is in \G and smaller than the top of $V$. 
Suppose in addition that $\sigma_i \in \G$. 
As $\sigma_i$ is still on the input after $w_{k+1}$ and $w_{k+1}$ is a move which allows to go to configuration $Conf(b)$ in which $V$ is decreasing, then $\sigma_i$ is smaller than the top of $H$ in $c'$. 
So either $\sigma_i < \text{top(H)}$ or $\sigma_i \in \R$. 
So from $c'$ Algorithm~\ref{algo:Push} performs line $9$ so $w'_{k+1} = \lambda = w_{k+1}$.

Suppose that $w_{k+1} = \rho$. 
If in configuration $c'$ stack $H$ is empty or top(H) $\in R$ then Algorithm~\ref{algo:Push} performs line $4$ so $w'_{k+1} = \rho = w_{k+1}$. 
Otherwise let $\sigma_h$ be the top of $H$ in $c'$, then $\sigma_h \in \G$. 
So $\sigma_h \in V$ in $Conf(b)$. 
But once $w_{k+1} = \rho$ is performed $\sigma_i$ is above $\sigma_h$ in $H$. 
As $w_{k+1}$ is a move which allows to go from configuration $c'$ to configuration $Conf(b)$ then $\sigma_i$ is below $\sigma_h$ in $V$ in $Conf(b)$ (indeed it is impossible that $\sigma_h$ goes to stack $V$ and $\sigma_i$ remains in stack $H$). 
So $\sigma_i \in \G$ and as in $Conf(b)$ elements of $V$ are in decreasing order, $\sigma_i > \sigma_h$. 
So the test of line $7$ of the algorithm is false and Algorithm~\ref{algo:Push} performs line $14$ so $w'_{k+1} = \rho = w_{k+1}$.

This ends the induction. 
We have proved that $w$ is a prefix of $w'$, so Algorithm~\ref{algo:Push} reaches configuration $Conf(b)$. 
We have now to prove that Algorithm~\ref{algo:Push} stops in this configuration and returns true.

When $Conf(b)$ is reached then there is no element in the input anymore, so from Lemma~\ref{lem:configAlgoPush} $i> |\sigma|$, and $top(H) \notin \G$ in $Conf(b)$. 
So both loops while of Algorithm~\ref{algo:Push} are finished and the algorithm reaches line $27$, returns true and terminates in configuration $Conf(b)$.
\end{pf}

\begin{lem}\label{lem:AlgoFalse}
Let $b$ be a bicoloring of a permutation $\sigma$.
If Algorithm~\ref{algo:Push} applied to $b$ returns $false$ then $b$ has a pattern \GGR or a pattern \GGG.
\end{lem}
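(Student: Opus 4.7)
\emph{Proof plan.} Algorithm~\ref{algo:Push} returns $false$ only at line~11 (in the main loop) or at line~23 (in the cleanup). At the failing step, let $g:=\mathrm{top}(H)$ and $g':=\mathrm{top}(V)$. The tests preceding lines~11 and~23 (lines~3 and~6 in the main loop, or line~19 in the cleanup) force $g\in G$, and Lemma~\ref{lem:configAlgoPush} gives $g'\in G$; the failed inequality $\mathrm{top}(H)<\mathrm{top}(V)$ translates to $g'<g$. The plan is to identify the input letter $\tau'$ that triggered the earlier pop of $g'$ from $H$ to $V$ and extract a forbidden pattern from the triple $\{g',\tau',g\}$.

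The core step is to show that $g'$ was popped at line~9 of the main loop (not at line~21 of the cleanup) and that $g$ had not yet entered $H$ at that moment. Both facts reduce to showing that $g$ and $g'$ were never simultaneously in $H$. The invariants of Lemma~\ref{lem:configAlgoPush} (no red above green in $H$; greens in $H$ strictly increase in value from bottom to top), combined with $g'<g$, forbid any push that would place $g'$ above $g$ in $H$ (line~14 would require $g'>g$, and line~4 would require a red-or-empty top, which is impossible when $g$ and any greens above $g$ all exceed $g$ in value), while placing $g$ above $g'$ in $H$ would leave $g'$ buried and unable to be popped before $g$, contradicting that $g$ is still in $H$ at failure. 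Hence $g'$ was popped at line~9 with some trigger $\tau'$, and $g$ was still on the input at that step.

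The branch conditions leading to line~9 then give $\tau'\in R$ or $(\tau'\in G\text{ and }\tau'<g')$, and $\tau'\ne g$ because $g\in G$ with $g>g'$ satisfies neither subcondition. Since $g$ is still on the input when $\tau'$ is processed, $i_{\tau'}<i_g$; since $\tau'$ is processed after $g'$ was pushed, $i_{g'}<i_{\tau'}$. If $\tau'\in R$, the pair $(g',g)$ is a $12$-pattern in green (ascending indices and values) and $\tau'$ is a red point whose index lies strictly between $i_{g'}$ and $i_g$, giving pattern \GGR. If $\tau'\in G$ with $\tau'<g'$, the three greens $g',\tau',g$ at indices $i_{g'}<i_{\tau'}<i_g$ carry values $\tau'<g'<g$, producing the $213$ value-rank pattern in green, i.e.\ pattern \GGG.

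The main obstacle will be the invariant bookkeeping of the second paragraph --- ruling out a cleanup pop of $g'$ and the simultaneous presence of $g,g'$ in $H$ via the structural invariants from Lemma~\ref{lem:configAlgoPush}; once that is in place, reading off the forbidden pattern from $\tau'$ is immediate.
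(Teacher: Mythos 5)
Your proposal is correct and follows essentially the same route as the paper's proof: both identify the triple consisting of the tops of $V$ and $H$ at the failing step together with the input element that triggered the earlier line-9 pop, both use the invariants of Lemma~\ref{lem:configAlgoPush} (greens increasing in $H$, no red above green) to rule out $g$ being in $H$ when $g'$ was popped, and both read off the pattern \GGR or \GGG from the line-7 branch condition. The only cosmetic difference is that you argue the slightly more general claim that $g$ and $g'$ are never simultaneously in $H$, whereas the paper argues directly at the single configuration just before $g'$ moves to $V$.
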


\begin{pf}
We consider the stack configuration reached when Algorithm~\ref{algo:Push} returns $false$. 
We set $\sigma_h = top(H)$ and $\sigma_v = top(V)$. 
By Lemma~\ref{lem:configAlgoPush}, $\sigma_v \in \G$. 
Algorithm~\ref{algo:Push} returns $false$ by reaching either line $11$ or line $23$. 
In both cases, $\sigma_h \in \G$ and $\sigma_h > \sigma_v$. 
Now we consider the step of the algorithm where $\sigma_v$ was put in $V$, the indice $i$ at this step of the algorithm, and the corresponding configuration $c$ just before the move putting $\sigma_v$ into $V$ is done. 
So at this step $\sigma_v$ is on the top of $H$, and $i > v$. 
If $\sigma_h$ is in $H$ in $c$, then it is below $\sigma_v$, contradicting Lemma~\ref{lem:configAlgoPush} ($\sigma_h > \sigma_v$ and both are in \G). 
As $\sigma_h$ is in $H$ when the algorithm ends, it cannot be in $V$ in $c$. 
So $\sigma_h$ is still in the input and $i \leq h \leq |\sigma|$. 
Recall that we consider the step of the algorithm where $\sigma_v$ is put in $V$. 
This can happen at line $9$ or $21$ but $i\leq |\sigma|$ so it is at line $9$. 
So the test of line $8$ is true, thus either $\sigma_i \in \R$ and then $\sigma_v, \sigma_i, \sigma_h$ is a pattern \GGR of $b$, or $\sigma_i \in \G$ but $\sigma_i < \sigma_v$ and then $\sigma_v, \sigma_i, \sigma_h$ is a pattern \GGG of $b$.
\end{pf}


\begin{thm}\label{thm:bijectionColoriageConfiguration}
The map $Col$ is a bijection from the set of reachable total stack configuration of $\sigma$ avoiding the three unsortable patterns to the set of valid coloring of $\sigma$.
Moreover the inverse of $Col$ is the map $Conf$.
\end{thm}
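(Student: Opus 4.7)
The plan is to verify that $Col$ sends reachable total stack configurations of $\sigma$ avoiding the three unsortable patterns to valid colorings of $\sigma$, and that $Conf$ sends valid colorings to such configurations; that the two maps are mutually inverse will then follow from Remark~\ref{rem:Col/Conf}.

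First I would establish a structural fact about the domain. If $c$ is a reachable total stack configuration of $\sigma$ avoiding the three unsortable patterns, then no $\mu$-move has been performed on the way to $c$ (any $\mu$ would permanently remove an element from the stacks and contradict totality), so $H$ is stacked bottom-to-top in the order its elements arrived, i.e.\ by increasing index; and since $c$ avoids \patternV, the stack $V$ is strictly decreasing by value from bottom to top. By the second part of Remark~\ref{rem:Col/Conf}, this yields $Conf(Col(c)) = c$. Together with the first part of that remark, it shows that as soon as the two maps are proved to go between the claimed sets, they are automatically mutually inverse bijections.

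The next step is to verify that a configuration $c$ in the claimed domain produces a valid coloring $b = Col(c)$, by ruling out the four colored patterns in $b$ using the structure of $c$. A pattern \RRR yields three $R$-points in $H$ with indices increasing from bottom to top and values forming $132$, which is exactly \patternH. A pattern \RRG yields two $R$-points in $H$ forming a $12$ (with the larger value above the smaller) and a $G$-point in $V$ of intermediate value: this is exactly \patternVH. A pattern \GGR places two $G$-points in $V$ at indices $i_1<i_2$ with values $g_1<g_2$ (so $g_1$ reaches $V$ strictly later than $g_2$ since $V$ is decreasing), while an $R$-point at an index in $(i_1,i_2)$ is stacked above $g_1$ in $H$ and stays there, preventing $g_1$ from ever becoming top of $H$: a contradiction. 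A pattern \GGG yields three $G$-points forming $213$ all in the decreasing stack $V$, forcing the push-to-$V$ order $v_3,v_1,v_2$; but $v_1$ and $v_2$ entered $H$ at times $i_1<i_2$, so at the moment $v_1$ becomes top of $H$, the later arrival $v_2$ must already have left $H$, contradicting that $v_2$ is pushed to $V$ strictly after $v_1$.

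Conversely, given a valid coloring $b$, one shows $Conf(b)$ lies in the claimed domain. Since $b$ avoids \GGR and \GGG, the contrapositive of Lemma~\ref{lem:AlgoFalse} together with Proposition~\ref{prop:AlgoPush} yields that $Conf(b)$ is reachable. By the definition of $Conf$, the stack $V$ is decreasing, hence $Conf(b)$ avoids \patternV; the stack $H$ is increasing by index, so a \patternH would translate into a $132$ among the $R$-points, i.e.\ an \RRR in $b$, which is excluded by validity; and a \patternVH would give two $R$-points in $H$ forming a $12$ together with a $G$-point in $V$ of intermediate value, i.e.\ exactly an \RRG in $b$, again excluded. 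Thus $Conf(b)$ is in the domain, which combined with the previous steps finishes the proof. The most delicate point I expect is the \GGG case: tracking the exact times at which each green point moves between $H$ and $V$ to force the contradiction requires a bit of bookkeeping, whereas \RRR and \RRG are immediate translations and \GGR uses the same bookkeeping in a simpler form.
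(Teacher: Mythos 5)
Your proof is correct and follows essentially the same route as the paper: you rule out the four colored patterns in $Col(c)$ case by case (\RRR and \RRG translating directly into \patternH and \patternVH, while \GGR and \GGG are excluded by timing arguments on when elements move from $H$ to $V$), use Lemma~\ref{lem:AlgoFalse} and Proposition~\ref{prop:AlgoPush} to get reachability of $Conf(b)$, and conclude with Remark~\ref{rem:Col/Conf}. Your preliminary observation that totality forbids any $\mu$-move, so that $H$ is ordered by index and the hypotheses of the second half of Remark~\ref{rem:Col/Conf} are genuinely satisfied, makes explicit a point the paper glosses over.
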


\begin{pf}
Let $c$ be a reachable total stack configuration of $\sigma$ avoiding the three unsortable patterns and set $b=Col(c)$. 
We have to prove that $b$ avoids every forbidden colored pattern of Definition~\ref{def:validColoring}.

If $b$ has a pattern $132$ in \R then there are three element $\sigma_i$, $\sigma_j$ and $\sigma_k$ of \R such that $i<j<k$ and $\sigma_i < \sigma_k < \sigma_j$. 
By definition of $Col$, $\sigma_i$, $\sigma_j$ and $\sigma_k$ lie in $H$. 
As $c$ is reachable and $i<j<k$, $\sigma_i$ is below $\sigma_j$ which is below $\sigma_k$. 
So we have a stack-pattern \patternH in $c$ which contradicts our hypothesis. 
So $b$ has no pattern \RRR.

If $b$ has a pattern $213$ in \G then there are three element $\sigma_i$, $\sigma_j$ and $\sigma_k$ of \G such that $i<j<k$ and $\sigma_j < \sigma_i < \sigma_k$. 
By definition of $Col$, $\sigma_i$, $\sigma_j$ and $\sigma_k$ lie in $V$. 
As $c$ avoids stack-pattern \patternV, $\sigma_k$ is below $\sigma_i$ which is below $\sigma_j$. 
But then $c$ is not reachable: as $\sigma_k$ is below $\sigma_i$ and $\sigma_j$ in $V$, $\sigma_i$ and $\sigma_j$ have to stay in stack $H$ until $\sigma_k$ enters stack $H$. 
But as $i<j$, $\sigma_i$ is below $\sigma_j$ in stack $H$ and cannot be below $\sigma_j$ in stack $V$ as going from stack $H$ to stack $V$ reverse the order. 
So $b$ has no pattern \GGG.

If $b$ has a point of \R lying vertically between a pattern $12$ of \G then there are elements $\sigma_i$ and $\sigma_j$ of \G and $\sigma_k$ of \R such that $i<k<j$ and $\sigma_i < \sigma_j$. 
By definition of $Col$, $\sigma_i$ and $\sigma_j$ lie in $V$ and $\sigma_k$ lies in $H$. 
Configuration $c$ is reachable. 
We consider a sequence of stack operations leading to $c$. 
As $i<k$, $\sigma_i$ is already in the stacks when $\sigma_k$ enters $H$. 
As $\sigma_k$ remains in $H$ in $c$ but $\sigma_i$ is in $V$ in $c$, $\sigma_i$ has to be already in $V$ when $\sigma_k$ enters stack $H$. 
As $k<j$, at this moment $\sigma_j$ is not already in stack $V$, so $\sigma_j$ will be above $\sigma_i$ in $V$ and they form a pattern \patternV in $c$, which is excluded. 
So $b$ has no pattern \GGR.

If $b$ has a point of \G lying horizontally between a pattern $12$ of \R then in $c$ these points form a pattern \patternVH which is excluded. 
So $b$ has no pattern \RRG.

Conversely let $b$ be a valid coloring of $\sigma$.
By definition $Conf(b)$ is a total stack configuration of $\sigma$.
We have to prove that $Conf(b)$ is reachable for $\sigma$ and avoids the three unsortable stack patterns.
As $b$ is a valid coloring, it avoids patterns \GGR and \GGG. 
So from Lemma~\ref{lem:AlgoFalse}, Algorithm~\ref{algo:Push} started with input $b$ returns true.
Thus from Proposition~\ref{prop:AlgoPush}, $c$ is reachable for $\sigma$. 
Moreover by definition of $Conf$, $Conf(b)$ avoids pattern \patternV. 
Furthermore we know that in $Conf(b)$, elements of $H$ are in increasing order of indices from bottom to top. 
So if $Conf(b)$ has a pattern \patternH, then $b$ has a pattern \RRR, and if $Conf(b)$ has a pattern \patternVH then $b$ has a pattern \RRG. 
A $b$ is a valid coloring, we conclude that $Conf(b)$ avoids the three unsortable stack patterns.

Now using Remark~\ref{rem:Col/Conf} it's clear that $Conf$ is the inverse of $Col$.
\end{pf}

\begin{thm}\label{thm:equivalenceColoringPushall}
A permutation $\sigma$ is \pushall \ssi its diagram admits a valid coloring.
\end{thm}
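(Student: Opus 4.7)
The plan is to derive this theorem as an immediate corollary of the two main tools already established: Proposition~\ref{prop:pushallIffConfigurationEvitePatterns} and Theorem~\ref{thm:bijectionColoriageConfiguration}. Essentially, valid colorings and reachable total stack configurations avoiding the three unsortable patterns are in bijection via $Col$ and $Conf$, and pushall sortability is precisely the existence of such a configuration.

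For the forward direction, I would assume $\sigma$ is \pushall. By Proposition~\ref{prop:pushallIffConfigurationEvitePatterns}, there exists a reachable total stack configuration $c$ of $\sigma$ avoiding the three unsortable stack-patterns. Then by Theorem~\ref{thm:bijectionColoriageConfiguration}, $Col(c)$ is a valid coloring of $\sigma$, which is what we wanted.

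For the converse direction, I would assume $\sigma$ admits a valid coloring $b$. Applying Theorem~\ref{thm:bijectionColoriageConfiguration} in the reverse direction (using that $Conf$ is the inverse of $Col$), we get that $Conf(b)$ is a reachable total stack configuration of $\sigma$ avoiding the three unsortable patterns. Then Proposition~\ref{prop:pushallIffConfigurationEvitePatterns} concludes that $\sigma$ is \pushall.

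There is essentially no obstacle here, since all the work has been done beforehand: the genuine content is encapsulated in Theorem~\ref{thm:bijectionColoriageConfiguration}, whose proof required showing that the four colored forbidden patterns of Definition~\ref{def:validColoring} correspond exactly to the three unsortable stack-patterns (via the auxiliary Algorithm~\ref{algo:Push} and Lemma~\ref{lem:AlgoFalse} to handle reachability). Once that bijection is in hand, the stated characterization is just a matter of chaining the two equivalences together.
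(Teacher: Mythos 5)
Your proposal is correct and matches the paper's proof exactly: the paper also derives this theorem as an immediate consequence of Proposition~\ref{prop:pushallIffConfigurationEvitePatterns} and Theorem~\ref{thm:bijectionColoriageConfiguration}, with your write-up merely spelling out the two directions that the paper leaves implicit.
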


\begin{pf}
Consequence of Proposition~\ref{prop:pushallIffConfigurationEvitePatterns} and Theorem~\ref{thm:bijectionColoriageConfiguration}.
\end{pf}

Now thank to Theorem~\ref{thm:equivalenceColoringPushall} we have a naive algorithm to check if a permutation $\sigma$ is \pushall: forall bicoloring $b$ of $\sigma$, we can test if $b$ is valid by checking if $b$ avoids patterns \GGG, \GGR, \RRG and \RRR of Definition~\ref{def:validColoring}.
But first notice that we have a more efficient way to test if a bicoloring is valid:

\begin{prop}\label{prop:Check-Valid-linear}
Let $b$ be a bicoloring of a permutation $\sigma$.
We can check in linear time w.r.t. $|\sigma|$ if $b$ is a valid coloring.
More precisely, $b$ is a valid coloring \ssi Algorithm~\ref{algo:Push} applied to $b$ returns true and Algorithm~\ref{algo:popOut} applied to $Conf(b)$ returns true.
\end{prop}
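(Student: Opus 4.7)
The plan is to obtain the proposition as an essentially immediate consequence of the machinery already in place, routing the validity check through the bijection $Col$ of Theorem~\ref{thm:bijectionColoriageConfiguration}. The key observation is that the two conditions ``Algorithm~\ref{algo:Push} returns true on $b$'' and ``Algorithm~\ref{algo:popOut} returns true on $Conf(b)$'' together encode exactly the two hypotheses of that bijection: reachability of $Conf(b)$, and avoidance by $Conf(b)$ of the three unsortable stack-patterns.

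For the forward direction, I would assume $b$ is a valid coloring and apply Theorem~\ref{thm:bijectionColoriageConfiguration}: then $Conf(b)$ is a reachable total stack configuration of $\sigma$ avoiding the three unsortable stack-patterns. Reachability combined with Proposition~\ref{prop:AlgoPush} immediately yields that Algorithm~\ref{algo:Push} applied to $b$ returns true, while avoidance of the unsortable stack-patterns combined with Theorem~\ref{thm:popable} gives that $Conf(b)$ can be popped out in increasing order, so Proposition~\ref{prop:AlgoPopOut} gives that Algorithm~\ref{algo:popOut} applied to $Conf(b)$ returns true.

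For the converse, if both algorithms return true, Proposition~\ref{prop:AlgoPush} yields that $Conf(b)$ is reachable for $\sigma$, and Proposition~\ref{prop:AlgoPopOut} together with Theorem~\ref{thm:popable} yields that $Conf(b)$ avoids the three unsortable stack-patterns. Applying Theorem~\ref{thm:bijectionColoriageConfiguration} and using $Col(Conf(b)) = b$ from Remark~\ref{rem:Col/Conf}, one concludes that $b$ is a valid coloring.

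For the linear-time bound, I would note that $Conf(b)$ can be built directly from $b$ in linear time by a single left-to-right scan of $\sigma$ (which already orders the \R points by index, as required for stack $H$) together with one pass over the values from largest to smallest (which orders the \G points by decreasing value, as required for stack $V$); in fact one can even avoid constructing $Conf(b)$ separately since Algorithm~\ref{algo:Push} produces it on the fly when it returns true. Combined with Lemma~\ref{lem:AlgoPushTerminates} and Proposition~\ref{prop:AlgoPopOut}, which give linear-time bounds for Algorithms~\ref{algo:Push} and~\ref{algo:popOut} respectively, this yields the claimed linear-time complexity. There is no real obstacle in the argument: everything has been set up in the preceding results, and the proof reduces to a careful bookkeeping of the references.
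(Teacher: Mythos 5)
Your proof is correct and follows essentially the same route as the paper: both reduce validity of $b$ to reachability of $Conf(b)$ plus avoidance of the three unsortable stack-patterns via Theorem~\ref{thm:bijectionColoriageConfiguration}, and then invoke Proposition~\ref{prop:AlgoPush}, Theorem~\ref{thm:popable}, Proposition~\ref{prop:AlgoPopOut} and Lemma~\ref{lem:AlgoPushTerminates}. You simply spell out the two directions and the linear-time bookkeeping that the paper leaves implicit.
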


\begin{pf}
From Theorem~\ref{thm:bijectionColoriageConfiguration}, $b$ is valid \ssi $Conf(b)$ is reachable for $\sigma$ and avoids the three unsortable patterns.
We conclude using Lemma~\ref{lem:AlgoPushTerminates}, Proposition~\ref{prop:AlgoPush}, Theorem~\ref{thm:popable} and Proposition~\ref{prop:AlgoPopOut}.
\end{pf}

Now even using this efficient way to test if a bicoloring is valid, the naive algorithm descrided above is unefficient. 
Indeed there is $2^{|\sigma|}$ bicolorings of $\sigma$, leading to a exponential algorithm. 
Yet we will find a way to restrict the possible number of colorings to a polynomial number. 
The key idea is to look at increasing sequences in the permutation.

\subsection{Increasing sequences in a valid coloring}

First we reformulate the notion of valid coloring thanks to increasing and decreasing sequences.

\begin{prop}\label{prop:rulesR8}
Let $c$ be a bicoloring of a permutation $\sigma$.
Then $c$ is a valid coloring if and only if $c$ respects the following set of rules denoted $\mathcal{R}_8$:
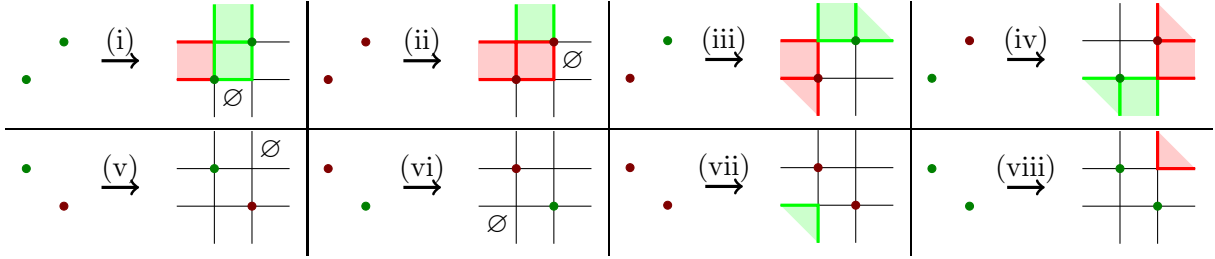
\begin{figure}[H]
\begin{tabular}{p{.23\textwidth}|p{.23\textwidth}|p{.23\textwidth}|p{.23\textwidth}}
\begin{tikzpicture}[scale=.5]
\Vpoint{0}{0};
\Vpoint{1}{1};
\draw [very thick,->] (2,0.5) -- (3,0.5);
\draw (4,0) -- (7,0);
\draw (4,1) -- (7,1);
\draw (5,-1) -- (5,2);
\draw (6,-1) -- (6,2);
\zoneD{H};
\zoneE{V};
\zoneH{V};
\draw (5.5,-0.5) node {$\varnothing$};
\Vpoint{5}{0};
\Vpoint{6}{1};
\etiquette{1};
\end{tikzpicture}
&
\begin{tikzpicture}[scale=.5]
\Hpoint{0}{0};
\Hpoint{1}{1};
\draw [very thick,->] (2,0.5) -- (3,0.5);
\draw (4,0) -- (7,0);
\draw (4,1) -- (7,1);
\draw (5,-1) -- (5,2);
\draw (6,-1) -- (6,2);
\zoneH{V};
\zoneD{H};
\zoneE{H};
\draw (6.5,0.5) node {$\varnothing$};
\Hpoint{5}{0};
\Hpoint{6}{1};
\etiquette{2};
\end{tikzpicture}
&
\begin{tikzpicture}[scale=.5]
\Hpoint{0}{0};
\Vpoint{1}{1};
\draw [very thick,->] (2,0.5) -- (3,0.5);
\draw (4,0) -- (7,0);
\draw (4,1) -- (7,1);
\draw (5,-1) -- (5,2);
\draw (6,-1) -- (6,2);
\zoneA{H};
\zoneD{H};
\zoneI{V};
\zoneH{V};
\Hpoint{5}{0};
\Vpoint{6}{1};
\etiquette{3};
\end{tikzpicture}
&
\begin{tikzpicture}[scale=.5]
\Vpoint{0}{0};
\Hpoint{1}{1};
\draw [very thick,->] (2,0.5) -- (3,0.5);
\draw (4,0) -- (7,0);
\draw (4,1) -- (7,1);
\draw (5,-1) -- (5,2);
\draw (6,-1) -- (6,2);
\zoneA{V};
\zoneB{V};
\zoneF{H};
\zoneI{H};
\Vpoint{5}{0};
\Hpoint{6}{1};
\etiquette{4};
\end{tikzpicture}
\\
\hline
\begin{tikzpicture}[scale=.5]
\Vpoint{0}{1};
\Hpoint{1}{0};
\draw [very thick,->] (2,0.5) -- (3,0.5);
\draw (4,0) -- (7,0);
\draw (4,1) -- (7,1);
\draw (5,-1) -- (5,2);
\draw (6,-1) -- (6,2);
\draw (6.5,1.5) node {$\varnothing$};
\Vpoint{5}{1};
\Hpoint{6}{0};
\etiquette{5};
\end{tikzpicture}
&
\begin{tikzpicture}[scale=.5]
\Hpoint{0}{1};
\Vpoint{1}{0};
\draw [very thick,->] (2,0.5) -- (3,0.5);
\draw (4,0) -- (7,0);
\draw (4,1) -- (7,1);
\draw (5,-1) -- (5,2);
\draw (6,-1) -- (6,2);
\draw (4.5,-0.5) node {$\varnothing$};
\Hpoint{5}{1};
\Vpoint{6}{0};
\etiquette{6};
\end{tikzpicture}
&
\begin{tikzpicture}[scale=.5]
\Hpoint{0}{1};
\Hpoint{1}{0};
\draw [very thick,->] (2,0.5) -- (3,0.5);
\draw (4,0) -- (7,0);
\draw (4,1) -- (7,1);
\draw (5,-1) -- (5,2);
\draw (6,-1) -- (6,2);
\zoneA{V};
\Hpoint{5}{1};
\Hpoint{6}{0};
\etiquette{7};
\end{tikzpicture}
&
\begin{tikzpicture}[scale=.5]
\Vpoint{0}{1};
\Vpoint{1}{0};
\draw [very thick,->] (2,0.5) -- (3,0.5);
\draw (4,0) -- (7,0);
\draw (4,1) -- (7,1);
\draw (5,-1) -- (5,2);
\draw (6,-1) -- (6,2);
\zoneI{H};
\Vpoint{5}{1};
\Vpoint{6}{0};
\etiquette{8};
\end{tikzpicture}
\end{tabular}
\begin{center}
\caption{Rewriting rules $\mathcal{R}_8$}\label{fig:rewrite}
\end{center}
\end{figure}
For example rule $(\rmnum{1})$ means that if two points $(i,\sigma_{i})$ and $(j,\sigma_{j})$ are in increasing order $i<j$ and $\sigma_{i}<\sigma_{j}$ and belong to $\G$ then every point $(k,\sigma_{k})$ of the permutation must respect:
\begin{itemize}
\item If $i<k<j$ then $\sigma_k > \sigma_{i}$ and $(k,\sigma_{k})$ belongs to $\G$.
\item If $k < i$ and $\sigma_{i} < \sigma_{k} < \sigma_{j}$ then $(k,\sigma_{k})$ belongs to $\R$.
\end{itemize}
\end{prop}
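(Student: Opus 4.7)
The plan is a direct case analysis that identifies each rule of $\mathcal{R}_8$ with a local condition preventing the four forbidden patterns of Definition~\ref{def:validColoring}. The eight rules of $\mathcal{R}_8$ are indexed by the colors (both \G, both \R, or mixed) and the relative order (both coordinates increasing or both decreasing) of a pair of reference points, giving exactly $2\times 4 = 8$ configurations. For each configuration, the rule specifies, for each of the nine zones cut out by the horizontal and vertical lines through the two reference points, either the color a third point must take or that the zone must be empty. The key observation is that these constraints are precisely those needed to forbid \RRR, \GGG, \GGR and \RRG.

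First, I would prove that any valid coloring satisfies every rule. Fix, for example, rule (i), and consider two green points $(i,\sigma_i)$ and $(j,\sigma_j)$ with $i<j$ and $\sigma_i<\sigma_j$. For each of the nine zones, and for each of the two possible colors of a third point $(k,\sigma_k)$ in that zone, I would check directly which forbidden pattern appears. For instance, in zone $B$ (defined by $i<k<j$ and $\sigma_k<\sigma_i$), coloring $(k,\sigma_k)$ green produces a $213$ in \G, i.e.~a \GGG pattern, while coloring it red yields a red whose position lies strictly between two greens in pattern $12$, i.e.~a \GGR pattern; hence zone $B$ must be empty, as the figure prescribes. The same elementary checks determine the forced color in zones $D$, $E$, $H$ of rule (i), and analogously for the seven other rules.

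Conversely, assuming $c$ satisfies $\mathcal{R}_8$ and that a forbidden pattern exists, I would exhibit an explicit rule that the pattern violates. A \GGG at positions $a<b<c$ with $\sigma_b<\sigma_a<\sigma_c$ contradicts rule (i) applied to the increasing green pair $(a,c)$: the point $b$ sits in zone $B$, which the rule forces to be empty. An \RRR at $a<b<c$ with $\sigma_a<\sigma_c<\sigma_b$ contradicts rule (ii) applied to $(a,b)$, since $c$ lies in zone $F$ which must be empty. A \GGR pattern contradicts rule (i) applied to its two greens: its red must lie in zone $B$, $E$ or $H$, each of which forbids a red point. Similarly \RRG contradicts rule (ii), with the green necessarily in zone $D$, $E$ or $F$, all forbidden to green points.

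The whole proof is essentially a bookkeeping exercise organized by the eight cases, and the only real pitfall is setting up the labelling of the nine zones consistently under the four distinct geometric configurations of a reference pair. Once one unifies this labelling and lists, for each of the $8$ rules, the forbidden pattern that would appear from each illegal colour in each zone, all verifications reduce to inspecting a short list of $3$-point configurations; no new idea is required beyond the definitions.
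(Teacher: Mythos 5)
Your proposal is correct and follows essentially the same route as the paper's proof: both directions reduce to matching each forbidden colored pattern (\RRR, \GGG, \GGR, \RRG) with a violation of rule (\rmnum{1}) or (\rmnum{2}), and conversely checking zone by zone that any rule violation produces one of these patterns. You merely spell out the zone-by-zone verification that the paper compresses into ``a comprehensive study shows''.
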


\begin{proof}
We prove that $c$ is not valid \ssi $c$ violates a rule of $\mathcal{R}_8$.
Suppose that $c$ is not valid then $c$ has one of the four colored patterns of Definition~\ref{def:validColoring}.
If $c$ has a pattern \RRR then $c$ violate rule $(\rmnum{2})$ applied to elements $1$ and $3$ of the pattern \RRR, as element $2$ of the pattern lies in a zone that should be empty.
If $c$ has a pattern \GGG then $c$ violate rule $(\rmnum{1})$ applied to elements $2$ and $3$ of the pattern \GGG, as element $1$ of the pattern lies in a zone that should be empty.
If $c$ has a pattern \GGR then $c$ violate rule $(\rmnum{1})$ applied to elements of \G of the pattern \GGR.
If $c$ has a pattern \RRG then $c$ violate rule $(\rmnum{2})$ applied to elements of \R of the pattern \RRG.
Conversely if $c$ violates a rule of $\mathcal{R}_8$ then a comprehensive study shows that $c$ has one of the four colored patterns of Definition~\ref{def:validColoring} and is not valid.
\end{proof}

We can use implication rules of $\mathcal{R}_8$ to limit the number of bicoloring to test, using the following idea: knowing the coloring of some points in the permutation (either in \R or in \G), the deduction rules given in Figure~\ref{fig:rewrite} can be applied until we obtain either a contradiction or no more rule can be applied. 
We can try the following algorithm: Set the color of two increasing points of $\sigma$, use implication rules to deduce the color of the other points and test whether the coloring obtained is right. 
Unfortunately, implication rules are not sufficient to ensure that given the color two points, the color of all other points is set. 
We may have to choose arbitrary the color of lots of points. 
To ensure that the number of bicoloring to test is polynomial, we have to study more precisely properties of increasing sequences in a valid bicoloring.

\begin{defn}
Let $c$ be a bicoloring of a permutation $\sigma$. 
We call {\em \ascentRG} a pair of two points $(\sigma_i,\sigma_j)$ such that $i<j$, $\sigma_{i} < \sigma_{j}$, $\sigma_{i}\in \R$ and $\sigma_{j} \in \G$.
We define in the same way \ascentsGR, RR or GG.
\end{defn}


Rule $(\rmnum{3})$ of $\mathcal{R}_8$ implies that every \ascentRG fixes the color of all points to the left of $\sigma_i$ below $\sigma_j$ (which are in \R) and to the right of $\sigma_i$ above $\sigma_j$ (which are in \G). 
The following theorem shows that when $\sigma$ is $\ominus$-indecomposable, the color of points to the left of $\sigma_i$ above $\sigma_j$ is also fixed.


\begin{thm}\label{thm:RGIncreasing}
Consider a valid coloring of a $\ominus$-decomposable permutation $\sigma$.
If there exist two points $\sigma_i < \sigma_j, i < j$ such that $\sigma_i \in \R$ and $\sigma_j \in \G$, 
then the color of every point $\sigma_k$ with $k < i$ or $\sigma_k > \sigma_j$ is determined by iterations of rules {$\mathcal C_8$} knowing only the color of $\sigma_i$ and $\sigma_j$ 
and can be represented as follows, the second diagram being a short representation of this alternance which will be used in the sequel. 
Furthermore, any increasing sequence $(i,\sigma_{i}),(j,\sigma_{j})$ of points located either to the left of $i$ or to the top of $\sigma_{j}$ is either monochromatic or colored $RG$.
Moreover, knowing $\sigma_{i}$ and $\sigma_{j}$, we can decide the color of the points whose indices are less than $i$ or whose values are greater than $\sigma_{j}$ in linear time.
\end{thm}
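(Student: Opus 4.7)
The first observation is that rule~$(\rmnum{3})$ of $\mathcal{R}_8$, applied directly to the increasing $RG$ pair $(\sigma_i,\sigma_j)$, already forces two big regions: every point at position $<i$ with value $<\sigma_j$ must be red (combining zones $A$ and $D$), and every point at position $>i$ with value $>\sigma_j$ must be green (combining zones $H$ and $I$). Inside the L-shaped region $\{k<i\text{ or }\sigma_k>\sigma_j\}$ the theorem is concerned with, this single pass determines every point except those in the upper-left corner $U=\{k<i,\,\sigma_k>\sigma_j\}$. So the entire content of the theorem lies in forcing the colors in $U$ and identifying the resulting alternation.

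To handle $U$ I would sweep its points in order of increasing index $k$ and, for each $\sigma_k\in U$, exhibit a companion $\sigma_m$ already colored by a previous step such that one of the rules of $\mathcal{R}_8$ applied to $\{\sigma_k,\sigma_m\}$ pins down the color of $\sigma_k$. The $\ominus$-indecomposability of $\sigma$ enters exactly here: applied at the cut at index $k$, it guarantees an increasing pair $(\sigma_p,\sigma_q)$ with $p\le k<q$; combined with the coloring already fixed at all positions to the right of $k$ this produces an anchor $\sigma_q$ whose color is known. A case analysis on which of the nine regions cut out by the lines $x=i,\,x=j,\,y=\sigma_i,\,y=\sigma_j$ contains $\sigma_q$, together with its known color, then reduces the determination of $\sigma_k$'s color to applying exactly one of the rules of $\mathcal{R}_8$.

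The ``monochromatic or $RG$'' property for any increasing pair inside the L-region follows by a symmetric argument: an increasing $GR$ pair inside the L-region would, in the role of an $RG$ pair read with colors swapped in rule~$(\rmnum{3})$, force zones that overlap the region already colored by the original pair $(\sigma_i,\sigma_j)$ and yield an immediate contradiction. The linear-time bound then follows because each point of the L-region is processed by $O(1)$ rule applications: either rule~$(\rmnum{3})$ fires on the first pass, or the sweep fires exactly one rule of $\mathcal{R}_8$ per point of $U$.

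The main obstacle I expect is the case analysis in the second paragraph. The $\ominus$-indecomposability guarantees only the existence of \emph{some} increasing pair across the cut and does not directly control its position; turning this promise into a companion $\sigma_m$ that lands in a zone where exactly one rule of $\mathcal{R}_8$ fires requires checking that every combination of anchor position and color is covered, and that no degenerate case (e.g.\ $\sigma_q$ falling in the middle strip, or every candidate $\sigma_m$ still being uncolored at the moment one would want to use it) leaves $\sigma_k$'s color ambiguous. Making this cover all of $U$ in a single left-to-right sweep is the technical heart of the proof.
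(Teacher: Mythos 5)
Your opening reduction is correct and matches the paper's starting point: rule~(\rmnum{3}) applied to $(\sigma_i,\sigma_j)$ colors the whole L-shaped region except the upper-left corner $U=\{k<i,\ \sigma_k>\sigma_j\}$, so everything hinges on $U$. But the mechanism you propose for $U$ is precisely where the proof is missing, and as described it would not work. A single left-to-right sweep with one rule application per point cannot determine the colors in $U$: the color of a point deep in $U$ is forced only at the end of a chain of deductions, not by any single already-colored companion. Worse, sweeping by \emph{increasing} index processes the leftmost points of $U$ first, and those are exactly the points whose colors are determined last; and the increasing pair across the cut at $k$ that $\ominus$-indecomposability provides may have both endpoints inside $U$ and hence still uncolored when you need it. You flag this yourself as ``the technical heart,'' and it is: without it the theorem is not proved.

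What the paper actually does is grow the colored region outward from $(\sigma_i,\sigma_j)$ through a sequence of extremal $RG$ pairs. Given the current pair $(\sigma_{i_k},\sigma_{j_k})$, rule~(\rmnum{3}) forces a red corner set $H_k$ and a green corner set $V_k$; one takes $\sigma_{i_{k+1}}$ to be the leftmost point of $H_k$ and $\sigma_{j_{k+1}}$ the topmost point of $V_k$, proves via rules~(\rmnum{1}) and~(\rmnum{2}) that the rectangle strictly between the old and new pairs is empty, and uses $\ominus$-indecomposability only to guarantee $H_k\cup V_k\neq\varnothing$ while uncolored points remain. This yields the staircase of alternating \R and \G blocks of Figure~\ref{fig:zoneRG}, from which the colors in $U$, the ``monochromatic or $RG$'' clause, and the linear-time bound are all read off. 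Note also that your one-line argument for that last clause is insufficient: an increasing $GR$ pair lying entirely inside $U$ forces, via rule~(\rmnum{4}), zones that overlap the rectangles colored by $(\sigma_i,\sigma_j)$ only as regions of the plane --- those overlaps need not contain $\sigma_i$, $\sigma_j$, or any point of the permutation --- so there is no immediate contradiction; excluding such pairs again requires the staircase structure.
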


\begin{figure}[H]
\begin{center}
\begin{tikzpicture}[scale=.5]
\draw[Hfill] (-1,-1) rectangle (0,2);
\draw[Vfill] (0,2) rectangle (3,3);
\draw (-0.5,2.5) node {$\varnothing$};
\Hpoint{0}{0};
\Vpoint{2}{2};
\draw (0.5,0) node{$\sigma_{i}$};
\draw (2,1.5) node{$\sigma_{j}$};
\begin{scope}
\draw[Hfill] (-2,2) rectangle (-1,3);
\draw [Vfill] (-1,3) rectangle (0,4);
\draw (-1.5,3.5) node {$\varnothing$};
\end{scope}
\begin{scope}[shift={(-1,1)}]
\draw[Hfill] (-2,2) rectangle (-1,3);
\draw [Vfill] (-1,3) rectangle (0,4);
\draw (-1.6,3.8) node {$\ddots$};
\end{scope}
\end{tikzpicture}
\begin{tikzpicture}[scale=.6]
\useasboundingbox (-3,1.5) (6,5);
\fill [Hfill] (-0.5,2) rectangle (1,4);
\fill [Vfill] (1,4) rectangle (3,5.5);
\draw (-0.5,4) -- (3,4);
\draw (1,2) -- (1,5.5);
\Hpoint{1}{3};
\draw (0.6,2.75) node {$\sigma_{i}$};
\Vpoint{2}{4};
\draw (2,4.4) node {$\sigma_{j}$};
\zoneRG{1}{4}{1.5}
\end{tikzpicture}

\caption{Zone $A_{RG}$}\label{fig:zoneRG}
\end{center}
\end{figure}
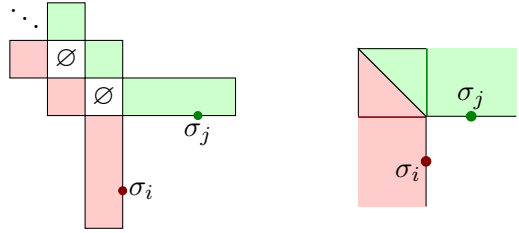

\begin{pf}
The proof is by induction on the {\em assigned} border between the zone not yet assigned to a stack and the {\em assigned} zone containing $\sigma_i$ and $\sigma_j$ where points are 
forced to be in a specific stack. 
At first, the {\em assigned} border is reduced to the segments $(i,1) -- (i,\sigma_j) -- (n,\sigma_j)$ as well as the assigned zone (where $n=|\sigma|$).

More formally we build sequences $(\sigma_{i_k})$ and $(\sigma_{j_k})$ such that $\sigma_{i_k} \sigma_{j_k}$ in an \ascentRG and the color of all points lying in the set $C_k = \{ \sigma_\ell \mid i_k \leq \ell \leq i \text{ and } \sigma_j \leq \sigma_\ell \leq \sigma_{j_k} \}$ is determined and respects Figure~\ref{fig:zoneRG}.
We set $i_0 = i$ and $j_0 = j$.
We prove that if ($\sigma_{i_k} \neq 1$ or $\sigma_{j_k} \neq n$) then we can build $\sigma_{i_{k+1}}$ and $\sigma_{j_{k+1}}$ such that $\sigma_{i_{k+1}} < \sigma_{i_k}$ or $\sigma_{j_{k+1}} > \sigma_{j_k}$.

We set $H_k = \{ \sigma_\ell \mid \ell \leq i_k \text{ and } \sigma_\ell \leq \sigma_{j_k} \}$ and $V_k = \{ \sigma_\ell \mid \ell \geq i_k \text{ and } \sigma_\ell \geq \sigma_{j_k} \}$ (see Figure~\ref{fig:H0V0}). 
By rule (\rmnum{3}) applied to $\sigma_{i_k}$ and $\sigma_{j_k}$, $H_k \subset H$ and $V_k \subset V$. 
Then, different situations may happen depending on whether areas $H_k$ and $V_k$ are empty:

\paragraph{$H_k$ and $V_k$ empty:} Then $\sigma$ is $\ominus$-decomposable which is in contradiction with our hypothesis.

\paragraph{$H_k$ and $V_k$ both non empty:} If both of the colored zones $H_k$ or $V_k$ are non empty, we set $i_{k+1} = \min \{ \ell \mid \sigma_\ell \in H_k \}$ and $\sigma_{j_{k+1}} = \max V_k$ (see Figure~\ref{fig:H0V0}). 
Then $C_{k+1} = C_k \cup H_k \cup V_k \cup Z$ is a partition of $C_{k+1}$, where $Z = \{ \sigma_\ell \mid i_{k+1} \leq \ell \leq i_k \text{ and } \sigma_{j_k} \leq \sigma_\ell \leq \sigma_{j_{k+1}} \}$ (see Figure~\ref{fig:H0V0}). 
The only points of $C_{k+1}$ whose color is not determined yet are those of $Z$. 
If $Z$ is not empty consider a point $\sigma_\ell$ of $Z$. 
If $\sigma_\ell \in V$ then rule (\rmnum{1}) applied to $\sigma_\ell$ and $\sigma_{j_{k+1}}$ is in contradiction with the existence of $\sigma_{i_k}$. 
Hence $\sigma_\ell \in H$ but then rule (\rmnum{2}) applied to $\sigma_{i_{k+1}}$ and $\sigma_\ell$ is in contradiction with the existence of $\sigma_{j_k}$. 
So $Z$ is empty and the color of all points of $C_{k+1}$ is determined and respects Figure~\ref{fig:zoneRG}.

\begin{figure}[H]
\begin{center}
\begin{tikzpicture}[scale=.5]
\Hpoint{0}{0};
\Vpoint{1}{1};
\draw [very thick,->] (2,0.5) -- (3,0.5);
\draw (4,0) -- (7,0);
\draw (4,1) -- (7,1);
\draw (5,-1) -- (5,2);
\draw (6,-1) -- (6,2);
\zoneA{H};
\zoneD{H};
\zoneI{V};
\zoneH{V};
\Hpoint{5}{0};
\Vpoint{6}{1};
\etiquette{3};
\draw (4.5,-0.5) node {$H_k$};
\draw (5.5,1.5) node {$V_k$};
\end{tikzpicture}
\begin{tikzpicture}[scale=.5]
\draw [very thick,->] (2,0.5) -- (3,0.5);
\draw (3,0) -- (7,0);
\draw (3,1) -- (7,1);
\draw (3,2) -- (7,2);
\draw (4,-1) -- (4,3);
\draw (5,-1) -- (5,3);
\draw (6,-1) -- (6,3);
\draw [H,Hfill,very thick] (4,-1) -- (4,1) -- (5,1) -- (5,-1);
\draw [V,Vfill,very thick] (7,1) -- (5,1) -- (5,2) -- (7,2);
\draw (3.5,-0.5) node {$\varnothing$};
\draw (3.5,0.5) node {$\varnothing$};
\draw (5.5,2.5) node {$\varnothing$};
\draw (6.5,2.5) node {$\varnothing$};
\draw (4.5,1.5) node {$Z$};
\Hpoint{5}{0};
\Vpoint{6}{1};
\end{tikzpicture}
\begin{tikzpicture}[scale=.5]
\draw [very thick,->] (2,0.5) -- (3,0.5);
\draw (3,0) -- (7,0);
\draw (3,1) -- (7,1);
\draw (3,2) -- (7,2);
\draw (4,-1) -- (4,3);
\draw (5,-1) -- (5,3);
\draw (6,-1) -- (6,3);
\draw [H,Hfill,very thick] (4,-1) -- (4,1) -- (5,1) -- (5,-1);
\draw [V,Vfill,very thick] (7,1) -- (5,1) -- (5,2) -- (7,2);
\draw (3.5,-0.5) node {$\varnothing$};
\draw (3.5,0.5) node {$\varnothing$};
\draw (5.5,2.5) node {$\varnothing$};
\draw (6.5,2.5) node {$\varnothing$};
\draw (4.5,1.5) node {$\varnothing$};
\Hpoint{5}{0};
\Vpoint{6}{1};
\end{tikzpicture}
\begin{tikzpicture}[scale=.5]
\draw [very thick,->] (2,0.5) -- (3,0.5);
\draw (3,0) -- (7,0);
\draw (3,1) -- (7,1);
\draw (3,2) -- (7,2);
\draw (4,-1) -- (4,3);
\draw (5,-1) -- (5,3);
\draw (6,-1) -- (6,3);
\draw [H,Hfill,very thick] (4,-1) -- (4,1) -- (5,1) -- (5,-1);
\draw [V,Vfill,very thick] (7,1) -- (5,1) -- (5,2) -- (7,2);
\draw [very thick, dashed] (4,-1) -- (4,2) -- (7,2);
\draw (3.5,-0.5) node {$\varnothing$};
\draw (3.5,0.5) node {$\varnothing$};
\draw (5.5,2.5) node {$\varnothing$};
\draw (6.5,2.5) node {$\varnothing$};
\draw (4.5,1.5) node {$\varnothing$};
\Hpoint{5}{0};
\Vpoint{6}{1};
\Hpoint{4}{0.5};
\Vpoint{6.5}{2};
\draw (4.5,0.5) node {{\small $\sigma_{i_{k+1}}$}};
\draw (6.5,1.7) node {{\small $\sigma_{j_{k+1}}$}};
\end{tikzpicture}
\caption{Transitive closure}\label{fig:H0V0}
\end{center}
\end{figure}

\paragraph{Only one area in $H_k$ and $V_k$ is empty:}
The same proof as the preceding case allow us to define a new point $\sigma_{j_{k+1}}$ or $\sigma_{i_{k+1}}$ depending on which area is empty and we can extend the {\em assigned} border as shown in next figure.
\begin{center}
\begin{tikzpicture}[scale=.5]
\Hpoint{0}{0};
\Vpoint{1}{1};
\draw [very thick,->] (2,0.5) -- (3,0.5);
\draw (4,0) -- (7,0);
\draw (4,1) -- (7,1);
\draw (5,-1) -- (5,2);
\draw (6,-1) -- (6,2);
\zoneA{H};
\zoneD{H};
\zoneI{V};
\zoneH{V};
\Hpoint{5}{0};
\Vpoint{6}{1};
\etiquette{3};
\draw (4.5,-0.5) node {$H_k$};
\draw (5.5,1.5) node {$V_k$};
\end{tikzpicture}
\begin{tikzpicture}[scale=.5]
\draw [very thick,->] (2,0.5) -- (3,0.5);
\draw (4,0) -- (7,0);
\draw (4,1) -- (7,1);
\draw (4,2) -- (7,2);
\draw (5,-1) -- (5,3);
\draw (6,-1) -- (6,3);
\draw [H,Hfill,very thick] (5,-1) -- (5,1) -- (5,1) -- (5,-1);
\draw [V,Vfill,very thick] (7,1) -- (5,1) -- (5,2) -- (7,2);
\draw (4.5,-0.5) node {$\varnothing$};
\draw (4.5,0.5) node {$\varnothing$};
\draw (5.5,2.5) node {$\varnothing$};
\draw (6.5,2.5) node {$\varnothing$};
\Hpoint{5}{0};
\Vpoint{6}{1};
\end{tikzpicture}
\begin{tikzpicture}[scale=.5]
\draw [very thick,->] (2,0.5) -- (3,0.5);
\draw (4,0) -- (7,0);
\draw (4,1) -- (7,1);
\draw (4,2) -- (7,2);
\draw (5,-1) -- (5,3);
\draw (6,-1) -- (6,3);
\draw [H,Hfill,very thick] (5,-1) -- (5,1) -- (5,1) -- (5,-1);
\draw [V,Vfill,very thick] (7,1) -- (5,1) -- (5,2) -- (7,2);
\draw (4.5,-0.5) node {$\varnothing$};
\draw (4.5,0.5) node {$\varnothing$};
\draw (5.5,2.5) node {$\varnothing$};
\draw (6.5,2.5) node {$\varnothing$};
\draw [very thick, dashed] (5,-1) -- (5,2) -- (7,2);
\Hpoint{5}{0};
\Vpoint{6}{1};
\Vpoint{6.5}{2};
\draw (6.5,1.7) node {{\small $\sigma_{j_{k+1}}$}};
\end{tikzpicture}
\begin{center}
{Transitive closure}
\end{center}
\end{center}

Hence, the {\em assigned} zone keeps growing until all permutation points are assigned, proving Theorem~\ref{thm:RGIncreasing}.
\end{pf}

We also have a similar result extending rule $(\rmnum{4})$:

\begin{thm}\label{thm:GRIncreasing}
Consider a valid coloring of a $\ominus$-decomposable permutation $\sigma$.
If there exist two points $\sigma_i < \sigma_j, i < j$ such that $\sigma_i \in \G$ and $\sigma_j \in \R$,
then the color of each point $\sigma_k$ with $k > j$ or $\sigma_k < \sigma_i$ is determined. 
Such a zone will be represented as 
\begin{tikzpicture}[scale=.5]
\useasboundingbox (1,0) (4,3);
\fill [Vfill] (1,0) rectangle (3,1);
\fill [Hfill] (3,1) rectangle (4,3);
\draw (1,1) -- (4,1);
\draw (3,0) -- (3,3);
\Hpoint{3}{2};
\draw (3.5,1.7) node {{\tiny $\sigma_j$}};
\Vpoint{2}{1};
\draw (2,0.7) node {{\tiny $\sigma_i$}};
\zoneGR{4}{0}{1};
\end{tikzpicture} in the sequel.
\end{thm}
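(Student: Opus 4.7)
The plan is to prove Theorem~\ref{thm:GRIncreasing} by a direct inductive saturation, parallel in structure to the proof of Theorem~\ref{thm:RGIncreasing}, with rule $(\mathrm{iv})$ of $\mathcal{R}_8$ playing the role that rule $(\mathrm{iii})$ did there and with the ``assigned'' border sweeping towards the lower-right corner of the diagram rather than towards the upper-left.

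I would begin by applying rule $(\mathrm{iv})$ to the given \ascentGR $(\sigma_i,\sigma_j)$: this immediately determines the colors inside the four zones labelled $A$, $B$, $F$, $I$ displayed on the right-hand side of rule $(\mathrm{iv})$, which together cover everything in the target region except the ``lower-right cell'' consisting of points with $k>j$ and $\sigma_k<\sigma_i$ simultaneously. Handling this last cell is the job of the induction.

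Next I would run the same extension loop as in Theorem~\ref{thm:RGIncreasing}. At stage $k$ I maintain a pair $(\sigma_{i_k},\sigma_{j_k})$ that is itself an \ascentGR and such that the current assigned region agrees with the zones forced by rule $(\mathrm{iv})$ applied to that new pair; set $H_k=\{\sigma_\ell:\ \ell>i_k,\ \sigma_\ell<\sigma_{j_k}\}$ and $V_k=\{\sigma_\ell:\ \ell>j_k,\ \sigma_\ell>\sigma_{i_k}\}$, the analogs in the lower-right direction of the two sets used in the sibling proof. When both $H_k$ and $V_k$ are non-empty, the extremal points of $H_k$ (smallest new index $>i_k$) and of $V_k$ (largest new value $<\sigma_{j_k}$) provide a new \ascentGR $(\sigma_{i_{k+1}},\sigma_{j_{k+1}})$ strictly extending the assigned region, and the little rectangle they carve with $(\sigma_{i_k},\sigma_{j_k})$ is forced to be empty because any intruder, colored either \R or \G, would collide via a rule of $\mathcal{R}_8$ with the points already fixed. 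When only one of the two sets is non-empty, the argument goes through with a single coordinate advancing. The remaining degenerate case, where both $H_k$ and $V_k$ are empty, would force $\sigma$ to split as a non-trivial direct sum across the pair $(\sigma_{i_k},\sigma_{j_k})$, and is ruled out exactly as in the sibling theorem via the global hypothesis on the decomposition of $\sigma$.

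I expect the main obstacle, just as in the proof of Theorem~\ref{thm:RGIncreasing}, to be verifying the rectangle-emptiness claim in the ``both non-empty'' case: one must rule out intruders of both colors by invoking the right pair of rules of $\mathcal{R}_8$ and exploiting the colors already fixed at $\sigma_i,\sigma_j,\sigma_{i_k},\sigma_{j_k}$. Once that claim is in hand the monotone growth of the assigned region is immediate, the saturation terminates after at most $O(|\sigma|)$ stages, every point with $k>j$ or $\sigma_k<\sigma_i$ has received a determined color, and the linear-time computability is a direct consequence of the fact that each stage only extends the border by a new row or column.
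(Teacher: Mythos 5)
Your proposal is correct and is essentially the paper's own proof: the paper disposes of this theorem in one line, remarking that the rules are symmetric so that the proof of Theorem~\ref{thm:RGIncreasing} applies mutatis mutandis, and your mirrored induction (rule~(\rmnum{4}) in place of rule~(\rmnum{3}), assigned border sweeping toward the lower right, rectangle emptiness killed by the forbidden patterns \GGR and \RRG anchored at the already-colored pair, degeneracy excluded by indecomposability) is exactly that symmetric argument carried out explicitly. Two slips to repair in the write-up: the correct analogues of $H_k$ and $V_k$ are the points to the left of $j_k$ and below $\sigma_{i_k}$ (forced \G; extend by the lowest one) and the points to the right of $j_k$ and above $\sigma_{i_k}$ (forced \R; extend by the rightmost one) --- your stated sets $\{\ell>i_k,\ \sigma_\ell<\sigma_{j_k}\}$ and $\{\ell>j_k,\ \sigma_\ell>\sigma_{i_k}\}$ with their extremal choices mix in the still-undetermined lower-right cell --- and the degenerate case where both sets are empty makes $\sigma$ a skew sum ($\ominus$-decomposition), not a direct sum, which is what contradicts the hypothesis.
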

\begin{pf}
Notice that rules are symetric so that the same proof as for Theorem~\ref{thm:RGIncreasing} holds.
\end{pf}

Knowing Theorem~\ref{thm:RGIncreasing} and Theorem~\ref{thm:GRIncreasing}, to set the color of as much points as possible, we better have to choose the lower right \ascentRG or the upper left \ascentGR. 
Let us now define properly these particular ascents.

We consider a valid bicoloring $c$ of a permutation $\sigma$.
We define $A_{RG}$ as the set of \ascentsRG of $c$.

\begin{lem}
Suppose $A_{RG} \neq \varnothing$. 
Among \ascentsRG of $c$, the pair $(\sigma_i,\sigma_j)$ which maximizes $i$ first then minimizes $\sigma_{j}$ (for $i$ fixed) is the same than the pair that minimizes $\sigma_{j}$ first then maximizes $i$ (for $\sigma_{j}$ fixed).
\end{lem}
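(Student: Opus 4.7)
The plan is to argue by contradiction. Denote by $(i_1,j_1)$ the pair given by the first optimization (maximize $i$, then minimize $\sigma_j$) and by $(i_2,j_2)$ the pair given by the second (minimize $\sigma_j$, then maximize $i$). From their very definitions $i_2 \leq i_1$ and $\sigma_{j_2} \leq \sigma_{j_1}$. I assume $(i_1,j_1) \neq (i_2,j_2)$ and split on whether $i_1 = i_2$ or $i_1 > i_2$, and for the latter on the comparison of $i_1$ with $j_2$.

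In the easy case $i_1 = i_2$, the assumption forces $\sigma_{j_2} < \sigma_{j_1}$; then $(i_1,j_2)$ is an \ascentRG starting at $i_1$ with a smaller $\sigma_j$, contradicting the first optimization. Hence $i_2 < i_1$, and $i_1 \neq j_2$ (since $\sigma_{i_1}$ is red and $\sigma_{j_2}$ is green). If moreover $i_2 < i_1 < j_2$ and $\sigma_{i_1} < \sigma_{j_2}$, then $(i_1,j_2)$ is itself an \ascentRG; combining the minimality of $\sigma_{j_1}$ at $i=i_1$ with $\sigma_{j_2} \leq \sigma_{j_1}$ forces $j_1 = j_2$, and the second optimization then yields $i_2 \geq i_1$, a contradiction.

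In each of the three remaining subcases I exhibit a forbidden colored pattern of Definition~\ref{def:validColoring}. If $i_2 < i_1 < j_2$ and $\sigma_{i_1} > \sigma_{j_2}$, the triple $\sigma_{i_2},\sigma_{i_1},\sigma_{j_2}$ is a \RRG: $(\sigma_{i_2},\sigma_{i_1})$ is a red $12$-pattern and $\sigma_{j_2}$ is green with value strictly between. If $i_2 < j_2 < i_1$ and $\sigma_{i_1} > \sigma_{j_2}$, the triple $\sigma_{i_2},\sigma_{j_2},\sigma_{i_1}$ similarly realises a \RRG. Finally, if $i_2 < j_2 < i_1$ and $\sigma_{i_1} < \sigma_{j_2}$, one first notes that $\sigma_{j_2} < \sigma_{j_1}$ strictly (equality would give $j_1 = j_2 < i_1$, contradicting $i_1 < j_1$); then the triple $\sigma_{j_2},\sigma_{i_1},\sigma_{j_1}$ is a \GGR, since $(\sigma_{j_2},\sigma_{j_1})$ is a green $12$-pattern and the red $\sigma_{i_1}$ has index strictly between.

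The main delicate point is the case bookkeeping together with the identification of the correct forbidden pattern in each subcase, keeping in mind the asymmetric conventions of Definition~\ref{def:validColoring}: \RRG is a red $12$-pattern with a green point strictly between in value, whereas \GGR is a green $12$-pattern with a red point strictly between in index. Once this case split is in place, each of the contradictions above is an immediate consequence of the extremality of $(i_1,j_1)$ and $(i_2,j_2)$.
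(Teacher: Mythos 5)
Your proof is correct and is essentially the paper's argument: both show that if the two extremal pairs differ, then either a forbidden \RRG or \GGR pattern appears, or else $(\sigma_{i_1},\sigma_{j_2})$ is itself an \ascentRG whose double extremality forces the two pairs to coincide. Your version merely unfolds this into an explicit case enumeration (and splits the \RRG case into two subcases that the pattern's index-free convention makes identical), whereas the paper states the two pattern contradictions directly.
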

\begin{proof}
Let $(\sigma_{i_{0}},\sigma_{j_{0}})$ be the pair that maximizes $i_0$ first then minimizes $\sigma_{j_{0}}$ and $(\sigma_{i_1},\sigma_{j_1})$ be the pair that minimizes $\sigma_{j_1}$ first then maximizes $i_1$. 
Then by definition $i_{0} \geq i_{1}$ and $\sigma_{j_{1}} \leq \sigma_{j_{0}}$.

If $j_{1} < i_{0}$ then $(\sigma_{j_1}, \sigma_{j_0})$ is an \ascentGG and rule $(\rmnum{1})$ is in contradiction with $\sigma_{i_0} \in H$ as $j_{1} < i_{0} < j_{0}$.
If $\sigma_{j_1} < \sigma_{i_0}$ then $(\sigma_{i_1}, \sigma_{i_0})$ is an \ascentRR and rule $(\rmnum{2})$ is in contradiction with $\sigma_{j_1} \in V$ as $\sigma_{i_1} < \sigma_{j_1} < \sigma_{i_0}$. 
Hence $(\sigma_{i_0}, \sigma_{j_1})$ is an \ascentRG.
Then by definition of $j_0$, $\sigma_{j_{0}} \leq \sigma_{j_{1}}$ and by definition of $i_1$, $i_{1} \geq i_{0}$. 
So $(\sigma_{i_{0}},\sigma_{j_{0}}) = (\sigma_{i_1},\sigma_{j_1})$.
\end{proof}

By the preceding lemma, when $A_{RG} \neq \varnothing$ we can define $i_{RG},j_{RG}$ as the lower right \ascentRG. 
By symmetry, we can also define $i_{GR},j_{GR}$ the upper left \ascentGR when $A_{RG} \neq \varnothing$, where $A_{GR}$ is the set similar to $A_{RG}$ but for \ascentsGR.

Now we have all the tools to prove that there are only a polynomial number of bicolorings to test. 
We juste have to do a case study depending on $A_{RG}$ or $A_{GR}$ are empty.

\subsection{Case study}

Recall that from Proposition~\ref{prop:pushallMoinsDecomposable} if $\sigma$ is $\ominus$-decomposable then $\sigma$ is $2$-stack pushall sortable 
if and only if each $\ominus$-indecomposable block of $\sigma$ is $2$-stack pushall sortable.
Thus, we can assume that $\sigma$ is $\ominus$-indecomposable.

In this section, we consider a valid coloring $c$ of a $\ominus$-indecomposable permutation $\sigma$.
We prove that knowing if there are ascents RG or GR in $c$ and knowing $i_{RG}$, $j_{RG}$, $i_{GR}$ and $j_{GR}$ (if they exist), we can deduce the color of every point of $\sigma$.

We prove this considering $4$ cases depending on whether there are ascents RG or GR in $c$.

\subsubsection{There is no bicolored ascents}

If $A_{RG}$ and $A_{GR}$ are both empty, then the coloring is monochromatic:
\begin{prop}\label{prop:monochromatic}
Let $\sigma$ be a $\ominus$-indecomposable permutation and $c$ a valid coloring of $\sigma$ such that every pattern $12$ of $\sigma$ is monochromatic. 
Then all points of $\sigma$ have the same color.
\end{prop}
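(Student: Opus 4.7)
The plan is to argue by contradiction. Assume that both colors \R and \G appear in $c$. Since every pattern $12$ in $\sigma$ is monochromatic, whenever two positions $i<j$ satisfy $\sigma_i<\sigma_j$ the points $\sigma_i$ and $\sigma_j$ share the same color. I introduce the \emph{ascent graph} $G_\sigma$ on vertex set $\{1,\ldots,n\}$ where $n=|\sigma|$, with an edge $\{i,j\}$ precisely when $(\sigma_i,\sigma_j)$ is an ascent. The hypothesis says exactly that $c$ is constant on every edge of $G_\sigma$, hence constant on each connected component; since both colors appear, $G_\sigma$ must have at least two connected components.

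The key step is a geometric lemma: for any two connected components $A,B$ of $G_\sigma$, either every point of $A$ is upper-left of every point of $B$ (i.e.\ smaller index, larger value) or every point of $A$ is lower-right of every point of $B$. For any $a\in A$ and $b\in B$ there is no edge between them, so $(\sigma_a,\sigma_b)$ is a descent and the two points are comparable in exactly one of the two dominance directions. To show that this direction is uniform, I fix $b\in B$ and suppose, for contradiction, that some $a_0\in A$ lies upper-left of $b$ while some $a_m\in A$ lies lower-right of $b$. Take an edge-path $a_0,a_1,\ldots,a_m$ in $A$ along ascents, and let $\ell$ be the first index where the dominance relation to $b$ flips. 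Then $\text{pos}(a_\ell)<\text{pos}(b)<\text{pos}(a_{\ell+1})$, so $\text{pos}(a_\ell)<\text{pos}(a_{\ell+1})$, and the ascent edge $(a_\ell,a_{\ell+1})$ forces $\text{val}(a_\ell)<\text{val}(a_{\ell+1})$; but $\text{val}(a_\ell)>\text{val}(b)>\text{val}(a_{\ell+1})$, a contradiction. The symmetric argument (with $A$ and $B$ swapped) rules out two points of $B$ having opposite relations to a fixed $a\in A$, so the binary relation $A\triangleleft B$, meaning ``every point of $A$ is upper-left of every point of $B$'', is well defined on pairs of distinct components.

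The relation $\triangleleft$ is antisymmetric and transitive by construction and is defined on every pair of distinct components, so it is a strict total order on the set of components $E_1,\ldots,E_s$; write them in $\triangleleft$-increasing order. Because every point of $E_1$ lies upper-left of every point of every later $E_j$, the positions in $E_1$ form the initial segment $\{1,\ldots,|E_1|\}$ and the corresponding values form the top block $\{n-|E_1|+1,\ldots,n\}$. Iterating on $E_2,\ldots,E_s$ yields a skew-sum decomposition $\sigma=\ominus[\sigma|_{E_1},\ldots,\sigma|_{E_s}]$ with $s\geq 2$, contradicting the $\ominus$-indecomposability of $\sigma$. Hence only one color appears and $c$ is monochromatic.

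The main obstacle is the uniform-direction lemma for pairs of components: one has to walk along an ascent path in a component and exploit the fact that positions and values must be co-monotone along an ascent to rule out a flip of the dominance relation with a fixed external point. Once this lemma is in hand, assembling the $\ominus$-decomposition from the totally ordered components is routine.
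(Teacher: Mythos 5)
Your proof is correct, but it takes a different route from the paper's. You prove the structural fact that the connected components of the ascent graph of $\sigma$ are totally ordered by the ``upper-left'' dominance relation and hence induce a $\ominus$-decomposition, so that $\ominus$-indecomposability forces the ascent graph to be connected; monochromaticity on components then finishes the argument. The uniform-direction lemma (walking along an ascent path and noting that positions and values are co-monotone along each edge, so the dominance relation to a fixed external point cannot flip) is the right key step, and the assembly into a skew sum is sound; the only cosmetic gap is that transitivity of $\triangleleft$ is not literally ``by construction'' but needs the one-line observation that upper-left dominance of points is transitive and components are nonempty, combined with the fact that any cross-component pair is a descent. The paper instead argues directly via left-to-right minima: two consecutive left-to-right minima $\sigma_i,\sigma_j$ leave their lower-left region empty, so $\ominus$-indecomposability supplies a point $\sigma_k$ forming an ascent with both, forcing them to share a color; then every non-minimal point forms an ascent with some left-to-right minimum. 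The paper's argument is shorter and exhibits explicit connecting witnesses, while yours isolates the cleaner and reusable equivalence between $\ominus$-indecomposability and connectivity of the ascent graph, at the cost of some bookkeeping about components and the induced decomposition.
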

\begin{pf}
Let $\sigma_i$ and $\sigma_j$ be two consecutive left-to-right minima of $\sigma$. 
By definition there are no point below $\sigma_i$ and to the left of $\sigma_j$ as shown by the empty sign in the following figure
\begin{tikzpicture}[scale=.3]
\draw (0,2) -- (4,2);
\draw (2,0) -- (2,4);
\draw (1,1) node {$\varnothing$};
\draw (1,2) [fill] circle (3pt);
\draw (1,2.4) node {{\scriptsize $\sigma_i$}};
\draw (2,1) [fill] circle (3pt);
\draw (2.5,1) node {{\scriptsize $\sigma_j$}};
\draw (3,3) [fill] circle (3pt);
\draw (3.3,3.3) node {{\scriptsize $\sigma_k$}};
\end{tikzpicture}.
As $\sigma$ is $\ominus$-indecomposable, there exist a point $\sigma_k$ above $\sigma_j$ and to the right of $\sigma_i$. 
As increasing subsequences are monochromatic, $\sigma_i$ and $\sigma_k$ have the same color. 
The same goes for $\sigma_j$ and $\sigma_k$. Thus $\sigma_i$ and $\sigma_j$ have the same color. 
So all left-to-right minima of $\sigma$ have the same color.
By definition of left-to-right minima, for every non-minimal point $\sigma_l$ there exists a left-to-right minima $\sigma_m$ such that $(\sigma_m,\sigma_l)$ is a pattern $12$ of $\sigma$. 
Thus $\sigma_l$ has the same color as $\sigma_m$, and all points of $\sigma$ have the same color.
\end{pf}

\subsubsection{There is no \ascentRG but some \ascentsGR}

We suppose in this section that there exists at least one \ascentGR but no \ascentRG.
As $A_{GR}$ is non empty, $i_{GR}$ and $j_{GR}$ are defined.
We prove that once $i_{GR}$ and $j_{GR}$ are determined, then it fixes the color of every other point of the permutation.

\begin{prop}\label{prop:diagrammesGR}
Let $\sigma$ be a $\ominus$-indecomposable permutation and $c$ a valid coloring of $\sigma$ such that there is no increasing subsequence RG in $c$ and there is at least an increasing sequence GR in $c$. 
Then $c$ has one of the following shapes (where maybe $a=i_{GR}$ or $b=j_{GR}$):

\begin{center}
\begin{tikzpicture}[scale=.5]
\useasboundingbox (0,-1) (5,4);
\fill [Vfill] (1,0) rectangle (3,1);
\fill [Hfill] (0,1) rectangle (1,3);
\fill [Hfill] (3,1) rectangle (4,3);
\draw (0,1) -- (4,1);
\draw (0,3) -- (4,3);
\draw (0,0) -- (0,3);
\draw (1,0) -- (1,3);
\draw (3,0) -- (3,3);
\draw (2,2) node {$\varnothing$};
\draw (0.5,0.5) node {$\varnothing$};
\Hpoint{3}{2};
\draw (3.5,1.7) node {{\tiny $j_{GR}$}};
\Vpoint{2}{1};
\draw (2,0.7) node {{\tiny $i_{GR}$}};
\Vpoint{1}{0.5};
\draw (1.3,0.3) node {{\tiny $a$}};
\Hpoint{3.5}{3};
\draw (3.45,2.65) node {{\tiny $b$}};
\Hpoint{0}{2};
\draw (0.4,2) node {{\tiny $x$}};
\zoneGR{4}{0}{1};
\end{tikzpicture}
\begin{tikzpicture}[scale=.5]
\useasboundingbox (0,-1) (5,4);
\fill [Vfill] (1,0) rectangle (3,1);
\fill [Hfill] (3,1) rectangle (4,3);
\draw (1,1) -- (4,1);
\draw (1,3) -- (4,3);
\draw (1,0) -- (1,3);
\draw (3,0) -- (3,3);
\draw (2,2) node {$\varnothing$};
\Hpoint{3}{2};
\draw (3.5,1.7) node {{\tiny $j_{GR}$}};
\Vpoint{2}{1};
\draw (2,0.7) node {{\tiny $i_{GR}$}};
\Vpoint{1}{0.5};
\draw (1.3,0.3) node {{\tiny $a$}};
\Hpoint{3.5}{3};
\draw (3.45,2.65) node {{\tiny $b$}};
\zoneGR{4}{0}{1};
\end{tikzpicture}
\begin{tikzpicture}[scale=.5]
\useasboundingbox (0,-1) (5,4);
\fill [Vfill] (1,0) rectangle (3,1);
\fill [Hfill] (3,1) rectangle (4,3);
\fill [Vfill] (1,3) rectangle (3,4);
\draw (1,1) -- (4,1);
\draw (1,3) -- (4,3);
\draw (1,4) -- (4,4);
\draw (1,0) -- (1,4);
\draw (3,0) -- (3,4);
\draw (3.5,3.5) node {$\varnothing$};
\draw (2,2) node {$\varnothing$};
\Hpoint{3}{2};
\draw (3.5,1.7) node {{\tiny $j_{GR}$}};
\Vpoint{2}{1};
\draw (2,0.7) node {{\tiny $i_{GR}$}};
\Vpoint{1}{0.5};
\draw (1.3,0.3) node {{\tiny $a$}};
\Hpoint{3.5}{3};
\draw (3.45,2.65) node {{\tiny $b$}};
\Vpoint{2}{4};
\draw (2,3.7) node {{\tiny $x$}};
\zoneGR{4}{0}{1};
\end{tikzpicture}
\end{center}
\end{prop}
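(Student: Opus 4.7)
The starting point is to apply Theorem~\ref{thm:GRIncreasing} to the upper-left GR ascent $(i_{GR},j_{GR})$: this fixes the coloring of every point with index greater than $j_{GR}$ or value smaller than $\sigma_{i_{GR}}$ as the alternating $A_{GR}$ zone depicted at the bottom-right of each of the three diagrams. What remains is to describe the coloring of the complementary region $R^{*}=\{\sigma_k:k\leq j_{GR}\text{ and }\sigma_k\geq\sigma_{i_{GR}}\}$, and to show that this description collapses to one of the three pictures.

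The first key step is to prove that the open rectangle strictly between $V=(i_{GR},\sigma_{i_{GR}})$ and $H=(j_{GR},\sigma_{j_{GR}})$ is empty. If a point $\sigma_k$ lay in this open rectangle, a dichotomy on its color yields a contradiction: if $\sigma_k$ is green, then $(\sigma_k,\sigma_{j_{GR}})$ is a GR ascent with the same R-index $j_{GR}$ but strictly larger G-value $\sigma_k>\sigma_{i_{GR}}$, contradicting the maximality of $\sigma_{i_{GR}}$ in the definition of the upper-left GR ascent; if $\sigma_k$ is red, then $(\sigma_{i_{GR}},\sigma_k)$ is a GR ascent with R-index $k<j_{GR}$, contradicting the minimality of $j_{GR}$. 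Applying rule~$(\rmnum{4})$ of $\mathcal{R}_8$ at $(V,H)$ together with the no-RG hypothesis, I would then force the coloring of the horizontal strip at value level $\sigma_{i_{GR}}$ with index at most $i_{GR}$ to be a green row running from some leftmost point $a$ up to $i_{GR}$ (possibly with $a=i_{GR}$), and symmetrically that the vertical strip at index $j_{GR}$ with value at least $\sigma_{j_{GR}}$ is a red column running from $j_{GR}$ up to some topmost point $b$ (possibly with $b=j_{GR}$).

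The final step is a case analysis on the two remaining sub-regions of $R^{*}$: the upper-left corner (index $<i_{GR}$ and value $>\sigma_{j_{GR}}$) and the upper-middle strip (index in $[i_{GR},j_{GR}]$ and value $>\sigma_{j_{GR}}$). Using rules~$(\rmnum{3})$ and $(\rmnum{4})$ of $\mathcal{R}_8$ combined with the absence of RG ascents, one shows that a red point $R'$ sitting strictly to the left of $V$ at a value above $\sigma_{i_{GR}}$ is incompatible with a green point $G'$ strictly above $H$ lying between $V$ and $H$ in index: $R'$ would precede $G'$ in index while $R'<G'$ in value (using the extremality of $i_{GR}$ and $j_{GR}$ to control the value ranges), creating an RG ascent forbidden by hypothesis. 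The three mutually exclusive configurations that survive correspond exactly to the three shapes: shape~1 where additional red points extend the left red strip above $\sigma_{i_{GR}}$ (with $x$ the extremal red point), shape~2 where no such extensions exist, and shape~3 where additional green points appear in the upper-middle strip above $\sigma_{j_{GR}}$ (with $x$ the extremal green point). In each case the rules of $\mathcal{R}_8$ then force the exact color pattern displayed.

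The main obstacle is this last step, since the upper-left corner and upper-middle strip are not covered by Theorem~\ref{thm:GRIncreasing} and their colorings must be pinned down purely from the global no-RG hypothesis together with the extremality of $(i_{GR},j_{GR})$ and the deduction rules $\mathcal{R}_8$. The delicate point is ruling out mixed configurations (red extensions to the left coexisting with green extensions above $H$), which requires a careful comparison of values across the two regions in order to manufacture the forbidden RG ascent.
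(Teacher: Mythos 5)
Your overall strategy matches the paper's: apply Theorem~\ref{thm:GRIncreasing} to $(i_{GR},j_{GR})$, empty out the rectangle between them using the extremality that defines the upper-left GR ascent, introduce the auxiliary points $a$ and $b$, and then split into cases according to whether the left red extension or the top green extension is inhabited, using the no-RG hypothesis to make these two extensions mutually exclusive. All of that is sound and is essentially how the paper argues (the paper propagates colors with rules $(\rmnum{1})$ and $(\rmnum{2})$ rather than $(\rmnum{3})$--$(\rmnum{4})$, but that is cosmetic).

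There is, however, one genuine gap: you never invoke the hypothesis that $\sigma$ is $\ominus$-indecomposable, and without it the statement is false. The rules of $\mathcal{R}_8$, the extremality of $(i_{GR},j_{GR})$ and the absence of RG ascents only constrain points lying inside or below/right of the already-assigned region; they say nothing about a point sitting strictly to the upper-left of everything colored so far (index smaller than that of $x$, resp.\ $a$, and value larger than $\sigma_b$, resp.\ the topmost colored value). Such a point would simply start a separate $\ominus$-block: in $\ominus[\tau,\pi]$ with the GR ascent inside $\pi$, the block $\tau$ may carry an arbitrary valid coloring, and the resulting picture is none of your three shapes. In each of the three cases the paper closes off this residual corner precisely by noting that a point there would make $\sigma$ $\ominus$-decomposable (cutting along the row of $b$ or $x$ and the column of $x$ or $a$). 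So the delicate point is not, as you suggest, the exclusion of mixed configurations --- that works exactly as you describe --- but the indecomposability argument that makes the three diagrams exhaustive; your proof needs that step added in every branch of the case analysis.
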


\begin{rem}
Here and in all the following, when a zone of a diagram is colored with \R (resp. \G), it means than if there are some points lying in this zone, they are in \R (resp. \G). And when a zone of a diagram has an empty sign, it means than this zone is empty.
\end{rem}

\begin{pf}
The color of every point $\sigma_k$ such that $k > j_{GR}$ or $\sigma_k < \sigma_{i_{GR}}$ is determined by Theorem~\ref{thm:GRIncreasing} (see the first diagram of Figure~\ref{fig:tikzGR1}).
Note that we denote by $*$ the zone where the color of the points is unknown.
By maximality of $\sigma_{i_{GR}}$, any point above $\sigma_{i_{GR}}$ and lower left with respect to $\sigma_{j_{GR}}$ is in \R. By minimality of $j_{GR}$, any point to the left of $\sigma_{j_{GR}}$ and top right with respect to $\sigma_{i_{GR}}$ is in \G. As no point can be both in \R and in \G, we know that the zone between $\sigma_{i_{GR}}$ and $\sigma_{j_{GR}}$ is empty, as shown in the second diagram of Figure~\ref{fig:tikzGR1}.

\begin{figure}[H]
\begin{center}
\begin{tikzpicture}[scale=.5]
\useasboundingbox (0,-0.5) (5,5);
\fill [Vfill] (0,0) rectangle (3,1);
\fill [Hfill] (3,1) rectangle (4,4);
\draw (0,1) -- (4,1);
\draw (3,0) -- (3,4);
\Hpoint{3}{2};
\draw (3.5,1.7) node {{\tiny $j_{GR}$}};
\Vpoint{2}{1};
\draw (2,0.7) node {{\tiny $i_{GR}$}};
\draw (1.5,2.5) node {$*$};
\zoneGR{4}{0}{1};
\end{tikzpicture}
\begin{tikzpicture}[scale=.5]
\useasboundingbox (0,-0.5) (5,5);
\fill [Vfill] (0,0) rectangle (3,1);
\fill [Hfill] (0,1) rectangle (2,2);
\fill [Hfill] (3,1) rectangle (4,4);
\fill [Vfill] (2,2) rectangle (3,4);
\draw (0,1) -- (4,1);
\draw (0,2) -- (3,2);
\draw (2,1) -- (2,4);
\draw (3,0) -- (3,4);
\draw (2.5,1.5) node {$\varnothing$};
\Hpoint{3}{2};
\draw (3.5,1.7) node {{\tiny $j_{GR}$}};
\Vpoint{2}{1};
\draw (2,0.7) node {{\tiny $i_{GR}$}};
\draw (1,3) node {$*$};
\zoneGR{4}{0}{1};
\end{tikzpicture}
\begin{tikzpicture}[scale=.5]
\useasboundingbox (0,-0.5) (5,5);
\fill [Vfill] (1,0) rectangle (3,1);
\fill [Hfill] (0,1) rectangle (1,2);
\fill [Hfill] (3,1) rectangle (4,4);
\fill [Vfill] (1,2) rectangle (3,4);
\draw (0,1) -- (4,1);
\draw (0,2) -- (3,2);
\draw (2,1) -- (2,4);
\draw (1,0) -- (1,4);
\draw (3,0) -- (3,4);
\draw (2.5,1.5) node {$\varnothing$};
\draw (1.5,1.5) node {$\varnothing$};
\draw (0.5,0.5) node {$\varnothing$};
\Hpoint{3}{2};
\draw (3.5,1.7) node {{\tiny $j_{GR}$}};
\Vpoint{2}{1};
\draw (2,0.7) node {{\tiny $i_{GR}$}};
\Vpoint{1}{0.5};
\draw (1.3,0.3) node {{\tiny $a$}};
\draw (0.5,3) node {$*$};
\zoneGR{4}{0}{1};
\end{tikzpicture}
\begin{tikzpicture}[scale=.5]
\useasboundingbox (0,-0.5) (5,5);
\fill [Vfill] (1,0) rectangle (3,1);
\fill [Hfill] (0,1) rectangle (1,3);
\fill [Hfill] (3,1) rectangle (4,3);
\fill [Vfill] (1,3) rectangle (3,4);
\draw (0,1) -- (4,1);
\draw (0,2) -- (3,2);
\draw (0,3) -- (4,3);
\draw (2,1) -- (2,4);
\draw (1,0) -- (1,4);
\draw (3,0) -- (3,4);
\draw (3.5,3.5) node {$\varnothing$};
\draw (2.5,2.5) node {$\varnothing$};
\draw (1.5,2.5) node {$\varnothing$};
\draw (2.5,1.5) node {$\varnothing$};
\draw (1.5,1.5) node {$\varnothing$};
\draw (0.5,0.5) node {$\varnothing$};
\Hpoint{3}{2};
\draw (3.5,1.7) node {{\tiny $j_{GR}$}};
\Vpoint{2}{1};
\draw (2,0.7) node {{\tiny $i_{GR}$}};
\Vpoint{1}{0.5};
\draw (1.3,0.3) node {{\tiny $a$}};
\Hpoint{3.5}{3};
\draw (3.45,2.65) node {{\tiny $b$}};
\draw (0.5,3.5) node {$*$};
\zoneGR{4}{0}{1};
\end{tikzpicture}
\begin{tikzpicture}[scale=.5]
\useasboundingbox (0,-0.5) (5,5);
\fill [Vfill] (1,0) rectangle (3,1);
\fill [Hfill] (0,1) rectangle (1,3);
\fill [Hfill] (3,1) rectangle (4,3);
\fill [Vfill] (1,3) rectangle (3,4);
\draw (0,1) -- (4,1);
\draw (0,3) -- (4,3);
\draw (1,0) -- (1,4);
\draw (3,0) -- (3,4);
\draw (3.5,3.5) node {$\varnothing$};
\draw (2,2) node {$\varnothing$};
\draw (0.5,0.5) node {$\varnothing$};
\Hpoint{3}{2};
\draw (3.5,1.7) node {{\tiny $j_{GR}$}};
\Vpoint{2}{1};
\draw (2,0.7) node {{\tiny $i_{GR}$}};
\Vpoint{1}{0.5};
\draw (1.3,0.3) node {{\tiny $a$}};
\Hpoint{3.5}{3};
\draw (3.45,2.65) node {{\tiny $b$}};
\draw (0.5,3.5) node {$*$};
\draw (0.5,2) node {$1$};
\draw (2,3.5) node {$2$};
\zoneGR{4}{0}{1};
\end{tikzpicture}

\caption{Only bicolored increasing subsequences GR exist}\label{fig:tikzGR1}
\end{center}
\end{figure}
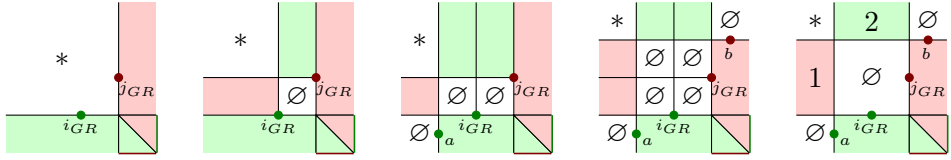

Let $a$ be the leftmost point among points below $i_{GR}$ (notice that $a$ may be equal to $i_{GR}$). Applying rule (\rmnum{1}) to $a$ and $i_{GR}$, we obtain the third diagram (note that if $a=i_{GR}$ the column between $i_{GR}$ and $a$ does not exist). Let $b$ be the topmost point to the right of $j_{GR}$ ($b$ may be equal to $j_{GR}$). Applying rule (\rmnum{2}) to $j_{GR}$ and $b$, we obtain the fourth diagram of Figure~\ref{fig:tikzGR1} (if $b=j_{GR}$ the column between $j_{GR}$ and $b$ does not exist).

At last, we number two different areas and discuss about the different cases whether these zones are empty or not. These zones are pictured in the fifth diagram of Figure~\ref{fig:tikzGR1}.

\paragraph{Zone $1$ is not empty}

Let $x$ be the leftmost point inside zone $1$. Note that $x$ may be above or below $j_{GR}$. First diagram of Figure~\ref{fig:tikz1nonVide} illustrates the position of point $x$. Applying rule (\rmnum{2}) to $x$ and $b$ we obtain the second diagram of Figure~\ref{fig:tikz1nonVide}.
By hypothesis, there are no increasing sequence RG, thus there are no points in \G in the up-right quadrant of $x$. This leads to the third diagram.
At last, if the zone $*$ is not empty, then $\sigma$ is $\ominus$-decomposable by cutting along the row of $b$ and the column of $x$. Thus $*$ is empty and all points have a determined color, as in the first diagram of Proposition~\ref{prop:diagrammesGR}.

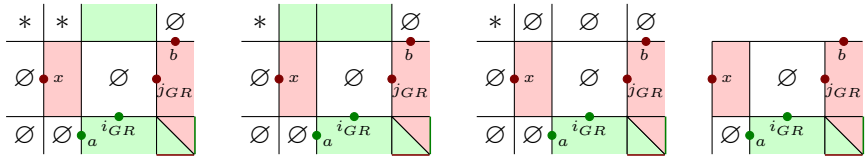
\begin{figure}[H]
\begin{center}
\begin{tikzpicture}[scale=.5]
\useasboundingbox (-1,-0.5) (5,5);
\fill [Vfill] (1,0) rectangle (3,1);
\fill [Hfill] (0,1) rectangle (1,3);
\fill [Hfill] (3,1) rectangle (4,3);
\fill [Vfill] (1,3) rectangle (3,4);
\draw (-1,1) -- (4,1);
\draw (-1,3) -- (4,3);
\draw (0,0) -- (0,4);
\draw (1,0) -- (1,4);
\draw (3,0) -- (3,4);
\draw (3.5,3.5) node {$\varnothing$};
\draw (2,2) node {$\varnothing$};
\draw (0.5,0.5) node {$\varnothing$};
\draw (-0.5,0.5) node {$\varnothing$};
\draw (-0.5,2) node {$\varnothing$};
\Hpoint{3}{2};
\draw (3.5,1.7) node {{\tiny $j_{GR}$}};
\Vpoint{2}{1};
\draw (2,0.7) node {{\tiny $i_{GR}$}};
\Vpoint{1}{0.5};
\draw (1.3,0.3) node {{\tiny $a$}};
\Hpoint{3.5}{3};
\draw (3.45,2.65) node {{\tiny $b$}};
\Hpoint{0}{2};
\draw (0.4,2) node {{\tiny $x$}};
\draw (0.5,3.5) node {$*$};
\draw (-0.5,3.5) node {$*$};
\zoneGR{4}{0}{1};
\end{tikzpicture}
\begin{tikzpicture}[scale=.5]
\useasboundingbox (-1,-0.5) (5,5);
\fill [Vfill] (1,0) rectangle (3,1);
\fill [Hfill] (0,1) rectangle (1,3);
\fill [Hfill] (3,1) rectangle (4,3);
\fill [Vfill] (0,3) rectangle (3,4);
\draw (-1,1) -- (4,1);
\draw (-1,3) -- (4,3);
\draw (0,0) -- (0,4);
\draw (1,0) -- (1,4);
\draw (3,0) -- (3,4);
\draw (3.5,3.5) node {$\varnothing$};
\draw (2,2) node {$\varnothing$};
\draw (0.5,0.5) node {$\varnothing$};
\draw (-0.5,0.5) node {$\varnothing$};
\draw (-0.5,2) node {$\varnothing$};
\Hpoint{3}{2};
\draw (3.5,1.7) node {{\tiny $j_{GR}$}};
\Vpoint{2}{1};
\draw (2,0.7) node {{\tiny $i_{GR}$}};
\Vpoint{1}{0.5};
\draw (1.3,0.3) node {{\tiny $a$}};
\Hpoint{3.5}{3};
\draw (3.45,2.65) node {{\tiny $b$}};
\Hpoint{0}{2};
\draw (0.4,2) node {{\tiny $x$}};
\draw (-0.5,3.5) node {$*$};
\zoneGR{4}{0}{1};
\end{tikzpicture}
\begin{tikzpicture}[scale=.5]
\useasboundingbox (-1,-0.5) (5,5);
\fill [Vfill] (1,0) rectangle (3,1);
\fill [Hfill] (0,1) rectangle (1,3);
\fill [Hfill] (3,1) rectangle (4,3);
\draw (-1,1) -- (4,1);
\draw (-1,3) -- (4,3);
\draw (0,0) -- (0,4);
\draw (1,0) -- (1,4);
\draw (3,0) -- (3,4);
\draw (3.5,3.5) node {$\varnothing$};
\draw (2,2) node {$\varnothing$};
\draw (0.5,0.5) node {$\varnothing$};
\draw (-0.5,0.5) node {$\varnothing$};
\draw (-0.5,2) node {$\varnothing$};
\draw (0.5,3.5) node {$\varnothing$};
\draw (2,3.5) node {$\varnothing$};
\Hpoint{3}{2};
\draw (3.5,1.7) node {{\tiny $j_{GR}$}};
\Vpoint{2}{1};
\draw (2,0.7) node {{\tiny $i_{GR}$}};
\Vpoint{1}{0.5};
\draw (1.3,0.3) node {{\tiny $a$}};
\Hpoint{3.5}{3};
\draw (3.45,2.65) node {{\tiny $b$}};
\Hpoint{0}{2};
\draw (0.4,2) node {{\tiny $x$}};
\draw (-0.5,3.5) node {$*$};
\zoneGR{4}{0}{1};
\end{tikzpicture}
\begin{tikzpicture}[scale=.5]
\useasboundingbox (0,-0.5) (5,4);
\fill [Vfill] (1,0) rectangle (3,1);
\fill [Hfill] (0,1) rectangle (1,3);
\fill [Hfill] (3,1) rectangle (4,3);
\draw (0,1) -- (4,1);
\draw (0,3) -- (4,3);
\draw (0,0) -- (0,3);
\draw (1,0) -- (1,3);
\draw (3,0) -- (3,3);
\draw (2,2) node {$\varnothing$};
\draw (0.5,0.5) node {$\varnothing$};
\Hpoint{3}{2};
\draw (3.5,1.7) node {{\tiny $j_{GR}$}};
\Vpoint{2}{1};
\draw (2,0.7) node {{\tiny $i_{GR}$}};
\Vpoint{1}{0.5};
\draw (1.3,0.3) node {{\tiny $a$}};
\Hpoint{3.5}{3};
\draw (3.45,2.65) node {{\tiny $b$}};
\Hpoint{0}{2};
\draw (0.4,2) node {{\tiny $x$}};
\zoneGR{4}{0}{1};
\end{tikzpicture}

\caption{Zone $1$ is not empty\label{fig:tikz1nonVide}}
\end{center}
\end{figure}

\paragraph{Zone $1$ is empty}

Suppose that zone $1$ is empty. If zone $2$ is also empty then as $\sigma$ is $\ominus$-indecomposable, zone $*$ is also empty and all points have a determined color, as in the second diagram of Proposition~\ref{prop:diagrammesGR}.

\begin{figure}[H]
\begin{center}
\begin{tikzpicture}[scale=.5]
\useasboundingbox (0,-0.5) (5,5);
\fill [Vfill] (1,0) rectangle (3,1);
\fill [Hfill] (3,1) rectangle (4,3);
\fill [Vfill] (1,3) rectangle (3,4);
\draw (0,1) -- (4,1);
\draw (0,3) -- (4,3);
\draw (1,0) -- (1,4);
\draw (3,0) -- (3,4);
\draw (3.5,3.5) node {$\varnothing$};
\draw (2,2) node {$\varnothing$};
\draw (0.5,2) node {$\varnothing$};
\draw (0.5,0.5) node {$\varnothing$};
\Hpoint{3}{2};
\draw (3.5,1.7) node {{\tiny $j_{GR}$}};
\Vpoint{2}{1};
\draw (2,0.7) node {{\tiny $i_{GR}$}};
\Vpoint{1}{0.5};
\draw (1.3,0.3) node {{\tiny $a$}};
\Hpoint{3.5}{3};
\draw (3.45,2.65) node {{\tiny $b$}};
\draw (0.5,3.5) node {$*$};
\draw (2,3.5) node {$2$};
\zoneGR{4}{0}{1};
\end{tikzpicture}
\begin{tikzpicture}[scale=.5]
\useasboundingbox (0,-0.5) (5,6);
\fill [Vfill] (1,0) rectangle (3,1);
\fill [Hfill] (3,1) rectangle (4,3);
\fill [Vfill] (1,3) rectangle (3,4);
\draw (0,1) -- (4,1);
\draw (0,3) -- (4,3);
\draw (0,4) -- (4,4);
\draw (1,0) -- (1,5);
\draw (3,0) -- (3,5);
\draw (3.5,3.5) node {$\varnothing$};
\draw (2,2) node {$\varnothing$};
\draw (0.5,2) node {$\varnothing$};
\draw (0.5,0.5) node {$\varnothing$};
\draw (2,4.5) node {$\varnothing$};
\draw (3.5,4.5) node {$\varnothing$};
\Hpoint{3}{2};
\draw (3.5,1.7) node {{\tiny $j_{GR}$}};
\Vpoint{2}{1};
\draw (2,0.7) node {{\tiny $i_{GR}$}};
\Vpoint{1}{0.5};
\draw (1.3,0.3) node {{\tiny $a$}};
\Hpoint{3.5}{3};
\draw (3.45,2.65) node {{\tiny $b$}};
\Vpoint{2}{4};
\draw (2,3.7) node {{\tiny $x$}};
\draw (0.5,3.5) node {$*$};
\draw (0.5,4.5) node {$*$};
\zoneGR{4}{0}{1};
\end{tikzpicture}
\begin{tikzpicture}[scale=.5]
\useasboundingbox (0,-0.5) (5,6);
\fill [Vfill] (1,0) rectangle (3,1);
\fill [Hfill] (3,1) rectangle (4,3);
\fill [Vfill] (1,3) rectangle (3,4);
\fill [Hfill] (0,3) rectangle (1,4);
\draw (0,1) -- (4,1);
\draw (0,3) -- (4,3);
\draw (0,4) -- (4,4);
\draw (1,0) -- (1,5);
\draw (3,0) -- (3,5);
\draw (3.5,3.5) node {$\varnothing$};
\draw (2,2) node {$\varnothing$};
\draw (0.5,2) node {$\varnothing$};
\draw (0.5,0.5) node {$\varnothing$};
\draw (2,4.5) node {$\varnothing$};
\draw (3.5,4.5) node {$\varnothing$};
\Hpoint{3}{2};
\draw (3.5,1.7) node {{\tiny $j_{GR}$}};
\Vpoint{2}{1};
\draw (2,0.7) node {{\tiny $i_{GR}$}};
\Vpoint{1}{0.5};
\draw (1.3,0.3) node {{\tiny $a$}};
\Hpoint{3.5}{3};
\draw (3.45,2.65) node {{\tiny $b$}};
\Vpoint{2}{4};
\draw (2,3.7) node {{\tiny $x$}};
\draw (0.5,4.5) node {$*$};
\zoneGR{4}{0}{1};
\end{tikzpicture}
\begin{tikzpicture}[scale=.5]
\useasboundingbox (0,-0.5) (5,6);
\fill [Vfill] (1,0) rectangle (3,1);
\fill [Hfill] (3,1) rectangle (4,3);
\fill [Vfill] (1,3) rectangle (3,4);
\draw (0,1) -- (4,1);
\draw (0,3) -- (4,3);
\draw (0,4) -- (4,4);
\draw (1,0) -- (1,5);
\draw (3,0) -- (3,5);
\draw (3.5,3.5) node {$\varnothing$};
\draw (2,2) node {$\varnothing$};
\draw (0.5,2) node {$\varnothing$};
\draw (0.5,0.5) node {$\varnothing$};
\draw (2,4.5) node {$\varnothing$};
\draw (3.5,4.5) node {$\varnothing$};
\Hpoint{3}{2};
\draw (3.5,1.7) node {{\tiny $j_{GR}$}};
\Vpoint{2}{1};
\draw (2,0.7) node {{\tiny $i_{GR}$}};
\Vpoint{1}{0.5};
\draw (1.3,0.3) node {{\tiny $a$}};
\Hpoint{3.5}{3};
\draw (3.45,2.65) node {{\tiny $b$}};
\Vpoint{2}{4};
\draw (2,3.7) node {{\tiny $x$}};
\draw (0.5,4.5) node {$*$};
\draw (0.5,3.5) node {$\varnothing$};
\zoneGR{4}{0}{1};
\end{tikzpicture}
\begin{tikzpicture}[scale=.5]
\useasboundingbox (0,-0.5) (5,4);
\fill [Vfill] (1,0) rectangle (3,1);
\fill [Hfill] (3,1) rectangle (4,3);
\fill [Vfill] (1,3) rectangle (3,4);
\draw (1,1) -- (4,1);
\draw (1,3) -- (4,3);
\draw (1,4) -- (4,4);
\draw (1,0) -- (1,4);
\draw (3,0) -- (3,4);
\draw (3.5,3.5) node {$\varnothing$};
\draw (2,2) node {$\varnothing$};
\Hpoint{3}{2};
\draw (3.5,1.7) node {{\tiny $j_{GR}$}};
\Vpoint{2}{1};
\draw (2,0.7) node {{\tiny $i_{GR}$}};
\Vpoint{1}{0.5};
\draw (1.3,0.3) node {{\tiny $a$}};
\Hpoint{3.5}{3};
\draw (3.45,2.65) node {{\tiny $b$}};
\Vpoint{2}{4};
\draw (2,3.7) node {{\tiny $x$}};
\zoneGR{4}{0}{1};
\end{tikzpicture}
\caption{Zone $1$ is empty\label{fig:tikz1Vide}}
\end{center}
\end{figure}
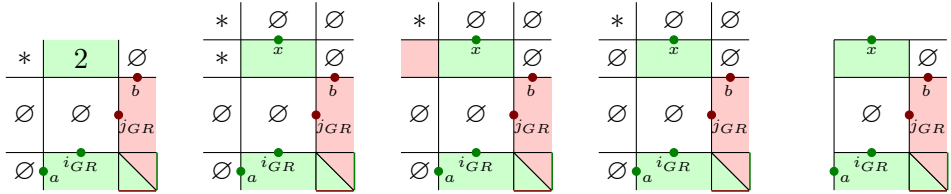

Otherwise, zone $2$ is not empty and let $x$ be the topmost point inside zone $2$ ($x$ may be to the left or to the right of $i_{GR}$). This is depicted in the second diagram of Figure~\ref{fig:tikz1Vide}. We apply rule (\rmnum{1}) to $a$ and $x$ to obtain the third diagram. As there is no increasing subsequence RG, there is no point of \R in the lower left quadrant of $x$ as depicted in the fourth diagram. Moreover, $\sigma$ is $\ominus$-indecomposable, thus zone $*$ is empty and each point has a determined color, as in the last diagram of Proposition~\ref{prop:diagrammesGR}.
\end{pf}

\begin{defn}\label{def:C_GR}
Let $\sigma$ be a permutation and $i$ and $j$ two indices of $\sigma$ such that $\sigma_i \sigma_j$ is an ascent.
Set $a = \min \{k \mid \sigma_k \leq \sigma_i\}$ and $b$ such that $\sigma_b = \max \{\sigma_k \mid k \geq j\}$.
We define $C_{GR}(\sigma, i, j)$ as the partial bicoloring of $\sigma$ having the following shape:

\begin{tikzpicture}[scale=.5]
\useasboundingbox (0,-1) (5,4.5);
\fill [Vfill] (1,0) rectangle (3,1);
\fill [Hfill] (0,1) rectangle (1,3);
\fill [Hfill] (3,1) rectangle (4,3);
\fill [Vfill] (1,3) rectangle (3,4);
\draw (1,0) -- (1,4);
\draw (3,0) -- (3,4);
\draw (0,1) -- (4,1);
\draw (0,3) -- (4,3);
\Hpoint{3}{2};
\draw (3.5,1.7) node {{\tiny $j$}};
\Vpoint{2}{1};
\draw (2,0.7) node {{\tiny $i$}};
\Vpoint{1}{0.5};
\draw (1.3,0.3) node {{\tiny $a$}};
\Hpoint{3.5}{3};
\draw (3.45,2.65) node {{\tiny $b$}};
\zoneGR{4}{0}{1};
\end{tikzpicture}
\end{defn}

\begin{prop}\label{prop:C_GR}
Let $\sigma$ be a $\ominus$-indecomposable permutation and $c$ a valid coloring of $\sigma$ such that there is no increasing subsequence RG in $c$ and there is at least an increasing sequence GR in $c$. 
Then $c = C_{GR}(\sigma, i_{GR}, j_{GR})$.
\end{prop}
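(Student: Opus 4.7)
The plan is to derive this proposition directly from Proposition~\ref{prop:diagrammesGR}, which already classifies the valid coloring $c$ into one of three possible shapes. The task then reduces to matching each of those three shapes against the partial coloring $C_{GR}(\sigma, i_{GR}, j_{GR})$ from Definition~\ref{def:C_GR}.

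First I would pin down the parameters. Since $A_{GR}$ is non-empty by hypothesis, the upper-left \ascentGR is well-defined, giving us $(i_{GR}, j_{GR})$. Setting $a = \min\{k \mid \sigma_k \leq \sigma_{i_{GR}}\}$ and $b$ such that $\sigma_b = \max\{\sigma_k \mid k \geq j_{GR}\}$ exactly as in Definition~\ref{def:C_GR}, one checks that these coincide with the auxiliary points already labelled $a$ and $b$ during the proof of Proposition~\ref{prop:diagrammesGR}: there, $a$ was chosen as the leftmost point at or below $\sigma_{i_{GR}}$, and $b$ as the topmost point to the right of $\sigma_{j_{GR}}$, which by the extremality of $(i_{GR}, j_{GR})$ match our definitions. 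This alignment is the only mildly delicate step, and I would verify it explicitly before invoking the case analysis.

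Next I would run through the three cases of Proposition~\ref{prop:diagrammesGR} and confirm that each is a realization of $C_{GR}(\sigma, i_{GR}, j_{GR})$. In every case, the central green block containing $\sigma_{i_{GR}}$ and $a$ and the right-hand red block containing $\sigma_{j_{GR}}$ and $b$ appear with the expected colors. In the first shape, the left-middle red zone is also populated (witnessed by the point $x$), while the top-middle zone is empty; in the second shape, both the left-middle and top-middle zones are empty; in the third, the top-middle green zone is populated (witnessed by $x$) while the left-middle zone is empty. In every case the points that fall inside a colored zone of $C_{GR}$ receive the color $C_{GR}$ prescribes, and zones of $C_{GR}$ that are not realized by any point of $\sigma$ impose only vacuous conditions. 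Hence $c$ and $C_{GR}(\sigma, i_{GR}, j_{GR})$ agree on their common domain, which is the assertion $c = C_{GR}(\sigma, i_{GR}, j_{GR})$.

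The only real obstacle is the bookkeeping reconciliation between the three diagrams and the single partial-coloring template of Definition~\ref{def:C_GR}: one must adopt the reading that $C_{GR}$ is a partial bicoloring, so that agreement means agreement wherever it specifies a color, and absence of points in some zones is automatically consistent. Once that convention is fixed, the proposition follows immediately from Proposition~\ref{prop:diagrammesGR} without further work.
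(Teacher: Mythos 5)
Your proposal is correct and follows the same route as the paper, which simply declares the proposition a direct consequence of Proposition~\ref{prop:diagrammesGR} and Definition~\ref{def:C_GR}; you have merely made explicit the reconciliation of the auxiliary points $a$, $b$ and the reading of $C_{GR}$ as a partial coloring whose empty zones impose vacuous conditions. No gap.
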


\begin{pf}
This is a direct consequence of Proposition~\ref{prop:diagrammesGR} and Definition~\ref{def:C_GR}.
\end{pf}

\subsubsection{All bicolored increasing sequences are labeled $RG$}

We suppose in this section that there exists at least one \ascentRG but no \ascentGR.
As $A_{RG}$ is non empty, $i_{RG}$ and $j_{RG}$ are defined.
We prove that once $i_{RG}$ and $j_{RG}$ are determined, then it fixes the color of every other point of the permutation.

\begin{prop}\label{prop:diagrammesRG}
Let $\sigma$ be a $\ominus$-indecomposable permutation and $c$ a valid coloring of $\sigma$ such that there is no increasing subsequence GR in $c$ and there is at least an increasing sequence RG in $c$. 
Then $c$ has one of the following shapes (where maybe $a=j_{RG}$ or $b=i_{RG}$):

\begin{center}
\begin{tikzpicture}[scale=.5]
\useasboundingbox (0,-1) (5,6);
\fill [Hfill] (0,2) rectangle (1,4);
\fill [Vfill] (1,4) rectangle (3,5);
\draw (0,4) -- (3,4);
\draw (1,3) -- (3,3);
\draw (0,2) -- (3,2);
\draw (1,2) -- (1,5);
\draw (2,2) -- (2,4);
\draw (3,2) -- (3,5);
\draw (2.5,3.5) node {$\varnothing$};
\draw (2.5,2.5) node {$\varnothing$};
\draw (1.5,2.5) node {$\varnothing$};
\draw (1.5,3.5) node {$\varnothing$};
\Hpoint{1}{3};
\draw (0.5,2.75) node {{\tiny $i_{RG}$}};
\Vpoint{2}{4};
\draw (2,4.5) node {{\tiny $j_{RG}$}};
\zoneRG{1}{4}{1}
\end{tikzpicture}
\begin{tikzpicture}[scale=.5]
\useasboundingbox (0,-1) (5,6);
\fill [Hfill] (0,2) rectangle (1,4);
\fill [Vfill] (1,4) rectangle (3,5);
\fill [Vfill] (1,1) rectangle (2,2);
\draw (0,4) -- (3,4);
\draw (1,3) -- (3,3);
\draw (0,2) -- (3,2);
\draw (0,1) -- (3,1);
\draw (1,1) -- (1,5);
\draw (2,1) -- (2,4);
\draw (3,1) -- (3,5);
\draw (2.5,3.5) node {$\varnothing$};
\draw (2.5,2.5) node {$\varnothing$};
\draw (0.5,1.5) node {$\varnothing$};
\draw (2.5,1.5) node {$\varnothing$};
\draw (1.5,2.5) node {$\varnothing$};
\draw (1.5,3.5) node {$\varnothing$};
\Hpoint{1}{3};
\draw (0.5,2.75) node {{\tiny $i_{RG}$}};
\Vpoint{2}{4};
\draw (2,4.5) node {{\tiny $j_{RG}$}};
\Vpoint{3}{4.5};
\draw (3.3,4.3) node {{\tiny $a$}};
\Hpoint{0.5}{2};
\draw (0.3,2.3) node {{\tiny $b$}};
\Vpoint{1.5}{1};
\draw (1.2,0.8) node {{\tiny $x$}};
\zoneRG{1}{4}{1}
\end{tikzpicture}
\begin{tikzpicture}[scale=.5]
\useasboundingbox (0,-1) (6,6);
\fill [Hfill] (0,2) rectangle (1,4);
\fill [Vfill] (1,4) rectangle (3,5);
\fill [Vfill] (1,1) rectangle (2,2);
\fill [Vfill] (3,1) rectangle (4,2);
\draw (0,4) -- (4,4);
\draw (1,3) -- (4,3);
\draw (0,2) -- (4,2);
\draw (0,1) -- (4,1);
\draw (1,1) -- (1,5);
\draw (2,1) -- (2,4);
\draw (3,1) -- (3,5);
\draw (4,1) -- (4,5);
\draw (2.5,3.5) node {$\varnothing$};
\draw (2.5,2.5) node {$\varnothing$};
\draw (0.5,1.5) node {$\varnothing$};
\draw (2.5,1.5) node {$\varnothing$};
\draw (1.5,2.5) node {$\varnothing$};
\draw (1.5,3.5) node {$\varnothing$};
\draw (3.5,2.5) node {$\varnothing$};
\draw (3.5,3.5) node {$\varnothing$};
\draw (3.5,4.5) node {$\varnothing$};
\Hpoint{1}{3};
\draw (0.5,2.75) node {{\tiny $i_{RG}$}};
\Vpoint{2}{4};
\draw (2,4.5) node {{\tiny $j_{RG}$}};
\Vpoint{3}{4.5};
\draw (3.2,4.2) node {{\tiny $a$}};
\Hpoint{0.5}{2};
\draw (0.3,2.3) node {{\tiny $b$}};
\Vpoint{1.5}{1};
\draw (1.2,0.8) node {{\tiny $x$}};
\Vpoint{4}{1.5};
\draw (4.2,1.8) node {{\tiny $y$}};
\zoneRG{1}{4}{1}
\end{tikzpicture}
\begin{tikzpicture}[scale=.5]
\useasboundingbox (0,-1) (6,6);
\fill [Hfill] (0,2) rectangle (1,4);
\fill [Vfill] (1,4) rectangle (3,5);
\fill [Hfill] (3,3) rectangle (4,4);
\draw (0,4) -- (4,4);
\draw (1,3) -- (4,3);
\draw (0,2) -- (4,2);
\draw (1,2) -- (1,5);
\draw (2,2) -- (2,4);
\draw (3,2) -- (3,5);
\draw (4,2) -- (4,5);
\draw (2.5,3.5) node {$\varnothing$};
\draw (2.5,2.5) node {$\varnothing$};
\draw (1.5,2.5) node {$\varnothing$};
\draw (1.5,3.5) node {$\varnothing$};
\draw (3.5,2.5) node {$\varnothing$};
\draw (3.5,4.5) node {$\varnothing$};
\Hpoint{1}{3};
\draw (0.5,2.75) node {{\tiny $i_{RG}$}};
\Vpoint{2}{4};
\draw (2,4.5) node {{\tiny $j_{RG}$}};
\Vpoint{3}{4.5};
\draw (2.8,4.3) node {{\tiny $a$}};
\Hpoint{0.5}{2};
\draw (0.3,2.3) node {{\tiny $b$}};
\Hpoint{4}{3.5};
\draw (4.2,3.75) node {{\tiny $x$}};
\zoneRG{1}{4}{1}
\end{tikzpicture}
\begin{tikzpicture}[scale=.5]
\useasboundingbox (0,-1) (5,6);
\fill [Hfill] (0,2) rectangle (1,4);
\fill [Vfill] (1,4) rectangle (3,5);
\fill [Hfill] (3,3) rectangle (4,4);
\fill [Hfill] (3,1) rectangle (4,2);
\draw (0,4) -- (4,4);
\draw (1,3) -- (4,3);
\draw (0,2) -- (4,2);
\draw (0,1) -- (4,1);
\draw (1,1) -- (1,5);
\draw (2,1) -- (2,4);
\draw (3,1) -- (3,5);
\draw (4,1) -- (4,5);
\draw (2.5,3.5) node {$\varnothing$};
\draw (2.5,2.5) node {$\varnothing$};
\draw (0.5,1.5) node {$\varnothing$};
\draw (1.5,1.5) node {$\varnothing$};
\draw (2.5,1.5) node {$\varnothing$};
\draw (1.5,2.5) node {$\varnothing$};
\draw (1.5,3.5) node {$\varnothing$};
\draw (3.5,2.5) node {$\varnothing$};
\draw (3.5,4.5) node {$\varnothing$};
\Hpoint{1}{3};
\draw (0.5,2.75) node {{\tiny $i_{RG}$}};
\Vpoint{2}{4};
\draw (2,4.5) node {{\tiny $j_{RG}$}};
\Vpoint{3}{4.5};
\draw (2.8,4.3) node {{\tiny $a$}};
\Hpoint{0.5}{2};
\draw (0.3,2.3) node {{\tiny $b$}};
\Hpoint{4}{3.5};
\draw (4.2,3.75) node {{\tiny $x$}};
\Hpoint{3.5}{1};
\draw (3.2,0.75) node {{\tiny $y$}};
\zoneRG{1}{4}{1}
\end{tikzpicture}
\end{center}
\end{prop}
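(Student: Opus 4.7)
The proof proceeds in perfect analogy with Proposition~\ref{prop:diagrammesGR} (in fact it can be obtained from it by a $180^{\circ}$ rotation combined with a color swap, which preserves $\ominus$-indecomposability, exchanges \ascentsRG with \ascentsGR, and transforms the five target diagrams here into the three ones of Proposition~\ref{prop:diagrammesGR}). I would nonetheless write a direct argument, using Theorem~\ref{thm:RGIncreasing} at the place where the proof of Proposition~\ref{prop:diagrammesGR} uses Theorem~\ref{thm:GRIncreasing}. The first step is therefore to apply Theorem~\ref{thm:RGIncreasing} to the pair $(\sigma_{i_{RG}}, \sigma_{j_{RG}})$: the color of every point $\sigma_k$ with $k<i_{RG}$ or $\sigma_k>\sigma_{j_{RG}}$ is determined and the corresponding region is the zone $A_{RG}$ of Figure~\ref{fig:zoneRG}.

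Next I would exploit the extremality of $(i_{RG},j_{RG})$. By maximality of $i_{RG}$, no point lying strictly above $\sigma_{i_{RG}}$, strictly to the right of $i_{RG}$, strictly to the left of $j_{RG}$ and strictly below $\sigma_{j_{RG}}$ can be colored \R (else it would give an \ascentRG with first index $>i_{RG}$); by minimality of $\sigma_{j_{RG}}$ no such point can be colored \G. Hence this central rectangle is empty. I would then introduce the two auxiliary points~$a$ and $b$: $a$ is the \G point defined as the topmost point among those to the right of $j_{RG}$ (possibly $a=j_{RG}$), and $b$ is the \R point defined as the leftmost point among those below $i_{RG}$ (possibly $b=i_{RG}$). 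Applying rule $(\rmnum{1})$ to the \ascentGG $(\sigma_{j_{RG}},\sigma_a)$ and rule $(\rmnum{2})$ to the \ascentRR $(\sigma_b,\sigma_{i_{RG}})$ then forces two further rectangular zones to be empty, exactly as in the intermediate diagrams of the proof of Proposition~\ref{prop:diagrammesGR}.

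At this stage only two zones with free points remain: one lying below and to the right of $b$, the other lying above and to the left of $a$. I would perform the case analysis on whether each of these zones is empty, in every case introducing an extremal point $x$ (resp. $y$) when the zone is non-empty and propagating the coloring through rules~(\rmnum{1})--(\rmnum{4}). The hypothesis that there is no \ascentGR in $c$ is used to rule out certain color configurations around $x$ and $y$, while the $\ominus$-indecomposability of $\sigma$ is used at the very end to force the last unassigned area (the one marked $*$) to be empty, which otherwise would provide a $\ominus$-cut. Collecting the cases yields exactly the five shapes listed in the statement.

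The main obstacle is purely combinatorial bookkeeping: one has to run the case split carefully enough to certify that no configuration is missed and that each produces exactly one of the five announced diagrams, distinguishing the subcases $a=j_{RG}$ or $b=i_{RG}$ (which correspond to the degenerate situations where $a$ or $b$ coincides with the extremal ascent endpoint). Apart from this bookkeeping, each implication step is a direct application of one of the rules of $\mathcal{R}_8$ or of the extremality of $i_{RG}$ and $j_{RG}$.
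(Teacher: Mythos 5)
Your direct argument follows the paper's proof essentially step for step: apply Theorem~\ref{thm:RGIncreasing} to $(\sigma_{i_{RG}},\sigma_{j_{RG}})$, use the extremality of $i_{RG}$ and $\sigma_{j_{RG}}$ to empty the central rectangle, introduce the auxiliary extremal points $a$ and $b$ and apply rules $(\rmnum{1})$ and $(\rmnum{2})$, then split on the emptiness of the two remaining zones, invoking the absence of \ascentsGR around the extremal points $x,y$ and the $\ominus$-indecomposability of $\sigma$ to kill the residual zone $*$. That is exactly the paper's route, and the bookkeeping you defer is the same bookkeeping the paper carries out. (Minor detail: the paper takes $a$ to be the \emph{rightmost} point among those \emph{above} $j_{RG}$ and $b$ the \emph{lowest} point among those to the \emph{left} of $i_{RG}$, rather than your transposed versions; your choice of $a$ is not guaranteed to lie above $\sigma_{j_{RG}}$, so $(\sigma_{j_{RG}},\sigma_a)$ need not be an \ascentGG and rule $(\rmnum{1})$ need not apply to it. The fix is just to adopt the paper's definitions.)

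One side remark in your opening is false and worth correcting: this proposition does \emph{not} follow from Proposition~\ref{prop:diagrammesGR} by a $180^{\circ}$ rotation combined with a color swap. That transformation sends the forbidden pattern \RRR to \GGG and vice versa, but it sends \GGR (a point of \R whose \emph{index} lies between an \ascentGG) to ``a point of \G whose index lies between an \ascentRR'', which is not one of the four forbidden colored patterns; so it does not map valid colorings to valid colorings. A quick sanity check confirms no such bijective reduction can exist: Proposition~\ref{prop:diagrammesGR} lists three shapes while the present statement lists five, and a genuine symmetry would have to put the two lists in bijection. (The only diagram symmetry compatible with validity is the anti-diagonal reflection combined with a color swap, and that one maps \ascentsRG to \ascentsRG, i.e.\ it maps the present proposition to itself --- which is why its list of shapes is closed under that involution --- rather than to Proposition~\ref{prop:diagrammesGR}.) Since you explicitly fall back on the direct argument, this does not break your proof, but the claimed shortcut should be deleted.
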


\begin{pf}
The color of every point $\sigma_k$ such that $k < i_{RG}$ or $\sigma_k > \sigma_{j_{RG}}$ is determined by Theorem~\ref{thm:RGIncreasing} (see the first diagram of Figure~\ref{fig:tikz2RG}). We denote by $*$ the zone where the color of the points is unknown.
By maximality of $i_{RG}$ and minimality of $\sigma_{j_{GR}}$ we know the color of some other points, and as no point can be both in \R and in \G, we know that the zone between $\sigma_{i_{RG}}$ and $\sigma_{j_{RG}}$ must be empty, as shown in the second diagram of Figure~\ref{fig:tikz2RG}.

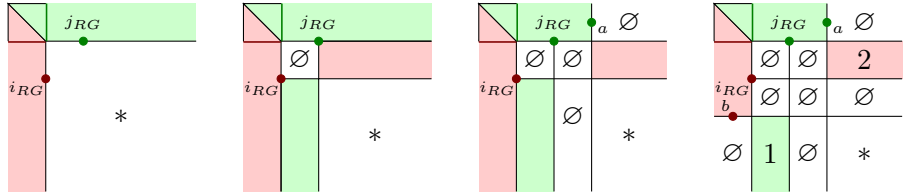
\begin{figure}[H]
\begin{center}
\begin{tikzpicture}[scale=.5]
\useasboundingbox (0,0) (6,5);
\fill [Hfill] (0,0) rectangle (1,4);
\fill [Vfill] (1,4) rectangle (5,5);
\draw (0,4) -- (5,4);
\draw (1,0) -- (1,5);
\Hpoint{1}{3};
\draw (0.5,2.75) node {{\tiny $i_{RG}$}};
\Vpoint{2}{4};
\draw (2,4.5) node {{\tiny $j_{RG}$}};
\draw (3,2) node {$*$};
\zoneRG{1}{4}{1}
\end{tikzpicture}
\begin{tikzpicture}[scale=.5]
\useasboundingbox (0,0) (6,5);
\fill [Hfill] (0,0) rectangle (1,4);
\fill [Vfill] (1,4) rectangle (5,5);
\draw [Hfill] (5,3) -- (2,3) -- (2,4) -- (5,4);
\draw [Vfill] (1,0) -- (1,3) -- (2,3) -- (2,0);
\draw (0,4) -- (5,4);
\draw (1,0) -- (1,5);
\draw (1.5,3.5) node {$\varnothing$};
\Hpoint{1}{3};
\draw (0.5,2.75) node {{\tiny $i_{RG}$}};
\Vpoint{2}{4};
\draw (2,4.5) node {{\tiny $j_{RG}$}};
\draw (3.5,1.5) node {$*$};
\zoneRG{1}{4}{1}
\end{tikzpicture}
\begin{tikzpicture}[scale=.5]
\useasboundingbox (0,0) (6,5);
\fill [Hfill] (0,0) rectangle (1,4);
\fill [Vfill] (1,4) rectangle (3,5);
\fill [Hfill] (3,3) rectangle (5,4);
\fill [Vfill] (1,0) rectangle (2,3);
\draw (0,4) -- (5,4);
\draw (1,3) -- (5,3);
\draw (1,0) -- (1,5);
\draw (2,0) -- (2,4);
\draw (3,0) -- (3,5);
\draw (1.5,3.5) node {$\varnothing$};
\draw (2.5,3.5) node {$\varnothing$};
\draw (2.5,2) node {$\varnothing$};
\draw (4,4.5) node {$\varnothing$};
\Hpoint{1}{3};
\draw (0.5,2.75) node {{\tiny $i_{RG}$}};
\Vpoint{2}{4};
\draw (2,4.5) node {{\tiny $j_{RG}$}};
\Vpoint{3}{4.5};
\draw (3.3,4.3) node {{\tiny $a$}};
\draw (4,1.5) node {$*$};
\zoneRG{1}{4}{1}
\end{tikzpicture}
\begin{tikzpicture}[scale=.5]
\useasboundingbox (0,0) (5,5);
\fill [Hfill] (0,2) rectangle (1,4);
\fill [Vfill] (1,4) rectangle (3,5);
\fill [Hfill] (3,3) rectangle (5,4);
\fill [Vfill] (1,0) rectangle (2,2);
\draw (0,4) -- (5,4);
\draw (1,3) -- (5,3);
\draw (0,2) -- (5,2);
\draw (1,0) -- (1,5);
\draw (2,0) -- (2,4);
\draw (3,0) -- (3,5);
\draw (2.5,3.5) node {$\varnothing$};
\draw (2.5,2.5) node {$\varnothing$};
\draw (0.5,1) node {$\varnothing$};
\draw (2.5,1) node {$\varnothing$};
\draw (1.5,2.5) node {$\varnothing$};
\draw (1.5,3.5) node {$\varnothing$};
\draw (4,2.5) node {$\varnothing$};
\draw (4,4.5) node {$\varnothing$};
\Hpoint{1}{3};
\draw (0.5,2.75) node {{\tiny $i_{RG}$}};
\Vpoint{2}{4};
\draw (2,4.5) node {{\tiny $j_{RG}$}};
\Vpoint{3}{4.5};
\draw (3.3,4.3) node {{\tiny $a$}};
\Hpoint{0.5}{2};
\draw (0.3,2.3) node {{\tiny $b$}};
\draw (1.5,1) node {$1$};
\draw (4,3.5) node {$2$};
\draw (4,1) node {$*$};
\zoneRG{1}{4}{1}
\end{tikzpicture}

\caption{All bicolored increasing sequences are labeled $RG$\label{fig:tikz2RG}}
\end{center}
\end{figure}

Let $a$ be the rightmost point among points above $j_{RG}$ (maybe $a = j_{RG}$). Rule (\rmnum{1}) applied to points $j_{RG}$ and $a$ gives the third diagram of Figure~\ref{fig:tikz2RG} (note that if $a=j_{RG}$ the column between $j_{RG}$ and $a$ does not exist).
Similarly let $b$ be the lowest point among points to the left of $i_{RG}$ ($b$ may be equal to $i_{RG}$). Rule (\rmnum{2}) applied to $b$ and $i_{RG}$ leads to the fourth diagram of Figure~\ref{fig:tikz2RG}. Note also that we numbered two specific zones in this diagram and we study now the different cases where they are empty or not.

\paragraph{Zone $1$ is non-empty}

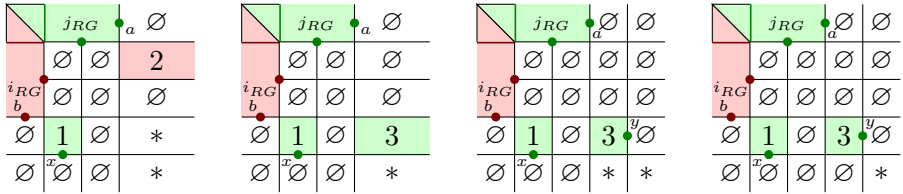
\begin{figure}[H]
\begin{center}
\begin{tikzpicture}[scale=.5]
\useasboundingbox (0,-0.5) (6,6);
\fill [Hfill] (0,2) rectangle (1,4);
\fill [Vfill] (1,4) rectangle (3,5);
\fill [Hfill] (3,3) rectangle (5,4);
\fill [Vfill] (1,1) rectangle (2,2);
\draw (0,4) -- (5,4);
\draw (1,3) -- (5,3);
\draw (0,2) -- (5,2);
\draw (0,1) -- (5,1);
\draw (1,0) -- (1,5);
\draw (2,0) -- (2,4);
\draw (3,0) -- (3,5);
\draw (0.5,0.5) node {$\varnothing$};
\draw (1.5,0.5) node {$\varnothing$};
\draw (2.5,0.5) node {$\varnothing$};
\draw (2.5,3.5) node {$\varnothing$};
\draw (2.5,2.5) node {$\varnothing$};
\draw (0.5,1.5) node {$\varnothing$};
\draw (2.5,1.5) node {$\varnothing$};
\draw (1.5,2.5) node {$\varnothing$};
\draw (1.5,3.5) node {$\varnothing$};
\draw (4,2.5) node {$\varnothing$};
\draw (4,4.5) node {$\varnothing$};
\Hpoint{1}{3};
\draw (0.5,2.75) node {{\tiny $i_{RG}$}};
\Vpoint{2}{4};
\draw (2,4.5) node {{\tiny $j_{RG}$}};
\Vpoint{3}{4.5};
\draw (3.3,4.3) node {{\tiny $a$}};
\Hpoint{0.5}{2};
\draw (0.3,2.3) node {{\tiny $b$}};
\Vpoint{1.5}{1};
\draw (1.2,0.8) node {{\tiny $x$}};
\draw (1.5,1.5) node {$1$};
\draw (4,3.5) node {$2$};
\draw (4,0.5) node {$*$};
\draw (4,1.5) node {$*$};
\zoneRG{1}{4}{1}
\end{tikzpicture}
\begin{tikzpicture}[scale=.5]
\useasboundingbox (0,-0.5) (6,6);
\fill [Hfill] (0,2) rectangle (1,4);
\fill [Vfill] (1,4) rectangle (3,5);
\fill [Vfill] (1,1) rectangle (2,2);
\fill [Vfill] (3,1) rectangle (5,2);
\draw (0,4) -- (5,4);
\draw (1,3) -- (5,3);
\draw (0,2) -- (5,2);
\draw (0,1) -- (5,1);
\draw (1,0) -- (1,5);
\draw (2,0) -- (2,4);
\draw (3,0) -- (3,5);
\draw (0.5,0.5) node {$\varnothing$};
\draw (1.5,0.5) node {$\varnothing$};
\draw (2.5,0.5) node {$\varnothing$};
\draw (2.5,3.5) node {$\varnothing$};
\draw (2.5,2.5) node {$\varnothing$};
\draw (0.5,1.5) node {$\varnothing$};
\draw (2.5,1.5) node {$\varnothing$};
\draw (1.5,2.5) node {$\varnothing$};
\draw (1.5,3.5) node {$\varnothing$};
\draw (4,2.5) node {$\varnothing$};
\draw (4,4.5) node {$\varnothing$};
\Hpoint{1}{3};
\draw (0.5,2.75) node {{\tiny $i_{RG}$}};
\Vpoint{2}{4};
\draw (2,4.5) node {{\tiny $j_{RG}$}};
\Vpoint{3}{4.5};
\draw (3.3,4.3) node {{\tiny $a$}};
\Hpoint{0.5}{2};
\draw (0.3,2.3) node {{\tiny $b$}};
\Vpoint{1.5}{1};
\draw (1.2,0.8) node {{\tiny $x$}};
\draw (1.5,1.5) node {$1$};
\draw (4,3.5) node {$\varnothing$};
\draw (4,1.5) node {$3$};
\draw (4,0.5) node {$*$};
\zoneRG{1}{4}{1}
\end{tikzpicture}
\begin{tikzpicture}[scale=.5]
\useasboundingbox (0,-0.5) (6,6);
\fill [Hfill] (0,2) rectangle (1,4);
\fill [Vfill] (1,4) rectangle (3,5);
\fill [Vfill] (1,1) rectangle (2,2);
\fill [Vfill] (3,1) rectangle (4,2);
\draw (0,4) -- (5,4);
\draw (1,3) -- (5,3);
\draw (0,2) -- (5,2);
\draw (0,1) -- (5,1);
\draw (1,0) -- (1,5);
\draw (2,0) -- (2,4);
\draw (3,0) -- (3,5);
\draw (4,0) -- (4,5);
\draw (0.5,0.5) node {$\varnothing$};
\draw (1.5,0.5) node {$\varnothing$};
\draw (2.5,0.5) node {$\varnothing$};
\draw (2.5,3.5) node {$\varnothing$};
\draw (2.5,2.5) node {$\varnothing$};
\draw (0.5,1.5) node {$\varnothing$};
\draw (2.5,1.5) node {$\varnothing$};
\draw (1.5,2.5) node {$\varnothing$};
\draw (1.5,3.5) node {$\varnothing$};
\draw (3.5,2.5) node {$\varnothing$};
\draw (3.5,3.5) node {$\varnothing$};
\draw (3.5,4.5) node {$\varnothing$};
\draw (4.5,1.5) node {$\varnothing$};
\draw (4.5,2.5) node {$\varnothing$};
\draw (4.5,3.5) node {$\varnothing$};
\draw (4.5,4.5) node {$\varnothing$};
\Hpoint{1}{3};
\draw (0.5,2.75) node {{\tiny $i_{RG}$}};
\Vpoint{2}{4};
\draw (2,4.5) node {{\tiny $j_{RG}$}};
\Vpoint{3}{4.5};
\draw (3.2,4.2) node {{\tiny $a$}};
\Hpoint{0.5}{2};
\draw (0.3,2.3) node {{\tiny $b$}};
\Vpoint{1.5}{1};
\draw (1.2,0.8) node {{\tiny $x$}};
\Vpoint{4}{1.5};
\draw (4.2,1.8) node {{\tiny $y$}};
\draw (1.5,1.5) node {$1$};
\draw (3.5,1.5) node {$3$};
\draw (3.5,0.5) node {$*$};
\draw (4.5,0.5) node {$*$};
\zoneRG{1}{4}{1}
\end{tikzpicture}
\begin{tikzpicture}[scale=.5]
\useasboundingbox (0,-0.5) (6,6);
\fill [Hfill] (0,2) rectangle (1,4);
\fill [Vfill] (1,4) rectangle (3,5);
\fill [Vfill] (1,1) rectangle (2,2);
\fill [Vfill] (3,1) rectangle (4,2);
\draw (0,4) -- (5,4);
\draw (1,3) -- (5,3);
\draw (0,2) -- (5,2);
\draw (0,1) -- (5,1);
\draw (1,0) -- (1,5);
\draw (2,0) -- (2,4);
\draw (3,0) -- (3,5);
\draw (4,0) -- (4,5);
\draw (1.5,0.5) node {$\varnothing$};
\draw (2.5,0.5) node {$\varnothing$};
\draw (2.5,3.5) node {$\varnothing$};
\draw (2.5,2.5) node {$\varnothing$};
\draw (0.5,1.5) node {$\varnothing$};
\draw (2.5,1.5) node {$\varnothing$};
\draw (1.5,2.5) node {$\varnothing$};
\draw (1.5,3.5) node {$\varnothing$};
\draw (3.5,2.5) node {$\varnothing$};
\draw (3.5,3.5) node {$\varnothing$};
\draw (3.5,4.5) node {$\varnothing$};
\draw (4.5,1.5) node {$\varnothing$};
\draw (4.5,2.5) node {$\varnothing$};
\draw (4.5,3.5) node {$\varnothing$};
\draw (4.5,4.5) node {$\varnothing$};
\Hpoint{1}{3};
\draw (0.5,2.75) node {{\tiny $i_{RG}$}};
\Vpoint{2}{4};
\draw (2,4.5) node {{\tiny $j_{RG}$}};
\Vpoint{3}{4.5};
\draw (3.2,4.2) node {{\tiny $a$}};
\Hpoint{0.5}{2};
\draw (0.3,2.3) node {{\tiny $b$}};
\Vpoint{1.5}{1};
\draw (1.2,0.8) node {{\tiny $x$}};
\Vpoint{4}{1.5};
\draw (4.2,1.8) node {{\tiny $y$}};
\draw (1.5,1.5) node {$1$};
\draw (3.5,1.5) node {$3$};
\draw (3.5,0.5) node {$\varnothing$};
\draw (0.5,0.5) node {$\varnothing$};
\draw (4.5,0.5) node {$*$};
\zoneRG{1}{4}{1}
\end{tikzpicture}

\caption{Zone $1$ is non-empty\label{fig:tikz2RG1nonVide}}
\end{center}
\end{figure}

If zone $1$ is non-empty, let $x$ be the lowest point inside this zone (see Figure~\ref{fig:tikz2RG1nonVide}).
As there do not exist an increasing sequence GR, every point to the top-right of $x$ is in \G as shown in the second diagram of Figure~\ref{fig:tikz2RG1nonVide}, where we define a zone $3$.
If zone $3$ is empty then zone $*$ is empty as $\sigma$ is $\ominus$-indecomposable, hence every point has a assigned color as in the first diagram of Proposition~\ref{prop:diagrammesRG}.
If zone $3$ is non empty, let $y$ be the rightmost point inside this zone as shown in the third diagram.
Applying rule (\rmnum{1}) to $x$ and $y$ add another empty zone, leading to the last diagram.
As $\sigma$ is $\ominus$-indecomposable, zone $*$ is empty and all points have an assigned color as in the second diagram of Proposition~\ref{prop:diagrammesRG}.

\paragraph{Zone $1$ is empty}

Suppose that zone $1$ is empty. If zone $2$ is also empty then as $\sigma$ is $\ominus$-indecomposable, zone $*$ is also empty and all points have a determined color, as in the third diagram of Proposition~\ref{prop:diagrammesRG}.

If zone $2$ is non-empty, let $x$ be the rightmost point of zone $2$.

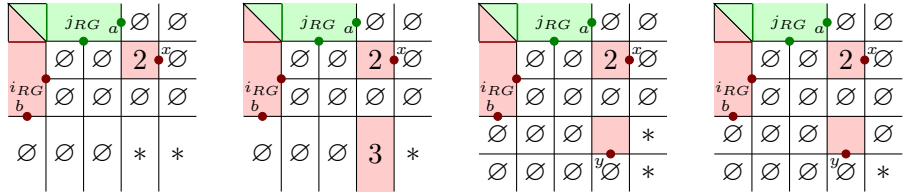
\begin{figure}[H]
\begin{center}
\begin{tikzpicture}[scale=.5]
\useasboundingbox (0,-0.5) (6,6);
\fill [Hfill] (0,2) rectangle (1,4);
\fill [Vfill] (1,4) rectangle (3,5);
\fill [Hfill] (3,3) rectangle (4,4);
\draw (0,4) -- (5,4);
\draw (1,3) -- (5,3);
\draw (0,2) -- (5,2);
\draw (1,0) -- (1,5);
\draw (2,0) -- (2,4);
\draw (3,0) -- (3,5);
\draw (4,0) -- (4,5);
\draw (2.5,3.5) node {$\varnothing$};
\draw (2.5,2.5) node {$\varnothing$};
\draw (0.5,1) node {$\varnothing$};
\draw (1.5,1) node {$\varnothing$};
\draw (2.5,1) node {$\varnothing$};
\draw (1.5,2.5) node {$\varnothing$};
\draw (1.5,3.5) node {$\varnothing$};
\draw (3.5,2.5) node {$\varnothing$};
\draw (3.5,4.5) node {$\varnothing$};
\draw (4.5,2.5) node {$\varnothing$};
\draw (4.5,3.5) node {$\varnothing$};
\draw (4.5,4.5) node {$\varnothing$};
\Hpoint{1}{3};
\draw (0.5,2.75) node {{\tiny $i_{RG}$}};
\Vpoint{2}{4};
\draw (2,4.5) node {{\tiny $j_{RG}$}};
\Vpoint{3}{4.5};
\draw (2.8,4.3) node {{\tiny $a$}};
\Hpoint{0.5}{2};
\draw (0.3,2.3) node {{\tiny $b$}};
\Hpoint{4}{3.5};
\draw (4.2,3.75) node {{\tiny $x$}};
\draw (3.5,3.5) node {$2$};
\draw (3.5,1) node {$*$};
\draw (4.5,1) node {$*$};
\zoneRG{1}{4}{1}
\end{tikzpicture}
\begin{tikzpicture}[scale=.5]
\useasboundingbox (0,-0.5) (6,6);
\fill [Hfill] (0,2) rectangle (1,4);
\fill [Vfill] (1,4) rectangle (3,5);
\fill [Hfill] (3,3) rectangle (4,4);
\fill [Hfill] (3,0) rectangle (4,2);
\draw (0,4) -- (5,4);
\draw (1,3) -- (5,3);
\draw (0,2) -- (5,2);
\draw (1,0) -- (1,5);
\draw (2,0) -- (2,4);
\draw (3,0) -- (3,5);
\draw (4,0) -- (4,5);
\draw (2.5,3.5) node {$\varnothing$};
\draw (2.5,2.5) node {$\varnothing$};
\draw (0.5,1) node {$\varnothing$};
\draw (1.5,1) node {$\varnothing$};
\draw (2.5,1) node {$\varnothing$};
\draw (1.5,2.5) node {$\varnothing$};
\draw (1.5,3.5) node {$\varnothing$};
\draw (3.5,2.5) node {$\varnothing$};
\draw (3.5,4.5) node {$\varnothing$};
\draw (4.5,2.5) node {$\varnothing$};
\draw (4.5,3.5) node {$\varnothing$};
\draw (4.5,4.5) node {$\varnothing$};
\Hpoint{1}{3};
\draw (0.5,2.75) node {{\tiny $i_{RG}$}};
\Vpoint{2}{4};
\draw (2,4.5) node {{\tiny $j_{RG}$}};
\Vpoint{3}{4.5};
\draw (2.8,4.3) node {{\tiny $a$}};
\Hpoint{0.5}{2};
\draw (0.3,2.3) node {{\tiny $b$}};
\Hpoint{4}{3.5};
\draw (4.2,3.75) node {{\tiny $x$}};
\draw (3.5,3.5) node {$2$};
\draw (3.5,1) node {$3$};
\draw (4.5,1) node {$*$};
\zoneRG{1}{4}{1}
\end{tikzpicture}
\begin{tikzpicture}[scale=.5]
\useasboundingbox (0,-0.5) (6,6);
\fill [Hfill] (0,2) rectangle (1,4);
\fill [Vfill] (1,4) rectangle (3,5);
\fill [Hfill] (3,3) rectangle (4,4);
\fill [Hfill] (3,1) rectangle (4,2);
\draw (0,4) -- (5,4);
\draw (1,3) -- (5,3);
\draw (0,2) -- (5,2);
\draw (0,1) -- (5,1);
\draw (1,0) -- (1,5);
\draw (2,0) -- (2,4);
\draw (3,0) -- (3,5);
\draw (4,0) -- (4,5);
\draw (2.5,3.5) node {$\varnothing$};
\draw (2.5,2.5) node {$\varnothing$};
\draw (0.5,0.5) node {$\varnothing$};
\draw (1.5,0.5) node {$\varnothing$};
\draw (2.5,0.5) node {$\varnothing$};
\draw (3.5,0.5) node {$\varnothing$};
\draw (0.5,1.5) node {$\varnothing$};
\draw (1.5,1.5) node {$\varnothing$};
\draw (2.5,1.5) node {$\varnothing$};
\draw (1.5,2.5) node {$\varnothing$};
\draw (1.5,3.5) node {$\varnothing$};
\draw (3.5,2.5) node {$\varnothing$};
\draw (3.5,4.5) node {$\varnothing$};
\draw (4.5,2.5) node {$\varnothing$};
\draw (4.5,3.5) node {$\varnothing$};
\draw (4.5,4.5) node {$\varnothing$};
\Hpoint{1}{3};
\draw (0.5,2.75) node {{\tiny $i_{RG}$}};
\Vpoint{2}{4};
\draw (2,4.5) node {{\tiny $j_{RG}$}};
\Vpoint{3}{4.5};
\draw (2.8,4.3) node {{\tiny $a$}};
\Hpoint{0.5}{2};
\draw (0.3,2.3) node {{\tiny $b$}};
\Hpoint{4}{3.5};
\draw (4.2,3.75) node {{\tiny $x$}};
\Hpoint{3.5}{1};
\draw (3.2,0.75) node {{\tiny $y$}};
\draw (3.5,3.5) node {$2$};
\draw (4.5,0.5) node {$*$};
\draw (4.5,1.5) node {$*$};
\zoneRG{1}{4}{1}
\end{tikzpicture}
\begin{tikzpicture}[scale=.5]
\useasboundingbox (0,-0.5) (6,6);
\fill [Hfill] (0,2) rectangle (1,4);
\fill [Vfill] (1,4) rectangle (3,5);
\fill [Hfill] (3,3) rectangle (4,4);
\fill [Hfill] (3,1) rectangle (4,2);
\draw (0,4) -- (5,4);
\draw (1,3) -- (5,3);
\draw (0,2) -- (5,2);
\draw (0,1) -- (5,1);
\draw (1,0) -- (1,5);
\draw (2,0) -- (2,4);
\draw (3,0) -- (3,5);
\draw (4,0) -- (4,5);
\draw (2.5,3.5) node {$\varnothing$};
\draw (2.5,2.5) node {$\varnothing$};
\draw (0.5,0.5) node {$\varnothing$};
\draw (1.5,0.5) node {$\varnothing$};
\draw (2.5,0.5) node {$\varnothing$};
\draw (3.5,0.5) node {$\varnothing$};
\draw (0.5,1.5) node {$\varnothing$};
\draw (1.5,1.5) node {$\varnothing$};
\draw (2.5,1.5) node {$\varnothing$};
\draw (1.5,2.5) node {$\varnothing$};
\draw (1.5,3.5) node {$\varnothing$};
\draw (3.5,2.5) node {$\varnothing$};
\draw (3.5,4.5) node {$\varnothing$};
\draw (4.5,2.5) node {$\varnothing$};
\draw (4.5,3.5) node {$\varnothing$};
\draw (4.5,4.5) node {$\varnothing$};
\Hpoint{1}{3};
\draw (0.5,2.75) node {{\tiny $i_{RG}$}};
\Vpoint{2}{4};
\draw (2,4.5) node {{\tiny $j_{RG}$}};
\Vpoint{3}{4.5};
\draw (2.8,4.3) node {{\tiny $a$}};
\Hpoint{0.5}{2};
\draw (0.3,2.3) node {{\tiny $b$}};
\Hpoint{4}{3.5};
\draw (4.2,3.75) node {{\tiny $x$}};
\Hpoint{3.5}{1};
\draw (3.2,0.75) node {{\tiny $y$}};
\draw (3.5,3.5) node {$2$};
\draw (4.5,0.5) node {$*$};
\draw (4.5,1.5) node {$\varnothing$};
\zoneRG{1}{4}{1}
\end{tikzpicture}
\caption{Zone $1$ is empty\label{fig:tikz2RG1Vide}}
\end{center}
\end{figure}

As there are no increasing subsequence GR, all points in the lower left quadrant of $x$ lie in \R as shown in the second diagram of Figure~\ref{fig:tikz2RG1Vide} where we define a zone $3$.
If zone $3$ is empty then as $\sigma$ is $\ominus$-indecomposable zone $*$ is also empty and all points have a determined color, as in the fourth diagram of Proposition~\ref{prop:diagrammesRG}.
Otherwise zone $3$ is non-empty and let $y$ the lowest point in zone $3$ as depicted in the third diagram.
We apply rule (\rmnum{2}) to $x$ and $y$ leading to the fourth diagram.
As $\sigma$ is $\ominus$-indecomposable, zone $*$ is empty and all points have a determined color, as in the last diagram of Proposition~\ref{prop:diagrammesRG}.
\end{pf}

\begin{defn}\label{def:C_RG}
Let $\sigma$ be a permutation and $i$ and $j$ two indices of $\sigma$ such that $\sigma_i \sigma_j$ is an ascent.
Set $a = \max \{k \mid \sigma_k \geq \sigma_j\}$ and $b$ such that $\sigma_b = \min \{\sigma_k \mid k \leq i\}$.
We define $C_{RG}(\sigma, i, j)$ as the partial bicoloring of $\sigma$ having the following shape:

\begin{tikzpicture}[scale=.5]
\useasboundingbox (0,0.5) (5,6);
\fill [Hfill] (0,2) rectangle (1,4);
\fill [Vfill] (1,4) rectangle (3,5);
\fill [Hfill] (3,2) rectangle (4,4);
\fill [Vfill] (1,1) rectangle (3,2);
\draw (0,4) -- (4,4);
\draw (0,2) -- (4,2);
\draw (1,1) -- (1,5);
\draw (3,1) -- (3,5);
\Hpoint{1}{3};
\draw (0.5,2.75) node {{\tiny $i$}};
\Vpoint{2}{4};
\draw (2,4.5) node {{\tiny $j$}};
\Vpoint{3}{4.5};
\draw (2.8,4.3) node {{\tiny $a$}};
\Hpoint{0.5}{2};
\draw (0.3,2.3) node {{\tiny $b$}};
\draw (3.5,3) node {{\tiny $1$}};
\draw (2,1.5) node {{\tiny $2$}};
\draw (3.5,1.5) node {{\tiny $3$}};
\zoneRG{1}{4}{1}
\end{tikzpicture}
where points of zone $3$ are in \G if zone $1$ is empty and zone $2$ is nonempty, 
in \R if zone $1$ is nonempty and zone $2$ is empty, and have no color otherwise.
\end{defn}

\begin{prop}\label{prop:C_RG}
Let $\sigma$ be a $\ominus$-indecomposable permutation and $c$ a valid coloring of $\sigma$ such that there is no increasing subsequence RG in $c$ and there is at least an increasing sequence GR in $c$. 
Then $c = C_{RG}(\sigma, i_{RG}, j_{RG})$.
\end{prop}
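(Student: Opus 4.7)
The plan is to derive Proposition~\ref{prop:C_RG} as a direct corollary of Proposition~\ref{prop:diagrammesRG} together with Definition~\ref{def:C_RG}, in complete analogy with the way Proposition~\ref{prop:C_GR} followed from Proposition~\ref{prop:diagrammesGR} and Definition~\ref{def:C_GR}. First I would apply Proposition~\ref{prop:diagrammesRG} to the valid bicoloring $c$ (using the hypothesis that $\sigma$ is $\ominus$-indecomposable, that $A_{GR}=\varnothing$, and that $A_{RG}\neq\varnothing$ so that $i_{RG}$ and $j_{RG}$ are defined). This yields that $c$ must take one of the five shapes drawn in the statement of Proposition~\ref{prop:diagrammesRG}.

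Next I would identify the distinguished points $a$ and $b$ appearing in those five shapes with those of Definition~\ref{def:C_RG}. In Proposition~\ref{prop:diagrammesRG}, $a$ is the rightmost point above $\sigma_{j_{RG}}$ and $b$ is the lowest point to the left of $\sigma_{i_{RG}}$; these are exactly the points $a=\max\{k\mid \sigma_k\geq \sigma_{j_{RG}}\}$ and $\sigma_b=\min\{\sigma_k\mid k\leq i_{RG}\}$ of Definition~\ref{def:C_RG} applied with $(i,j)=(i_{RG},j_{RG})$. Thus the ``outer frame'' (the coloring of all points outside the bottom-right rectangle called zone $3$ in the definition) is identical across the five shapes and the definition, being entirely forced by Theorem~\ref{thm:RGIncreasing} applied to $(i_{RG},j_{RG})$ together with the extremality of $i_{RG}$ and $j_{RG}$ and with $\ominus$-indecomposability.

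The remaining and central step is to match the coloring of zone $3$ in each of the five shapes of Proposition~\ref{prop:diagrammesRG} with the three-case disjunction of Definition~\ref{def:C_RG}. The shapes produced in the sub-case ``zone $1$ nonempty'' of the proof of Proposition~\ref{prop:diagrammesRG} are exactly those in which a point $x$ in zone $2$ of the definition exists and zone $1$ of the definition is empty; by the absence of \ascentsGR every point lying in zone $3$ must then be colored \G, matching the first clause of Definition~\ref{def:C_RG}. Symmetrically, the shapes produced in the sub-case ``zone $1$ empty, zone $2$ nonempty'' correspond to zone $1$ of the definition being nonempty and zone $2$ being empty, forcing every point of zone $3$ to be \R, which is the second clause. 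Finally, the shape where both zones $1$ and $2$ of the proposition's proof are empty corresponds to both zones of the definition being empty, so that zone $3$ is itself empty (by $\ominus$-indecomposability) and the ``no color otherwise'' clause applies vacuously.

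The main obstacle is the bookkeeping between the two (slightly different) labelings of zones in Proposition~\ref{prop:diagrammesRG} and Definition~\ref{def:C_RG}; once the correspondence is set up carefully, each of the five shapes maps unambiguously to one of the three cases of the definition, giving $c=C_{RG}(\sigma,i_{RG},j_{RG})$ as required. \qed
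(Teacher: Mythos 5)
Your proposal is correct and takes exactly the same route as the paper, whose entire proof is the one-line observation that the claim is a direct consequence of Proposition~\ref{prop:diagrammesRG} and Definition~\ref{def:C_RG}; you simply make explicit the bookkeeping matching the five shapes to the three clauses of the definition (and you rightly read the hypothesis as ``no ascent $GR$ and at least one ascent $RG$'', correcting an evident typo in the statement). The only detail worth adding is why the ``otherwise'' clause never needs to cover the case where both zones of Definition~\ref{def:C_RG} are nonempty: a point of its zone~$2$ (colored \G) together with a point of its zone~$1$ (colored \R) would form an \ascentGR, contradicting the hypothesis.
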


\begin{pf}
This is a direct consequence of Proposition~\ref{prop:diagrammesRG} and Definition~\ref{def:C_RG}.
\end{pf}

\subsubsection{There exist both increasing sequences labeled $GR$ and $RG$}

In this section we study the last case that remains to deal, 
{\em i.e.} there is at least one increasing sequence colored $RG$ and at least one colored $GR$.
As $A_{GR}$ and $A_{RG}$ are non empty, $i_{GR}$, $j_{GR}$, $i_{RG}$ and $j_{RG}$ are defined.
We prove that once $i_{GR}$, $j_{GR}$, $i_{RG}$ and $j_{RG}$ are determined, then it fixes the color of every other point of the permutation.

%
%

\begin{prop}\label{prop:diagrammes*}
Let $\sigma$ be a permutation and $c$ a valid coloring of $\sigma$ such that
there exists at least an increasing sequence colored $GR$ and at least an increasing sequence colored $RG$.
Then $c$ has one of the following shapes:

\begin{tikzpicture}[scale=.5]
\useasboundingbox (1,-1) (6.25,6);
\fill [Hfill] (0,1) rectangle (1,4);
\fill [Vfill] (1,4) rectangle (4,5);
\fill [Hfill] (4,1) rectangle (5,4);
\fill [Vfill] (1,0) rectangle (4,1);
\draw (0,1) -- (5,1);
\draw (0,4) -- (5,4);
\draw (1,0) -- (1,5);
\draw (4,0) -- (4,5);
\draw (2.5,2.5) node {$\varnothing$};
\draw (0.5,0.5) node {$\varnothing$};
\draw (4.5,4.5) node {$\varnothing$};
\Hpoint{1}{2};
\draw (0.5,1.75) node {{\tiny $i_{RG}$}};
\Vpoint{3}{4};
\draw (3,4.5) node {{\tiny $j_{RG}$}};
\Vpoint{2}{1};
\draw (2,0.6) node {{\tiny $i_{GR}$}};
\Hpoint{4}{3};
\draw (4.75,3) node {{\tiny $j_{GR}$}};
\zoneRG{1}{4}{1}
\zoneGR{5}{0}{1}

\end{tikzpicture}
\begin{tikzpicture}[scale=.5]
\useasboundingbox (0,-1) (6.25,6);
\fill [Hfill] (0,1) rectangle (1,4);
\fill [Vfill] (1,4) rectangle (4,5);
\fill [Hfill] (4,1) rectangle (5,4);
\fill [Vfill] (1,0) rectangle (4,1);
\draw [Hfill] (2,1) rectangle (3,4);
\draw (0,1) -- (5,1);
\draw (0,4) -- (5,4);
\draw (1,0) -- (1,5);
\draw (4,0) -- (4,5);
\draw (1.5,2.5) node {$\varnothing$};
\draw (3.5,2.5) node {$\varnothing$};
\draw (0.5,0.5) node {$\varnothing$};
\draw (4.5,4.5) node {$\varnothing$};
\Hpoint{1}{2};
\draw (0.5,1.75) node {{\tiny $i_{RG}$}};
\Vpoint{2}{4};
\draw (2,4.5) node {{\tiny $j_{RG}$}};
\Vpoint{3}{1};
\draw (3,0.6) node {{\tiny $i_{GR}$}};
\Hpoint{4}{3};
\draw (4.75,3) node {{\tiny $j_{GR}$}};
\zoneRG{1}{4}{1}
\zoneGR{5}{0}{1}

\end{tikzpicture}
\begin{tikzpicture}[scale=.5]
\useasboundingbox (0,-1) (6.25,6);
\fill [Hfill] (0,1) rectangle (1,4);
\fill [Vfill] (1,4) rectangle (4,5);
\fill [Hfill] (4,1) rectangle (5,4);
\fill [Vfill] (1,0) rectangle (4,1);
\draw [Vfill] (1,2) rectangle (4,3);
\draw (0,1) -- (5,1);
\draw (0,4) -- (5,4);
\draw (1,0) -- (1,5);
\draw (4,0) -- (4,5);
\draw (2.5,3.5) node {$\varnothing$};
\draw (2.5,1.5) node {$\varnothing$};
\draw (0.5,0.5) node {$\varnothing$};
\draw (4.5,4.5) node {$\varnothing$};
\Hpoint{1}{3};
\draw (0.5,2.75) node {{\tiny $i_{RG}$}};
\Vpoint{3}{4};
\draw (3,4.5) node {{\tiny $j_{RG}$}};
\Vpoint{2}{1};
\draw (2,0.6) node {{\tiny $i_{GR}$}};
\Hpoint{4}{2};
\draw (4.75,2) node {{\tiny $j_{GR}$}};
\zoneRG{1}{4}{1}
\zoneGR{5}{0}{1}

\end{tikzpicture}
\begin{tikzpicture}[scale=.5]
\useasboundingbox (0,-1) (6.25,6);
\fill [Hfill] (0,2) rectangle (1,4);
\fill [Vfill] (1,4) rectangle (3,5);
\fill [Hfill] (4,1) rectangle (5,4);
\fill [Vfill] (1,0) rectangle (4,1);
\fill [Hfill] (2,1) rectangle (3,4);
\draw (0,1) -- (5,1);
\draw (0,2) -- (5,2);
\draw (0,3) -- (5,3);
\draw (0,4) -- (5,4);
\draw (1,0) -- (1,5);
\draw (2,0) -- (2,5);
\draw (3,0) -- (3,5);
\draw (4,0) -- (4,5);
\draw (1.5,1.5) node {$\varnothing$};
\draw (1.5,3.5) node {$\varnothing$};
\draw (3.5,3.5) node {$\varnothing$};
\draw (3.5,1.5) node {$\varnothing$};
\draw (0.5,0.5) node {$\varnothing$};
\draw (0.5,1.5) node {$\varnothing$};
\draw (3.5,4.5) node {$\varnothing$};
\draw (4.5,4.5) node {$\varnothing$};
\draw (1.5,2.5) node {$\varnothing$};
\draw (3.5,2.5) node {$\varnothing$};
\Hpoint{1}{3};
\draw (0.5,2.75) node {{\tiny $i_{RG}$}};
\Vpoint{2}{4};
\draw (2,4.5) node {{\tiny $j_{RG}$}};
\Vpoint{3}{1};
\draw (3,0.6) node {{\tiny $i_{GR}$}};
\Hpoint{4}{2};
\draw (4.75,1.75) node {{\tiny $j_{GR}$}};
\Hpoint{4.5}{3.5};
\draw (4.8,3.8) node {{\tiny $x$}};
\zoneRG{1}{4}{1}
\zoneGR{5}{0}{1}
\end{tikzpicture}
\begin{tikzpicture}[scale=.5]
\useasboundingbox (0,-1) (5,6);
\fill [Hfill] (0,2) rectangle (1,4);
\fill [Vfill] (1,4) rectangle (3,5);
\fill [Hfill] (4,1) rectangle (5,4);
\fill [Vfill] (1,0) rectangle (4,1);
\fill [Vfill] (1,2) rectangle (4,3);
\draw (0,1) -- (5,1);
\draw (0,2) -- (5,2);
\draw (0,3) -- (5,3);
\draw (0,4) -- (5,4);
\draw (1,0) -- (1,5);
\draw (2,0) -- (2,5);
\draw (3,0) -- (3,5);
\draw (4,0) -- (4,5);
\draw (1.5,1.5) node {$\varnothing$};
\draw (1.5,3.5) node {$\varnothing$};
\draw (3.5,3.5) node {$\varnothing$};
\draw (3.5,1.5) node {$\varnothing$};
\draw (0.5,0.5) node {$\varnothing$};
\draw (0.5,1.5) node {$\varnothing$};
\draw (3.5,4.5) node {$\varnothing$};
\draw (4.5,4.5) node {$\varnothing$};
\draw (2.5,1.5) node {$\varnothing$};
\draw (2.5,3.5) node {$\varnothing$};
\Hpoint{1}{3};
\draw (0.5,2.75) node {{\tiny $i_{RG}$}};
\Vpoint{2}{4};
\draw (2,4.5) node {{\tiny $j_{RG}$}};
\Vpoint{3}{1};
\draw (3,0.6) node {{\tiny $i_{GR}$}};
\Hpoint{4}{2};
\draw (4.75,1.75) node {{\tiny $j_{GR}$}};
\Vpoint{1.5}{0.5};
\draw (1.75,0.75) node {{\tiny $y$}};
\zoneRG{1}{4}{1}
\zoneGR{5}{0}{1}

\end{tikzpicture}

\end{prop}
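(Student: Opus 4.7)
The plan is to combine the structural results of Theorems~\ref{thm:RGIncreasing} and \ref{thm:GRIncreasing}. Since both $A_{RG}$ and $A_{GR}$ are non-empty, I apply Theorem~\ref{thm:RGIncreasing} to the ascent $(i_{RG}, j_{RG})$, fixing the colors of all points lying to the left of $i_{RG}$ or strictly above $\sigma_{j_{RG}}$ and producing the zone $A_{RG}$. Symmetrically, Theorem~\ref{thm:GRIncreasing} applied to $(i_{GR}, j_{GR})$ fixes the colors of all points lying to the right of $j_{GR}$ or strictly below $\sigma_{i_{GR}}$, giving the zone $A_{GR}$. Together these two already-assigned regions cover everything outside a central rectangular band delimited by the four distinguished points, so it remains to determine both the relative positions of these four points and the coloring of the points lying in this central band.

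I would next pin down the relative positions: $\sigma_{i_{RG}}$ and $\sigma_{j_{GR}}$ are both in \R while $\sigma_{j_{RG}}$ and $\sigma_{i_{GR}}$ are both in \G, so rules $(\rmnum{1})$ and $(\rmnum{2})$ of Proposition~\ref{prop:rulesR8}, combined with the extremality of each distinguished point (maximality of $i_{RG}$ and minimality of $\sigma_{j_{RG}}$ among RG ascents, and the dual property for GR), force an interleaving in which $i_{RG}$ sits on the left, $j_{GR}$ on the right, $i_{GR}$ below and $j_{RG}$ above, with either strict separations in index/value or coincidences producing the degenerate layouts of diagrams~2 and 3. In the generic case the central band splits into a $3 \times 3$ grid of subrectangles whose corner cells are already assigned by the two theorems; the remaining inner cells are then forced empty (any point inside would contradict the extremality of one of the four distinguished points when combined with rules $(\rmnum{3})$ and $(\rmnum{4})$), yielding diagram~1.

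The remaining work is a case analysis on the middle region in the spirit of the proofs of Propositions~\ref{prop:diagrammesGR} and \ref{prop:diagrammesRG}. When a non-assigned sub-zone contains a point, I would pick an extremal witness $x$ (respectively $y$) inside it and propagate colors via rules $(\rmnum{1})$--$(\rmnum{4})$, using the existence of the four distinguished ascent points to exclude new RG or GR ascents that would contradict their extremality; this forces either an additional \R column across the middle (giving diagram~4) or an additional \G row across it (giving diagram~5). The main obstacle will be the bookkeeping of the case split: verifying that every configuration of empty/non-empty middle sub-zones collapses to exactly one of the five listed shapes, that the extremality of $i_{RG}, j_{RG}, i_{GR}, j_{GR}$ is compatible with each resulting layout, and in particular that the extremal witness used to instantiate the new row or column lies precisely where the diagrams in the statement place the points $x$ and $y$.
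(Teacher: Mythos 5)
Your overall strategy coincides with the paper's: apply Theorem~\ref{thm:RGIncreasing} to $(i_{RG},j_{RG})$ and Theorem~\ref{thm:GRIncreasing} to $(i_{GR},j_{GR})$, pin down the relative positions of the four distinguished points by a quadrant case analysis (the paper gets exactly four configurations, three of which are immediately determined and give the first three shapes), and then resolve the remaining central zone by locating extremal witnesses that force an extra \R column (diagram~4) or \G row (diagram~5). Up to that point your outline matches the paper's proof.

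However, there is a genuine gap in the last step. In the fourth configuration (both $i_{GR}$ and $j_{GR}$ lying in the lower-right quadrant of the $RG$ pair), the color of the central zone $*$ is determined only if one of the two witness zones (the paper's zones $A$ and $B$) is nonempty: a point of $A$ forces $*$ into \R, a point of $B$ forces $*$ into \G. Your proposal silently assumes one of these witnesses exists. If both $A$ and $B$ are empty while $*$ is nonempty, no rule of $\mathcal{R}_8$ and no extremality argument determines the color of the points of $*$, and the resulting coloring would match none of the five listed shapes. The paper must therefore prove that $A$ and $B$ cannot both be empty, and this is not bookkeeping: it is the bulk of the proof, requiring four further auxiliary extremal points $a,b,c,d$, a nested case analysis on several more zones, and---crucially---the hypothesis that $\sigma$ is $\ominus$-indecomposable (used repeatedly to assert that certain zones must be nonempty, yielding the contradictions). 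Your proposal never invokes $\ominus$-indecomposability, yet without it this impossibility claim is false, so the missing sub-case cannot be closed by the tools you list. You would need to add this argument explicitly for the proof to be complete.
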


\begin{pf}
By maximality of $i_{GR}$ and minimality of $\sigma_{j_{GR}}$ we have:

\begin{tikzpicture}[scale=.5]
\draw (4,0) -- (7,0);
\draw (4,1) -- (7,1);
\draw (5,-1) -- (5,2);
\draw (6,-1) -- (6,2);
\zoneF{H};
\zoneB{V};
\draw (5.5,0.5) node {$\varnothing$};
\Hpoint{5}{0};
\Vpoint{6}{1};
\draw (6.5,1.25) node {{\tiny $j_{RG}$}};
\draw (4.5,0.25) node {{\tiny $i_{RG}$}};
\end{tikzpicture}
By Theorem~\ref{thm:RGIncreasing} we obtain:
\begin{tikzpicture}[scale=.5]
\fill [Hfill] (0,2) rectangle (1,4);
\fill [Vfill] (1,4) rectangle (3,5);
\draw [Hfill] (3,3) -- (2,3) -- (2,4) -- (3,4);
\draw [Vfill] (1,2) -- (1,3) -- (2,3) -- (2,2);
\draw (0,4) -- (3,4);
\draw (1,2) -- (1,5);
\draw (1.5,3.5) node {$\varnothing$};
\Hpoint{1}{3};
\draw (0.5,2.75) node {{\tiny $i_{RG}$}};
\Vpoint{2}{4};
\draw (2,4.5) node {{\tiny $j_{RG}$}};
\draw (2.5,3.5) node {$1$};
\draw (1.5,2.5) node {$2$};
\draw (2.5,2.5) node {$3$};
\zoneRG{1}{4}{1}
\end{tikzpicture}

Recall that there exist an increasing sequence $GR$. 
 $i_{GR}$ lies in quadrant $2$ or $3$ and $j_{GR}$ in quadrant $1$ or $3$. 
Hence the coloring $c$ as either one of the $4$ following shapes:

\begin{tikzpicture}[scale=.5]
\useasboundingbox (0,-1) (7.75,6);
\fill [Hfill] (0,0) rectangle (1,4);
\fill [Vfill] (1,4) rectangle (5,5);
\draw [Hfill] (5,2) -- (3,2) -- (3,4) -- (5,4);
\draw [Vfill] (1,0) -- (1,2) -- (3,2) -- (3,0);
\draw (0,4) -- (5,4);
\draw (1,0) -- (1,5);
\draw (2,3) node {$\varnothing$};
\Hpoint{1}{2};
\draw (0.5,1.75) node {{\tiny $i_{RG}$}};
\Vpoint{3}{4};
\draw (3,4.5) node {{\tiny $j_{RG}$}};
\Vpoint{2}{1};
\draw (2,0.6) node {{\tiny $i_{GR}$}};
\Hpoint{4}{3};
\draw (4.75,3) node {{\tiny $j_{GR}$}};
\zoneRG{1}{4}{1}
\end{tikzpicture}
\begin{tikzpicture}[scale=.5]
\useasboundingbox (0,-1) (7.75,6);
\fill [Hfill] (0,0) rectangle (1,4);
\fill [Vfill] (1,4) rectangle (5,5);
\draw [Hfill] (5,2) -- (2,2) -- (2,4) -- (5,4);
\draw [Vfill] (1,0) -- (1,2) -- (2,2) -- (2,0);
\draw (0,4) -- (5,4);
\draw (1,0) -- (1,5);
\draw (1.5,3) node {$\varnothing$};
\Hpoint{1}{2};
\draw (0.5,1.75) node {{\tiny $i_{RG}$}};
\Vpoint{2}{4};
\draw (2,4.5) node {{\tiny $j_{RG}$}};
\Vpoint{3}{1};
\draw (3,0.6) node {{\tiny $i_{GR}$}};
\Hpoint{4}{3};
\draw (4.75,3) node {{\tiny $j_{GR}$}};
\zoneRG{1}{4}{1}
\end{tikzpicture}
\begin{tikzpicture}[scale=.5]
\useasboundingbox (0,-1) (7.75,6);
\fill [Hfill] (0,0) rectangle (1,4);
\fill [Vfill] (1,4) rectangle (5,5);
\draw [Hfill] (5,3) -- (3,3) -- (3,4) -- (5,4);
\draw [Vfill] (1,0) -- (1,3) -- (3,3) -- (3,0);
\draw (0,4) -- (5,4);
\draw (1,0) -- (1,5);
\draw (2,3.5) node {$\varnothing$};
\Hpoint{1}{3};
\draw (0.5,2.75) node {{\tiny $i_{RG}$}};
\Vpoint{3}{4};
\draw (3,4.5) node {{\tiny $j_{RG}$}};
\Vpoint{2}{1};
\draw (2,0.6) node {{\tiny $i_{GR}$}};
\Hpoint{4}{2};
\draw (4.75,2) node {{\tiny $j_{GR}$}};
\zoneRG{1}{4}{1}
\end{tikzpicture}
\begin{tikzpicture}[scale=.5]
\useasboundingbox (0,-1) (6,6);
\fill [Hfill] (0,0) rectangle (1,4);
\fill [Vfill] (1,4) rectangle (5,5);
\draw [Hfill] (5,3) -- (2,3) -- (2,4) -- (5,4);
\draw [Vfill] (1,0) -- (1,3) -- (2,3) -- (2,0);
\draw (0,4) -- (5,4);
\draw (1,0) -- (1,5);
\draw (1.5,3.5) node {$\varnothing$};
\Hpoint{1}{3};
\draw (0.5,2.75) node {{\tiny $i_{RG}$}};
\Vpoint{2}{4};
\draw (2,4.5) node {{\tiny $j_{RG}$}};
\Vpoint{3}{1};
\draw (3,0.6) node {{\tiny $i_{GR}$}};
\Hpoint{4}{2};
\draw (4.75,2) node {{\tiny $j_{GR}$}};
\zoneRG{1}{4}{1}
\end{tikzpicture}

Applying Theorem~\ref{thm:GRIncreasing} to $i_{GR}$ and $j_{GR}$ we obtain these new diagrams:

\begin{tikzpicture}[scale=.5]
\useasboundingbox (0,-1) (7.75,6);
\fill [Hfill] (0,1) rectangle (1,4);
\fill [Vfill] (1,4) rectangle (4,5);
\fill [Hfill] (4,1) rectangle (5,4);
\fill [Vfill] (1,0) rectangle (4,1);
\draw [Hfill] (4,2) -- (3,2) -- (3,4) -- (4,4);
\draw [Vfill] (1,1) -- (1,2) -- (3,2) -- (3,1);
\draw (0,1) -- (5,1);
\draw (0,4) -- (5,4);
\draw (1,0) -- (1,5);
\draw (4,0) -- (4,5);
\draw (2,3) node {$\varnothing$};
\draw (0.5,0.5) node {$\varnothing$};
\draw (4.5,4.5) node {$\varnothing$};
\Hpoint{1}{2};
\draw (0.5,1.75) node {{\tiny $i_{RG}$}};
\Vpoint{3}{4};
\draw (3,4.5) node {{\tiny $j_{RG}$}};
\Vpoint{2}{1};
\draw (2,0.6) node {{\tiny $i_{GR}$}};
\Hpoint{4}{3};
\draw (4.75,3) node {{\tiny $j_{GR}$}};
\zoneRG{1}{4}{1}
\zoneGR{5}{0}{1}
\end{tikzpicture}
\begin{tikzpicture}[scale=.5]
\useasboundingbox (0,-1) (7.75,6);
\fill [Hfill] (0,1) rectangle (1,4);
\fill [Vfill] (1,4) rectangle (4,5);
\fill [Hfill] (4,1) rectangle (5,4);
\fill [Vfill] (1,0) rectangle (4,1);
\draw [Hfill] (4,2) -- (2,2) -- (2,4) -- (4,4);
\draw [Vfill] (1,1) -- (1,2) -- (2,2) -- (2,1);
\draw (0,1) -- (5,1);
\draw (0,4) -- (5,4);
\draw (1,0) -- (1,5);
\draw (4,0) -- (4,5);
\draw (1.5,3) node {$\varnothing$};
\draw (0.5,0.5) node {$\varnothing$};
\draw (4.5,4.5) node {$\varnothing$};
\Hpoint{1}{2};
\draw (0.5,1.75) node {{\tiny $i_{RG}$}};
\Vpoint{2}{4};
\draw (2,4.5) node {{\tiny $j_{RG}$}};
\Vpoint{3}{1};
\draw (3,0.6) node {{\tiny $i_{GR}$}};
\Hpoint{4}{3};
\draw (4.75,3) node {{\tiny $j_{GR}$}};
\zoneRG{1}{4}{1}
\zoneGR{5}{0}{1}
\end{tikzpicture}
\begin{tikzpicture}[scale=.5]
\useasboundingbox (0,-1) (7.75,6);
\fill [Hfill] (0,1) rectangle (1,4);
\fill [Vfill] (1,4) rectangle (4,5);
\fill [Hfill] (4,1) rectangle (5,4);
\fill [Vfill] (1,0) rectangle (4,1);
\draw [Hfill] (4,3) -- (3,3) -- (3,4) -- (4,4);
\draw [Vfill] (1,1) -- (1,3) -- (3,3) -- (3,1);
\draw (0,1) -- (5,1);
\draw (0,4) -- (5,4);
\draw (1,0) -- (1,5);
\draw (4,0) -- (4,5);
\draw (2,3.5) node {$\varnothing$};
\draw (0.5,0.5) node {$\varnothing$};
\draw (4.5,4.5) node {$\varnothing$};
\Hpoint{1}{3};
\draw (0.5,2.75) node {{\tiny $i_{RG}$}};
\Vpoint{3}{4};
\draw (3,4.5) node {{\tiny $j_{RG}$}};
\Vpoint{2}{1};
\draw (2,0.6) node {{\tiny $i_{GR}$}};
\Hpoint{4}{2};
\draw (4.75,2) node {{\tiny $j_{GR}$}};
\zoneRG{1}{4}{1}
\zoneGR{5}{0}{1}
\end{tikzpicture}
\begin{tikzpicture}[scale=.5]
\useasboundingbox (0,-1) (6,6);
\fill [Hfill] (0,1) rectangle (1,4);
\fill [Vfill] (1,4) rectangle (4,5);
\fill [Hfill] (4,1) rectangle (5,4);
\fill [Vfill] (1,0) rectangle (4,1);
\draw [Hfill] (4,3) -- (2,3) -- (2,4) -- (4,4);
\draw [Vfill] (1,1) -- (1,3) -- (2,3) -- (2,1);
\draw (0,1) -- (5,1);
\draw (0,4) -- (5,4);
\draw (1,0) -- (1,5);
\draw (4,0) -- (4,5);
\draw (1.5,3.5) node {$\varnothing$};
\draw (0.5,0.5) node {$\varnothing$};
\draw (4.5,4.5) node {$\varnothing$};
\Hpoint{1}{3};
\draw (0.5,2.75) node {{\tiny $i_{RG}$}};
\Vpoint{2}{4};
\draw (2,4.5) node {{\tiny $j_{RG}$}};
\Vpoint{3}{1};
\draw (3,0.6) node {{\tiny $i_{GR}$}};
\Hpoint{4}{2};
\draw (4.75,2) node {{\tiny $j_{GR}$}};
\zoneRG{1}{4}{1}
\zoneGR{5}{0}{1}
\end{tikzpicture}

Finally, using maximality of $\sigma_{i_{GR}}$ and minimality of $j_{GR}$, we obtain:

\begin{tikzpicture}[scale=.5]
\useasboundingbox (0,-1) (7.75,6);
\fill [Hfill] (0,1) rectangle (1,4);
\fill [Vfill] (1,4) rectangle (4,5);
\fill [Hfill] (4,1) rectangle (5,4);
\fill [Vfill] (1,0) rectangle (4,1);
\draw (0,1) -- (5,1);
\draw (0,4) -- (5,4);
\draw (1,0) -- (1,5);
\draw (4,0) -- (4,5);
\draw (2.5,2.5) node {$\varnothing$};
\draw (0.5,0.5) node {$\varnothing$};
\draw (4.5,4.5) node {$\varnothing$};
\Hpoint{1}{2};
\draw (0.5,1.75) node {{\tiny $i_{RG}$}};
\Vpoint{3}{4};
\draw (3,4.5) node {{\tiny $j_{RG}$}};
\Vpoint{2}{1};
\draw (2,0.6) node {{\tiny $i_{GR}$}};
\Hpoint{4}{3};
\draw (4.75,3) node {{\tiny $j_{GR}$}};
\zoneGR{5}{0}{1}
\zoneRG{1}{4}{1}

\end{tikzpicture}
\begin{tikzpicture}[scale=.5]
\useasboundingbox (0,-1) (7.75,6);
\fill [Hfill] (0,1) rectangle (1,4);
\fill [Vfill] (1,4) rectangle (4,5);
\fill [Hfill] (4,1) rectangle (5,4);
\fill [Vfill] (1,0) rectangle (4,1);
\draw [Hfill] (2,1) rectangle (3,4);
\draw (0,1) -- (5,1);
\draw (0,4) -- (5,4);
\draw (1,0) -- (1,5);
\draw (4,0) -- (4,5);
\draw (1.5,2.5) node {$\varnothing$};
\draw (3.5,2.5) node {$\varnothing$};
\draw (0.5,0.5) node {$\varnothing$};
\draw (4.5,4.5) node {$\varnothing$};
\Hpoint{1}{2};
\draw (0.5,1.75) node {{\tiny $i_{RG}$}};
\Vpoint{2}{4};
\draw (2,4.5) node {{\tiny $j_{RG}$}};
\Vpoint{3}{1};
\draw (3,0.6) node {{\tiny $i_{GR}$}};
\Hpoint{4}{3};
\draw (4.75,3) node {{\tiny $j_{GR}$}};
\zoneGR{5}{0}{1}
\zoneRG{1}{4}{1}

\end{tikzpicture}
\begin{tikzpicture}[scale=.5]
\useasboundingbox (0,-1) (7.75,6);
\fill [Hfill] (0,1) rectangle (1,4);
\fill [Vfill] (1,4) rectangle (4,5);
\fill [Hfill] (4,1) rectangle (5,4);
\fill [Vfill] (1,0) rectangle (4,1);
\draw [Vfill] (1,2) rectangle (4,3);
\draw (0,1) -- (5,1);
\draw (0,4) -- (5,4);
\draw (1,0) -- (1,5);
\draw (4,0) -- (4,5);
\draw (2.5,3.5) node {$\varnothing$};
\draw (2.5,1.5) node {$\varnothing$};
\draw (0.5,0.5) node {$\varnothing$};
\draw (4.5,4.5) node {$\varnothing$};
\Hpoint{1}{3};
\draw (0.5,2.75) node {{\tiny $i_{RG}$}};
\Vpoint{3}{4};
\draw (3,4.5) node {{\tiny $j_{RG}$}};
\Vpoint{2}{1};
\draw (2,0.6) node {{\tiny $i_{GR}$}};
\Hpoint{4}{2};
\draw (4.75,2) node {{\tiny $j_{GR}$}};
\zoneGR{5}{0}{1}
\zoneRG{1}{4}{1}

\end{tikzpicture}
\begin{tikzpicture}[scale=.5]
\useasboundingbox (0,-1) (6,6);
\fill [Hfill] (0,1) rectangle (1,4);
\fill [Vfill] (1,4) rectangle (4,5);
\fill [Hfill] (4,1) rectangle (5,4);
\fill [Vfill] (1,0) rectangle (4,1);
\draw [Hfill] (2,1) rectangle (3,2);
\draw [Vfill] (1,2) rectangle (2,3);
\draw [Hfill] (2,3) rectangle (3,4);
\draw [Vfill] (3,2) rectangle (4,3);
\draw (0,1) -- (5,1);
\draw (0,4) -- (5,4);
\draw (1,0) -- (1,5);
\draw (4,0) -- (4,5);
\draw (1.5,1.5) node {$\varnothing$};
\draw (1.5,3.5) node {$\varnothing$};
\draw (3.5,3.5) node {$\varnothing$};
\draw (3.5,1.5) node {$\varnothing$};
\draw (0.5,0.5) node {$\varnothing$};
\draw (4.5,4.5) node {$\varnothing$};
\Hpoint{1}{3};
\draw (0.5,2.75) node {{\tiny $i_{RG}$}};
\Vpoint{2}{4};
\draw (2,4.5) node {{\tiny $j_{RG}$}};
\Vpoint{3}{1};
\draw (3,0.6) node {{\tiny $i_{GR}$}};
\Hpoint{4}{2};
\draw (4.75,2) node {{\tiny $j_{GR}$}};
\zoneGR{5}{0}{1}
\zoneRG{1}{4}{1}
\end{tikzpicture}
In the first $3$ diagrams, the color of each point is determined 
-- recall that upper-left and lower-right points are determined by Theorems~\ref{thm:RGIncreasing} and \ref{thm:GRIncreasing} -- 
and only depend on $i_{RG}$, $j_{RG}$, $i_{GR}$ and $j_{GR}$.

This leaves us with the last diagram of Figure~\ref{fig:RGGR1} for which we have again to consider several cases. 
Note that in this diagram we named several zones whose emptiness is relevant and we denote once more the unknown zone by $*$.

\begin{figure}[H]
\begin{center}
\begin{tikzpicture}[scale=.5]
\useasboundingbox (0,-0.5) (7.75,6);
\fill [Hfill] (0,1) rectangle (1,4);
\fill [Vfill] (1,4) rectangle (4,5);
\fill [Hfill] (4,1) rectangle (5,4);
\fill [Vfill] (1,0) rectangle (4,1);
\fill [Hfill] (2,1) rectangle (3,2);
\fill [Vfill] (1,2) rectangle (2,3);
\fill [Hfill] (2,3) rectangle (3,4);
\fill [Vfill] (3,2) rectangle (4,3);
\draw (0,1) -- (5,1);
\draw (0,2) -- (5,2);
\draw (0,3) -- (5,3);
\draw (0,4) -- (5,4);
\draw (1,0) -- (1,5);
\draw (2,0) -- (2,5);
\draw (3,0) -- (3,5);
\draw (4,0) -- (4,5);
\draw (1.5,1.5) node {$\varnothing$};
\draw (1.5,3.5) node {$\varnothing$};
\draw (3.5,3.5) node {$\varnothing$};
\draw (3.5,1.5) node {$\varnothing$};
\draw (0.5,0.5) node {$\varnothing$};
\draw (4.5,4.5) node {$\varnothing$};
\Hpoint{1}{3};
\draw (0.5,2.75) node {{\tiny $i_{RG}$}};
\Vpoint{2}{4};
\draw (2,4.5) node {{\tiny $j_{RG}$}};
\Vpoint{3}{1};
\draw (3,0.6) node {{\tiny $i_{GR}$}};
\Hpoint{4}{2};
\draw (4.75,1.75) node {{\tiny $j_{GR}$}};
\draw (4.5,3.5) node {$A$};
\draw (1.5,0.5) node {$B$};
\draw (0.5,1.5) node {$C$};
\draw (3.5,4.5) node {$D$};
\draw (2.5,2.5) node {$*$};
\zoneGR{5}{0}{1}
\zoneRG{1}{4}{1}

\end{tikzpicture}
\begin{tikzpicture}[scale=.5]
\useasboundingbox (0,-0.5) (7.75,6);
\fill [Hfill] (0,2) rectangle (1,4);
\fill [Vfill] (1,4) rectangle (3,5);
\fill [Hfill] (4,1) rectangle (5,4);
\fill [Vfill] (1,0) rectangle (4,1);
\fill [Hfill] (2,1) rectangle (3,2);
\fill [Vfill] (1,2) rectangle (2,3);
\fill [Hfill] (2,3) rectangle (3,4);
\fill [Vfill] (3,2) rectangle (4,3);
\draw (0,1) -- (5,1);
\draw (0,2) -- (5,2);
\draw (0,3) -- (5,3);
\draw (0,4) -- (5,4);
\draw (1,0) -- (1,5);
\draw (2,0) -- (2,5);
\draw (3,0) -- (3,5);
\draw (4,0) -- (4,5);
\draw (1.5,1.5) node {$\varnothing$};
\draw (1.5,3.5) node {$\varnothing$};
\draw (3.5,3.5) node {$\varnothing$};
\draw (3.5,1.5) node {$\varnothing$};
\draw (0.5,0.5) node {$\varnothing$};
\draw (4.5,4.5) node {$\varnothing$};
\Hpoint{1}{3};
\draw (0.5,2.75) node {{\tiny $i_{RG}$}};
\Vpoint{2}{4};
\draw (2,4.5) node {{\tiny $j_{RG}$}};
\Vpoint{3}{1};
\draw (3,0.6) node {{\tiny $i_{GR}$}};
\Hpoint{4}{2};
\draw (4.75,1.75) node {{\tiny $j_{GR}$}};
\draw (4.5,3.5) node {$A$};
\draw (1.5,0.5) node {$B$};
\draw (0.5,1.5) node {$\varnothing$};
\draw (3.5,4.5) node {$\varnothing$};
\draw (2.5,2.5) node {$*$};
\zoneGR{5}{0}{1}
\zoneRG{1}{4}{1}

\end{tikzpicture}
\begin{tikzpicture}[scale=.5]
\useasboundingbox (0,-0.5) (7.75,6);
\fill [Hfill] (0,2) rectangle (1,4);
\fill [Vfill] (1,4) rectangle (3,5);
\fill [Hfill] (4,1) rectangle (5,4);
\fill [Vfill] (1,0) rectangle (4,1);
\fill [Hfill] (2,1) rectangle (3,4);
\draw (0,1) -- (5,1);
\draw (0,2) -- (5,2);
\draw (0,3) -- (5,3);
\draw (0,4) -- (5,4);
\draw (1,0) -- (1,5);
\draw (2,0) -- (2,5);
\draw (3,0) -- (3,5);
\draw (4,0) -- (4,5);
\draw (1.5,1.5) node {$\varnothing$};
\draw (1.5,3.5) node {$\varnothing$};
\draw (3.5,3.5) node {$\varnothing$};
\draw (3.5,1.5) node {$\varnothing$};
\draw (0.5,0.5) node {$\varnothing$};
\draw (0.5,1.5) node {$\varnothing$};
\draw (3.5,4.5) node {$\varnothing$};
\draw (4.5,4.5) node {$\varnothing$};
\draw (1.5,2.5) node {$\varnothing$};
\draw (3.5,2.5) node {$\varnothing$};
\Hpoint{1}{3};
\draw (0.5,2.75) node {{\tiny $i_{RG}$}};
\Vpoint{2}{4};
\draw (2,4.5) node {{\tiny $j_{RG}$}};
\Vpoint{3}{1};
\draw (3,0.6) node {{\tiny $i_{GR}$}};
\Hpoint{4}{2};
\draw (4.75,1.75) node {{\tiny $j_{GR}$}};
\Hpoint{4.5}{3.5};
\draw (4.8,3.8) node {{\tiny $x$}};
\zoneGR{5}{0}{1}
\zoneRG{1}{4}{1}

\end{tikzpicture}
\begin{tikzpicture}[scale=.5]
\useasboundingbox (0,-0.5) (6,6);
\fill [Hfill] (0,2) rectangle (1,4);
\fill [Vfill] (1,4) rectangle (3,5);
\fill [Hfill] (4,1) rectangle (5,4);
\fill [Vfill] (1,0) rectangle (4,1);
\fill [Vfill] (1,2) rectangle (4,3);
\draw (0,1) -- (5,1);
\draw (0,2) -- (5,2);
\draw (0,3) -- (5,3);
\draw (0,4) -- (5,4);
\draw (1,0) -- (1,5);
\draw (2,0) -- (2,5);
\draw (3,0) -- (3,5);
\draw (4,0) -- (4,5);
\draw (1.5,1.5) node {$\varnothing$};
\draw (1.5,3.5) node {$\varnothing$};
\draw (3.5,3.5) node {$\varnothing$};
\draw (3.5,1.5) node {$\varnothing$};
\draw (0.5,0.5) node {$\varnothing$};
\draw (0.5,1.5) node {$\varnothing$};
\draw (3.5,4.5) node {$\varnothing$};
\draw (4.5,4.5) node {$\varnothing$};
\draw (2.5,1.5) node {$\varnothing$};
\draw (2.5,3.5) node {$\varnothing$};
\Hpoint{1}{3};
\draw (0.5,2.75) node {{\tiny $i_{RG}$}};
\Vpoint{2}{4};
\draw (2,4.5) node {{\tiny $j_{RG}$}};
\Vpoint{3}{1};
\draw (3,0.6) node {{\tiny $i_{GR}$}};
\Hpoint{4}{2};
\draw (4.75,1.75) node {{\tiny $j_{GR}$}};
\Vpoint{1.5}{0.5};
\draw (1.75,0.75) node {{\tiny $y$}};
\zoneGR{5}{0}{1}
\zoneRG{1}{4}{1}

\end{tikzpicture}

\caption{There exist increasing sequences labeled RG and GR\label{fig:RGGR1}}
\end{center}
\end{figure}
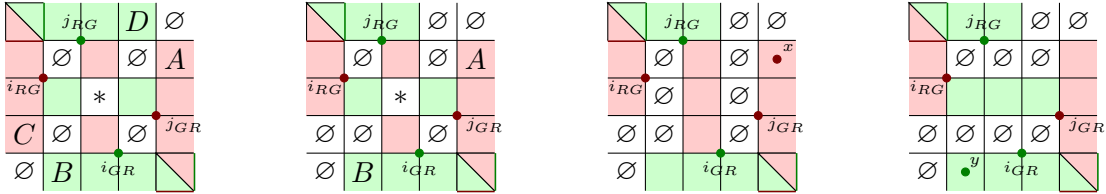

Applying rule (\rmnum{7}) to $i_{RG}$ and $j_{GR}$ implies that zone $C$ is empty. 
Similary rule (\rmnum{8}) applied to $j_{GR}$ and $i_{GR}$ proves that zone $D$ is empty.
If there exists a point $x$ in zone $A$, then applying rule (\rmnum{2}) to $j_{GR}$ and $x$, all points in $*$ are determined 
-- they lie in \R -- as shown in the third diagram. 
Symetrically, if there exists a point $y$ in $B$ then applying rule (\rmnum{1}) to $y$ and $i_{GR}$, all points in $*$ should be in \G -- see diagram $4$ --.

Thus this leaves us with the case where both $A$ and $B$ are empty. 
We show that this case is not possible.

\paragraph{$A$ and $B$ are empty}

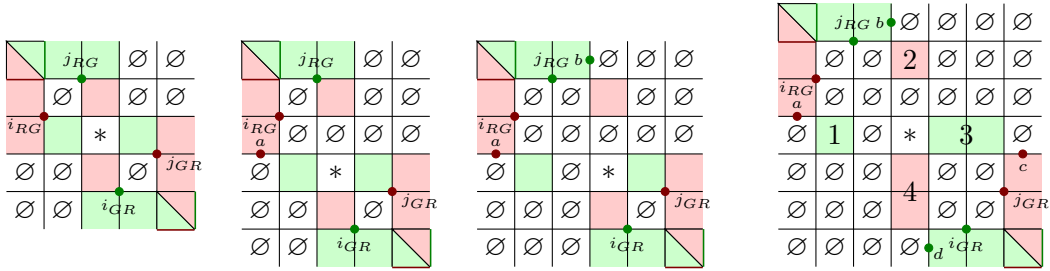
\begin{figure}[H]
\begin{center}
\begin{tikzpicture}[scale=.5]
\useasboundingbox (0,-1.5) (6,6);
\fill [Hfill] (0,2) rectangle (1,4);
\fill [Vfill] (1,4) rectangle (3,5);
\fill [Hfill] (4,1) rectangle (5,3);
\fill [Vfill] (2,0) rectangle (4,1);
\fill [Hfill] (2,1) rectangle (3,2);
\fill [Vfill] (1,2) rectangle (2,3);
\fill [Hfill] (2,3) rectangle (3,4);
\fill [Vfill] (3,2) rectangle (4,3);
\draw (0,1) -- (5,1);
\draw (0,2) -- (5,2);
\draw (0,3) -- (5,3);
\draw (0,4) -- (5,4);
\draw (1,0) -- (1,5);
\draw (2,0) -- (2,5);
\draw (3,0) -- (3,5);
\draw (4,0) -- (4,5);
\draw (1.5,1.5) node {$\varnothing$};
\draw (1.5,3.5) node {$\varnothing$};
\draw (3.5,3.5) node {$\varnothing$};
\draw (3.5,1.5) node {$\varnothing$};
\draw (0.5,0.5) node {$\varnothing$};
\draw (4.5,4.5) node {$\varnothing$};
\Hpoint{1}{3};
\draw (0.5,2.75) node {{\tiny $i_{RG}$}};
\Vpoint{2}{4};
\draw (2,4.5) node {{\tiny $j_{RG}$}};
\Vpoint{3}{1};
\draw (3,0.6) node {{\tiny $i_{GR}$}};
\Hpoint{4}{2};
\draw (4.75,1.75) node {{\tiny $j_{GR}$}};
\draw (4.5,3.5) node {$\varnothing$};
\draw (1.5,0.5) node {$\varnothing$};
\draw (0.5,1.5) node {$\varnothing$};
\draw (3.5,4.5) node {$\varnothing$};
\draw (2.5,2.5) node {$*$};
\zoneRG{1}{4}{1}
\zoneGR{5}{0}{1}
\end{tikzpicture}
\begin{tikzpicture}[scale=.5]
\useasboundingbox (0,-0.5) (6,6);
\fill [Hfill] (0,3) rectangle (1,5);
\fill [Vfill] (1,5) rectangle (3,6);
\fill [Hfill] (4,1) rectangle (5,3);
\fill [Vfill] (2,0) rectangle (4,1);
\fill [Hfill] (2,1) rectangle (3,2);
\fill [Vfill] (1,2) rectangle (2,3);
\fill [Hfill] (2,4) rectangle (3,5);
\fill [Vfill] (3,2) rectangle (4,3);
\draw (0,1) -- (5,1);
\draw (0,2) -- (5,2);
\draw (0,3) -- (5,3);
\draw (0,4) -- (5,4);
\draw (0,5) -- (5,5);
\draw (1,0) -- (1,6);
\draw (2,0) -- (2,6);
\draw (3,0) -- (3,6);
\draw (4,0) -- (4,6);
\draw (1.5,1.5) node {$\varnothing$};
\draw (1.5,4.5) node {$\varnothing$};
\draw (3.5,4.5) node {$\varnothing$};
\draw (3.5,1.5) node {$\varnothing$};
\draw (0.5,0.5) node {$\varnothing$};
\draw (4.5,5.5) node {$\varnothing$};
\Hpoint{1}{4};
\draw (0.5,3.75) node {{\tiny $i_{RG}$}};
\Hpoint{0.5}{3};
\draw (0.5,3.3) node {{\tiny $a$}};
\Vpoint{2}{5};
\draw (2,5.5) node {{\tiny $j_{RG}$}};
\Vpoint{3}{1};
\draw (3,0.6) node {{\tiny $i_{GR}$}};
\Hpoint{4}{2};
\draw (4.75,1.75) node {{\tiny $j_{GR}$}};
\draw (4.5,4.5) node {$\varnothing$};
\draw (1.5,0.5) node {$\varnothing$};
\draw (0.5,1.5) node {$\varnothing$};
\draw (3.5,5.5) node {$\varnothing$};
\draw (2.5,2.5) node {$*$};
\draw (1.5,3.5) node {$\varnothing$};
\draw (2.5,3.5) node {$\varnothing$};
\draw (3.5,3.5) node {$\varnothing$};
\draw (4.5,3.5) node {$\varnothing$};
\draw (0.5,2.5) node {$\varnothing$};
\zoneRG{1}{5}{1}
\zoneGR{5}{0}{1}
\end{tikzpicture}
\begin{tikzpicture}[scale=.5]
\useasboundingbox (0,-0.5) (7.75,6.5);
\fill [Hfill] (0,3) rectangle (1,5);
\fill [Vfill] (1,5) rectangle (3,6);
\fill [Hfill] (5,1) rectangle (6,3);
\fill [Vfill] (3,0) rectangle (5,1);
\fill [Hfill] (3,1) rectangle (4,2);
\fill [Vfill] (1,2) rectangle (2,3);
\fill [Hfill] (3,4) rectangle (4,5);
\fill [Vfill] (4,2) rectangle (5,3);
\draw (0,1) -- (6,1);
\draw (0,2) -- (6,2);
\draw (0,3) -- (6,3);
\draw (0,4) -- (6,4);
\draw (0,5) -- (6,5);
\draw (1,0) -- (1,6);
\draw (2,0) -- (2,6);
\draw (3,0) -- (3,6);
\draw (4,0) -- (4,6);
\draw (5,0) -- (5,6);
\draw (1.5,1.5) node {$\varnothing$};
\draw (1.5,4.5) node {$\varnothing$};
\draw (4.5,4.5) node {$\varnothing$};
\draw (4.5,1.5) node {$\varnothing$};
\draw (0.5,0.5) node {$\varnothing$};
\draw (5.5,5.5) node {$\varnothing$};
\Hpoint{1}{4};
\draw (0.5,3.75) node {{\tiny $i_{RG}$}};
\Hpoint{0.5}{3};
\draw (0.5,3.3) node {{\tiny $a$}};
\Vpoint{2}{5};
\draw (2,5.5) node {{\tiny $j_{RG}$}};
\Vpoint{3}{5.5};
\draw (2.7,5.5) node {{\tiny $b$}};
\Vpoint{4}{1};
\draw (4,0.6) node {{\tiny $i_{GR}$}};
\Hpoint{5}{2};
\draw (5.75,1.75) node {{\tiny $j_{GR}$}};
\draw (5.5,4.5) node {$\varnothing$};
\draw (1.5,0.5) node {$\varnothing$};
\draw (0.5,1.5) node {$\varnothing$};
\draw (4.5,5.5) node {$\varnothing$};
\draw (3.5,2.5) node {$*$};
\draw (1.5,3.5) node {$\varnothing$};
\draw (3.5,3.5) node {$\varnothing$};
\draw (4.5,3.5) node {$\varnothing$};
\draw (5.5,3.5) node {$\varnothing$};
\draw (0.5,2.5) node {$\varnothing$};
\draw (2.5,0.5) node {$\varnothing$};
\draw (2.5,1.5) node {$\varnothing$};
\draw (2.5,2.5) node {$\varnothing$};
\draw (2.5,3.5) node {$\varnothing$};
\draw (2.5,4.5) node {$\varnothing$};
\draw (3.5,5.5) node {$\varnothing$};
\zoneRG{1}{5}{1}
\zoneGR{6}{0}{1}
\end{tikzpicture}
\begin{tikzpicture}[scale=.5]
\useasboundingbox (0,-0.5) (7,7.5);
\fill [Hfill] (0,4) rectangle (1,6);
\fill [Vfill] (1,6) rectangle (3,7);
\fill [Hfill] (6,1) rectangle (7,3);
\fill [Vfill] (4,0) rectangle (6,1);
\fill [Hfill] (3,1) rectangle (4,3);
\fill [Vfill] (1,3) rectangle (2,4);
\fill [Hfill] (3,5) rectangle (4,6);
\fill [Vfill] (4,3) rectangle (6,4);
\draw (0,1) -- (7,1);
\draw (0,2) -- (7,2);
\draw (0,3) -- (7,3);
\draw (0,4) -- (7,4);
\draw (0,5) -- (7,5);
\draw (0,6) -- (7,6);
\draw (1,0) -- (1,7);
\draw (2,0) -- (2,7);
\draw (3,0) -- (3,7);
\draw (4,0) -- (4,7);
\draw (5,0) -- (5,7);
\draw (6,0) -- (6,7);
\draw (1.5,1.5) node {$\varnothing$};
\draw (1.5,5.5) node {$\varnothing$};
\draw (5.5,5.5) node {$\varnothing$};
\draw (5.5,1.5) node {$\varnothing$};
\draw (0.5,0.5) node {$\varnothing$};
\draw (6.5,6.5) node {$\varnothing$};
\Hpoint{1}{5};
\draw (0.5,4.75) node {{\tiny $i_{RG}$}};
\Hpoint{0.5}{4};
\draw (0.5,4.3) node {{\tiny $a$}};
\Vpoint{2}{6};
\draw (2,6.5) node {{\tiny $j_{RG}$}};
\Vpoint{3}{6.5};
\draw (2.7,6.5) node {{\tiny $b$}};
\Vpoint{5}{1};
\draw (5,0.6) node {{\tiny $i_{GR}$}};
\Vpoint{4}{0.5};
\draw (4.25,0.35) node {{\tiny $d$}};
\Hpoint{6}{2};
\draw (6.75,1.75) node {{\tiny $j_{GR}$}};
\Hpoint{6.5}{3};
\draw (6.5,2.65) node {{\tiny $c$}};
\draw (6.5,5.5) node {$\varnothing$};
\draw (1.5,0.5) node {$\varnothing$};
\draw (3.5,0.5) node {$\varnothing$};
\draw (0.5,1.5) node {$\varnothing$};
\draw (5.5,6.5) node {$\varnothing$};
\draw (3.5,3.5) node {$*$};
\draw (1.5,4.5) node {$\varnothing$};
\draw (3.5,4.5) node {$\varnothing$};
\draw (5.5,4.5) node {$\varnothing$};
\draw (6.5,4.5) node {$\varnothing$};
\draw (0.5,3.5) node {$\varnothing$};
\draw (2.5,0.5) node {$\varnothing$};
\draw (2.5,1.5) node {$\varnothing$};
\draw (2.5,3.5) node {$\varnothing$};
\draw (2.5,4.5) node {$\varnothing$};
\draw (2.5,5.5) node {$\varnothing$};
\draw (6.5,3.5) node {$\varnothing$};
\draw (3.5,6.5) node {$\varnothing$};
\draw (0.5,2.5) node {$\varnothing$};
\draw (1.5,2.5) node {$\varnothing$};
\draw (2.5,2.5) node {$\varnothing$};
\draw (5.5,2.5) node {$\varnothing$};
\draw (4.5,1.5) node {$\varnothing$};
\draw (4.5,2.5) node {$\varnothing$};
\draw (4.5,4.5) node {$\varnothing$};
\draw (4.5,5.5) node {$\varnothing$};
\draw (4.5,6.5) node {$\varnothing$};
\draw (1.5,3.5) node {$1$};
\draw (3.5,5.5) node {$2$};
\draw (5,3.5) node {$3$};
\draw (3.5,2) node {$4$};
\zoneRG{1}{6}{1}
\zoneGR{7}{0}{1}
\end{tikzpicture}
\caption{$A$ and $B$ are empty\label{fig:ABempty}}
\end{center}
\end{figure}
Then the permutation is colored as shown in the first diagram of Figure~\ref{fig:ABempty}.
Let $a$ be the lowest point among points to the left of $i_{RG}$ ($a$ may be equal to $i_{RG}$). 
Rule (\rmnum{2}) applied to $a$ and $i_{RG}$ implies the coloring shown in the second diagram 
-- notice that if $a=i_{RG}$, the line between $a$ and $i_{GR}$ does not exist --. 
Similarly, define $b$ as the rightmost point among points above $j_{GR}$ ($b$ may be equal to $j_{GR}$). 
Rule (\rmnum{1}) applied to $b$ and $j_{RG}$ leads to the third diagram.
At last we consider the topmost point $c$ among points to the right of $j_{GR}$ (maybe $c=j_{GR}$) and we apply rule (\rmnum{2}) to $c$ and $j_{GR}$. 
We also introduce $d$ as the leftmost point among points below $i_{GR}$ (maybe $d=i_{GR}$). 
Rule (\rmnum{1}) applied to $d$ and $i_{GR}$ leads to the last diagram where different zones are numbered.
We now study different cases according whether zone $1$ is empty or not, and we prove that both are excluded.

\paragraph{Zone $1$ is empty}
Suppose that zone $1$ is empty. As $\sigma$ is $\ominus$-indecomposable then zone $2$ must contain at least one point. Denote by $x$ the rightmost point of this zone. Figure~\ref{fig:RGGR1empty} illustrates the proof.
\begin{figure}[H]
\begin{center}
\begin{tikzpicture}[scale=.5]
\useasboundingbox (0,-0.5) (8.75,7.5);
\fill [Hfill] (0,4) rectangle (1,6);
\fill [Vfill] (1,6) rectangle (3,7);
\fill [Hfill] (7,1) rectangle (8,3);
\fill [Vfill] (5,0) rectangle (7,1);
\fill [Hfill] (3,1) rectangle (5,3);
\fill [Hfill] (3,5) rectangle (4,6);
\fill [Vfill] (5,3) rectangle (7,4);
\draw (0,1) -- (8,1);
\draw (0,2) -- (8,2);
\draw (0,3) -- (8,3);
\draw (0,4) -- (8,4);
\draw (0,5) -- (8,5);
\draw (0,6) -- (8,6);
\draw (1,0) -- (1,7);
\draw (2,0) -- (2,7);
\draw (3,0) -- (3,7);
\draw (4,0) -- (4,7);
\draw (5,0) -- (5,7);
\draw (6,0) -- (6,7);
\draw (7,0) -- (7,7);
\draw (1.5,1.5) node {$\varnothing$};
\draw (1.5,3.5) node {$\varnothing$};
\draw (1.5,5.5) node {$\varnothing$};
\draw (6.5,5.5) node {$\varnothing$};
\draw (6.5,1.5) node {$\varnothing$};
\draw (0.5,0.5) node {$\varnothing$};
\draw (7.5,6.5) node {$\varnothing$};
\Hpoint{1}{5};
\draw (0.5,4.75) node {{\tiny $i_{RG}$}};
\Hpoint{0.5}{4};
\draw (0.5,4.3) node {{\tiny $a$}};
\Vpoint{2}{6};
\draw (2,6.5) node {{\tiny $j_{RG}$}};
\Vpoint{3}{6.5};
\draw (2.7,6.5) node {{\tiny $b$}};
\Vpoint{6}{1};
\draw (6,0.6) node {{\tiny $i_{GR}$}};
\Vpoint{5}{0.5};
\draw (5.25,0.35) node {{\tiny $d$}};
\Hpoint{7}{2};
\draw (7.75,1.75) node {{\tiny $j_{GR}$}};
\Hpoint{7.5}{3};
\draw (7.5,2.65) node {{\tiny $c$}};
\Hpoint{4}{5.5};
\draw (4.2,5.8) node {{\tiny $x$}};
\draw (7.5,5.5) node {$\varnothing$};
\draw (1.5,0.5) node {$\varnothing$};
\draw (3.5,0.5) node {$\varnothing$};
\draw (0.5,1.5) node {$\varnothing$};
\draw (6.5,6.5) node {$\varnothing$};
\draw (3.5,3.5) node {$*$};
\draw (1.5,4.5) node {$\varnothing$};
\draw (3.5,4.5) node {$\varnothing$};
\draw (6.5,4.5) node {$\varnothing$};
\draw (7.5,4.5) node {$\varnothing$};
\draw (0.5,3.5) node {$\varnothing$};
\draw (2.5,0.5) node {$\varnothing$};
\draw (2.5,1.5) node {$\varnothing$};
\draw (2.5,3.5) node {$\varnothing$};
\draw (2.5,4.5) node {$\varnothing$};
\draw (2.5,5.5) node {$\varnothing$};
\draw (7.5,3.5) node {$\varnothing$};
\draw (3.5,6.5) node {$\varnothing$};
\draw (0.5,2.5) node {$\varnothing$};
\draw (1.5,2.5) node {$\varnothing$};
\draw (2.5,2.5) node {$\varnothing$};
\draw (6.5,2.5) node {$\varnothing$};
\draw (5.5,1.5) node {$\varnothing$};
\draw (5.5,2.5) node {$\varnothing$};
\draw (5.5,4.5) node {$\varnothing$};
\draw (5.5,5.5) node {$\varnothing$};
\draw (5.5,6.5) node {$\varnothing$};
\draw (4.5,0.5) node {$\varnothing$};
\draw (4.5,3.5) node {$*$};
\draw (4.5,4.5) node {$\varnothing$};
\draw (4.5,6.5) node {$\varnothing$};
\draw (4.5,5.5) node {$\varnothing$};
\draw (3.5,5.5) node {$2$};
\draw (6,3.5) node {$3$};
\draw (4,2) node {$4$};
\zoneRG{1}{6}{1}
\zoneGR{8}{0}{1}
\end{tikzpicture}
\begin{tikzpicture}[scale=.5]
\useasboundingbox (0,-0.5) (8.75,7.5);
\fill [Hfill] (0,4) rectangle (1,6);
\fill [Vfill] (1,6) rectangle (3,7);
\fill [Hfill] (7,1) rectangle (8,3);
\fill [Vfill] (5,0) rectangle (7,1);
\fill [Hfill] (4,1) rectangle (5,3);
\fill [Hfill] (3,5) rectangle (4,6);
\fill [Vfill] (5,3) rectangle (7,4);
\draw (0,1) -- (8,1);
\draw (0,2) -- (8,2);
\draw (0,3) -- (8,3);
\draw (0,4) -- (8,4);
\draw (0,5) -- (8,5);
\draw (0,6) -- (8,6);
\draw (1,0) -- (1,7);
\draw (2,0) -- (2,7);
\draw (3,0) -- (3,7);
\draw (4,0) -- (4,7);
\draw (5,0) -- (5,7);
\draw (6,0) -- (6,7);
\draw (7,0) -- (7,7);
\draw (1.5,1.5) node {$\varnothing$};
\draw (1.5,3.5) node {$\varnothing$};
\draw (1.5,5.5) node {$\varnothing$};
\draw (6.5,5.5) node {$\varnothing$};
\draw (6.5,1.5) node {$\varnothing$};
\draw (0.5,0.5) node {$\varnothing$};
\draw (7.5,6.5) node {$\varnothing$};
\Hpoint{1}{5};
\draw (0.5,4.75) node {{\tiny $i_{RG}$}};
\Hpoint{0.5}{4};
\draw (0.5,4.3) node {{\tiny $a$}};
\Vpoint{2}{6};
\draw (2,6.5) node {{\tiny $j_{RG}$}};
\Vpoint{3}{6.5};
\draw (2.7,6.5) node {{\tiny $b$}};
\Vpoint{6}{1};
\draw (6,0.6) node {{\tiny $i_{GR}$}};
\Vpoint{5}{0.5};
\draw (5.25,0.35) node {{\tiny $d$}};
\Hpoint{7}{2};
\draw (7.75,1.75) node {{\tiny $j_{GR}$}};
\Hpoint{7.5}{3};
\draw (7.5,2.65) node {{\tiny $c$}};
\Hpoint{4}{5.5};
\draw (4.2,5.8) node {{\tiny $x$}};
\draw (7.5,5.5) node {$\varnothing$};
\draw (1.5,0.5) node {$\varnothing$};
\draw (3.5,0.5) node {$\varnothing$};
\draw (0.5,1.5) node {$\varnothing$};
\draw (6.5,6.5) node {$\varnothing$};
\draw (3.5,3.5) node {$*$};
\draw (1.5,4.5) node {$\varnothing$};
\draw (3.5,4.5) node {$\varnothing$};
\draw (6.5,4.5) node {$\varnothing$};
\draw (7.5,4.5) node {$\varnothing$};
\draw (0.5,3.5) node {$\varnothing$};
\draw (2.5,0.5) node {$\varnothing$};
\draw (2.5,1.5) node {$\varnothing$};
\draw (2.5,3.5) node {$\varnothing$};
\draw (2.5,4.5) node {$\varnothing$};
\draw (2.5,5.5) node {$\varnothing$};
\draw (7.5,3.5) node {$\varnothing$};
\draw (3.5,6.5) node {$\varnothing$};
\draw (0.5,2.5) node {$\varnothing$};
\draw (1.5,2.5) node {$\varnothing$};
\draw (2.5,2.5) node {$\varnothing$};
\draw (6.5,2.5) node {$\varnothing$};
\draw (5.5,1.5) node {$\varnothing$};
\draw (5.5,2.5) node {$\varnothing$};
\draw (5.5,4.5) node {$\varnothing$};
\draw (5.5,5.5) node {$\varnothing$};
\draw (5.5,6.5) node {$\varnothing$};
\draw (4.5,0.5) node {$\varnothing$};
\draw (4.5,3.5) node {$*$};
\draw (4.5,4.5) node {$\varnothing$};
\draw (4.5,6.5) node {$\varnothing$};
\draw (4.5,5.5) node {$\varnothing$};
\draw (3.5,1.5) node {$\varnothing$};
\draw (3.5,2.5) node {$\varnothing$};
\draw (3.5,5.5) node {$2$};
\draw (6,3.5) node {$3$};
\draw (4.5,2) node {$4$};
\zoneRG{1}{6}{1}
\zoneGR{8}{0}{1}

\end{tikzpicture}
\begin{tikzpicture}[scale=.5]
\useasboundingbox (0,-0.5) (8.75,7.5);
\fill [Hfill] (0,4) rectangle (1,6);
\fill [Vfill] (1,6) rectangle (3,7);
\fill [Hfill] (7,1) rectangle (8,3);
\fill [Vfill] (5,0) rectangle (7,1);
\fill [Hfill] (4,1) rectangle (5,3);
\fill [Hfill] (3,5) rectangle (4,6);
\fill [Vfill] (5,3) rectangle (7,4);
\fill [Hfill] (3,3) rectangle (4,4);
\draw (0,1) -- (8,1);
\draw (0,2) -- (8,2);
\draw (0,3) -- (8,3);
\draw (0,4) -- (8,4);
\draw (0,5) -- (8,5);
\draw (0,6) -- (8,6);
\draw (1,0) -- (1,7);
\draw (2,0) -- (2,7);
\draw (3,0) -- (3,7);
\draw (4,0) -- (4,7);
\draw (5,0) -- (5,7);
\draw (6,0) -- (6,7);
\draw (7,0) -- (7,7);
\draw (1.5,1.5) node {$\varnothing$};
\draw (1.5,3.5) node {$\varnothing$};
\draw (1.5,5.5) node {$\varnothing$};
\draw (6.5,5.5) node {$\varnothing$};
\draw (6.5,1.5) node {$\varnothing$};
\draw (0.5,0.5) node {$\varnothing$};
\draw (7.5,6.5) node {$\varnothing$};
\Hpoint{1}{5};
\draw (0.5,4.75) node {{\tiny $i_{RG}$}};
\Hpoint{0.5}{4};
\draw (0.5,4.3) node {{\tiny $a$}};
\Vpoint{2}{6};
\draw (2,6.5) node {{\tiny $j_{RG}$}};
\Vpoint{3}{6.5};
\draw (2.7,6.5) node {{\tiny $b$}};
\Vpoint{6}{1};
\draw (6,0.6) node {{\tiny $i_{GR}$}};
\Vpoint{5}{0.5};
\draw (5.25,0.35) node {{\tiny $d$}};
\Hpoint{7}{2};
\draw (7.75,1.75) node {{\tiny $j_{GR}$}};
\Hpoint{7.5}{3};
\draw (7.5,2.65) node {{\tiny $c$}};
\Hpoint{4}{5.5};
\draw (4.2,5.8) node {{\tiny $x$}};
\draw (7.5,5.5) node {$\varnothing$};
\draw (1.5,0.5) node {$\varnothing$};
\draw (3.5,0.5) node {$\varnothing$};
\draw (0.5,1.5) node {$\varnothing$};
\draw (6.5,6.5) node {$\varnothing$};
\draw (3.5,3.5) node {$A$};
\draw (1.5,4.5) node {$\varnothing$};
\draw (3.5,4.5) node {$\varnothing$};
\draw (6.5,4.5) node {$\varnothing$};
\draw (7.5,4.5) node {$\varnothing$};
\draw (0.5,3.5) node {$\varnothing$};
\draw (2.5,0.5) node {$\varnothing$};
\draw (2.5,1.5) node {$\varnothing$};
\draw (2.5,3.5) node {$\varnothing$};
\draw (2.5,4.5) node {$\varnothing$};
\draw (2.5,5.5) node {$\varnothing$};
\draw (7.5,3.5) node {$\varnothing$};
\draw (3.5,6.5) node {$\varnothing$};
\draw (0.5,2.5) node {$\varnothing$};
\draw (1.5,2.5) node {$\varnothing$};
\draw (2.5,2.5) node {$\varnothing$};
\draw (6.5,2.5) node {$\varnothing$};
\draw (5.5,1.5) node {$\varnothing$};
\draw (5.5,2.5) node {$\varnothing$};
\draw (5.5,4.5) node {$\varnothing$};
\draw (5.5,5.5) node {$\varnothing$};
\draw (5.5,6.5) node {$\varnothing$};
\draw (4.5,0.5) node {$\varnothing$};
\draw (4.5,3.5) node {$*$};
\draw (4.5,4.5) node {$\varnothing$};
\draw (4.5,6.5) node {$\varnothing$};
\draw (4.5,5.5) node {$\varnothing$};
\draw (3.5,1.5) node {$\varnothing$};
\draw (3.5,2.5) node {$\varnothing$};
\draw (3.5,5.5) node {$2$};
\draw (6,3.5) node {$3$};
\draw (4.5,2) node {$4$};
\zoneRG{1}{6}{1}
\zoneGR{8}{0}{1}
\end{tikzpicture}

\caption{Increasing sequences RG and GR exist and $1$ is empty.\label{fig:RGGR1empty}}
\end{center}
\end{figure}
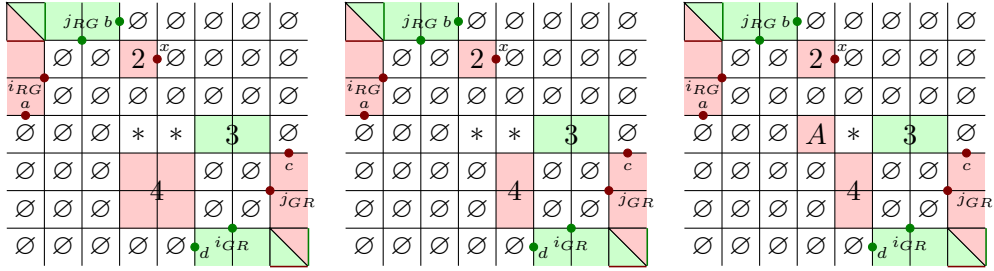

Rule (\rmnum{7}) applied to $x$ and $c$ leads to the second diagram.
Moreover as ($i_{GR}$, $j_{GR}$) is the topmost and leftmost increasing sequence GR, all points to the lower left quadrant of $x$ lie in \R,
leading to the third diagram where we define a zone $A$.

\paragraph{Zone $4$ is not empty}

We prove that this case is not possible.
If zone $4$ is not empty, let $y$ be its leftmost point (above or below $j_{GR}$) as illustrated in the first diagram of Figure~\ref{fig:zone4nonempty}. 

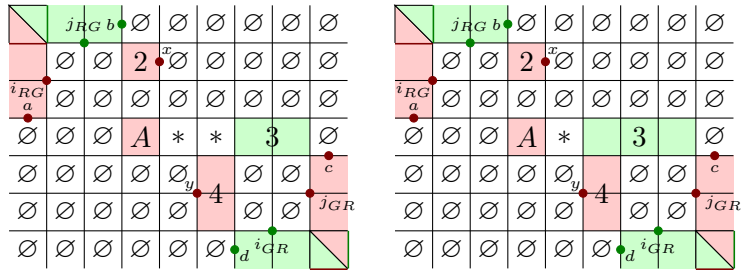
\begin{figure}[H]
\begin{center}
\begin{tikzpicture}[scale=.5]
\useasboundingbox (0,-0.5) (10,7.5);
\fill [Hfill] (0,4) rectangle (1,6);
\fill [Vfill] (1,6) rectangle (3,7);
\fill [Hfill] (8,1) rectangle (9,3);
\fill [Vfill] (6,0) rectangle (8,1);
\fill [Hfill] (5,1) rectangle (6,3);
\fill [Hfill] (3,5) rectangle (4,6);
\fill [Vfill] (6,3) rectangle (8,4);
\fill [Hfill] (3,3) rectangle (4,4);
\draw (0,1) -- (9,1);
\draw (0,2) -- (9,2);
\draw (0,3) -- (9,3);
\draw (0,4) -- (9,4);
\draw (0,5) -- (9,5);
\draw (0,6) -- (9,6);
\draw (1,0) -- (1,7);
\draw (2,0) -- (2,7);
\draw (3,0) -- (3,7);
\draw (4,0) -- (4,7);
\draw (5,0) -- (5,7);
\draw (6,0) -- (6,7);
\draw (7,0) -- (7,7);
\draw (8,0) -- (8,7);
\draw (1.5,1.5) node {$\varnothing$};
\draw (1.5,3.5) node {$\varnothing$};
\draw (1.5,5.5) node {$\varnothing$};
\draw (7.5,5.5) node {$\varnothing$};
\draw (7.5,1.5) node {$\varnothing$};
\draw (0.5,0.5) node {$\varnothing$};
\draw (8.5,6.5) node {$\varnothing$};
\Hpoint{1}{5};
\draw (0.5,4.75) node {{\tiny $i_{RG}$}};
\Hpoint{0.5}{4};
\draw (0.5,4.3) node {{\tiny $a$}};
\Vpoint{2}{6};
\draw (2,6.5) node {{\tiny $j_{RG}$}};
\Vpoint{3}{6.5};
\draw (2.7,6.5) node {{\tiny $b$}};
\Vpoint{7}{1};
\draw (7,0.6) node {{\tiny $i_{GR}$}};
\Vpoint{6}{0.5};
\draw (6.25,0.35) node {{\tiny $d$}};
\Hpoint{8}{2};
\draw (8.75,1.75) node {{\tiny $j_{GR}$}};
\Hpoint{8.5}{3};
\draw (8.5,2.65) node {{\tiny $c$}};
\Hpoint{4}{5.5};
\draw (4.2,5.8) node {{\tiny $x$}};
\Hpoint{5}{2};
\draw (4.8,2.2) node {{\tiny $y$}};
\draw (8.5,5.5) node {$\varnothing$};
\draw (1.5,0.5) node {$\varnothing$};
\draw (3.5,0.5) node {$\varnothing$};
\draw (0.5,1.5) node {$\varnothing$};
\draw (7.5,6.5) node {$\varnothing$};
\draw (3.5,3.5) node {$A$};
\draw (1.5,4.5) node {$\varnothing$};
\draw (3.5,4.5) node {$\varnothing$};
\draw (7.5,4.5) node {$\varnothing$};
\draw (8.5,4.5) node {$\varnothing$};
\draw (0.5,3.5) node {$\varnothing$};
\draw (2.5,0.5) node {$\varnothing$};
\draw (2.5,1.5) node {$\varnothing$};
\draw (2.5,3.5) node {$\varnothing$};
\draw (2.5,4.5) node {$\varnothing$};
\draw (2.5,5.5) node {$\varnothing$};
\draw (8.5,3.5) node {$\varnothing$};
\draw (3.5,6.5) node {$\varnothing$};
\draw (0.5,2.5) node {$\varnothing$};
\draw (1.5,2.5) node {$\varnothing$};
\draw (2.5,2.5) node {$\varnothing$};
\draw (7.5,2.5) node {$\varnothing$};
\draw (6.5,1.5) node {$\varnothing$};
\draw (6.5,2.5) node {$\varnothing$};
\draw (6.5,4.5) node {$\varnothing$};
\draw (6.5,5.5) node {$\varnothing$};
\draw (6.5,6.5) node {$\varnothing$};
\draw (4.5,0.5) node {$\varnothing$};
\draw (4.5,1.5) node {$\varnothing$};
\draw (4.5,2.5) node {$\varnothing$};
\draw (5.5,4.5) node {$\varnothing$};
\draw (5.5,5.5) node {$\varnothing$};
\draw (5.5,6.5) node {$\varnothing$};
\draw (5.5,0.5) node {$\varnothing$};
\draw (4.5,3.5) node {$*$};
\draw (5.5,3.5) node {$*$};
\draw (4.5,4.5) node {$\varnothing$};
\draw (4.5,6.5) node {$\varnothing$};
\draw (4.5,5.5) node {$\varnothing$};
\draw (3.5,1.5) node {$\varnothing$};
\draw (3.5,2.5) node {$\varnothing$};
\draw (3.5,5.5) node {$2$};
\draw (7,3.5) node {$3$};
\draw (5.5,2) node {$4$};
\zoneRG{1}{6}{1}
\zoneGR{9}{0}{1}
\end{tikzpicture}
\begin{tikzpicture}[scale=.5]
\useasboundingbox (0,-0.5) (10,7.5);
\fill [Hfill] (0,4) rectangle (1,6);
\fill [Vfill] (1,6) rectangle (3,7);
\fill [Hfill] (8,1) rectangle (9,3);
\fill [Vfill] (6,0) rectangle (8,1);
\fill [Hfill] (5,1) rectangle (6,3);
\fill [Hfill] (3,5) rectangle (4,6);
\fill [Vfill] (5,3) rectangle (8,4);
\fill [Hfill] (3,3) rectangle (4,4);
\draw (0,1) -- (9,1);
\draw (0,2) -- (9,2);
\draw (0,3) -- (9,3);
\draw (0,4) -- (9,4);
\draw (0,5) -- (9,5);
\draw (0,6) -- (9,6);
\draw (1,0) -- (1,7);
\draw (2,0) -- (2,7);
\draw (3,0) -- (3,7);
\draw (4,0) -- (4,7);
\draw (5,0) -- (5,7);
\draw (6,0) -- (6,7);
\draw (7,0) -- (7,7);
\draw (8,0) -- (8,7);
\draw (1.5,1.5) node {$\varnothing$};
\draw (1.5,3.5) node {$\varnothing$};
\draw (1.5,5.5) node {$\varnothing$};
\draw (7.5,5.5) node {$\varnothing$};
\draw (7.5,1.5) node {$\varnothing$};
\draw (0.5,0.5) node {$\varnothing$};
\draw (8.5,6.5) node {$\varnothing$};
\Hpoint{1}{5};
\draw (0.5,4.75) node {{\tiny $i_{RG}$}};
\Hpoint{0.5}{4};
\draw (0.5,4.3) node {{\tiny $a$}};
\Vpoint{2}{6};
\draw (2,6.5) node {{\tiny $j_{RG}$}};
\Vpoint{3}{6.5};
\draw (2.7,6.5) node {{\tiny $b$}};
\Vpoint{7}{1};
\draw (7,0.6) node {{\tiny $i_{GR}$}};
\Vpoint{6}{0.5};
\draw (6.25,0.35) node {{\tiny $d$}};
\Hpoint{8}{2};
\draw (8.75,1.75) node {{\tiny $j_{GR}$}};
\Hpoint{8.5}{3};
\draw (8.5,2.65) node {{\tiny $c$}};
\Hpoint{4}{5.5};
\draw (4.2,5.8) node {{\tiny $x$}};
\Hpoint{5}{2};
\draw (4.8,2.2) node {{\tiny $y$}};
\draw (8.5,5.5) node {$\varnothing$};
\draw (1.5,0.5) node {$\varnothing$};
\draw (3.5,0.5) node {$\varnothing$};
\draw (0.5,1.5) node {$\varnothing$};
\draw (7.5,6.5) node {$\varnothing$};
\draw (3.5,3.5) node {$A$};
\draw (1.5,4.5) node {$\varnothing$};
\draw (3.5,4.5) node {$\varnothing$};
\draw (7.5,4.5) node {$\varnothing$};
\draw (8.5,4.5) node {$\varnothing$};
\draw (0.5,3.5) node {$\varnothing$};
\draw (2.5,0.5) node {$\varnothing$};
\draw (2.5,1.5) node {$\varnothing$};
\draw (2.5,3.5) node {$\varnothing$};
\draw (2.5,4.5) node {$\varnothing$};
\draw (2.5,5.5) node {$\varnothing$};
\draw (8.5,3.5) node {$\varnothing$};
\draw (3.5,6.5) node {$\varnothing$};
\draw (0.5,2.5) node {$\varnothing$};
\draw (1.5,2.5) node {$\varnothing$};
\draw (2.5,2.5) node {$\varnothing$};
\draw (7.5,2.5) node {$\varnothing$};
\draw (6.5,1.5) node {$\varnothing$};
\draw (6.5,2.5) node {$\varnothing$};
\draw (6.5,4.5) node {$\varnothing$};
\draw (6.5,5.5) node {$\varnothing$};
\draw (6.5,6.5) node {$\varnothing$};
\draw (4.5,0.5) node {$\varnothing$};
\draw (4.5,1.5) node {$\varnothing$};
\draw (4.5,2.5) node {$\varnothing$};
\draw (5.5,4.5) node {$\varnothing$};
\draw (5.5,5.5) node {$\varnothing$};
\draw (5.5,6.5) node {$\varnothing$};
\draw (5.5,0.5) node {$\varnothing$};
\draw (4.5,3.5) node {$*$};
\draw (4.5,4.5) node {$\varnothing$};
\draw (4.5,6.5) node {$\varnothing$};
\draw (4.5,5.5) node {$\varnothing$};
\draw (3.5,1.5) node {$\varnothing$};
\draw (3.5,2.5) node {$\varnothing$};
\draw (3.5,5.5) node {$2$};
\draw (6.5,3.5) node {$3$};
\draw (5.5,2) node {$4$};
\zoneRG{1}{6}{1}
\zoneGR{9}{0}{1}
\end{tikzpicture}

\caption{Zone $1$ is empty and zone $4$ is not empty\label{fig:zone4nonempty}}
\end{center}
\end{figure}

We apply rule (\rmnum{2}) to $y$ and $c$ and obtain the second diagram.
But ($i_{RG}$, $j_{RG}$) is lowest-right increasing sequence RG, hence there is no point labeled \G in the above-right quadrant of $y$. Hence zone $3$ is empty which is forbidden as $\sigma$ is $\ominus$-indecomposable.

\paragraph{Zone $4$ is empty}

We prove that this case is also not possible.
Suppose that zone $4$ is empty as illustrated in the first diagram of Figure~\ref{fig:RGGR4empty}.

\begin{figure}[H]
\begin{center}
\begin{tikzpicture}[scale=.5]
\useasboundingbox (0,-0.5) (8.75,7.5);
\fill [Hfill] (0,4) rectangle (1,6);
\fill [Vfill] (1,6) rectangle (3,7);
\fill [Hfill] (7,1) rectangle (8,3);
\fill [Vfill] (5,0) rectangle (7,1);
\fill [Hfill] (3,5) rectangle (4,6);
\fill [Vfill] (5,3) rectangle (7,4);
\fill [Hfill] (3,3) rectangle (4,4);
\draw (0,1) -- (8,1);
\draw (0,2) -- (8,2);
\draw (0,3) -- (8,3);
\draw (0,4) -- (8,4);
\draw (0,5) -- (8,5);
\draw (0,6) -- (8,6);
\draw (1,0) -- (1,7);
\draw (2,0) -- (2,7);
\draw (3,0) -- (3,7);
\draw (4,0) -- (4,7);
\draw (5,0) -- (5,7);
\draw (6,0) -- (6,7);
\draw (7,0) -- (7,7);
\draw (1.5,1.5) node {$\varnothing$};
\draw (1.5,3.5) node {$\varnothing$};
\draw (1.5,5.5) node {$\varnothing$};
\draw (6.5,5.5) node {$\varnothing$};
\draw (6.5,1.5) node {$\varnothing$};
\draw (0.5,0.5) node {$\varnothing$};
\draw (7.5,6.5) node {$\varnothing$};
\Hpoint{1}{5};
\draw (0.5,4.75) node {{\tiny $i_{RG}$}};
\Hpoint{0.5}{4};
\draw (0.5,4.3) node {{\tiny $a$}};
\Vpoint{2}{6};
\draw (2,6.5) node {{\tiny $j_{RG}$}};
\Vpoint{3}{6.5};
\draw (2.7,6.5) node {{\tiny $b$}};
\Vpoint{6}{1};
\draw (6,0.6) node {{\tiny $i_{GR}$}};
\Vpoint{5}{0.5};
\draw (5.25,0.35) node {{\tiny $d$}};
\Hpoint{7}{2};
\draw (7.75,1.75) node {{\tiny $j_{GR}$}};
\Hpoint{7.5}{3};
\draw (7.5,2.65) node {{\tiny $c$}};
\Hpoint{4}{5.5};
\draw (4.2,5.8) node {{\tiny $x$}};
\draw (7.5,5.5) node {$\varnothing$};
\draw (1.5,0.5) node {$\varnothing$};
\draw (3.5,0.5) node {$\varnothing$};
\draw (0.5,1.5) node {$\varnothing$};
\draw (6.5,6.5) node {$\varnothing$};
\draw (3.5,3.5) node {$A$};
\draw (1.5,4.5) node {$\varnothing$};
\draw (3.5,4.5) node {$\varnothing$};
\draw (6.5,4.5) node {$\varnothing$};
\draw (7.5,4.5) node {$\varnothing$};
\draw (0.5,3.5) node {$\varnothing$};
\draw (2.5,0.5) node {$\varnothing$};
\draw (2.5,1.5) node {$\varnothing$};
\draw (2.5,3.5) node {$\varnothing$};
\draw (2.5,4.5) node {$\varnothing$};
\draw (2.5,5.5) node {$\varnothing$};
\draw (7.5,3.5) node {$\varnothing$};
\draw (3.5,6.5) node {$\varnothing$};
\draw (0.5,2.5) node {$\varnothing$};
\draw (1.5,2.5) node {$\varnothing$};
\draw (2.5,2.5) node {$\varnothing$};
\draw (6.5,2.5) node {$\varnothing$};
\draw (5.5,1.5) node {$\varnothing$};
\draw (5.5,2.5) node {$\varnothing$};
\draw (5.5,4.5) node {$\varnothing$};
\draw (5.5,5.5) node {$\varnothing$};
\draw (5.5,6.5) node {$\varnothing$};
\draw (4.5,0.5) node {$\varnothing$};
\draw (4.5,3.5) node {$*$};
\draw (4.5,1.5) node {$\varnothing$};
\draw (4.5,2.5) node {$\varnothing$};
\draw (4.5,4.5) node {$\varnothing$};
\draw (4.5,6.5) node {$\varnothing$};
\draw (4.5,5.5) node {$\varnothing$};
\draw (3.5,1.5) node {$\varnothing$};
\draw (3.5,2.5) node {$\varnothing$};
\draw (3.5,5.5) node {$2$};
\draw (6,3.5) node {$3$};
\zoneRG{1}{6}{1}
\zoneGR{8}{0}{1}
\end{tikzpicture}
\begin{tikzpicture}[scale=.5]
\useasboundingbox (0,-0.5) (8.75,8.5);
\fill [Hfill] (0,5) rectangle (1,7);
\fill [Vfill] (1,7) rectangle (3,8);
\fill [Hfill] (7,1) rectangle (8,3);
\fill [Vfill] (5,0) rectangle (7,1);
\fill [Hfill] (3,6) rectangle (4,7);
\fill [Vfill] (5,3) rectangle (7,4);
\fill [Hfill] (3,3) rectangle (4,5);
\fill [Hfill] (4,3) rectangle (5,4);
\draw (0,1) -- (8,1);
\draw (0,2) -- (8,2);
\draw (0,3) -- (8,3);
\draw (0,4) -- (8,4);
\draw (0,5) -- (8,5);
\draw (0,6) -- (8,6);
\draw (0,7) -- (8,7);
\draw (1,0) -- (1,8);
\draw (2,0) -- (2,8);
\draw (3,0) -- (3,8);
\draw (4,0) -- (4,8);
\draw (5,0) -- (5,8);
\draw (6,0) -- (6,8);
\draw (7,0) -- (7,8);
\draw (1.5,1.5) node {$\varnothing$};
\draw (1.5,3.5) node {$\varnothing$};
\draw (1.5,6.5) node {$\varnothing$};
\draw (6.5,6.5) node {$\varnothing$};
\draw (6.5,1.5) node {$\varnothing$};
\draw (0.5,0.5) node {$\varnothing$};
\draw (7.5,7.5) node {$\varnothing$};
\Hpoint{1}{6};
\draw (0.5,5.75) node {{\tiny $i_{RG}$}};
\Hpoint{0.5}{5};
\draw (0.5,5.3) node {{\tiny $a$}};
\Vpoint{2}{7};
\draw (2,7.5) node {{\tiny $j_{RG}$}};
\Vpoint{3}{7.5};
\draw (2.7,7.5) node {{\tiny $b$}};
\Vpoint{6}{1};
\draw (6,0.6) node {{\tiny $i_{GR}$}};
\Vpoint{5}{0.5};
\draw (5.25,0.35) node {{\tiny $d$}};
\Hpoint{7}{2};
\draw (7.75,1.75) node {{\tiny $j_{GR}$}};
\Hpoint{7.5}{3};
\draw (7.5,2.65) node {{\tiny $c$}};
\Hpoint{4}{6.5};
\draw (4.2,6.8) node {{\tiny $x$}};
\Vpoint{6}{4};
\draw (5.8,4.2) node {{\tiny $z$}};
\draw (7.5,6.5) node {$\varnothing$};
\draw (1.5,0.5) node {$\varnothing$};
\draw (3.5,0.5) node {$\varnothing$};
\draw (0.5,1.5) node {$\varnothing$};
\draw (6.5,7.5) node {$\varnothing$};
\draw (3.5,4) node {$A$};
\draw (1.5,5.5) node {$\varnothing$};
\draw (3.5,5.5) node {$\varnothing$};
\draw (6.5,5.5) node {$\varnothing$};
\draw (7.5,5.5) node {$\varnothing$};
\draw (0.5,4.5) node {$\varnothing$};
\draw (1.5,4.5) node {$\varnothing$};
\draw (2.5,4.5) node {$\varnothing$};
\draw (5.5,4.5) node {$\varnothing$};
\draw (6.5,4.5) node {$\varnothing$};
\draw (7.5,4.5) node {$\varnothing$};
\draw (0.5,3.5) node {$\varnothing$};
\draw (2.5,0.5) node {$\varnothing$};
\draw (2.5,1.5) node {$\varnothing$};
\draw (2.5,3.5) node {$\varnothing$};
\draw (2.5,5.5) node {$\varnothing$};
\draw (2.5,6.5) node {$\varnothing$};
\draw (7.5,3.5) node {$\varnothing$};
\draw (3.5,7.5) node {$\varnothing$};
\draw (0.5,2.5) node {$\varnothing$};
\draw (1.5,2.5) node {$\varnothing$};
\draw (2.5,2.5) node {$\varnothing$};
\draw (6.5,2.5) node {$\varnothing$};
\draw (5.5,1.5) node {$\varnothing$};
\draw (5.5,2.5) node {$\varnothing$};
\draw (5.5,5.5) node {$\varnothing$};
\draw (5.5,6.5) node {$\varnothing$};
\draw (5.5,7.5) node {$\varnothing$};
\draw (4.5,0.5) node {$\varnothing$};
\draw (4.5,4.5) node {$*$};
\draw (4.5,1.5) node {$\varnothing$};
\draw (4.5,2.5) node {$\varnothing$};
\draw (4.5,5.5) node {$\varnothing$};
\draw (4.5,6.5) node {$\varnothing$};
\draw (4.5,7.5) node {$\varnothing$};
\draw (3.5,1.5) node {$\varnothing$};
\draw (3.5,2.5) node {$\varnothing$};
\draw (3.5,6.5) node {$2$};
\draw (6,3.5) node {$3$};
\zoneRG{1}{7}{1}
\zoneGR{8}{0}{1}
\end{tikzpicture}
\begin{tikzpicture}[scale=.5]
\useasboundingbox (0,-0.5) (8.75,8.5);
\fill [Hfill] (0,5) rectangle (1,7);
\fill [Vfill] (1,7) rectangle (3,8);
\fill [Hfill] (7,1) rectangle (8,3);
\fill [Vfill] (5,0) rectangle (7,1);
\fill [Hfill] (3,6) rectangle (4,7);
\fill [Vfill] (5,3) rectangle (7,4);
\fill [Hfill] (3,4) rectangle (4,5);
\draw (0,1) -- (8,1);
\draw (0,2) -- (8,2);
\draw (0,3) -- (8,3);
\draw (0,4) -- (8,4);
\draw (0,5) -- (8,5);
\draw (0,6) -- (8,6);
\draw (0,7) -- (8,7);
\draw (1,0) -- (1,8);
\draw (2,0) -- (2,8);
\draw (3,0) -- (3,8);
\draw (4,0) -- (4,8);
\draw (5,0) -- (5,8);
\draw (6,0) -- (6,8);
\draw (7,0) -- (7,8);
\draw (1.5,1.5) node {$\varnothing$};
\draw (1.5,3.5) node {$\varnothing$};
\draw (1.5,6.5) node {$\varnothing$};
\draw (6.5,6.5) node {$\varnothing$};
\draw (6.5,1.5) node {$\varnothing$};
\draw (0.5,0.5) node {$\varnothing$};
\draw (7.5,7.5) node {$\varnothing$};
\Hpoint{1}{6};
\draw (0.5,5.75) node {{\tiny $i_{RG}$}};
\Hpoint{0.5}{5};
\draw (0.5,5.3) node {{\tiny $a$}};
\Vpoint{2}{7};
\draw (2,7.5) node {{\tiny $j_{RG}$}};
\Vpoint{3}{7.5};
\draw (2.7,7.5) node {{\tiny $b$}};
\Vpoint{6}{1};
\draw (6,0.6) node {{\tiny $i_{GR}$}};
\Vpoint{5}{0.5};
\draw (5.25,0.35) node {{\tiny $d$}};
\Hpoint{7}{2};
\draw (7.75,1.75) node {{\tiny $j_{GR}$}};
\Hpoint{7.5}{3};
\draw (7.5,2.65) node {{\tiny $c$}};
\Hpoint{4}{6.5};
\draw (4.2,6.8) node {{\tiny $x$}};
\Vpoint{6}{4};
\draw (5.8,4.2) node {{\tiny $z$}};
\draw (7.5,6.5) node {$\varnothing$};
\draw (1.5,0.5) node {$\varnothing$};
\draw (3.5,0.5) node {$\varnothing$};
\draw (3.5,3.5) node {$\varnothing$};
\draw (4.5,3.5) node {$\varnothing$};
\draw (0.5,1.5) node {$\varnothing$};
\draw (6.5,7.5) node {$\varnothing$};
\draw (3.5,4.5) node {$A$};
\draw (1.5,5.5) node {$\varnothing$};
\draw (3.5,5.5) node {$\varnothing$};
\draw (6.5,5.5) node {$\varnothing$};
\draw (7.5,5.5) node {$\varnothing$};
\draw (0.5,4.5) node {$\varnothing$};
\draw (1.5,4.5) node {$\varnothing$};
\draw (2.5,4.5) node {$\varnothing$};
\draw (5.5,4.5) node {$\varnothing$};
\draw (6.5,4.5) node {$\varnothing$};
\draw (7.5,4.5) node {$\varnothing$};
\draw (0.5,3.5) node {$\varnothing$};
\draw (2.5,0.5) node {$\varnothing$};
\draw (2.5,1.5) node {$\varnothing$};
\draw (2.5,3.5) node {$\varnothing$};
\draw (2.5,5.5) node {$\varnothing$};
\draw (2.5,6.5) node {$\varnothing$};
\draw (7.5,3.5) node {$\varnothing$};
\draw (3.5,7.5) node {$\varnothing$};
\draw (0.5,2.5) node {$\varnothing$};
\draw (1.5,2.5) node {$\varnothing$};
\draw (2.5,2.5) node {$\varnothing$};
\draw (6.5,2.5) node {$\varnothing$};
\draw (5.5,1.5) node {$\varnothing$};
\draw (5.5,2.5) node {$\varnothing$};
\draw (5.5,5.5) node {$\varnothing$};
\draw (5.5,6.5) node {$\varnothing$};
\draw (5.5,7.5) node {$\varnothing$};
\draw (4.5,0.5) node {$\varnothing$};
\draw (4.5,4.5) node {$*$};
\draw (4.5,1.5) node {$\varnothing$};
\draw (4.5,2.5) node {$\varnothing$};
\draw (4.5,5.5) node {$\varnothing$};
\draw (4.5,6.5) node {$\varnothing$};
\draw (4.5,7.5) node {$\varnothing$};
\draw (3.5,1.5) node {$\varnothing$};
\draw (3.5,2.5) node {$\varnothing$};
\draw (3.5,6.5) node {$2$};
\draw (6,3.5) node {$3$};
\zoneRG{1}{7}{1}
\zoneGR{8}{0}{1}
\end{tikzpicture}

\caption{Zone $1$ is empty and zone $4$ is empty.\label{fig:RGGR4empty}}
\end{center}
\end{figure}
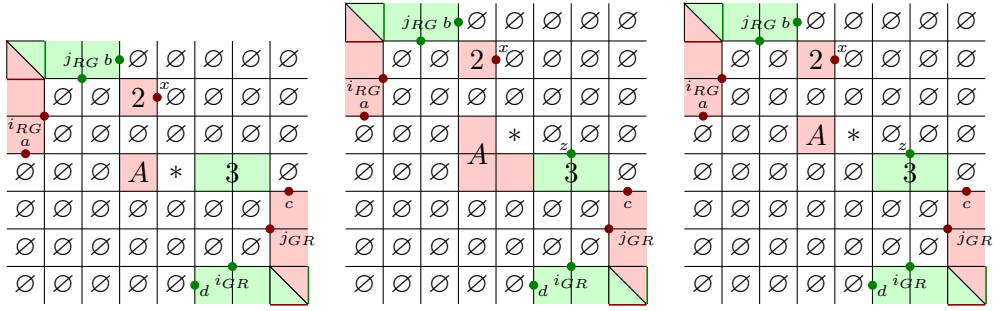

As $\sigma$ is $\ominus$-indecomposable, zone $3$ is non-empty. Let $z$ be the topmost point of zone $3$ (it may be to the left or to the right of $i_{GR}$).
Applying rule (\rmnum{1}) to $z$ and $d$ we obtain the second diagram. 
But ($i_{RG}$, $j_{RG}$) is the lowest right increasing sequence labeled RG, hence there are no point labeled \R in the below-left quadrant of $z$ -- see diagram $3$ --. 
But then $\sigma$ is $\ominus$-decomposable which is forbidden.

\paragraph{Zone $1$ is not empty}

Suppose that zone $1$ of Figure~\ref{fig:ABempty} is non empty. 
Define $x$ as the lowest point of this zone as shown in the first diagram of Figure~\ref{RGGR1nonempty}.

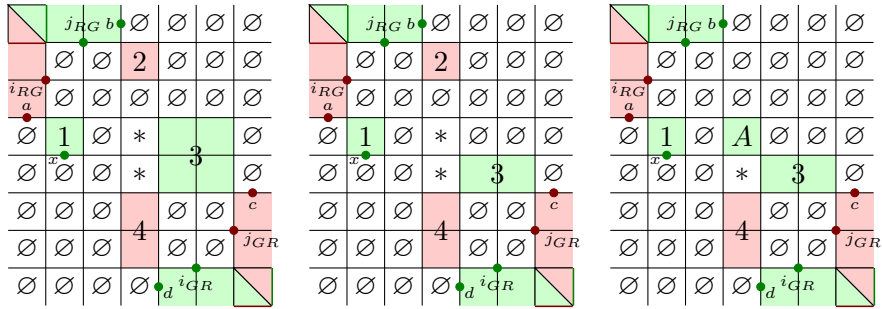
\begin{figure}[H]
\begin{center}
\begin{tikzpicture}[scale=.5]
\useasboundingbox (0,-0.5) (7.75,8.5);
\fill [Hfill] (0,5) rectangle (1,7);
\fill [Vfill] (1,7) rectangle (3,8);
\fill [Hfill] (6,1) rectangle (7,3);
\fill [Vfill] (4,0) rectangle (6,1);
\fill [Hfill] (3,1) rectangle (4,3);
\fill [Vfill] (1,4) rectangle (2,5);
\fill [Hfill] (3,6) rectangle (4,7);
\fill [Vfill] (4,3) rectangle (6,5);
\draw (0,1) -- (7,1);
\draw (0,2) -- (7,2);
\draw (0,3) -- (7,3);
\draw (0,4) -- (7,4);
\draw (0,5) -- (7,5);
\draw (0,6) -- (7,6);
\draw (0,7) -- (7,7);
\draw (1,0) -- (1,8);
\draw (2,0) -- (2,8);
\draw (3,0) -- (3,8);
\draw (4,0) -- (4,8);
\draw (5,0) -- (5,8);
\draw (6,0) -- (6,8);
\draw (1.5,1.5) node {$\varnothing$};
\draw (1.5,6.5) node {$\varnothing$};
\draw (5.5,6.5) node {$\varnothing$};
\draw (5.5,1.5) node {$\varnothing$};
\draw (0.5,0.5) node {$\varnothing$};
\draw (6.5,7.5) node {$\varnothing$};
\Hpoint{1}{6};
\draw (0.5,5.75) node {{\tiny $i_{RG}$}};
\Hpoint{0.5}{5};
\draw (0.5,5.3) node {{\tiny $a$}};
\Vpoint{2}{7};
\draw (2,7.5) node {{\tiny $j_{RG}$}};
\Vpoint{3}{7.5};
\draw (2.7,7.5) node {{\tiny $b$}};
\Vpoint{5}{1};
\draw (5,0.6) node {{\tiny $i_{GR}$}};
\Vpoint{4}{0.5};
\draw (4.25,0.35) node {{\tiny $d$}};
\Hpoint{6}{2};
\draw (6.75,1.75) node {{\tiny $j_{GR}$}};
\Hpoint{6.5}{3};
\draw (6.5,2.65) node {{\tiny $c$}};
\Vpoint{1.5}{4};
\draw (1.2,3.85) node {{\tiny $x$}};
\draw (6.5,6.5) node {$\varnothing$};
\draw (1.5,0.5) node {$\varnothing$};
\draw (3.5,0.5) node {$\varnothing$};
\draw (0.5,1.5) node {$\varnothing$};
\draw (5.5,7.5) node {$\varnothing$};
\draw (3.5,3.5) node {$*$};
\draw (3.5,4.5) node {$*$};
\draw (1.5,3.5) node {$\varnothing$};
\draw (0.5,4.5) node {$\varnothing$};
\draw (2.5,4.5) node {$\varnothing$};
\draw (6.5,4.5) node {$\varnothing$};
\draw (1.5,5.5) node {$\varnothing$};
\draw (3.5,5.5) node {$\varnothing$};
\draw (5.5,5.5) node {$\varnothing$};
\draw (6.5,5.5) node {$\varnothing$};
\draw (0.5,3.5) node {$\varnothing$};
\draw (2.5,0.5) node {$\varnothing$};
\draw (2.5,1.5) node {$\varnothing$};
\draw (2.5,3.5) node {$\varnothing$};
\draw (2.5,5.5) node {$\varnothing$};
\draw (2.5,6.5) node {$\varnothing$};
\draw (6.5,3.5) node {$\varnothing$};
\draw (3.5,7.5) node {$\varnothing$};
\draw (0.5,2.5) node {$\varnothing$};
\draw (1.5,2.5) node {$\varnothing$};
\draw (2.5,2.5) node {$\varnothing$};
\draw (5.5,2.5) node {$\varnothing$};
\draw (4.5,1.5) node {$\varnothing$};
\draw (4.5,2.5) node {$\varnothing$};
\draw (4.5,5.5) node {$\varnothing$};
\draw (4.5,6.5) node {$\varnothing$};
\draw (4.5,7.5) node {$\varnothing$};
\draw (1.5,4.5) node {$1$};
\draw (3.5,6.5) node {$2$};
\draw (5,4) node {$3$};
\draw (3.5,2) node {$4$};
\zoneRG{1}{7}{1}
\zoneGR{7}{0}{1}
\end{tikzpicture}
\begin{tikzpicture}[scale=.5]
\useasboundingbox (0,-0.5) (7.75,8.5);
\fill [Hfill] (0,5) rectangle (1,7);
\fill [Vfill] (1,7) rectangle (3,8);
\fill [Hfill] (6,1) rectangle (7,3);
\fill [Vfill] (4,0) rectangle (6,1);
\fill [Hfill] (3,1) rectangle (4,3);
\fill [Vfill] (1,4) rectangle (2,5);
\fill [Hfill] (3,6) rectangle (4,7);
\fill [Vfill] (4,3) rectangle (6,4);
\draw (0,1) -- (7,1);
\draw (0,2) -- (7,2);
\draw (0,3) -- (7,3);
\draw (0,4) -- (7,4);
\draw (0,5) -- (7,5);
\draw (0,6) -- (7,6);
\draw (0,7) -- (7,7);
\draw (1,0) -- (1,8);
\draw (2,0) -- (2,8);
\draw (3,0) -- (3,8);
\draw (4,0) -- (4,8);
\draw (5,0) -- (5,8);
\draw (6,0) -- (6,8);
\draw (1.5,1.5) node {$\varnothing$};
\draw (1.5,6.5) node {$\varnothing$};
\draw (5.5,6.5) node {$\varnothing$};
\draw (5.5,1.5) node {$\varnothing$};
\draw (0.5,0.5) node {$\varnothing$};
\draw (6.5,7.5) node {$\varnothing$};
\Hpoint{1}{6};
\draw (0.5,5.75) node {{\tiny $i_{RG}$}};
\Hpoint{0.5}{5};
\draw (0.5,5.3) node {{\tiny $a$}};
\Vpoint{2}{7};
\draw (2,7.5) node {{\tiny $j_{RG}$}};
\Vpoint{3}{7.5};
\draw (2.7,7.5) node {{\tiny $b$}};
\Vpoint{5}{1};
\draw (5,0.6) node {{\tiny $i_{GR}$}};
\Vpoint{4}{0.5};
\draw (4.25,0.35) node {{\tiny $d$}};
\Hpoint{6}{2};
\draw (6.75,1.75) node {{\tiny $j_{GR}$}};
\Hpoint{6.5}{3};
\draw (6.5,2.65) node {{\tiny $c$}};
\Vpoint{1.5}{4};
\draw (1.2,3.85) node {{\tiny $x$}};
\draw (6.5,6.5) node {$\varnothing$};
\draw (1.5,0.5) node {$\varnothing$};
\draw (3.5,0.5) node {$\varnothing$};
\draw (0.5,1.5) node {$\varnothing$};
\draw (5.5,7.5) node {$\varnothing$};
\draw (3.5,3.5) node {$*$};
\draw (3.5,4.5) node {$*$};
\draw (1.5,3.5) node {$\varnothing$};
\draw (0.5,4.5) node {$\varnothing$};
\draw (2.5,4.5) node {$\varnothing$};
\draw (6.5,4.5) node {$\varnothing$};
\draw (1.5,5.5) node {$\varnothing$};
\draw (3.5,5.5) node {$\varnothing$};
\draw (5.5,5.5) node {$\varnothing$};
\draw (6.5,5.5) node {$\varnothing$};
\draw (0.5,3.5) node {$\varnothing$};
\draw (2.5,0.5) node {$\varnothing$};
\draw (2.5,1.5) node {$\varnothing$};
\draw (2.5,3.5) node {$\varnothing$};
\draw (2.5,5.5) node {$\varnothing$};
\draw (2.5,6.5) node {$\varnothing$};
\draw (6.5,3.5) node {$\varnothing$};
\draw (3.5,7.5) node {$\varnothing$};
\draw (0.5,2.5) node {$\varnothing$};
\draw (1.5,2.5) node {$\varnothing$};
\draw (2.5,2.5) node {$\varnothing$};
\draw (5.5,2.5) node {$\varnothing$};
\draw (5.5,4.5) node {$\varnothing$};
\draw (4.5,4.5) node {$\varnothing$};
\draw (4.5,1.5) node {$\varnothing$};
\draw (4.5,2.5) node {$\varnothing$};
\draw (4.5,5.5) node {$\varnothing$};
\draw (4.5,6.5) node {$\varnothing$};
\draw (4.5,7.5) node {$\varnothing$};
\draw (1.5,4.5) node {$1$};
\draw (3.5,6.5) node {$2$};
\draw (5,3.5) node {$3$};
\draw (3.5,2) node {$4$};
\zoneRG{1}{7}{1}
\zoneGR{7}{0}{1}
\end{tikzpicture}
\begin{tikzpicture}[scale=.5]
\useasboundingbox (0,-0.5) (7.75,8.5);
\fill [Hfill] (0,5) rectangle (1,7);
\fill [Vfill] (1,7) rectangle (3,8);
\fill [Hfill] (6,1) rectangle (7,3);
\fill [Vfill] (4,0) rectangle (6,1);
\fill [Hfill] (3,1) rectangle (4,3);
\fill [Vfill] (1,4) rectangle (2,5);
\fill [Vfill] (4,3) rectangle (6,4);
\fill [Vfill] (3,4) rectangle (4,5);
\draw (0,1) -- (7,1);
\draw (0,2) -- (7,2);
\draw (0,3) -- (7,3);
\draw (0,4) -- (7,4);
\draw (0,5) -- (7,5);
\draw (0,6) -- (7,6);
\draw (0,7) -- (7,7);
\draw (1,0) -- (1,8);
\draw (2,0) -- (2,8);
\draw (3,0) -- (3,8);
\draw (4,0) -- (4,8);
\draw (5,0) -- (5,8);
\draw (6,0) -- (6,8);
\draw (1.5,1.5) node {$\varnothing$};
\draw (1.5,6.5) node {$\varnothing$};
\draw (5.5,6.5) node {$\varnothing$};
\draw (5.5,1.5) node {$\varnothing$};
\draw (0.5,0.5) node {$\varnothing$};
\draw (6.5,7.5) node {$\varnothing$};
\Hpoint{1}{6};
\draw (0.5,5.75) node {{\tiny $i_{RG}$}};
\Hpoint{0.5}{5};
\draw (0.5,5.3) node {{\tiny $a$}};
\Vpoint{2}{7};
\draw (2,7.5) node {{\tiny $j_{RG}$}};
\Vpoint{3}{7.5};
\draw (2.7,7.5) node {{\tiny $b$}};
\Vpoint{5}{1};
\draw (5,0.6) node {{\tiny $i_{GR}$}};
\Vpoint{4}{0.5};
\draw (4.25,0.35) node {{\tiny $d$}};
\Hpoint{6}{2};
\draw (6.75,1.75) node {{\tiny $j_{GR}$}};
\Hpoint{6.5}{3};
\draw (6.5,2.65) node {{\tiny $c$}};
\Vpoint{1.5}{4};
\draw (1.2,3.85) node {{\tiny $x$}};
\draw (6.5,6.5) node {$\varnothing$};
\draw (1.5,0.5) node {$\varnothing$};
\draw (3.5,0.5) node {$\varnothing$};
\draw (0.5,1.5) node {$\varnothing$};
\draw (5.5,7.5) node {$\varnothing$};
\draw (3.5,3.5) node {$*$};
\draw (3.5,4.5) node {$A$};
\draw (1.5,3.5) node {$\varnothing$};
\draw (0.5,4.5) node {$\varnothing$};
\draw (2.5,4.5) node {$\varnothing$};
\draw (6.5,4.5) node {$\varnothing$};
\draw (1.5,5.5) node {$\varnothing$};
\draw (3.5,5.5) node {$\varnothing$};
\draw (5.5,5.5) node {$\varnothing$};
\draw (6.5,5.5) node {$\varnothing$};
\draw (0.5,3.5) node {$\varnothing$};
\draw (2.5,0.5) node {$\varnothing$};
\draw (2.5,1.5) node {$\varnothing$};
\draw (2.5,3.5) node {$\varnothing$};
\draw (2.5,5.5) node {$\varnothing$};
\draw (2.5,6.5) node {$\varnothing$};
\draw (6.5,3.5) node {$\varnothing$};
\draw (3.5,6.5) node {$\varnothing$};
\draw (3.5,7.5) node {$\varnothing$};
\draw (0.5,2.5) node {$\varnothing$};
\draw (1.5,2.5) node {$\varnothing$};
\draw (2.5,2.5) node {$\varnothing$};
\draw (5.5,2.5) node {$\varnothing$};
\draw (5.5,4.5) node {$\varnothing$};
\draw (4.5,4.5) node {$\varnothing$};
\draw (4.5,1.5) node {$\varnothing$};
\draw (4.5,2.5) node {$\varnothing$};
\draw (4.5,5.5) node {$\varnothing$};
\draw (4.5,6.5) node {$\varnothing$};
\draw (4.5,7.5) node {$\varnothing$};
\draw (1.5,4.5) node {$1$};
\draw (5,3.5) node {$3$};
\draw (3.5,2) node {$4$};
\zoneRG{1}{7}{1}
\zoneGR{7}{0}{1}
\end{tikzpicture}
\caption{Zone $1$ is not empty.\label{RGGR1nonempty}}
\end{center}
\end{figure}

Rule (\rmnum{8}) applied to $x$ and $d$ implies the second diagram. 
Moreover, as ($i_{GR}$, $j_{GR}$) is the leftmost-top increasing sequence labeled GR, all points to the top right of $x$ are in \G, 
leading to the last diagram.

\paragraph{Zone $3$ is not empty}

If zone $3$ is not empty, let $y$ be its topmost point ($y$ may be to the left or to the right of $i_{GR}$) as pictured in Figure~\ref{fig:RGGR3notempty}.

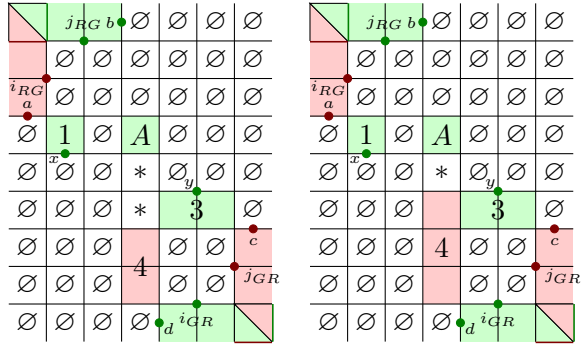
\begin{figure}[H]
\begin{center}
\begin{tikzpicture}[scale=.5]
\useasboundingbox (0,-0.5) (7.75,9.5);
\fill [Hfill] (0,6) rectangle (1,8);
\fill [Vfill] (1,8) rectangle (3,9);
\fill [Hfill] (6,1) rectangle (7,3);
\fill [Vfill] (4,0) rectangle (6,1);
\fill [Hfill] (3,1) rectangle (4,3);
\fill [Vfill] (1,5) rectangle (2,6);
\fill [Vfill] (4,3) rectangle (6,4);
\fill [Vfill] (3,5) rectangle (4,6);
\draw (0,1) -- (7,1);
\draw (0,2) -- (7,2);
\draw (0,3) -- (7,3);
\draw (0,4) -- (7,4);
\draw (0,5) -- (7,5);
\draw (0,6) -- (7,6);
\draw (0,7) -- (7,7);
\draw (0,8) -- (7,8);
\draw (1,0) -- (1,9);
\draw (2,0) -- (2,9);
\draw (3,0) -- (3,9);
\draw (4,0) -- (4,9);
\draw (5,0) -- (5,9);
\draw (6,0) -- (6,9);
\draw (1.5,1.5) node {$\varnothing$};
\draw (1.5,7.5) node {$\varnothing$};
\draw (5.5,7.5) node {$\varnothing$};
\draw (5.5,1.5) node {$\varnothing$};
\draw (0.5,0.5) node {$\varnothing$};
\draw (6.5,8.5) node {$\varnothing$};
\Hpoint{1}{7};
\draw (0.5,6.75) node {{\tiny $i_{RG}$}};
\Hpoint{0.5}{6};
\draw (0.5,6.3) node {{\tiny $a$}};
\Vpoint{2}{8};
\draw (2,8.5) node {{\tiny $j_{RG}$}};
\Vpoint{3}{8.5};
\draw (2.7,8.5) node {{\tiny $b$}};
\Vpoint{5}{1};
\draw (5,0.6) node {{\tiny $i_{GR}$}};
\Vpoint{4}{0.5};
\draw (4.25,0.35) node {{\tiny $d$}};
\Hpoint{6}{2};
\draw (6.75,1.75) node {{\tiny $j_{GR}$}};
\Hpoint{6.5}{3};
\draw (6.5,2.65) node {{\tiny $c$}};
\Vpoint{1.5}{5};
\draw (1.2,4.85) node {{\tiny $x$}};
\Vpoint{5}{4};
\draw (4.8,4.2) node {{\tiny $y$}};
\draw (6.5,7.5) node {$\varnothing$};
\draw (1.5,0.5) node {$\varnothing$};
\draw (3.5,0.5) node {$\varnothing$};
\draw (0.5,1.5) node {$\varnothing$};
\draw (5.5,8.5) node {$\varnothing$};
\draw (3.5,3.5) node {$*$};
\draw (3.5,4.5) node {$*$};
\draw (3.5,5.5) node {$A$};
\draw (1.5,3.5) node {$\varnothing$};
\draw (0.5,4.5) node {$\varnothing$};
\draw (1.5,4.5) node {$\varnothing$};
\draw (2.5,4.5) node {$\varnothing$};
\draw (4.5,4.5) node {$\varnothing$};
\draw (5.5,4.5) node {$\varnothing$};
\draw (6.5,4.5) node {$\varnothing$};
\draw (0.5,5.5) node {$\varnothing$};
\draw (2.5,5.5) node {$\varnothing$};
\draw (6.5,5.5) node {$\varnothing$};
\draw (1.5,6.5) node {$\varnothing$};
\draw (3.5,6.5) node {$\varnothing$};
\draw (5.5,6.5) node {$\varnothing$};
\draw (6.5,6.5) node {$\varnothing$};
\draw (0.5,3.5) node {$\varnothing$};
\draw (2.5,0.5) node {$\varnothing$};
\draw (2.5,1.5) node {$\varnothing$};
\draw (2.5,3.5) node {$\varnothing$};
\draw (2.5,6.5) node {$\varnothing$};
\draw (2.5,7.5) node {$\varnothing$};
\draw (6.5,3.5) node {$\varnothing$};
\draw (3.5,7.5) node {$\varnothing$};
\draw (3.5,8.5) node {$\varnothing$};
\draw (0.5,2.5) node {$\varnothing$};
\draw (1.5,2.5) node {$\varnothing$};
\draw (2.5,2.5) node {$\varnothing$};
\draw (5.5,2.5) node {$\varnothing$};
\draw (5.5,5.5) node {$\varnothing$};
\draw (4.5,5.5) node {$\varnothing$};
\draw (4.5,1.5) node {$\varnothing$};
\draw (4.5,2.5) node {$\varnothing$};
\draw (4.5,6.5) node {$\varnothing$};
\draw (4.5,7.5) node {$\varnothing$};
\draw (4.5,8.5) node {$\varnothing$};
\draw (1.5,5.5) node {$1$};
\draw (5,3.5) node {$3$};
\draw (3.5,2) node {$4$};
\zoneRG{1}{8}{1}
\zoneGR{7}{0}{1}
\end{tikzpicture}
\begin{tikzpicture}[scale=.5]
\useasboundingbox (0,-0.5) (7.75,9.5);
\fill [Hfill] (0,6) rectangle (1,8);
\fill [Vfill] (1,8) rectangle (3,9);
\fill [Hfill] (6,1) rectangle (7,3);
\fill [Vfill] (4,0) rectangle (6,1);
\fill [Hfill] (3,1) rectangle (4,4);
\fill [Vfill] (1,5) rectangle (2,6);
\fill [Vfill] (4,3) rectangle (6,4);
\fill [Vfill] (3,5) rectangle (4,6);
\draw (0,1) -- (7,1);
\draw (0,2) -- (7,2);
\draw (0,3) -- (7,3);
\draw (0,4) -- (7,4);
\draw (0,5) -- (7,5);
\draw (0,6) -- (7,6);
\draw (0,7) -- (7,7);
\draw (0,8) -- (7,8);
\draw (1,0) -- (1,9);
\draw (2,0) -- (2,9);
\draw (3,0) -- (3,9);
\draw (4,0) -- (4,9);
\draw (5,0) -- (5,9);
\draw (6,0) -- (6,9);
\draw (1.5,1.5) node {$\varnothing$};
\draw (1.5,7.5) node {$\varnothing$};
\draw (5.5,7.5) node {$\varnothing$};
\draw (5.5,1.5) node {$\varnothing$};
\draw (0.5,0.5) node {$\varnothing$};
\draw (6.5,8.5) node {$\varnothing$};
\Hpoint{1}{7};
\draw (0.5,6.75) node {{\tiny $i_{RG}$}};
\Hpoint{0.5}{6};
\draw (0.5,6.3) node {{\tiny $a$}};
\Vpoint{2}{8};
\draw (2,8.5) node {{\tiny $j_{RG}$}};
\Vpoint{3}{8.5};
\draw (2.7,8.5) node {{\tiny $b$}};
\Vpoint{5}{1};
\draw (5,0.6) node {{\tiny $i_{GR}$}};
\Vpoint{4}{0.5};
\draw (4.25,0.35) node {{\tiny $d$}};
\Hpoint{6}{2};
\draw (6.75,1.75) node {{\tiny $j_{GR}$}};
\Hpoint{6.5}{3};
\draw (6.5,2.65) node {{\tiny $c$}};
\Vpoint{1.5}{5};
\draw (1.2,4.85) node {{\tiny $x$}};
\Vpoint{5}{4};
\draw (4.8,4.2) node {{\tiny $y$}};
\draw (6.5,7.5) node {$\varnothing$};
\draw (1.5,0.5) node {$\varnothing$};
\draw (3.5,0.5) node {$\varnothing$};
\draw (0.5,1.5) node {$\varnothing$};
\draw (5.5,8.5) node {$\varnothing$};
\draw (3.5,4.5) node {$*$};
\draw (3.5,5.5) node {$A$};
\draw (1.5,3.5) node {$\varnothing$};
\draw (0.5,4.5) node {$\varnothing$};
\draw (1.5,4.5) node {$\varnothing$};
\draw (2.5,4.5) node {$\varnothing$};
\draw (4.5,4.5) node {$\varnothing$};
\draw (5.5,4.5) node {$\varnothing$};
\draw (6.5,4.5) node {$\varnothing$};
\draw (0.5,5.5) node {$\varnothing$};
\draw (2.5,5.5) node {$\varnothing$};
\draw (6.5,5.5) node {$\varnothing$};
\draw (1.5,6.5) node {$\varnothing$};
\draw (3.5,6.5) node {$\varnothing$};
\draw (5.5,6.5) node {$\varnothing$};
\draw (6.5,6.5) node {$\varnothing$};
\draw (0.5,3.5) node {$\varnothing$};
\draw (2.5,0.5) node {$\varnothing$};
\draw (2.5,1.5) node {$\varnothing$};
\draw (2.5,3.5) node {$\varnothing$};
\draw (2.5,6.5) node {$\varnothing$};
\draw (2.5,7.5) node {$\varnothing$};
\draw (6.5,3.5) node {$\varnothing$};
\draw (3.5,7.5) node {$\varnothing$};
\draw (3.5,8.5) node {$\varnothing$};
\draw (0.5,2.5) node {$\varnothing$};
\draw (1.5,2.5) node {$\varnothing$};
\draw (2.5,2.5) node {$\varnothing$};
\draw (5.5,2.5) node {$\varnothing$};
\draw (5.5,5.5) node {$\varnothing$};
\draw (4.5,5.5) node {$\varnothing$};
\draw (4.5,1.5) node {$\varnothing$};
\draw (4.5,2.5) node {$\varnothing$};
\draw (4.5,6.5) node {$\varnothing$};
\draw (4.5,7.5) node {$\varnothing$};
\draw (4.5,8.5) node {$\varnothing$};
\draw (1.5,5.5) node {$1$};
\draw (5,3.5) node {$3$};
\draw (3.5,2.5) node {$4$};
\zoneRG{1}{8}{1}
\zoneGR{7}{0}{1}
\end{tikzpicture}
\caption{Zone $1$ is not empty and zone $3$ is not empty.\label{fig:RGGR3notempty}}
\end{center}
\end{figure}

Rule (\rmnum{1}) applied to $d$ and $y$ gives the second diagram.
But ($i_{RG}$, $j_{RG}$) is the bottom-rightmost increasing sequence $RG$, hence no point in the lower left quadrant of $y$ lies in \R. 
Hence zone $4$ is empty and $\sigma$ is $\ominus$-decomposable which is forbidden.

\paragraph{Zone $3$ is empty}

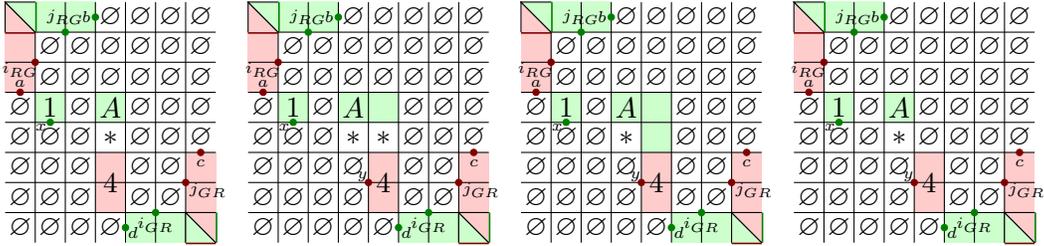
\begin{figure}[H]
\begin{center}
\begin{tikzpicture}[scale=.4]
\useasboundingbox (0,-1) (7.75,8.5);
\fill [Hfill] (0,5) rectangle (1,7);
\fill [Vfill] (1,7) rectangle (3,8);
\fill [Hfill] (6,1) rectangle (7,3);
\fill [Vfill] (4,0) rectangle (6,1);
\fill [Hfill] (3,1) rectangle (4,3);
\fill [Vfill] (1,4) rectangle (2,5);
\fill [Vfill] (3,4) rectangle (4,5);
\draw (0,1) -- (7,1);
\draw (0,2) -- (7,2);
\draw (0,3) -- (7,3);
\draw (0,4) -- (7,4);
\draw (0,5) -- (7,5);
\draw (0,6) -- (7,6);
\draw (0,7) -- (7,7);
\draw (1,0) -- (1,8);
\draw (2,0) -- (2,8);
\draw (3,0) -- (3,8);
\draw (4,0) -- (4,8);
\draw (5,0) -- (5,8);
\draw (6,0) -- (6,8);
\draw (1.5,1.5) node {$\varnothing$};
\draw (1.5,6.5) node {$\varnothing$};
\draw (5.5,6.5) node {$\varnothing$};
\draw (5.5,1.5) node {$\varnothing$};
\draw (0.5,0.5) node {$\varnothing$};
\draw (6.5,7.5) node {$\varnothing$};
\Hpoint{1}{6};
\draw (0.5,5.75) node {{\tiny $i_{RG}$}};
\Hpoint{0.5}{5};
\draw (0.5,5.3) node {{\tiny $a$}};
\Vpoint{2}{7};
\draw (2,7.5) node {{\tiny $j_{RG}$}};
\Vpoint{3}{7.5};
\draw (2.7,7.5) node {{\tiny $b$}};
\Vpoint{5}{1};
\draw (5,0.6) node {{\tiny $i_{GR}$}};
\Vpoint{4}{0.5};
\draw (4.25,0.35) node {{\tiny $d$}};
\Hpoint{6}{2};
\draw (6.75,1.75) node {{\tiny $j_{GR}$}};
\Hpoint{6.5}{3};
\draw (6.5,2.65) node {{\tiny $c$}};
\Vpoint{1.5}{4};
\draw (1.2,3.85) node {{\tiny $x$}};
\draw (6.5,6.5) node {$\varnothing$};
\draw (1.5,0.5) node {$\varnothing$};
\draw (3.5,0.5) node {$\varnothing$};
\draw (0.5,1.5) node {$\varnothing$};
\draw (5.5,7.5) node {$\varnothing$};
\draw (3.5,3.5) node {$*$};
\draw (3.5,4.5) node {$A$};
\draw (1.5,3.5) node {$\varnothing$};
\draw (0.5,4.5) node {$\varnothing$};
\draw (2.5,4.5) node {$\varnothing$};
\draw (6.5,4.5) node {$\varnothing$};
\draw (1.5,5.5) node {$\varnothing$};
\draw (3.5,5.5) node {$\varnothing$};
\draw (5.5,5.5) node {$\varnothing$};
\draw (6.5,5.5) node {$\varnothing$};
\draw (0.5,3.5) node {$\varnothing$};
\draw (2.5,0.5) node {$\varnothing$};
\draw (2.5,1.5) node {$\varnothing$};
\draw (2.5,3.5) node {$\varnothing$};
\draw (2.5,5.5) node {$\varnothing$};
\draw (2.5,6.5) node {$\varnothing$};
\draw (6.5,3.5) node {$\varnothing$};
\draw (3.5,6.5) node {$\varnothing$};
\draw (3.5,7.5) node {$\varnothing$};
\draw (0.5,2.5) node {$\varnothing$};
\draw (1.5,2.5) node {$\varnothing$};
\draw (2.5,2.5) node {$\varnothing$};
\draw (5.5,2.5) node {$\varnothing$};
\draw (5.5,3.5) node {$\varnothing$};
\draw (5.5,4.5) node {$\varnothing$};
\draw (4.5,4.5) node {$\varnothing$};
\draw (4.5,1.5) node {$\varnothing$};
\draw (4.5,2.5) node {$\varnothing$};
\draw (4.5,3.5) node {$\varnothing$};
\draw (4.5,5.5) node {$\varnothing$};
\draw (4.5,6.5) node {$\varnothing$};
\draw (4.5,7.5) node {$\varnothing$};
\draw (1.5,4.5) node {$1$};
\draw (3.5,2) node {$4$};
\zoneRG{1}{7}{1}
\zoneGR{7}{0}{1}
\end{tikzpicture}
\begin{tikzpicture}[scale=.4]
\useasboundingbox (0,-1) (8.75,8.5);
\fill [Hfill] (0,5) rectangle (1,7);
\fill [Vfill] (1,7) rectangle (3,8);
\fill [Hfill] (7,1) rectangle (8,3);
\fill [Vfill] (5,0) rectangle (7,1);
\fill [Hfill] (4,1) rectangle (5,3);
\fill [Vfill] (1,4) rectangle (2,5);
\fill [Vfill] (3,4) rectangle (5,5);
\draw (0,1) -- (8,1);
\draw (0,2) -- (8,2);
\draw (0,3) -- (8,3);
\draw (0,4) -- (8,4);
\draw (0,5) -- (8,5);
\draw (0,6) -- (8,6);
\draw (0,7) -- (8,7);
\draw (1,0) -- (1,8);
\draw (2,0) -- (2,8);
\draw (3,0) -- (3,8);
\draw (4,0) -- (4,8);
\draw (5,0) -- (5,8);
\draw (6,0) -- (6,8);
\draw (7,0) -- (7,8);
\draw (1.5,1.5) node {$\varnothing$};
\draw (1.5,6.5) node {$\varnothing$};
\draw (6.5,6.5) node {$\varnothing$};
\draw (6.5,1.5) node {$\varnothing$};
\draw (0.5,0.5) node {$\varnothing$};
\draw (7.5,7.5) node {$\varnothing$};
\Hpoint{1}{6};
\draw (0.5,5.75) node {{\tiny $i_{RG}$}};
\Hpoint{0.5}{5};
\draw (0.5,5.3) node {{\tiny $a$}};
\Vpoint{2}{7};
\draw (2,7.5) node {{\tiny $j_{RG}$}};
\Vpoint{3}{7.5};
\draw (2.7,7.5) node {{\tiny $b$}};
\Vpoint{6}{1};
\draw (6,0.6) node {{\tiny $i_{GR}$}};
\Vpoint{5}{0.5};
\draw (5.25,0.35) node {{\tiny $d$}};
\Hpoint{7}{2};
\draw (7.75,1.75) node {{\tiny $j_{GR}$}};
\Hpoint{7.5}{3};
\draw (7.5,2.65) node {{\tiny $c$}};
\Vpoint{1.5}{4};
\draw (1.2,3.85) node {{\tiny $x$}};
\Hpoint{4}{2};
\draw (3.8,2.2) node {{\tiny $y$}};
\draw (7.5,6.5) node {$\varnothing$};
\draw (1.5,0.5) node {$\varnothing$};
\draw (3.5,0.5) node {$\varnothing$};
\draw (0.5,1.5) node {$\varnothing$};
\draw (6.5,7.5) node {$\varnothing$};
\draw (3.5,3.5) node {$*$};
\draw (3.5,4.5) node {$A$};
\draw (1.5,3.5) node {$\varnothing$};
\draw (0.5,4.5) node {$\varnothing$};
\draw (2.5,4.5) node {$\varnothing$};
\draw (7.5,4.5) node {$\varnothing$};
\draw (1.5,5.5) node {$\varnothing$};
\draw (3.5,5.5) node {$\varnothing$};
\draw (6.5,5.5) node {$\varnothing$};
\draw (7.5,5.5) node {$\varnothing$};
\draw (0.5,3.5) node {$\varnothing$};
\draw (2.5,0.5) node {$\varnothing$};
\draw (2.5,1.5) node {$\varnothing$};
\draw (2.5,3.5) node {$\varnothing$};
\draw (2.5,5.5) node {$\varnothing$};
\draw (2.5,6.5) node {$\varnothing$};
\draw (7.5,3.5) node {$\varnothing$};
\draw (3.5,6.5) node {$\varnothing$};
\draw (3.5,7.5) node {$\varnothing$};
\draw (0.5,2.5) node {$\varnothing$};
\draw (1.5,2.5) node {$\varnothing$};
\draw (2.5,2.5) node {$\varnothing$};
\draw (6.5,2.5) node {$\varnothing$};
\draw (6.5,3.5) node {$\varnothing$};
\draw (6.5,4.5) node {$\varnothing$};
\draw (5.5,4.5) node {$\varnothing$};
\draw (5.5,1.5) node {$\varnothing$};
\draw (5.5,2.5) node {$\varnothing$};
\draw (5.5,3.5) node {$\varnothing$};
\draw (5.5,5.5) node {$\varnothing$};
\draw (5.5,6.5) node {$\varnothing$};
\draw (5.5,7.5) node {$\varnothing$};
\draw (4.5,0.5) node {$\varnothing$};
\draw (3.5,1.5) node {$\varnothing$};
\draw (3.5,2.5) node {$\varnothing$};
\draw (4.5,3.5) node {$*$};
\draw (4.5,5.5) node {$\varnothing$};
\draw (4.5,6.5) node {$\varnothing$};
\draw (4.5,7.5) node {$\varnothing$};
\draw (1.5,4.5) node {$1$};
\draw (4.5,2) node {$4$};
\zoneRG{1}{7}{1}
\zoneGR{8}{0}{1}
\end{tikzpicture}
\begin{tikzpicture}[scale=.4]
\useasboundingbox (0,-1) (8.75,8.5);
\fill [Hfill] (0,5) rectangle (1,7);
\fill [Vfill] (1,7) rectangle (3,8);
\fill [Hfill] (7,1) rectangle (8,3);
\fill [Vfill] (5,0) rectangle (7,1);
\fill [Hfill] (4,1) rectangle (5,3);
\fill [Vfill] (1,4) rectangle (2,5);
\fill [Vfill] (3,4) rectangle (5,5);
\fill [Vfill] (4,3) rectangle (5,4);
\draw (0,1) -- (8,1);
\draw (0,2) -- (8,2);
\draw (0,3) -- (8,3);
\draw (0,4) -- (8,4);
\draw (0,5) -- (8,5);
\draw (0,6) -- (8,6);
\draw (0,7) -- (8,7);
\draw (1,0) -- (1,8);
\draw (2,0) -- (2,8);
\draw (3,0) -- (3,8);
\draw (4,0) -- (4,8);
\draw (5,0) -- (5,8);
\draw (6,0) -- (6,8);
\draw (7,0) -- (7,8);
\draw (1.5,1.5) node {$\varnothing$};
\draw (1.5,6.5) node {$\varnothing$};
\draw (6.5,6.5) node {$\varnothing$};
\draw (6.5,1.5) node {$\varnothing$};
\draw (0.5,0.5) node {$\varnothing$};
\draw (7.5,7.5) node {$\varnothing$};
\Hpoint{1}{6};
\draw (0.5,5.75) node {{\tiny $i_{RG}$}};
\Hpoint{0.5}{5};
\draw (0.5,5.3) node {{\tiny $a$}};
\Vpoint{2}{7};
\draw (2,7.5) node {{\tiny $j_{RG}$}};
\Vpoint{3}{7.5};
\draw (2.7,7.5) node {{\tiny $b$}};
\Vpoint{6}{1};
\draw (6,0.6) node {{\tiny $i_{GR}$}};
\Vpoint{5}{0.5};
\draw (5.25,0.35) node {{\tiny $d$}};
\Hpoint{7}{2};
\draw (7.75,1.75) node {{\tiny $j_{GR}$}};
\Hpoint{7.5}{3};
\draw (7.5,2.65) node {{\tiny $c$}};
\Vpoint{1.5}{4};
\draw (1.2,3.85) node {{\tiny $x$}};
\Hpoint{4}{2};
\draw (3.8,2.2) node {{\tiny $y$}};
\draw (7.5,6.5) node {$\varnothing$};
\draw (1.5,0.5) node {$\varnothing$};
\draw (3.5,0.5) node {$\varnothing$};
\draw (0.5,1.5) node {$\varnothing$};
\draw (6.5,7.5) node {$\varnothing$};
\draw (3.5,3.5) node {$*$};
\draw (3.5,4.5) node {$A$};
\draw (1.5,3.5) node {$\varnothing$};
\draw (0.5,4.5) node {$\varnothing$};
\draw (2.5,4.5) node {$\varnothing$};
\draw (7.5,4.5) node {$\varnothing$};
\draw (1.5,5.5) node {$\varnothing$};
\draw (3.5,5.5) node {$\varnothing$};
\draw (6.5,5.5) node {$\varnothing$};
\draw (7.5,5.5) node {$\varnothing$};
\draw (0.5,3.5) node {$\varnothing$};
\draw (2.5,0.5) node {$\varnothing$};
\draw (2.5,1.5) node {$\varnothing$};
\draw (2.5,3.5) node {$\varnothing$};
\draw (2.5,5.5) node {$\varnothing$};
\draw (2.5,6.5) node {$\varnothing$};
\draw (7.5,3.5) node {$\varnothing$};
\draw (3.5,6.5) node {$\varnothing$};
\draw (3.5,7.5) node {$\varnothing$};
\draw (0.5,2.5) node {$\varnothing$};
\draw (1.5,2.5) node {$\varnothing$};
\draw (2.5,2.5) node {$\varnothing$};
\draw (6.5,2.5) node {$\varnothing$};
\draw (6.5,3.5) node {$\varnothing$};
\draw (6.5,4.5) node {$\varnothing$};
\draw (5.5,4.5) node {$\varnothing$};
\draw (5.5,1.5) node {$\varnothing$};
\draw (5.5,2.5) node {$\varnothing$};
\draw (5.5,3.5) node {$\varnothing$};
\draw (5.5,5.5) node {$\varnothing$};
\draw (5.5,6.5) node {$\varnothing$};
\draw (5.5,7.5) node {$\varnothing$};
\draw (4.5,0.5) node {$\varnothing$};
\draw (3.5,1.5) node {$\varnothing$};
\draw (3.5,2.5) node {$\varnothing$};
\draw (4.5,5.5) node {$\varnothing$};
\draw (4.5,6.5) node {$\varnothing$};
\draw (4.5,7.5) node {$\varnothing$};
\draw (1.5,4.5) node {$1$};
\draw (4.5,2) node {$4$};
\zoneRG{1}{7}{1}
\zoneGR{8}{0}{1}
\end{tikzpicture}
\begin{tikzpicture}[scale=.4]
\useasboundingbox (0,-1) (7.75,8.5);
\fill [Hfill] (0,5) rectangle (1,7);
\fill [Vfill] (1,7) rectangle (3,8);
\fill [Hfill] (7,1) rectangle (8,3);
\fill [Vfill] (5,0) rectangle (7,1);
\fill [Hfill] (4,1) rectangle (5,3);
\fill [Vfill] (1,4) rectangle (2,5);
\fill [Vfill] (3,4) rectangle (4,5);
\draw (0,1) -- (8,1);
\draw (0,2) -- (8,2);
\draw (0,3) -- (8,3);
\draw (0,4) -- (8,4);
\draw (0,5) -- (8,5);
\draw (0,6) -- (8,6);
\draw (0,7) -- (8,7);
\draw (1,0) -- (1,8);
\draw (2,0) -- (2,8);
\draw (3,0) -- (3,8);
\draw (4,0) -- (4,8);
\draw (5,0) -- (5,8);
\draw (6,0) -- (6,8);
\draw (7,0) -- (7,8);
\draw (1.5,1.5) node {$\varnothing$};
\draw (1.5,6.5) node {$\varnothing$};
\draw (6.5,6.5) node {$\varnothing$};
\draw (6.5,1.5) node {$\varnothing$};
\draw (0.5,0.5) node {$\varnothing$};
\draw (7.5,7.5) node {$\varnothing$};
\Hpoint{1}{6};
\draw (0.5,5.75) node {{\tiny $i_{RG}$}};
\Hpoint{0.5}{5};
\draw (0.5,5.3) node {{\tiny $a$}};
\Vpoint{2}{7};
\draw (2,7.5) node {{\tiny $j_{RG}$}};
\Vpoint{3}{7.5};
\draw (2.7,7.5) node {{\tiny $b$}};
\Vpoint{6}{1};
\draw (6,0.6) node {{\tiny $i_{GR}$}};
\Vpoint{5}{0.5};
\draw (5.25,0.35) node {{\tiny $d$}};
\Hpoint{7}{2};
\draw (7.75,1.75) node {{\tiny $j_{GR}$}};
\Hpoint{7.5}{3};
\draw (7.5,2.65) node {{\tiny $c$}};
\Vpoint{1.5}{4};
\draw (1.2,3.85) node {{\tiny $x$}};
\Hpoint{4}{2};
\draw (3.8,2.2) node {{\tiny $y$}};
\draw (7.5,6.5) node {$\varnothing$};
\draw (1.5,0.5) node {$\varnothing$};
\draw (3.5,0.5) node {$\varnothing$};
\draw (0.5,1.5) node {$\varnothing$};
\draw (6.5,7.5) node {$\varnothing$};
\draw (3.5,3.5) node {$*$};
\draw (3.5,4.5) node {$A$};
\draw (1.5,3.5) node {$\varnothing$};
\draw (0.5,4.5) node {$\varnothing$};
\draw (2.5,4.5) node {$\varnothing$};
\draw (7.5,4.5) node {$\varnothing$};
\draw (1.5,5.5) node {$\varnothing$};
\draw (3.5,5.5) node {$\varnothing$};
\draw (6.5,5.5) node {$\varnothing$};
\draw (7.5,5.5) node {$\varnothing$};
\draw (0.5,3.5) node {$\varnothing$};
\draw (2.5,0.5) node {$\varnothing$};
\draw (2.5,1.5) node {$\varnothing$};
\draw (2.5,3.5) node {$\varnothing$};
\draw (2.5,5.5) node {$\varnothing$};
\draw (2.5,6.5) node {$\varnothing$};
\draw (7.5,3.5) node {$\varnothing$};
\draw (3.5,6.5) node {$\varnothing$};
\draw (3.5,7.5) node {$\varnothing$};
\draw (0.5,2.5) node {$\varnothing$};
\draw (1.5,2.5) node {$\varnothing$};
\draw (2.5,2.5) node {$\varnothing$};
\draw (6.5,2.5) node {$\varnothing$};
\draw (6.5,3.5) node {$\varnothing$};
\draw (6.5,4.5) node {$\varnothing$};
\draw (5.5,4.5) node {$\varnothing$};
\draw (5.5,1.5) node {$\varnothing$};
\draw (5.5,2.5) node {$\varnothing$};
\draw (5.5,3.5) node {$\varnothing$};
\draw (5.5,5.5) node {$\varnothing$};
\draw (5.5,6.5) node {$\varnothing$};
\draw (5.5,7.5) node {$\varnothing$};
\draw (4.5,0.5) node {$\varnothing$};
\draw (3.5,1.5) node {$\varnothing$};
\draw (3.5,2.5) node {$\varnothing$};
\draw (4.5,3.5) node {$\varnothing$};
\draw (4.5,4.5) node {$\varnothing$};
\draw (4.5,5.5) node {$\varnothing$};
\draw (4.5,6.5) node {$\varnothing$};
\draw (4.5,7.5) node {$\varnothing$};
\draw (1.5,4.5) node {$1$};
\draw (4.5,2) node {$4$};
\zoneRG{1}{7}{1}
\zoneGR{8}{0}{1}
\end{tikzpicture}

\caption{Zone $1$ is not empty and zone $3$ is empty\label{fig:RGGR3empty}}
\end{center}
\end{figure}
Figure~\ref{fig:RGGR3empty} illustrates the proof. As $\sigma$ is $\ominus$-indecomposable, zone $4$ is not empty. 
Let $y$ be the leftmost point inside zone $4$ -- either above or under $j_{GR}$ -- as depicted in the second diagram.
Rule (\rmnum{2}) applied to $y$ and $c$ leads to the third diagram.
But ($i_{RG}$, $j_{RG}$) is the bottom-rightmost increasing sequence RG, hence no point of \G lies in the top-right quadrant of $y$ leading to the fourth diagram.
So $\sigma$ is $\ominus$-decomposable which is forbidden.

This ends the cases study, proving that zone $A$ and $B$ cannot be both empty.

\end{pf}

\begin{defn}\label{def:C_*}
Let $\sigma$ be a permutation and $i,j,k,\ell$ four indices of $\sigma$ such that $\sigma_i \sigma_j$ and $\sigma_k \sigma_\ell$ are ascents.
We define the partial bicoloring $C_*(\sigma, i,j,k,\ell)$ of $\sigma$ as follows.

\begin{tikzpicture}[scale=.5]
\useasboundingbox (0,-1) (7.75,6);
\fill [Hfill] (0,1) rectangle (1,4);
\fill [Vfill] (1,4) rectangle (4,5);
\fill [Hfill] (4,1) rectangle (5,4);
\fill [Vfill] (1,0) rectangle (4,1);
\draw (0,1) -- (5,1);
\draw (0,4) -- (5,4);
\draw (1,0) -- (1,5);
\draw (4,0) -- (4,5);
\Hpoint{1}{2};
\draw (0.6,1.75) node {{\tiny $i$}};
\Vpoint{3}{4};
\draw (3,4.5) node {{\tiny $j$}};
\Vpoint{2}{1};
\draw (2,0.6) node {{\tiny $k$}};
\Hpoint{4}{3};
\draw (4.5,3) node {{\tiny $\ell$}};
\zoneRG{1}{4}{1}
\zoneGR{5}{0}{1}
\end{tikzpicture}
\begin{tikzpicture}[scale=.5]
\useasboundingbox (0,-1) (7.75,6);
\fill [Hfill] (0,1) rectangle (1,4);
\fill [Vfill] (1,4) rectangle (4,5);
\fill [Hfill] (4,1) rectangle (5,4);
\fill [Vfill] (1,0) rectangle (4,1);
\draw [Hfill] (2,1) rectangle (3,4);
\draw (0,1) -- (5,1);
\draw (0,4) -- (5,4);
\draw (1,0) -- (1,5);
\draw (4,0) -- (4,5);
\Hpoint{1}{2};
\draw (0.6,1.75) node {{\tiny $i$}};
\Vpoint{2}{4};
\draw (2,4.5) node {{\tiny $j$}};
\Vpoint{3}{1};
\draw (3,0.6) node {{\tiny $k$}};
\Hpoint{4}{3};
\draw (4.5,3) node {{\tiny $\ell$}};
\zoneRG{1}{4}{1}
\zoneGR{5}{0}{1}
\end{tikzpicture}
\begin{tikzpicture}[scale=.5]
\useasboundingbox (0,-1) (7.75,6);
\fill [Hfill] (0,1) rectangle (1,4);
\fill [Vfill] (1,4) rectangle (4,5);
\fill [Hfill] (4,1) rectangle (5,4);
\fill [Vfill] (1,0) rectangle (4,1);
\draw [Vfill] (1,2) rectangle (4,3);
\draw (0,1) -- (5,1);
\draw (0,4) -- (5,4);
\draw (1,0) -- (1,5);
\draw (4,0) -- (4,5);
\Hpoint{1}{3};
\draw (0.6,2.75) node {{\tiny $i$}};
\Vpoint{3}{4};
\draw (3,4.5) node {{\tiny $j$}};
\Vpoint{2}{1};
\draw (2,0.6) node {{\tiny $k$}};
\Hpoint{4}{2};
\draw (4.5,2) node {{\tiny $\ell$}};
\zoneRG{1}{4}{1}
\zoneGR{5}{0}{1}
\end{tikzpicture}
\begin{tikzpicture}[scale=.5]
\useasboundingbox (0,-1) (6,6);
\fill [Hfill] (0,1) rectangle (1,4);
\fill [Vfill] (1,4) rectangle (4,5);
\fill [Hfill] (4,1) rectangle (5,4);
\fill [Vfill] (1,0) rectangle (4,1);
\fill [Hfill] (2,1) rectangle (3,2);
\fill [Vfill] (1,2) rectangle (2,3);
\fill [Hfill] (2,3) rectangle (3,4);
\fill [Vfill] (3,2) rectangle (4,3);
\draw (0,1) -- (5,1);
\draw (0,2) -- (5,2);
\draw (0,3) -- (5,3);
\draw (0,4) -- (5,4);
\draw (1,0) -- (1,5);
\draw (2,0) -- (2,5);
\draw (3,0) -- (3,5);
\draw (4,0) -- (4,5);
\Hpoint{1}{3};
\draw (0.6,2.75) node {{\tiny $i$}};
\Vpoint{2}{4};
\draw (1.8,4.5) node {{\tiny $j$}};
\Vpoint{3}{1};
\draw (3.2,0.6) node {{\tiny $k$}};
\Hpoint{4}{2};
\draw (4.5,1.75) node {{\tiny $\ell$}};
\draw (4.5,3.5) node {$A$};
\draw (1.5,0.5) node {$B$};
\draw (2.5,2.5) node {{$1$}};
\zoneRG{1}{4}{1}
\zoneGR{5}{0}{1}
\end{tikzpicture}
If $\sigma_i$, $\sigma_\ell$, $\sigma_k$ and $\sigma_j$ have a relative position corresponding to one of the above diagrams, 
then we define $C_*(\sigma, i,j,k,\ell)$ as the partial bicoloring of $\sigma$ having the corresponding shape,
where in the first diagram points of zone $1$ are in \R if zone $A$ is nonempty, 
in \G if zones $B$ is nonempty, and have no color otherwise.

Otherwise $C_*(\sigma, i,j,k,\ell)$ is the partial coloring with no point colored.
\end{defn}

\begin{prop}\label{prop:C_*}
Let $\sigma$ be a $\ominus$-indecomposable permutation and $c$ a valid coloring of $\sigma$ such that there exist increasing sequences RG and increasing sequences GR in $c$. 
Then $c = C_*(\sigma, i_{RG}, j_{RG}, i_{GR}, j_{GR})$.
\end{prop}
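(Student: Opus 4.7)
The plan is to obtain Proposition~\ref{prop:C_*} as an immediate corollary of Proposition~\ref{prop:diagrammes*} together with Definition~\ref{def:C_*}, paralleling the treatment of Propositions~\ref{prop:C_GR} and~\ref{prop:C_RG}. First I would observe that the hypotheses guarantee the inputs to $C_*$ are legitimate: since $A_{RG}$ and $A_{GR}$ are both non-empty, the indices $i_{RG}, j_{RG}, i_{GR}, j_{GR}$ are well-defined, and by construction $(\sigma_{i_{RG}}, \sigma_{j_{RG}})$ and $(\sigma_{i_{GR}}, \sigma_{j_{GR}})$ are ascents, so that $C_*(\sigma, i_{RG}, j_{RG}, i_{GR}, j_{GR})$ is defined.

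Next I would apply Proposition~\ref{prop:diagrammes*} to $c$, which yields that $c$ takes one of the five specific shapes pictured there, and check case by case that these five shapes coincide with the four cases of Definition~\ref{def:C_*}. The first three shapes of Proposition~\ref{prop:diagrammes*} are pairwise disjoint in terms of the relative positions of the four anchor points and reproduce the first three diagrams of Definition~\ref{def:C_*} on the nose. The last two shapes (diagrams 4 and 5 of Proposition~\ref{prop:diagrammes*}) share the same external structure and differ only in whether the central zone is colored \R{} (shape~4) or \G{} (shape~5); this is exactly the fourth diagram of Definition~\ref{def:C_*}, in which zone $1$ is forced to \R{} when zone $A$ is non-empty, and to \G{} when zone $B$ is non-empty.

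The only mildly delicate point --- and the closest thing to an obstacle --- is to verify this correspondence for the split case: namely, that the point $x$ appearing in shape~4 of Proposition~\ref{prop:diagrammes*} lies in zone $A$ of Definition~\ref{def:C_*}, and the point $y$ appearing in shape~5 lies in zone $B$, so that the triggering condition producing each color in Definition~\ref{def:C_*} agrees with the one produced by the case analysis in the proof of Proposition~\ref{prop:diagrammes*}. I would also note that the degenerate case where both $A$ and $B$ are empty was explicitly excluded at the end of the proof of Proposition~\ref{prop:diagrammes*} by exploiting the $\ominus$-indecomposability of $\sigma$, so every configuration that actually arises falls into one of the four diagrams of Definition~\ref{def:C_*}. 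Once this identification is made, the equality $c = C_*(\sigma, i_{RG}, j_{RG}, i_{GR}, j_{GR})$ follows by inspection, and the proof is complete.
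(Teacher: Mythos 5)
Your proposal is correct and follows exactly the paper's route: the paper also derives the statement directly from Proposition~\ref{prop:diagrammes*} and Definition~\ref{def:C_*}, adding only the observation that a point in zone $A$ (resp.\ $B$) forces zone $1$ to \R (resp.\ \G) via rule (\rmnum{2}) (resp.\ (\rmnum{1})), which is precisely the ``delicate point'' you identify. No gap to report.
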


\begin{pf}
This is a consequence of Proposition~\ref{prop:diagrammes*} and Definition~\ref{def:C_*}, noticing that if there exists 
a point $x$ in zone $A$, then applying rule (\rmnum{2}) to $\ell$ and $x$, all points in zone $1$ belong to \R, and
if there exists a point $y$ in zone $B$, then applying rule (\rmnum{1}) to $y$ and $k$, all points in zone $1$ belong to \G.
\end{pf}

\subsection{A first polynomial algorithm}

\begin{algorithm}[H]
  \KwData{$\sigma$ a $\ominus$-indecomposable permutation (whose size is denoted $n$).}
  \KwResult{The set $E$ of valid colorings of $\sigma$}

  \For{$c$ bicoloring of $\sigma$ being one of\\ 
\hspace*{.3cm} $c$ is unicolor \R\\
\hspace*{.3cm} $c$ is unicolor \G \\
\hspace*{.3cm} $c = C_{GR}(\sigma, i,j)$ or $C_{RG}(\sigma, i,j)$ for $i \in [1..n]$ and $j \in [i..n]$ s.t. $\sigma_j > \sigma_i$\\
\hspace*{.3cm} $c = C_*(\sigma, i,j,k,\ell)$ for $i\in [1..n]$ and $j \in [i..n]$ s.t. $\sigma_j > \sigma_i$ and\\
\hspace*{3.4cm} for $k \in [1..n]$ and $\ell \in [k..n]$ s.t. $\sigma_\ell > \sigma_k$\\ 
}{
 	If all points of $\sigma$ are colored and $c$ is valid then add $c$ to $E$\;
     }
  \caption{ColoringIndecomposable1$(\sigma)$\label{alg:Valid-colorings-indec_naif}}
\end{algorithm}

\begin{prop}
Algorithm~\ref{alg:Valid-colorings-indec_naif} compute in time ${\mathcal O}(n^5)$ the set of valid colorings of any $\ominus$-indecomposable permutation $\sigma$.
\end{prop}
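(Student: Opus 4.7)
\begin{pf}[Plan]
The plan is to argue correctness and complexity separately, both relying on the structural results proved in the previous sections.

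For correctness, I would first observe that every coloring added to $E$ is indeed valid, by the explicit validity test in the loop body and the fact that only fully-colored bicolorings are retained. The nontrivial direction is that every valid coloring of $\sigma$ is tested at some iteration of the loop. Here I would invoke the case analysis already carried out: given a valid coloring $c$ of the $\ominus$-indecomposable permutation $\sigma$, exactly one of four situations occurs according to whether $A_{RG}$ and $A_{GR}$ are empty. If both are empty, Proposition~\ref{prop:monochromatic} implies $c$ is unicolor \R or unicolor \G, so $c$ is tested in the first two iterations. If $A_{RG} = \varnothing$ but $A_{GR} \neq \varnothing$, Proposition~\ref{prop:C_GR} gives $c = C_{GR}(\sigma, i_{GR}, j_{GR})$, which is tested for the specific choice $(i,j) = (i_{GR},j_{GR})$. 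The symmetric case uses Proposition~\ref{prop:C_RG}. Finally, if both sets are nonempty, Proposition~\ref{prop:C_*} yields $c = C_*(\sigma, i_{RG}, j_{RG}, i_{GR}, j_{GR})$, which is tested for $(i,j,k,\ell) = (i_{RG}, j_{RG}, i_{GR}, j_{GR})$. Hence $E$ contains every valid coloring.

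For complexity, I would count the iterations and bound the cost of each. The first two iterations (monochromatic) are trivial. The loops over $C_{GR}$ and $C_{RG}$ range over $O(n^2)$ pairs $(i,j)$, and the loop over $C_*$ ranges over $O(n^4)$ quadruples $(i,j,k,\ell)$, giving $O(n^4)$ iterations in total. For each iteration, the coloring $C_{GR}(\sigma,i,j)$, $C_{RG}(\sigma,i,j)$, or $C_*(\sigma,i,j,k,\ell)$ can be constructed point by point in $O(n)$ time by inspecting each point of $\sigma$ and determining in constant time in which rectangular zone of the defining diagram it lies (which requires only knowing the positions of the auxiliary points $a,b$ or $a,b,c,d$, each computable in $O(n)$). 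Checking whether all points of $\sigma$ are colored takes another $O(n)$, and Proposition~\ref{prop:Check-Valid-linear} ensures validity can be tested in $O(n)$. The cost per iteration is thus $O(n)$, yielding a total runtime of $O(n^4)\cdot O(n) = O(n^5)$.

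The main obstacle, which has already been done in the preceding subsections rather than in this proposition, is the exhaustive case analysis of Propositions~\ref{prop:diagrammesGR}, \ref{prop:diagrammesRG} and \ref{prop:diagrammes*}; once those are in hand, the proof of this proposition is essentially a bookkeeping argument combining them with a straightforward complexity count.
\end{pf}
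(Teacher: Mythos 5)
Your proposal is correct and follows essentially the same route as the paper: both reduce correctness to the four-way case analysis given by Propositions~\ref{prop:monochromatic}, \ref{prop:C_GR}, \ref{prop:C_RG} and \ref{prop:C_*}, and both obtain the complexity bound by counting ${\mathcal O}(n^4)$ candidate colorings, each computed and checked in linear time via Proposition~\ref{prop:Check-Valid-linear}. Your write-up is in fact slightly more explicit than the paper's about which loop iteration catches each valid coloring, but the argument is the same.
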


\begin{pf}
Let $\sigma$ be a $\ominus$-indecomposable permutation of size $n$ and $c$ a valid coloring of $\sigma$.
Then from Propositions~\ref{prop:monochromatic}, \ref{prop:C_GR}, \ref{prop:C_RG} and \ref{prop:C_*}, 
$c$ is either monochromatic, or $C_{GR}(\sigma, i,j)$ or $C_{RG}(\sigma, i,j)$ for some $i \in [1..n]$ and some $j \in [i..n]$ such that $\sigma_j > \sigma_i$, 
or $c = C_*(\sigma, i,j,k,\ell)$ for some $i \in [1..n]$, some $j \in [i..n]$ such that $\sigma_j > \sigma_i$, 
some $k \in [1..n]$ and some $\ell \in [k..n]$ such that $\sigma_\ell > \sigma_k$.
Thus $c$ is computed by Algorithm~\ref{alg:Valid-colorings-indec_naif} and added to $E$ as it is valid.
Conversely, each coloring added to $E$ is a valid bicoloring of $\sigma$.

Now consider the complexity of Algorithm~\ref{alg:Valid-colorings-indec_naif}.
There are ${\mathcal O}(n^{4})$ colorings computed. 
Indeed there are two monochromatic colorings, ${\mathcal O}(n^{2})$ colorings $C_{GR}(\sigma, i,j)$ or $C_{RG}(\sigma, i,j)$
and ${\mathcal O}(n^{4})$ colorings $C_*(\sigma, i,j,k,\ell)$.
Moreover the coloring is computed in linear time and checking if the coloring is valid is done in linear time 
using Proposition~\ref{prop:Check-Valid-linear}.
Hence Algorithm~\ref{alg:Valid-colorings-indec_naif} runs in time ${\mathcal O}(n^{5})$.
\end{pf}

\section{An optimal algorithm} \label{sec:optimalAlgo}

\subsection{Rooting colorings} \label{ssec:root}

In this section we show how each diagram of Propositions~\ref{prop:diagrammesGR}, \ref{prop:diagrammesRG} and \ref{prop:diagrammes*} can be rooted in a given point such that each point $i_{GR},i_{RG},j_{GR}$ and $j_{RG}$ can be deduced from this one. 
Moreover, given a diagram we show how we can assign colors to points of the permutations lying in a colored zone of the diagram in linear time.

\begin{defn}\label{def:C_numero}
Let $\sigma$ be a permutation and $s \in [1..|\sigma|]$. We set \vspace{0.2 cm}\\
$C_1(\sigma, s) = C_{GR}(\sigma, s, t)$ where $t = \min \{k \mid k > s \text{ and } \sigma_k > \sigma_s\}$ \\
$C_2(\sigma, s) = C_{GR}(\sigma, t, s)$ where $t$ is such that $\sigma_t = \max \{\sigma_k \mid k < s \text{ and } \sigma_k < \sigma_s\}$\\
$C_3(\sigma, s) = C_{RG}(\sigma, s, t)$ where $t$ is such that $\sigma_t = \min \{\sigma_k \mid k > s \text{ and } \sigma_k > \sigma_s\}$\\
$C_4(\sigma, s) = C_{RG}(\sigma, t, s)$ where $t = \max \{k \mid k < s \text{ and } \sigma_k < \sigma_s\}$\\

\noindent
\begin{minipage}{.75\textwidth}\setlength{\parindent}{2em}
\noindent $C_5(\sigma, s) = C_*(\sigma, p,q,t,s)$ with \\
 \indent $t = \max \{k \mid k < u \text{ and } \sigma_k < \sigma_s\}$\\
 \indent with $u = \max \{k \mid k < s \text{ and } \sigma_k > \sigma_s\}$,\\
 \indent $p = \max \{k \mid k < t \text{ and } \sigma_t < \sigma_k < \sigma_s\}$ and \\
 \indent  $q$  such that $\sigma_q = \min \{\sigma_k \mid t < k \leq u \}$\\
\end{minipage}
\begin{minipage}{.25\textwidth}
\begin{tikzpicture}[scale=.5]
\useasboundingbox (1,0) (7.25,5);
\fill [Grisfill] (0,1) rectangle (1,5);
\fill [Grisfill] (1,3) rectangle (2,4);
\fill [Grisfill] (1,4) rectangle (4,5);
\fill [Grisfill] (4,1) rectangle (5,3);
\fill [Grisfill] (5,0) rectangle (6,5);
\fill [Grisfill] (0,0) rectangle (5,1);
\draw [dotted] (0,1) -- (6,1);
\draw [dotted] (1,3) -- (5,3);
\draw [dotted] (1,4) -- (5,4);
\draw [dotted] (1,1) -- (1,4);
\draw [dotted] (2,1) -- (2,4);
\draw [dotted] (4,1) -- (4,5);
\draw [dotted] (5,1) -- (5,5);
\draw (1.5,2) node {$\varnothing$};
\draw (3,2) node {$\varnothing$};
\draw (3,3.5) node {$\varnothing$};
\draw (4.5,3.5) node {$\varnothing$};
\draw (4.5,4.5) node {$\varnothing$};
\draw (1,2) [fill] circle (3pt);
\draw (0.5,1.8) node {{\tiny $p$}};
\draw (3,4) [fill] circle (3pt);
\draw (2.7,4.4) node {{\tiny $q$}};
\draw (4,4.5) [fill] circle (3pt);
\draw (4,4.5) node[left] {{\tiny $u$}};
\draw (2,1) [fill] circle (3pt);
\draw (2,0.6) node {{\tiny $t$}};
\draw (5,3) [fill] circle (3pt);
\draw (5.5,3) node {{\tiny $s$}};
\end{tikzpicture}
\end{minipage}
\begin{minipage}{.75\textwidth}\setlength{\parindent}{2em}
\noindent $C_6(\sigma, s) = C_*(\sigma, p,q,s,t)$ with \\
 \indent $t = \min \{k \mid k > s \text{ and } \sigma_k > \sigma_s\}$,\\
 \indent $u = \max \{k \mid k < t \text{ and } \sigma_k > \sigma_t\}$,\\
 \indent $p = \max \{k \mid k < u \text{ and } \sigma_s < \sigma_k < \sigma_t\}$ and\\
 \indent $q$  such that $\sigma_q = \min \{\sigma_k \mid \sigma_k > \sigma_t \text{ and } p < k \leq u \}$\\
\end{minipage}
\begin{minipage}{.25\textwidth}
\begin{tikzpicture}[scale=.5]
\useasboundingbox (1,0) (5,5);
\fill [Grisfill] (0,1) rectangle (1,5);
\fill [Grisfill] (1,4) rectangle (3,5);
\fill [Grisfill] (5,1) rectangle (6,5);
\fill [Grisfill] (0,0) rectangle (6,1);
\fill [Grisfill] (3,1) rectangle (4,3);
\draw [dotted] (0,1) -- (6,1);
\draw [dotted] (1,3) -- (5,3);
\draw [dotted] (1,4) -- (3,4);
\draw [dotted] (1,1) -- (1,4);
\draw [dotted] (3,1) -- (3,5);
\draw [dotted] (4,1) -- (4,5);
\draw [dotted] (5,1) -- (5,5);
\draw (2,2) node {$\varnothing$};
\draw (2,3.5) node {$\varnothing$};
\draw (4.5,2) node {$\varnothing$};
\draw (3.5,4) node {$\varnothing$};
\draw (4.5,4) node {$\varnothing$};
\draw (1,2) [fill] circle (3pt);
\draw (0.5,1.75) node {{\tiny $p$}};
\draw (2,4) [fill] circle (3pt);
\draw (2,4.5) node {{\tiny $q$}};
\draw (4,1) [fill] circle (3pt);
\draw (4,0.6) node {{\tiny $s$}};
\draw (5,3) [fill] circle (3pt);
\draw (5.5,3) node {{\tiny $t$}};
\draw (3,4.5) [fill] circle (3pt);
\draw (3,4.5) node[right] {\tiny $u$};
\end{tikzpicture}
\end{minipage}
\begin{minipage}{.75\textwidth}\setlength{\parindent}{2em}
\noindent $C_7(\sigma, s) = C_*(\sigma, q,p,t,s)$ with \\
 \indent  $t$  such that $\sigma_t = \max \{\sigma_k \mid k < s \text{ and } \sigma_k < \sigma_s\}$,\\
 \indent  $u$ such that $\sigma_u = \min \{\sigma_k \mid k < t \text{ and } \sigma_k > \sigma_t\}$,\\
 \indent $p$  such that $\sigma_p = \min \{\sigma_k \mid \sigma_k > \sigma_u \text{ and } t < k < s \}$ and\\
 \indent $q = \max \{k \mid k < p \text{ and } \sigma_u \leq \sigma_k < \sigma_p\}$\\
\end{minipage}
\begin{minipage}{.25\textwidth}
\begin{tikzpicture}[scale=.5]
\useasboundingbox (1,0) (6.25,6);
\fill [Grisfill] (0,3) rectangle (1,6);
\fill [Grisfill] (1,5) rectangle (4,6);
\fill [Grisfill] (4,0) rectangle (5,6);
\fill [Grisfill] (0,0) rectangle (4,1);
\fill [Grisfill] (2,2) rectangle (4,3);
\draw (0.5,3) [fill] circle (3pt);
\draw (0.5,3) node[below] {\tiny $u$};
\draw [dotted] (0,1) -- (4,1);
\draw [dotted] (0,2) -- (4,2);
\draw [dotted] (0,3) -- (4,3);
\draw [dotted] (0,5) -- (4,5);
\draw [dotted] (1,3) -- (1,5);
\draw [dotted] (2,1) -- (2,5);
\draw [dotted] (4,0) -- (4,6);
\draw (1,1.5) node {$\varnothing$};
\draw (1,2.5) node {$\varnothing$};
\draw (1.5,4) node {$\varnothing$};
\draw (3,4) node {$\varnothing$};
\draw (3,1.5) node {$\varnothing$};
\draw (1,4) [fill] circle (3pt);
\draw (0.6,4.25) node {{\tiny $q$}};
\draw (3,5) [fill] circle (3pt);
\draw (3,5.5) node {{\tiny $p$}};
\draw (2,1) [fill] circle (3pt);
\draw (2,0.6) node {{\tiny $t$}};
\draw (4,2) [fill] circle (3pt);
\draw (4.5,2) node {{\tiny $s$}};
\end{tikzpicture}
\end{minipage}
\begin{minipage}{.75\textwidth}\setlength{\parindent}{2em}
\noindent $C_8(\sigma, s) = C_*(\sigma, p,q,s,t)$ with \\
 \indent $t = \min \{k \mid k > s \text{ and } \sigma_k > \sigma_s\}$,\\
 \indent $u$ such that  $\sigma_u = \max \{\sigma_k \mid \sigma_k > \sigma_t \text{ and } k > t \}$,\\
 \indent $v = \max \{k \mid k < u \text{ and } \sigma_k > \sigma_u\}$,\\
 \indent $p = \max \{k \mid k < v \text{ and } \sigma_t < \sigma_k < \sigma_u\}$ and \\
 \indent $q$  such that $\sigma_q = \min \{\sigma_k \mid \sigma_k > \sigma_u \text{ and } p < k \leq v \}$\\
\end{minipage}
\begin{minipage}{.25\textwidth}
\begin{tikzpicture}[scale=.5]
\useasboundingbox (1,0) (6,6);
\fill [Grisfill] (0,1) rectangle (1,6);
\fill [Grisfill] (1,1) rectangle (3,2);
\fill [Grisfill] (1,5) rectangle (3,6);
\fill [Grisfill] (5,1) rectangle (6,4);
\fill [Grisfill] (0,0) rectangle (6,1);
\fill [Grisfill] (3,1) rectangle (4,4);
\draw [dotted] (0,1) -- (6,1);
\draw [dotted] (1,2) -- (5,2);
\draw [dotted] (1,4) -- (6,4);
\draw [dotted] (1,5) -- (3,5);
\draw [dotted] (1,2) -- (1,6);
\draw [dotted] (3,2) -- (3,6);
\draw [dotted] (4,1) -- (4,6);
\draw [dotted] (5,1) -- (5,6);
\draw (2,3) node {$\varnothing$};
\draw (2,4.5) node {$\varnothing$};
\draw (3.5,5) node {$\varnothing$};
\draw (4.5,1.5) node {$\varnothing$};
\draw (4.5,3) node {$\varnothing$};
\draw (4.5,5) node {$\varnothing$};
\draw (5.5,5) node {$\varnothing$};
\draw (1,3) [fill] circle (3pt);
\draw (0.5,2.75) node {{\tiny $p$}};
\draw (2,5) [fill] circle (3pt);
\draw (2,5.35) node[xshift=-3pt] {{\tiny $q$}};
\draw (4,1) [fill] circle (3pt);
\draw (5.5,4) [fill] circle (3pt);
\draw (5.5,4) node[below] {\tiny $u$};
\draw (4,0.6) node {{\tiny $s$}};
\draw (5,2) [fill] circle (3pt);
\draw (5.5,1.75) node {{\tiny $t$}};
\draw (3,5.5) [fill] circle (3pt);
\draw (3,5.5) node[left] {\tiny $v$};
\end{tikzpicture}
\end{minipage}
\begin{minipage}{.75\textwidth}\setlength{\parindent}{2em}
\noindent $C_9(\sigma, s) = C_*(\sigma, q,p,t,s)$ with \\
 \indent  $t$ such that $\sigma_t = \max \{\sigma_k \mid k < s \text{ and } \sigma_k < \sigma_s\}$,\\
 \indent  $u = \min \{k \mid k < t \text{ and } \sigma_k < \sigma_t\}$,\\
 \indent  $v$ such that $\sigma_v = \min \{\sigma_k \mid k < u \text{ and } \sigma_k > \sigma_u\}$,\\
 \indent  $p$ such that $\sigma_p = \min \{\sigma_k \mid \sigma_k > \sigma_v \text{ and } u < k < t \}$ and\\
 \indent  $q = \max \{k \mid k < p \text{ and } \sigma_v \leq \sigma_k < \sigma_p\}$
\end{minipage}
\begin{minipage}{.25\textwidth}
\begin{tikzpicture}[scale=.5]
\useasboundingbox (1,0) (6,6);
\fill [Grisfill] (0,3) rectangle (1,6);
\fill [Grisfill] (4,3) rectangle (5,5);
\fill [Grisfill] (1,5) rectangle (5,6);
\fill [Grisfill] (5,0) rectangle (6,6);
\fill [Grisfill] (2,0) rectangle (5,1);
\fill [Grisfill] (2,2) rectangle (5,3);
\draw (2,0.5) [fill] circle (3pt);
\draw (2,0.5) node[right] {\tiny $u$};
\draw [dotted] (0,1) -- (5,1);
\draw [dotted] (0,2) -- (5,2);
\draw [dotted] (0,3) -- (5,3);
\draw [dotted] (0,5) -- (5,5);
\draw [dotted] (1,3) -- (1,6);
\draw [dotted] (2,0) -- (2,6);
\draw [dotted] (4,1) -- (4,6);
\draw [dotted] (5,0) -- (5,6);
\draw (1,0.5) node {$\varnothing$};
\draw (1,1.5) node {$\varnothing$};
\draw (1,2.5) node {$\varnothing$};
\draw (1.5,4) node {$\varnothing$};
\draw (3,4) node {$\varnothing$};
\draw (3,1.5) node {$\varnothing$};
\draw (4.5,1.5) node {$\varnothing$};
\draw (1,4) [fill] circle (3pt);
\draw (0.6,4.25) node {{\tiny $q$}};
\draw (3,5) [fill] circle (3pt);
\draw (3,5.5) node[xshift=-3pt] {{\tiny $p$}};
\draw (4,1) [fill] circle (3pt);
\draw (4,0.6) node {{\tiny $t$}};
\draw (5,2) [fill] circle (3pt);
\draw (5.5,2) node {{\tiny $s$}};
\draw (0.5,3) [fill] circle (3pt);
\draw (0.5,3.35) node {\tiny $v$};
\end{tikzpicture}
\end{minipage}
\end{defn}

\begin{prop}\label{prop:valid=C_numero}
Let $\sigma$ be a $\ominus$-indecomposable permutation and $c$ a valid coloring of $\sigma$ which is not monochromatic.
Then there exists $s \in [1..|\sigma|]$ and $m \in [1..9]$ such that $c = C_m(\sigma, s)$.
\end{prop}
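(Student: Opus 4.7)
The plan is to case-split on the bicolored ascents of $c$, using the structural results already established. Since $c$ is not monochromatic, Proposition~\ref{prop:monochromatic} forces $c$ to contain at least one bicolored ascent. Propositions~\ref{prop:C_GR}, \ref{prop:C_RG} and \ref{prop:C_*} then pin $c$ down to exactly one of the closed forms $C_{GR}(\sigma, i_{GR}, j_{GR})$, $C_{RG}(\sigma, i_{RG}, j_{RG})$, or $C_*(\sigma, i_{RG}, j_{RG}, i_{GR}, j_{GR})$. It remains, in each of these three situations, to exhibit a root $s$ and an index $m \in [1..9]$ for which the min/max chain of Definition~\ref{def:C_numero} reconstructs all of the remaining parameters from $s$ alone.

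In the GR-only case, I will show $c = C_1(\sigma, i_{GR})$, reducing the task to the identity $j_{GR} = \min\{k \mid k > i_{GR} \text{ and } \sigma_k > \sigma_{i_{GR}}\}$. This will follow from the emptiness of the strip strictly between columns $i_{GR}$ and $j_{GR}$ below row $\sigma_{j_{GR}}$ in every shape of Proposition~\ref{prop:diagrammesGR}, combined with the extremal characterisation of the pair $(i_{GR}, j_{GR})$ as the upper-left GR ascent; any earlier candidate would contradict one of these. In subshapes where $C_1$ does not apply cleanly, I will fall back on $s = j_{GR}$ with $m = 2$. The RG-only case is handled symmetrically with $m \in \{3, 4\}$ via Proposition~\ref{prop:diagrammesRG}.

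For the $C_*$ case I will refine according to the five subshapes isolated inside the proof of Proposition~\ref{prop:diagrammes*}: the base shape (first diagram of the proposition), the two intermediate shapes in which an extra coloured strip appears (second and third diagrams), and the two fourth-diagram shapes where zone~$1$ is forced \R or \G depending on whether the witness lies in zone $A$ or zone $B$. For each subshape I will pick the corresponding root $s$ among $\{i_{RG}, j_{RG}, i_{GR}, j_{GR}\}$ and the matching formula $C_m$ with $m \in \{5,6,7,8,9\}$, so that the successive auxiliary quantities $t, u, v, p, q$ of Definition~\ref{def:C_numero} walk through the remaining three corners and, when present, the zone-$A$ or zone-$B$ witness. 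Schematically, $C_6$ handles the base shape rooted at $s = i_{GR}$, $C_5$ and $C_7$ handle the two intermediate shapes rooted at $j_{GR}$, and $C_8$, $C_9$ handle the two colour-sensitive fourth-diagram shapes.

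The main obstacle will be the bookkeeping for the $C_*$ subcases: verifying for each of the five subshapes that each min or max in the definition of $C_m$ genuinely selects the intended auxiliary point, with every competing candidate ruled out by the empty zones established during the proof of Proposition~\ref{prop:diagrammes*} together with the extremal definitions of $i_{RG}, j_{RG}, i_{GR}, j_{GR}$. This is intricate but routine once the correct root $s$ and formula $C_m$ have been fixed for each subshape; enumerating the five correspondences then completes the proof.
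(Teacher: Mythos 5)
Your proposal follows the paper's own proof essentially verbatim: the same reduction via Propositions~\ref{prop:monochromatic}, \ref{prop:C_GR}, \ref{prop:C_RG} and \ref{prop:C_*}, the same subdivision into the shapes of Propositions~\ref{prop:diagrammesGR}, \ref{prop:diagrammesRG} and \ref{prop:diagrammes*}, and the same idea of rooting each diagram at one of the four corner points and recovering the rest through the min/max chains of Definition~\ref{def:C_numero}. The only discrepancy is your schematic pairing of $C_5$ and $C_6$ with the first two $C_*$ subshapes (the paper uses $C_5(\sigma,j_{GR})$ for the first diagram and $C_6(\sigma,i_{GR})$ for the second), a bookkeeping detail you already flag as to be settled when verifying the chains.
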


\begin{pf}
As $c$ is not monochromatic, then from Proposition~\ref{prop:monochromatic} $\sigma$ has at least a pattern $12$ which is not monochromatic.

If there is no increasing subsequence RG in $c$ then there is at least an increasing sequence GR in $c$. 
Thus from Propostion~\ref{prop:C_GR}, $c = C_{GR}(\sigma, i_{GR}, j_{GR})$. 
Moreover, $c$ has one of the $u$ shapes described in Proposition~\ref{prop:diagrammesGR}.
If the shape of $c$ is one of the two first shapes, then $j_{GR}$ is the leftmost point in the upper-right quadrant of $i_{GR}$
and $c = C_1(\sigma, i_{GR})$.
Otherwise the shape of $c$ is the third one and $i_{GR}$ is the topmost point in the bottom-left quadrant of $j_{GR}$
thus $c = C_2(\sigma, j_{GR})$.

If there is an increasing subsequence RG in $c$ but no increasing sequence GR, 
then from Proposition~\ref{prop:C_RG} $c = C_{RG}(\sigma, i_{RG}, j_{RG})$.
Moreover $c$ has one of the $5$ shapes described in Proposition~\ref{prop:diagrammesRG}.
If the shape of $c$ is one of the three first shapes, then $j_{GR}$ is the lowest point in the upper-right quadrant of $i_{GR}$
and $c = C_3(\sigma, i_{GR})$.
Otherwise the shape of $c$ is one of the two last shapes and $i_{GR}$ is the rightmost point in the bottom-left quadrant of $j_{GR}$
thus $c = C_4(\sigma, j_{GR})$.

If there is an increasing subsequence RG and an increasing sequence GR in $c$, 
then from Proposition~\ref{prop:C_*} $c = C_*(\sigma, i_{RG}, j_{RG}, i_{GR}, j_{GR})$.
Moreover $c$ has one of the $5$ shapes described in Proposition~\ref{prop:diagrammes*}.

If the shape of $c$ is the first one, let $u$ be the rightmost point in the top left quadrant of $j_{GR}$ (maybe $u = j_{RG}$).
Then applying rule (\rmnum{2}) to $i_{GR}$ and $u$, $c$ has the following shape:

\begin{minipage}{.2\textwidth}
\begin{tikzpicture}[scale=.5]
\useasboundingbox (1,-0.5) (7.25,5.5);
\fill [Hfill] (0,1) rectangle (1,4);
\fill [Vfill] (1,4) rectangle (4,5);
\fill [Hfill] (5,1) rectangle (6,4);
\fill [Vfill] (1,0) rectangle (2,1);
\fill [Vfill] (4,0) rectangle (5,1);
\draw (0,1) -- (6,1);
\draw (0,4) -- (6,4);
\draw (1,0) -- (1,5);
\draw (5,0) -- (5,5);
\draw [dotted] (4,0) -- (4,5);
\draw (3,2.5) node {$\varnothing$};
\draw (0.5,0.5) node {$\varnothing$};
\draw (3,0.5) node {$\varnothing$};
\draw (5.5,4.5) node {$\varnothing$};
\draw (4.5,4.5) node {$\varnothing$};
\Hpoint{1}{2};
\draw (0.5,1.75) node {{\tiny $i_{RG}$}};
\Vpoint{3}{4};
\draw (2.7,4.4) node {{\tiny $j_{RG}$}};
\Vpoint{4}{4.5};
\draw (4,4.5) node[left] {{\tiny $u$}};
\Vpoint{2}{1};
\draw (2,0.6) node {{\tiny $i_{GR}$}};
\Hpoint{5}{3};
\draw (5.75,3) node {{\tiny $j_{GR}$}};
\zoneRG{1}{4}{1}
\zoneGR{6}{0}{1}
\end{tikzpicture}
\end{minipage}
\begin{minipage}{.75\textwidth}
Thus $i_{GR}$ is the rightmost point on the left of $u$ below $j_{GR}$. 
Moreover $i_{RG}$ is the rightmost point on the topleft quadrant of $i_{GR}$ below $j_{GR}$.
Finally $j_{RG}$ is the lowest point on the right of $i_{GR}$ and on the left of $u$.
Hence $c = C_5(\sigma, j_{GR})$.
\end{minipage}

If the shape of $c$ is the second one of Proposition~\ref{prop:diagrammes*}, 
let $u$ be the rightmost point in the top right quadrant of $j_{RG}$ (maybe $u = j_{RG}$).
From rule (\rmnum{8}) applied to $j_{RG}$ and $i_{GR}$, $u < i_{GR}$.
Then applying rule (\rmnum{2}) to $i_{RG}$ and $j_{GR}$ 
and applying rule (\rmnum{1}) to $j_{RG}$ and $u$ if $u \neq j_{RG}$, $c$ has the following shape:

\begin{minipage}{.2\textwidth}
\begin{tikzpicture}[scale=.5]
\useasboundingbox (1,-0.5) (5,5.5);
\fill [Hfill] (0,1) rectangle (1,4);
\fill [Vfill] (1,4) rectangle (3,5);
\fill [Hfill] (5,1) rectangle (6,4);
\fill [Vfill] (1,0) rectangle (5,1);
\draw [Hfill] (3,1) rectangle (4,3);
\draw (0,1) -- (5,1);
\draw (0,4) -- (5,4);
\draw (1,0) -- (1,5);
\draw (5,0) -- (5,4);
\draw [dotted] (1,3) -- (5,3);
\draw (2,2.5) node {$\varnothing$};
\draw (4.5,2.5) node {$\varnothing$};
\draw (0.5,0.5) node {$\varnothing$};
\draw (5.5,4.5) node {$\varnothing$};
\Hpoint{1}{2};
\draw (0.5,1.75) node {{\tiny $i_{RG}$}};
\Vpoint{2}{4};
\draw (2,4.5) node {{\tiny $j_{RG}$}};
\Vpoint{4}{1};
\draw (4,0.6) node {{\tiny $i_{GR}$}};
\Hpoint{5}{3};
\draw (5.75,3) node {{\tiny $j_{GR}$}};
\zoneRG{1}{4}{1}
\zoneGR{6}{0}{1}
\Vpoint{3}{4.5}
\draw (3,4.5) node[right] {\tiny $u$};
\end{tikzpicture}
\end{minipage}
\begin{minipage}{.75\textwidth}
Thus $j_{GR}$ is the leftmost point in the upper right quadrant of $i_{GR}$ and
$u$ is the rightmost point in the upper left quadrant of $j_{GR}$.
Moreover $i_{RG}$ is the rightmost point to the left of $u$, below $j_{GR}$ and above $i_{GR}$. 
Finally, $j_{RG}$ is the lowest point in the upper left quadrant of $j_{GR}$ and to the right of $i_{RG}$. 
Hence $c = C_6(\sigma, i_{GR})$.
\end{minipage}

If the shape of $c$ is the third one of Proposition~\ref{prop:diagrammes*}, 
let $u$ be the lowest point in the lower left quadrant of $i_{RG}$ (maybe $u = i_{RG}$).
From rule (\rmnum{7}) applied to $i_{RG}$ and $j_{GR}$, $\sigma_u > \sigma_{j_{GR}}$.
Then applying rule (\rmnum{1}) to $i_{GR}$ and $j_{RG}$ 
and applying rule (\rmnum{2}) to $u$ and $i_{RG}$ if $u \neq i_{RG}$, $c$ has the following shape: 

\begin{minipage}{.15\textwidth}
\begin{tikzpicture}[scale=.5]
\useasboundingbox (1,-1) (6.25,6);
\fill [Hfill] (0,3) rectangle (1,5);
\fill [Vfill] (1,5) rectangle (4,6);
\fill [Hfill] (4,1) rectangle (5,5);
\fill [Vfill] (1,0) rectangle (4,1);
\Hpoint{0.5}{3};
\draw (0.5,3) node[below] {\tiny $u$};
\draw [Vfill] (2,2) rectangle (4,3);
\draw (1,1) -- (5,1);
\draw (0,5) -- (5,5);
\draw (1,0) -- (1,5);
\draw (4,0) -- (4,5);
\draw (2.5,4) node {$\varnothing$};
\draw (2.5,1.5) node {$\varnothing$};
\draw (0.5,0.5) node {$\varnothing$};
\draw (4.5,5.5) node {$\varnothing$};
\Hpoint{1}{4};
\draw (0.5,4.25) node {{\tiny $i_{RG}$}};
\Vpoint{3}{5};
\draw (3,5.5) node {{\tiny $j_{RG}$}};
\Vpoint{2}{1};
\draw (2,0.6) node {{\tiny $i_{GR}$}};
\Hpoint{4}{2};
\draw (4.75,2) node {{\tiny $j_{GR}$}};
\zoneRG{1}{5}{1}
\zoneGR{5}{0}{1}
\draw [dotted] (2,1) -- (2,5);
\end{tikzpicture}
\end{minipage}
\begin{minipage}{.8\textwidth}
Thus $i_{GR}$ is the topmost point in the lower left quadrant of $j_{GR}$ and 
$u$ is the lowest point in the upper left quadrant of $i_{GR}$. 
Moreover $j_{RG}$ is the lowest point above $u$, to the right of $i_{GR}$ and to the left of $j_{GR}$. 
Finally, $i_{RG}$ is the rightmost point to the lower left of $j_{RG}$ and above $u$.
Hence $c = C_7(\sigma, j_{GR})$
\end{minipage}

If the shape of $c$ is the fourth one of Proposition~\ref{prop:diagrammes*},
let $u$ be the topmost point to the upright quadrant of $j_{GR}$
and $v$ be the rightmost point to the top-right quadrant of $j_{RG}$ (maybe $v = j_{RG}$).
Note that $u$ is above $i_{RG}$ as $u$ is above $x$ ($u$ is the topmost point) which is above $i_{RG}$.
Then applying rule (\rmnum{2}) to $i_{RG}$ and $u$
and applying rule (\rmnum{3}) to $j_{RG}$ and $v$ if $v \neq j_{RG}$, $c$ has the following shape:

\begin{minipage}{.2\textwidth}
\begin{tikzpicture}[scale=.5]
\useasboundingbox (1,-1) (6,6);
\fill [Hfill] (0,2) rectangle (1,5);
\draw (0.5,1.5) node {$\varnothing$};
\draw (4.5,5.5) node {$\varnothing$};
\draw (5.5,5.5) node {$\varnothing$};
\draw (3.5,5.5) node {$\varnothing$};
\draw (5.5,4.5) node {$\varnothing$};
\fill [Vfill] (1,5) rectangle (3,6);
\fill [Hfill] (5,1) rectangle (6,4);
\fill [Vfill] (1,0) rectangle (5,1);
\fill [Hfill] (3,1) rectangle (4,4);
\draw (0,1) -- (6,1);
\draw (0,2) -- (1,2);
\draw (2,4) -- (6,4);
\draw (0,5) -- (6,5);
\draw (1,0) -- (1,6);
\draw (2,0) -- (2,5);
\draw (3,0) -- (3,6);
\draw (4,0) -- (4,6);
\draw (5,0) -- (5,6);
\draw [dotted] (1,2) -- (5,2);
\draw (2.5,2.5) node {$\varnothing$};
\draw (2.5,4.5) node {$\varnothing$};
\draw (3.5,4.5) node {$\varnothing$};
\draw (1.5,3) node {$\varnothing$};
\draw (4.5,3.5) node {$\varnothing$};
\draw (4.5,1.5) node {$\varnothing$};
\draw (0.5,0.5) node {$\varnothing$};
\draw (4.5,4.5) node {$\varnothing$};
\draw (4.5,2.5) node {$\varnothing$};
\Hpoint{1}{3};
\draw (0.5,2.75) node {{\tiny $i_{RG}$}};
\Vpoint{2}{5};
\draw (2,5.5) node[xshift=-3pt] {{\tiny $j_{RG}$}};
\Vpoint{4}{1};
\Hpoint{5.5}{4};
\draw (5.5,4) node[below] {\tiny $u$};
\draw (4,0.6) node {{\tiny $i_{GR}$}};
\Hpoint{5}{2};
\draw (5.75,1.75) node {{\tiny $j_{GR}$}};
\zoneRG{1}{5}{1}
\zoneGR{6}{0}{1}
\Vpoint{3}{5.5};
\draw (3,5.5) node[left] {\tiny $v$};
\end{tikzpicture}
\end{minipage}
\begin{minipage}{.75\textwidth}
Thus $j_{GR}$ is the leftmost point in the up right quadrant of $i_{GR}$. 
Point $u$ is the topmost point in the upper right quadrant of $j_{GR}$. 
Point $v$ is the rightmost point in the upper left quadrant of $u$. 
Then $i_{RG}$ is the rightmost point to the left of $v$, below $u$ and above $i_{GR}$. 
At last, $j_{RG}$ is the lowest point above $u$, to the right of $i_{RG}$ and to the left of $v$.
Hence $c = C_8(\sigma, i_{GR})$
\end{minipage}

If the shape of $c$ is the last one of Proposition~\ref{prop:diagrammes*},
let $u$ be the leftmost point in the lower left quadrant of $i_{GR}$
and $v$ be the lowest point in the lower left quadrant of $i_{RG}$ (maybe $v = i_{RG}$).
Note that $u$ is to the left of $j_{RG}$ as it is to the left of $y$ ($u$ is the leftmost point) and $y$ is to the left of $j_{RG}$. 
Then applying rule (\rmnum{1}) to $u$ and $j_{RG}$
and applying rule (\rmnum{2}) to $v$ and $i_{RG}$ if $v \neq i_{RG}$, $c$ has the following shape:

\begin{minipage}{.2\textwidth}
\begin{tikzpicture}[scale=.5]
\useasboundingbox (1,-1) (6,6);
\fill [Hfill] (0,3) rectangle (1,5);
\draw (0.5,1.5) node {$\varnothing$};
\draw (0.5,2.5) node {$\varnothing$};
\draw (4.5,5.5) node {$\varnothing$};
\draw (5.5,5.5) node {$\varnothing$};
\draw (0,3) -- (5,3);
\fill [Vfill] (1,5) rectangle (4,6);
\fill [Hfill] (5,1) rectangle (6,5);
\fill [Vfill] (2,0) rectangle (5,1);
\fill [Vfill] (2,2) rectangle (5,3);
\Vpoint{2}{0.5};
\draw (2,0.5) node[right] {\tiny $u$};
\draw (0,1) -- (6,1);
\draw (0,2) -- (1,2);
\draw (1,4) -- (6,4);
\draw (0,5) -- (6,5);
\draw (1,0) -- (1,6);
\draw[dotted] (2,0) -- (2,5);
\draw (4,5) -- (4,6);
\draw (5,0) -- (5,6);
\draw [dotted] (1,2) -- (5,2);
\Hpoint{0.5}{3};
\draw (0.5,3.5) node {\tiny $v$};
\draw (3,4.5) node {$\varnothing$};
\draw (3,3.5) node {$\varnothing$};
\draw (3,1.5) node {$\varnothing$};
\draw (0.5,0.5) node {$\varnothing$};
\draw (1.5,2) node {$\varnothing$};
\Hpoint{1}{4};
\draw (0.5,4.25) node {{\tiny $i_{RG}$}};
\Vpoint{3}{5};
\draw (3,5.5) node[xshift=-3pt] {{\tiny $j_{RG}$}};
\Vpoint{4}{1};
\draw (4,0.6) node {{\tiny $i_{GR}$}};
\Hpoint{5}{2};
\draw (5.75,1.75) node {{\tiny $j_{GR}$}};
\zoneRG{1}{5}{1}
\zoneGR{6}{0}{1}
\end{tikzpicture}
\end{minipage}
\begin{minipage}{.75\textwidth}
Thus $i_{GR}$ is the topmost point in the lower left quadrant of $j_{GR}$ and 
$u$ is the leftmost point in the lower left quadrant of $i_{GR}$. 
Moreover $v$ is the lowest point in the upper left quadrant of $u$ and 
$j_{RG}$ is the lowest point above $v$ and to the right of $u$ and to the left of $i_{GR}$.
Finally, $i_{RG}$ is the rightmost point in the lower left quadrant of $j_{RG}$ and above $v$. 
Hence $c = C_9(\sigma, j_{GR})$
\end{minipage}
\end{pf}

\begin{prop}\label{prop:linearTestC_numero}
Let $\sigma$ be a permutation, $s \in [1..|\sigma|]$ and $m \in [1..9]$. 
Then we can compute $C_m(\sigma, s)$, test whether all points of $\sigma$ are colored and check whether $C_m(\sigma, s)$ is valid
in linear time w.r.t. $|\sigma|$.
\end{prop}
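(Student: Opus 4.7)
\begin{pf}[Plan]
The plan is to treat the three tasks (compute $C_m(\sigma,s)$, check full colouring, check validity) separately and bound each by a single linear pass (or a constant number of them).

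First, I would show that the auxiliary indices occurring in Definition~\ref{def:C_numero} can be found in linear time. For each $m \in [1..9]$, the definition of $C_m(\sigma, s)$ refers to a bounded number of auxiliary points (at most five, namely some subset of $t,u,v,p,q$), each defined as a minimum or maximum of $\sigma_k$ or $k$ over a set of indices delimited by inequalities on already-computed quantities. A single scan of the diagram suffices to compute each such extremum, so all the auxiliary indices, and hence the choice of which of the shapes of Propositions~\ref{prop:diagrammesGR}, \ref{prop:diagrammesRG} or \ref{prop:diagrammes*} applies, are obtained in time $O(|\sigma|)$ overall. If at any moment the required extremum is taken over an empty set (for instance there is no point above $\sigma_s$ to the right of $s$ when computing $C_1$), or the relative position of the auxiliary points does not match any of the configurations pictured in Definitions~\ref{def:C_GR}, \ref{def:C_RG} and \ref{def:C_*}, then $C_m(\sigma,s)$ is either undefined or empty, which is recognised in constant time once the indices are known.

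Next I would describe the actual colouring step. Each shape appearing in the definitions partitions (part of) the diagram into a constant number of axis-aligned rectangular zones determined by the row- and column-coordinates of the auxiliary points. Given a point $\sigma_k$, deciding which zone it belongs to requires only a constant number of comparisons, so a single pass through the $|\sigma|$ points produces the partial colouring in time $O(|\sigma|)$. During the same pass one can record, for each zone that actually has a decorative role in the definition (e.g.\ the zones $A,B$ of $C_*$ or the zones $1,2$ of $C_{RG}$), whether it is empty or not; this information then determines in constant time the colour to assign to the remaining free zone (zone $1$ of $C_*$, zone $3$ of $C_{RG}$, etc.), and a second linear sweep fixes these last colours. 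Testing that every point of $\sigma$ has received a colour is also a linear pass.

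Finally, once $C_m(\sigma,s)$ has been computed and is fully colouring $\sigma$, checking validity is handled directly by Proposition~\ref{prop:Check-Valid-linear}: run Algorithm~\ref{algo:Push} on the bicolouring and then Algorithm~\ref{algo:popOut} on the resulting stack configuration $Conf(C_m(\sigma,s))$. Both run in linear time w.r.t.\ $|\sigma|$ by Lemma~\ref{lem:AlgoPushTerminates} and Proposition~\ref{prop:AlgoPopOut}, and their conjoint output tells us exactly whether $C_m(\sigma,s)$ is valid. Summing the three linear-time contributions yields the announced $O(|\sigma|)$ bound. The only delicate point is the bookkeeping for the conditional zones of $C_5, C_6, C_7, C_8, C_9$ and the degenerate cases (auxiliary points coinciding, extrema over empty sets); however, since the number of such cases is bounded by a constant independent of $|\sigma|$, a careful but routine case analysis absorbs them into the linear bound.
\end{pf}
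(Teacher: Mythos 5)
Your plan matches the paper's (very terse) proof, which simply cites Theorems~\ref{thm:RGIncreasing} and \ref{thm:GRIncreasing} together with Proposition~\ref{prop:Check-Valid-linear}: the auxiliary points are a constant number of extrema computable by linear scans, the colouring is read off the diagram, and validity is checked by Algorithm~\ref{algo:Push} followed by Algorithm~\ref{algo:popOut}. Your third step and your handling of the conditional zones are exactly right.

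One imprecision is worth fixing. You assert that each shape partitions the relevant part of the diagram into a \emph{constant} number of axis-aligned rectangles determined by the auxiliary points. That is true of the central portion of each diagram, but the partial bicolourings $C_{GR}$, $C_{RG}$ and $C_*$ of Definitions~\ref{def:C_GR}, \ref{def:C_RG} and \ref{def:C_*} also include the regions drawn with the $A_{RG}$ and $A_{GR}$ staircase pictures (upper-left of an increasing sequence $RG$, lower-right of an increasing sequence $GR$). Those regions are coloured by the alternating transitive-closure construction of Theorems~\ref{thm:RGIncreasing} and \ref{thm:GRIncreasing}, and the number of steps of that staircase can grow with $|\sigma|$, so it is not a constant number of rectangles. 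This does not break your argument --- the two theorems explicitly state that the colours in those regions can be decided in linear time knowing only the rooting ascent, which is precisely why the paper cites them --- but your proof should invoke them for that part of the computation rather than the constant-rectangle count.
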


\begin{pf}
Theorems~\ref{thm:RGIncreasing} and \ref{thm:GRIncreasing} and \ref{prop:Check-Valid-linear}
\end{pf}

\subsection{Algorithm and linear number of sortings for $\ominus$-indecomposable permutations}

\begin{algorithm}[H]
  \KwData{$\sigma$ a $\ominus$-indecomposable permutation}
  \KwResult{The set $E$ of valid colorings of $\sigma$}

  \For{$c$ bicoloring of $\sigma$ unicolor \R or unicolor \G}{
 	If $c$ is valid then add $c$ to $E$\;
    }
  \For{$s$ from $1$ to $|\sigma|$}{
      \For{$m$ from $1$ to $9$}{
        $c = C_m(\sigma, s)$\;
 	If all points of $\sigma$ are colored and $c$ is valid then add $c$ to $E$\;
      }
    }
  \caption{ColoringIndecOptimal$(\sigma)$}\label{alg:Valid-colorings-indec_optimal}
\end{algorithm}

Given any point $s$ in the permutation the Algorithm decides if the permutation can be colored in each possible case depicted in Propositions~\ref{prop:diagrammesGR},\ref{prop:diagrammesRG} and \ref{prop:diagrammes*}. 
Note that diagrams of Propositions~\ref{prop:diagrammesGR},\ref{prop:diagrammesRG} and \ref{prop:diagrammes*} depend on $v$ points $i_{GR},i_{RG},j_{GR},j_{RG}$. 
Indeed, we prove in section~\ref{ssec:root} that any diagram can be rooted in one point -- say $i_{RG}$ for example -- 
and from this points, we can find in linear time any other points -- $i_{GR},j_{GR},j_{RG}$ for instance --. 
Then, we color the permutations with respect to the different zones defined in the diagram. 
In this process, some points may be uncolored, meaning that they lie in empty zone of the diagram hence have to be rejected. 
At last, we have a coloring according to diagram and we have to check that this coloring is valid.

\begin{thm}\label{thm:Algo_ominus-indec_optimal}
A $\ominus$-indecomposable permutation of size $n$ has at most $9n+2$ valid colorings. 
Those colorings can be computed using Algorithm~\ref{alg:Valid-colorings-indec_optimal} 
in time ${\mathcal O}(n^{2})$ which is optimal.
\end{thm}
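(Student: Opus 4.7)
The plan is to assemble the theorem from the preceding structural and algorithmic propositions, and then separately address optimality by exhibiting a lower-bound family.

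The bound $9n+2$ on the number of valid colorings follows immediately from Proposition~\ref{prop:valid=C_numero}: every valid coloring of a $\ominus$-indecomposable permutation $\sigma$ is either one of the two monochromatic colorings, or equals $C_m(\sigma,s)$ for some $m \in \{1,\dots,9\}$ and some $s \in \{1,\dots,n\}$. Summing these possibilities yields at most $2+9n$ distinct candidates, hence at most $9n+2$ valid colorings. Correctness of Algorithm~\ref{alg:Valid-colorings-indec_optimal} is then transparent: the algorithm enumerates exactly this family of candidate colorings, and adds $c$ to $E$ precisely when $c$ is a total coloring of $\sigma$ and is valid; by the argument just given, no valid coloring is missed, and by construction no invalid one is kept.

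For the running time, the outer loops iterate $2 + 9n$ times. Proposition~\ref{prop:linearTestC_numero} ensures that each iteration computes $C_m(\sigma,s)$, tests whether all points of $\sigma$ have been assigned a color, and checks validity of the resulting bicoloring, all in time $\mathcal{O}(n)$. Testing the two monochromatic colorings is also linear using Proposition~\ref{prop:Check-Valid-linear}. This yields total running time $\mathcal{O}(n \cdot n) = \mathcal{O}(n^2)$.

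The main obstacle is to justify the optimality claim. To show $\mathcal{O}(n^2)$ is tight, the plan is to exhibit a family of $\ominus$-indecomposable permutations $(\sigma_n)_{n \geq 1}$ with $|\sigma_n|=n$ admitting $\Theta(n)$ distinct valid colorings. Since each coloring assigns a color to each of the $n$ points, the total output size would then be $\Theta(n^2)$, forcing any algorithm that enumerates all valid colorings to use $\Omega(n^2)$ time. A natural candidate construction is to take permutations containing a long alternating bicolored ``staircase'' that can be anchored at many different points, each anchor $s$ producing via $C_1(\sigma_n,s)$ or $C_2(\sigma_n,s)$ a distinct valid coloring; verifying that such a construction indeed yields linearly many valid colorings, while remaining $\ominus$-indecomposable, will require a careful case analysis paralleling the diagrams in Propositions~\ref{prop:diagrammesGR} and~\ref{prop:diagrammesRG}, and is the most delicate part of the proof.
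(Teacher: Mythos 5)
Your treatment of the counting bound and the running time is correct and coincides with the paper's: the $9n+2$ bound is read off directly from Proposition~\ref{prop:valid=C_numero} (two monochromatic colorings plus $C_m(\sigma,s)$ for $m\in[1..9]$, $s\in[1..n]$), and the $\mathcal{O}(n^2)$ running time follows from Proposition~\ref{prop:linearTestC_numero} applied to each of the $\mathcal{O}(n)$ candidates.

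The gap is in the optimality argument. You correctly identify the right strategy --- exhibit a family of $\ominus$-indecomposable permutations with $\Theta(n)$ valid colorings, so that the output alone has size $\Theta(n^2)$ --- but you stop at proposing an unspecified ``staircase'' construction and explicitly defer its verification, which is precisely the part that needs to be proved. As it stands the lower bound is not established. The paper closes this gap with a much simpler witness: the identity permutation of size $n$, which is $\ominus$-indecomposable and has \emph{exactly} $2n$ valid colorings (Proposition~\ref{prop:nbColoringIdentity}). The verification there is immediate: the colorings $C_{RG}^k$ and $C_{GR}^k$ (one color on a prefix of length $k$, the other on the suffix) are valid, and rules $(\rmnum{3})$ and $(\rmnum{4})$ of $\mathcal{R}_8$ force any valid coloring of the identity to have at most one color change along the diagonal, so these $2n$ colorings are all of them. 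You should either carry out this identity computation or fully specify and verify your own family; without one of these, the claim that $\mathcal{O}(n^2)$ is optimal is unsupported.
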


\begin{pf}
This is a direct consequence of Propositions~\ref{prop:valid=C_numero} and \ref{prop:linearTestC_numero}, 
except for the optimality.
Proposition~\ref{prop:nbColoringIdentity} below implies that 
the size of the set of valid colorings of the identity of size $n$ is $2n^2$, proving the optimality. 
\end{pf}

\begin{prop}\label{prop:nbColoringIdentity}
For all $n$ the identity of size $n$ has exactly $2n$ valid colorings.
\end{prop}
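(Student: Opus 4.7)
The plan is to directly enumerate the valid colorings of $\text{Id}_n = 1\,2\,\ldots\,n$ by exploiting the fact that the identity has no $132$ or $213$ pattern, which trivially eliminates two of the four forbidden colored patterns of Definition~\ref{def:validColoring}.

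First I would observe that for $\sigma = \text{Id}_n$, because $\sigma$ is monotone increasing, there is no occurrence of $132$ nor of $213$ in $\sigma$. Consequently the colored patterns \RRR and \GGG are vacuously avoided, no matter how we bicolor the points. Hence the validity of a coloring of $\text{Id}_n$ is equivalent to avoiding the remaining two patterns \GGR and \RRG.

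Next I would specialize these two remaining constraints. Since $\sigma_i = i$, the ``vertical'' and ``horizontal'' betweenness conditions both collapse to ``index between''. More precisely, a pattern \GGR in $\text{Id}_n$ is exactly a triple of indices $i_1 < k < i_2$ with $i_1, i_2$ colored \G and $k$ colored \R; similarly, a pattern \RRG is exactly a triple $i_1 < k < i_2$ with $i_1, i_2$ colored \R and $k$ colored \G. Thus a coloring of $\text{Id}_n$ is valid if and only if, reading its color word $w \in \{R,G\}^n$ from left to right, $w$ contains neither $GRG$ nor $RGR$ as a (not necessarily contiguous) subsequence.

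I would then finish by a short combinatorial count. A word in $\{R,G\}^n$ avoids both $GRG$ and $RGR$ as subsequences if and only if it has at most two maximal monochromatic runs, i.e. it is of the form $R^aG^{n-a}$ or $G^aR^{n-a}$ with $0 \leq a \leq n$. Each form gives $n+1$ words, and the two monochromatic words $R^n$ and $G^n$ are counted in both, so the total number of valid colorings is $2(n+1) - 2 = 2n$. There is no real obstacle here; the only care needed is to notice the two equivalent ways of writing the monochromatic colorings so as not to overcount.
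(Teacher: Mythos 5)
Your proof is correct and follows essentially the same route as the paper's: both characterize the valid colorings of the identity as the color words with at most two monochromatic runs ($R^aG^{n-a}$ or $G^aR^{n-a}$) and count them as $2n$, the paper deriving the run condition from rules $(\rmnum{3})$ and $(\rmnum{4})$ of $\mathcal{R}_8$ while you derive it directly from Definition~\ref{def:validColoring} via the $GRG$/$RGR$ subsequence formulation. Your inclusion--exclusion count $2(n+1)-2=2n$ is handled correctly.
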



\begin{pf}
Let $\sigma$ be the identity of size $n$. 
For all $k$ between $1$ and $n$ let $C_{RG}^k$ (resp. $C_{GR}^k$) be the coloring of $\sigma$ such that for all $i$, $\sigma$ is in \R (resp. \G) if $i \leq k$ and in \G (resp. \R) otherwise.
Then it is straightforward to check using Proposition~\ref{prop:rulesR8} that $C_{RG}^k$ (resp. $C_{GR}^k$) is a valid coloring of $\sigma$. 
Conversely if $c$ is a valid coloring of the identity, rules $(\rmnum{3})$ and $(\rmnum{4})$ of $\mathcal{R}_8$ imply that there are at most one pair of consecutive points whoses colors are different. So $c$ is some $C_{RG}^k$ or some $C_{GR}^k$.
\end{pf}

The property of having a linear number of sortings is not a special case of the identity.
Indeed there are some simple permutations that also have a linear number of sortings, as shown in the next proposition.

\begin{prop}
Permutations $\sigma^{(n)} = (2n-1)(2n-3)(2n)(2n-5)(2n-2)(2n-7)(2n-4)\ldots 5\,8\,3\,6\,1\,4\,2$ of size $2n$ have at least $2n-3$ valid colorings. 
\end{prop}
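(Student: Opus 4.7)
The plan is to exhibit $2n-3$ pairwise-distinct valid colorings of $\sigma^{(n)}$ and verify each via Proposition~\ref{prop:rulesR8}.

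First I would analyze $\sigma^{(n)}$ structurally. It is a simple permutation of size $2n$ with exactly $n-1$ consecutive-pair ascents, located at $(2k, 2k+1)$ for $k = 1, \ldots, n-1$; all other consecutive pairs are descents. The odd-position values $\sigma_{2k+1}$ decrease through $\{2n, 2n-2, \ldots, 4\}$ from $k = 1$ onwards, with $\sigma_1 = 2n-1$, while the even-position values $\sigma_{2k}$ decrease through $\{2n-3, 2n-5, \ldots, 1\}$ from $k = 1$ onwards, with $\sigma_{2n} = 2$. This zigzag pattern supplies an abundance of $RG$ and $GR$ ascents to serve as roots of colorings.

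Second, for each $k \in \{1, \ldots, n-1\}$ I would form the two candidate partial bicolorings $C_{RG}(\sigma^{(n)}, 2k, 2k+1)$ and $C_{GR}(\sigma^{(n)}, 2k, 2k+1)$ from Definitions~\ref{def:C_RG} and~\ref{def:C_GR}. By Theorems~\ref{thm:RGIncreasing} and~\ref{thm:GRIncreasing}, these colorings assign colors to the bulk of the diagram of $\sigma^{(n)}$ deterministically, leaving only the small residual zone described in each definition. I would then check, using $\mathcal{R}_8$, that (i) every point of $\sigma^{(n)}$ actually receives a color and (ii) the resulting full bicoloring avoids the four forbidden colored patterns, for at least $2n-3$ of the $2(n-1)$ candidates. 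Distinctness of the valid members is immediate from Proposition~\ref{prop:valid=C_numero}, which shows that the root ascent of any non-monochromatic valid coloring is uniquely determined.

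The main obstacle is the case analysis of $\mathcal{R}_8$. The alternating descent-ascent structure produces many $132$ and $213$ subpatterns in $\sigma^{(n)}$, and one must check, for each candidate, that these subpatterns are split between the \R and \G regions, so that no forbidden monochromatic pattern (\RRR or \GGG) survives, and that the pairs $12$ of one color with a point of the other color between them (\GGR, \RRG) are correctly handled by the structure imposed by Theorems~\ref{thm:RGIncreasing} and~\ref{thm:GRIncreasing}. The boundary values $k = 1$ and $k = n-1$ deserve separate inspection because the atypical positions $1$ and $2n$ hold the irregular values $2n-1$ and $2$; one of the candidates in these boundary cases fails validity, accounting for the single ``missing'' coloring and yielding exactly $2n-3$ valid members of the family.
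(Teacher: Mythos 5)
Your structural description of $\sigma^{(n)}$ is correct, but the family of candidates you propose cannot yield $2n-3$ valid colorings: it fails already for $n=3$. The colorings $C_{RG}(\sigma,i,j)$ and $C_{GR}(\sigma,i,j)$ are by construction \emph{pure-type}: by Propositions~\ref{prop:C_GR} and~\ref{prop:C_RG} they can only equal a valid coloring in which every bicolored ascent is $RG$ (resp.\ $GR$), rooted at the extremal such ascent. For the zigzag permutation almost all of your $2(n-1)$ candidates are rejected. Take $n=3$, $\sigma^{(3)}=5\,3\,6\,1\,4\,2$ and the ascent $(i,j)=(4,5)$, so $\sigma_4=1$, $\sigma_5=4$. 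In $C_{GR}(\sigma^{(3)},4,5)$ one gets $a=4$ and $\sigma_b=4$, so the points of values $5$ and $6$ (indices $1$ and $3$) lie in the region (index $<a$, value $>\sigma_b$) that Definition~\ref{def:C_GR} leaves uncolored, and the candidate is rejected; equivalently, putting $1$ in \G, $4$ in \R and forbidding every \ascentRG{} forces $3$ into \R, hence $6$ into \R, and then rule $(\rmnum{2})$ of $\mathcal{R}_8$ applied to the \ascentRR{} $(3,4)$ forces $6$ into \G, a contradiction. Symmetrically, $C_{RG}(\sigma^{(3)},2,3)$ colors $6$ in \G and all of $5,3,1,4,2$ in \R, and the points $1,4,2$ then form the forbidden pattern \RRR. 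So only two of your four candidates survive while three are needed; for general $n$ exactly two survive, namely $C_{GR}(\sigma^{(n)},2,3)$ and $C_{RG}(\sigma^{(n)},2n-2,2n-1)$ (e.g.\ for $n=4$ the four middle candidates all fail as above), which is a constant, far short of $2n-3$. Your closing claim that only one boundary candidate fails is therefore not just unproved but false.

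The missing idea is that the witnesses must be \emph{mixed} colorings in the sense of Proposition~\ref{prop:diagrammes*}, containing both an \ascentRG{} and an \ascentGR; pure-type colorings of $\sigma^{(n)}$ are too scarce (there are only four of them, rooted at the ascents $(1,3)$, $(2,3)$, $(2n-2,2n-1)$ and $(2n-2,2n)$). The paper's proof takes each set of four points occupying four consecutive indices (resp.\ values) and forming the pattern $2413$ (resp.\ $3142$), and uses them as $\{i_{RG},j_{RG},i_{GR},j_{GR}\}$ in the last (resp.\ third) diagram of Proposition~\ref{prop:diagrammes*}; the alternating structure of $\sigma^{(n)}$ produces exactly $2n-3$ such windows, each giving a distinct valid coloring. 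If you want to keep a ``one root per candidate'' strategy, you would have to root the mixed shapes via the maps $C_5,\dots,C_9$ of Definition~\ref{def:C_numero}, not via $C_{RG}$ and $C_{GR}$.
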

\begin{proof}
To prove the result, we exhibit $2n-3$ colorings. 
We look at set of four points of $\sigma$ whose indices (resp. values) are consecutive and which form a pattern $2\,4\,1\,3$ (resp. $3\,1\,4\,2$). 
Notice that they can be taken to be $\{i_{RG}, i_{GR},j_{RG},j_{GR}\}$ in a valid coloring of $\sigma$ respecting to the last (resp. third) diagram of Proposition~\ref{prop:diagrammes*}, as shown in the figure below. 
This way we obtain $2n-3$ valid colorings of $\sigma$.

\noindent\begin{tikzpicture}[scale=.25]
\permutation{9,7,10,5,8,3,6,1,4,2}
\end{tikzpicture}
\hfill
\begin{tikzpicture}[scale=.25]
\draw[very thick] (1.5,7.5) rectangle +(4,3);
\draw[Vfill,V] (2,7) rectangle +(1,1); 
\draw[Hfill,H] (1,9) rectangle +(1,1); 
\draw[Vfill,V] (4,5) rectangle +(1,1); 
\draw[Hfill,H] (7,6) rectangle +(1,1); 
\draw[Hfill,H] (9,4) rectangle +(1,1); 
\draw[Hfill,H] (10,2) rectangle +(1,1); 
\draw[Vfill,V] (3,10) rectangle +(1,1); 
\draw[Hfill,H] (5,8) rectangle +(1,1); 
\draw[Vfill,V] (6,3) rectangle +(1,1); 
\draw[Vfill,V] (8,1) rectangle +(1,1); 

\permutation{9,7,10,5,8,3,6,1,4,2}
\end{tikzpicture}
\hfill
\begin{tikzpicture}[scale=.25]
\draw[very thick] (2.5,5.5) rectangle +(3,5);
\draw[Hfill,H] (2,7) rectangle +(1,1); 
\draw[Hfill,H] (1,9) rectangle +(1,1); 
\draw[Vfill,V] (4,5) rectangle +(1,1); 
\draw[Hfill,H] (7,6) rectangle +(1,1); 
\draw[Hfill,H] (9,4) rectangle +(1,1); 
\draw[Hfill,H] (10,2) rectangle +(1,1); 
\draw[Vfill,V] (3,10) rectangle +(1,1); 
\draw[Hfill,H] (5,8) rectangle +(1,1); 
\draw[Vfill,V] (6,3) rectangle +(1,1); 
\draw[Vfill,V] (8,1) rectangle +(1,1); 

\permutation{9,7,10,5,8,3,6,1,4,2}
\end{tikzpicture}
\hfill
\begin{tikzpicture}[scale=.25]
\draw[very thick] (2.5,5.5) rectangle +(5,3);
\draw[Hfill,H] (2,7) rectangle +(1,1); 
\draw[Hfill,H] (1,9) rectangle +(1,1); 
\draw[Vfill,V] (4,5) rectangle +(1,1); 
\draw[Hfill,H] (7,6) rectangle +(1,1); 
\draw[Hfill,H] (9,4) rectangle +(1,1); 
\draw[Hfill,H] (10,2) rectangle +(1,1); 
\draw[Vfill,V] (3,10) rectangle +(1,1); 
\draw[Vfill,V] (5,8) rectangle +(1,1); 
\draw[Vfill,V] (6,3) rectangle +(1,1); 
\draw[Vfill,V] (8,1) rectangle +(1,1); 

\permutation{9,7,10,5,8,3,6,1,4,2}
\end{tikzpicture}
\hfill
\begin{tikzpicture}[scale=.25]
\draw[very thick] (4.5,3.5) rectangle +(3,5);
\draw[Hfill,H] (2,7) rectangle +(1,1); 
\draw[Hfill,H] (1,9) rectangle +(1,1); 
\draw[Hfill,H] (4,5) rectangle +(1,1); 
\draw[Hfill,H] (7,6) rectangle +(1,1); 
\draw[Hfill,H] (9,4) rectangle +(1,1); 
\draw[Hfill,H] (10,2) rectangle +(1,1); 
\draw[Vfill,V] (3,10) rectangle +(1,1); 
\draw[Vfill,V] (5,8) rectangle +(1,1); 
\draw[Vfill,V] (6,3) rectangle +(1,1); 
\draw[Vfill,V] (8,1) rectangle +(1,1); 

\permutation{9,7,10,5,8,3,6,1,4,2}
\end{tikzpicture}
\end{proof}

\subsection{Final algorithm}

Recall first that if a permutation is $\ominus$-decomposable, then it is \pushall if and only if 
each of the block of its decomposition is \pushall and that we can just push elements of the first block according to any sorting procedure of it, then elements of the second and so on, before popping out all the elements. 
This means that the different colorings for a $\ominus$-decomposable permutation is the product of all colorings for each block.

\begin{prop}\label{prop:Col=cartesianProduct}
Let $\sigma$ be a permutation and $Col(\sigma)$ the set of valid colorings of $\sigma$.
If $\sigma = \ominus[\pi_1, \dots ,\pi_k]$ then the map $c \rightarrow (c|\pi_1, \dots, c|\pi_k)$
is a bijection from $Col(\sigma)$ into $Col(\pi_1) \times \dots \times Col(\pi_k)$.
\end{prop}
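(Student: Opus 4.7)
The plan is to verify the three properties of a bijection (well-definedness of the map into the product, injectivity, surjectivity), where the nontrivial content lies in surjectivity.

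First I would verify well-definedness: if $c \in Col(\sigma)$, then each restriction $c|\pi_i$ is a valid coloring of $\pi_i$. This is immediate from the fact that each of the four forbidden configurations in Definition~\ref{def:validColoring} (\RRR, \GGG, \GGR, \RRG) is defined in terms of a pattern; since $\pi_i$ is itself a pattern of $\sigma$, any forbidden configuration appearing in $c|\pi_i$ would also appear in $c$, contradicting the validity of $c$. Injectivity is trivial since the blocks $\pi_1, \ldots, \pi_k$ partition the points of $\sigma$, so the map $c \mapsto (c|\pi_1, \ldots, c|\pi_k)$ is determined by specifying the color of each point.

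The surjectivity step is the heart of the argument. Given valid colorings $c_i$ of each $\pi_i$, define $c$ on $\sigma$ by setting $c(x) = c_i(x)$ for $x \in \pi_i$; I need to show $c$ is valid. The key geometric observation is that in $\sigma = \ominus[\pi_1, \ldots, \pi_k]$, any point in $\pi_i$ lies strictly above and strictly to the left of any point in $\pi_j$ whenever $i < j$. Consequently, two points lying in distinct blocks always form a descent, never an ascent: every ascending pair of points of $\sigma$ lies entirely within some single block $\pi_i$.

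The main obstacle (which is actually routine but requires a small case check) is to verify that every one of the four forbidden configurations forces all three of its points into the same block. For a 132-pattern $\sigma_{i_1}\sigma_{i_2}\sigma_{i_3}$, both pairs $(i_1,i_2)$ and $(i_1,i_3)$ are ascents, so all three points sit in the same block. For 213, the pairs $(i_1,i_3)$ and $(i_2,i_3)$ are ascents, giving the same conclusion. For \GGR, the two \G points form an ascent (same block), and the \R point $k$ satisfies $i < k$ and $\sigma_i < \sigma_k$, so it is also in the same block as the first \G point. The \RRG case is symmetric. Hence any forbidden configuration in $c$ would be entirely contained in some $\pi_i$, and thus would appear in $c_i = c|\pi_i$, contradicting its validity. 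This finishes surjectivity and concludes the proof.
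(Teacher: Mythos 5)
Your proof is correct and takes essentially the same route as the paper's: both reduce surjectivity to the observation that none of the four forbidden colored patterns can straddle two blocks of the $\ominus$-decomposition (the paper does a case analysis on how three points distribute among blocks, you phrase it as ``every ascent lies inside a single block,'' which is the same fact). One small inaccuracy in your \GGR case: the pattern does not constrain the value of the red point, so $\sigma_i<\sigma_k$ is not given; the conclusion still holds because $i<k<j$ and the blocks are contiguous intervals of indices (or because one of $(i,k)$, $(k,j)$ must be an ascent), so the point at index $k$ lies in the same block as the two green points regardless.
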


\begin{pf}
Let $c$ be a valid coloring of $\sigma$, then $c$ avoids patterns \RRR, \GGG, \RRG and \GGR. 
Thus for all $i$, $c|\pi_i$ avoids patterns \RRR, \GGG, \RRG and \GGR hence is a valid coloring of $\pi_i$.
Conversely let $c_i \in Col(\pi_i)$ for all $i$. 
Then coloring points of $\sigma$ according to $(c_1, \dots c_k)$ 
({\em i.e.} according to $c_1$ for the $|\pi_1|$ first points of $\sigma$, 
according to $c_2$ for the $|\pi_2|$ following points and so on)
leads to a coloring $c$ of $\sigma$ which is valid.
Indeed assume that $c$ is not valid.
Then $c$ has a pattern \RRR, \GGG, \RRG or \GGR.
Let $p$ be such a pattern.
Then $p$ is not inside a block $\pi_i$ as $c_i$ is a valid coloring for all $i$.
If all points of $p$ are in different blocks $\pi_i$ then $p$ is $321$ which is excluded.
Thus there are one point of $p$ in a block $\pi_i$ and two points of $p$ in a block $\pi_j$.
If $i<j$ then $p$ begins with its greatest point, which is excluded as $p$ is \RRR, \GGG, \RRG or \GGR.
If $i>j$ then $p$ ends with its smallest point, which is excluded as $p$ is \RRR, \GGG, \RRG or \GGR.
As a consequence such a pattern $p$ does not exists and $c \in Col(\sigma)$, concluding the proof.
\end{pf}

\begin{algorithm}[H]
  \KwData{$\sigma$ a permutation}
  \KwResult{A linear description of the set $Col(\sigma)$ of valid colorings of $\sigma$}

  Compute the $\ominus$-decomposition of $\sigma$: $\sigma = \ominus[\pi_1, \dots ,\pi_k]$ with $\pi_i$ $\ominus$-indecomposable\;
  \For{$i$ from $1$ to $k$}{
 	Compute $Col(\pi_i)$ thanks to Algorithm~\ref{alg:Valid-colorings-indec_optimal}\;
      }
  Return $(Col(\pi_1), \dots, Col(\pi_k))$\;

  \caption{Colorings$(\sigma)$}\label{alg:colorings_general}
\end{algorithm}

\begin{prop}\label{prop:algo5}
Let $\sigma$ be a permutation of size $n$.
Then Algorithm~\ref{alg:colorings_general} gives a linear description of $Col(\sigma)$ in time ${\mathcal O}(n^{2})$.
\end{prop}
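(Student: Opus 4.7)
The plan is to show that Algorithm~\ref{alg:colorings_general} is correct, that its output is a linear-size description of $Col(\sigma)$, and that it runs in $\mathcal{O}(n^2)$ time. Correctness is essentially immediate from Proposition~\ref{prop:Col=cartesianProduct}: the decomposition $\sigma = \ominus[\pi_1, \dots, \pi_k]$ is unique, and the bijection $c \mapsto (c|\pi_1, \dots, c|\pi_k)$ between $Col(\sigma)$ and $Col(\pi_1) \times \dots \times Col(\pi_k)$ means that returning the tuple $(Col(\pi_1), \dots, Col(\pi_k))$ faithfully encodes $Col(\sigma)$. First I would state this and observe that each $Col(\pi_i)$ is correctly computed by Theorem~\ref{thm:Algo_ominus-indec_optimal}.

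For the linear size of the description, the key point is that each valid coloring of a $\ominus$-indecomposable block $\pi_i$ need not be stored as an explicit coloring of $|\pi_i|$ points, but can be encoded by a constant amount of data. Indeed, by Proposition~\ref{prop:valid=C_numero}, every non-monochromatic valid coloring of $\pi_i$ is of the form $C_m(\pi_i, s)$ for some $m \in [1..9]$ and $s \in [1..|\pi_i|]$, while there are at most two monochromatic colorings. Hence each element of $Col(\pi_i)$ is described by $\mathcal{O}(1)$ bits, and by Theorem~\ref{thm:Algo_ominus-indec_optimal} we have $|Col(\pi_i)| \leq 9|\pi_i| + 2$. Summing over $i$ gives a total description of size $\sum_i \mathcal{O}(|\pi_i|) = \mathcal{O}(n)$.

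For the running time, I would split the analysis into three pieces: the $\ominus$-decomposition of $\sigma$ is computable in $\mathcal{O}(n)$ time by a standard scan (identifying the maximal prefixes whose values dominate the remaining suffix); each call to Algorithm~\ref{alg:Valid-colorings-indec_optimal} on $\pi_i$ runs in $\mathcal{O}(|\pi_i|^2)$ by Theorem~\ref{thm:Algo_ominus-indec_optimal}; and assembling the output tuple is $\mathcal{O}(n)$ since the descriptions are of total linear size. Summing, the total cost is $\mathcal{O}(n) + \sum_i \mathcal{O}(|\pi_i|^2) \leq \mathcal{O}(n) + \mathcal{O}\!\left( \bigl( \sum_i |\pi_i| \bigr)^2 \right) = \mathcal{O}(n^2)$.

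There is no real obstacle here: the proposition is a packaging of Proposition~\ref{prop:Col=cartesianProduct} with Theorem~\ref{thm:Algo_ominus-indec_optimal}. The only mildly subtle point to emphasize is that "linear description" must be understood in the encoded sense described above, because $|Col(\sigma)| = \prod_i |Col(\pi_i)|$ is generically exponential in $k$; the whole point of returning a tuple rather than an enumeration is to avoid this blow-up.
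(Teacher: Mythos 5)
Your proposal is correct and follows essentially the same route as the paper: decompose $\sigma$ into $\ominus$-indecomposable blocks, invoke Proposition~\ref{prop:Col=cartesianProduct} for correctness, Theorem~\ref{thm:Algo_ominus-indec_optimal} for the per-block bound $9|\pi_i|+2$ and the $\mathcal{O}(|\pi_i|^2)$ cost, and sum using $\sum_i |\pi_i|^2 \leq n^2$. Your added remarks on encoding each coloring as $C_m(\pi_i,s)$ in constant space and on why the tuple avoids the exponential product are sensible elaborations of what the paper leaves implicit, but not a different argument.
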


\begin{proof} 
The algorithm computes the $\ominus$-decomposition of $\sigma$: $\sigma = \ominus[\pi_1, \dots ,\pi_k]$ with $\pi_i$ $\ominus$-indecomposable.
This is done in linear time.
If $k=1$ then $\sigma$ is $\ominus$-indecomposable and $Col(\sigma) = Col(\pi_1)$.
We concludes thanks to Theorem~\ref{thm:Algo_ominus-indec_optimal}.
If $k>1$ then from Proposition~\ref{prop:Col=cartesianProduct}, $Col(\sigma) \approx Col(\pi_1) \times \dots \times Col(\pi_k)$.
For all $i$, $Col(\pi_i)$ has a size is smaller than $9|\pi_1|$ and is computed in $\mathcal{O}(|\pi|^2)$.
We concludes the proof noticing that $9|\pi_1| + \dots + 9|\pi_k| = 9|\sigma|$ and $|\pi_1|^2 + \dots +|\pi_k|^2 \leq |\sigma|^2$.
\end{proof}



\begin{thm}
Using Algorithm \ref{alg:colorings_general},  we can decide in time ${\mathcal O}(n^{2})$ whether a permutation $\sigma$ of size $n$ is \pushall.
\end{thm}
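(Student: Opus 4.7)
The plan is to combine the characterization of pushall sortability via valid colorings with the algorithmic content of Algorithm~\ref{alg:colorings_general}. By Theorem~\ref{thm:equivalenceColoringPushall}, a permutation $\sigma$ is \pushall if and only if its diagram admits at least one valid coloring, i.e. $Col(\sigma) \neq \varnothing$. Thus deciding pushall sortability reduces to deciding whether $Col(\sigma)$ is empty.

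First I would invoke Proposition~\ref{prop:algo5}, which states that Algorithm~\ref{alg:colorings_general} computes, in time $\mathcal{O}(n^{2})$, a linear description of $Col(\sigma)$ in the form of the tuple $(Col(\pi_{1}), \dots, Col(\pi_{k}))$ where $\sigma = \ominus[\pi_{1}, \dots, \pi_{k}]$ is the $\ominus$-decomposition of $\sigma$ into $\ominus$-indecomposable blocks. Then, using the bijection $Col(\sigma) \cong Col(\pi_{1}) \times \dots \times Col(\pi_{k})$ from Proposition~\ref{prop:Col=cartesianProduct}, $Col(\sigma)$ is non-empty if and only if each $Col(\pi_{i})$ is non-empty.

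Hence, after Algorithm~\ref{alg:colorings_general} has produced the tuple $(Col(\pi_{1}), \dots, Col(\pi_{k}))$, deciding whether $\sigma$ is \pushall amounts to checking whether each $Col(\pi_{i})$ is non-empty. This additional check is linear in $n$ (just iterate over the $k$ lists and test for non-emptiness), hence negligible compared to the $\mathcal{O}(n^{2})$ cost of computing the lists. Combining these observations yields the claimed $\mathcal{O}(n^{2})$ decision procedure.

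There is no real obstacle here: the theorem is essentially a corollary obtained by assembling Theorem~\ref{thm:equivalenceColoringPushall}, Proposition~\ref{prop:Col=cartesianProduct}, and Proposition~\ref{prop:algo5}. The only care needed is to make explicit that non-emptiness of the product of sets reduces to non-emptiness of each factor, and that this test costs no more than $\mathcal{O}(n)$ on top of the construction, so the overall complexity remains $\mathcal{O}(n^{2})$.
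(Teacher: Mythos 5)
Your proof is correct and follows the same route as the paper: reduce pushall sortability to non-emptiness of $Col(\sigma)$ via Theorem~\ref{thm:equivalenceColoringPushall}, use Algorithm~\ref{alg:colorings_general} and Proposition~\ref{prop:algo5} to obtain the sets $Col(\pi_i)$ in ${\mathcal O}(n^2)$ time, and check each for non-emptiness. Your explicit appeal to Proposition~\ref{prop:Col=cartesianProduct} to justify that non-emptiness of the product reduces to non-emptiness of each factor is a small but welcome clarification of a step the paper leaves implicit.
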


\begin{proof}
By Theorem~\ref{thm:equivalenceColoringPushall}, a permutation $\sigma$ is \pushall if and only if it admits a valid coloring. 
Thus all we need is to test whether each set $Col(\pi_{i})$ returned by Algorithm \ref{alg:colorings_general} is non-empty 
with $\sigma  = \ominus[\pi_{1},\ldots,\pi_{k}]$ being the $\ominus$-decomposition of $\sigma$, and we conclude	 using Proposition \ref{prop:algo5}.
\end{proof}

\section{Conclusion}\label{sec:conclusion}

This article defines a new restriction of $2$-stacks sorting, namely $2$-stacks pushall sorting.
We characterize every possible pushall sorting of a permutation by means of a bi-coloring of the permutation.
Then we give an $\mathcal O(n^2)$ algorithm which computes a linear representation of all pushall sortings of a given permutation,
which thus decides if a permutation is \pushall.
We proove that this complexity is optimal.

More studies remain to be done on $2$-stacks pushall sorting.
First, a simpler mathematical 
characterization of \pushall permutations would be interesting.
Then, we could study more in depth the number of pushall sortings of a given permutation.
More generally it would be nice to compute the generating function of \pushall permutations,
or at least asymptotic bounds on this function.
But most importantly, this result is a step to the solve the general $2$-stack sorting,
which we do in a forthcoming article.

\bibliographystyle{plain}
\bibliography{biblio}
\end{document}